\theoremstyle{definition}
\newtheorem{exer}{Exercise}
\def\bex{\small\begin{exer}}
\def\eex{\end{exer}\normalsize}
\def\hint#1{~\big({\sf Hint}: {\footnotesize #1}\big)}
\theoremstyle{plain}
\newtheorem{thm}{Theorem}[section]
\newtheorem{theorem}[thm]{Theorem}
\newtheorem{corollary}[thm]{Corollary}
\newtheorem{lemma}[thm]{Lemma}
\newtheorem{definition}[thm]{Definition}
\newtheorem{example}[thm]{Example}
\newtheorem{remark}[thm]{Remark}
\numberwithin{equation}{section} \numberwithin{figure}{section}
\def\F{{\mathbb F}}
\def\d{{\rm d}}\def\w{{\rm w}} \def\R{{\rm R}} \def\E{{\rm E}}
\begin{document}
\title{Asymptotic Properties of Quasi-Group Codes}
\author{
Yun Fan, ~~ Liren Lin
}
\date{}
\maketitle

\tableofcontents

\section{Introduction}

Let $\F$ be a finite field with cardinality $q$, and $n$ be a positive integer. 
In coding theory, any sequence $a=(a_1,\cdots,a_n)\in\F^n$
is said to be a {\em word} of length~$n$ over~$\F$; 
and the number of the indexes $i$ that $a_i\ne 0$, denoted by $\w(a)$, 
is called the {\em weight} of $a$.  
For $a,b\in\F^n$, the {\em distance}
between $a,b$ is defined as $\d(a,b):=\w(a-b)$. 

Any subspace $C$ of $\F^n$, denoted by $C\le \F^n$, is called 
a {\em linear code} of {\em length}~$n$ over the finite field $\F$,
and, any $c\in C$ is called a {\em code word}.  
Let $C\le \F^n$. The dimension $\dim C$ is called the
{\em information length} of $C$, while the fraction
$\R(C)=\dim C/n$ is called the {\em rate} of $C$.
And, $\d(C):=\min\{\d(c,c')\,|\,c\ne c'\in C\}$
is said to be the {\em minimal distance} of $C$, while
the fraction $\Delta(C):=\d(C)/n$ is called
 the {\em relative minimal distance} of $C$.

The good codes possess large rate and large relative minimum distance.
A class of codes is said to be {\em asymptotically good} if there exists a positive
real number $\delta$ such that, for any positive integer $N$ we can find 
a code in the class with code length greater than $N$, and 
with both the rate and the relative minimum distance greater than $\delta$. 

In 1950's the linear codes over a finite field 
are proved to be asymptotically good (\cite{G, V}): 
for any $N$ there exist linear codes of code length~$n>N$ 
with rate and relative minimal distance attaining the (asymptotic) GV-bound 
(cf. Definition \ref{d entropy} and Figure \ref{p GV} below). 
Later, the GV-bound is proved to be a phase transition point for the linear codes,  
i.e., in an asymptotic sense, the parameters of most (not only ``there exist'') 
linear codes attain the GV-bound (\cite{P67}). 
However, the algorithmic performance of linear codes 
is unsatisfactory, because the syndrome tables (for decoding the linear codes) 
have size about $\exp(n)$ (cf. \cite[\S1.11]{HP}).

In general, rich mathematical structures attached to codes
enhance the properties of the codes, 
including the improvements of the algorithmic properties.  
Let $G$ be a finite group of order $n$, and $\F G$ be the group algebra
(i.e., the $\F$-vector space with basis $G$ and equipped 
with a multiplication induced by the group multiplication of $G$).
Any element of $\F G$ is identified with a word of $\F^n$ in a natural way.
Any left ideal $C$ of $\F G$, denoted by $C\le \F G$, 
is called a {\em group code} ($\F G$-code). 
If $G$ is a cyclic group (abelian group, dihedral group, respectively), then
$C\le\F G$ is called a {\em cyclic code} ({\em abelian code, dihedral code,} respectively).  
For any integer $t\ge 1$, the outer direct sum
$\F G\times\cdots\times\F G$ (with~$t$ copies) is an $\F G$-module;
any $\F G$-submodule of it is called a {\em quasi-$\F G$ code of index $t$}.
Similarly, if $G$ is cyclic (abelian, respectively), then
quasi-$\F G$ codes of index $t$ are said to be {\em quasi-cyclic} 
({\em quasi-abelian}, respectively,) codes of index~$t$.
Please see Section \ref{group codes} for more details.

Cyclic codes are extensively studied since 1950's. 
The cyclic structure improves very much 
the algorithms for cyclic codes (cf. \cite[\S4.6]{HP}), 
and upgrade the mathematical properties of cyclic codes as well.
The well-known BCH codes, 
a subclass of cyclic codes discovered around 1960 (\cite{BC, H59}), 
have satisfactory rate and relative minimal distance if the code length is moderate; 
and have better algorithms (\cite{P61, B68, M69}), e.g.,  
 the complexity of the Berlekamp-Massey algorithm for BCH codes
(\cite{B68, M69}) is only $O(n^3)$ ($n$ is the code length).
However, if the code lengths go to infinity but keep the rates of 
BCH codes greater than a positive real number, 
then the relative minimal distances of the BCH codes go to zero;  
in other words, BCH codes are not asymptotically good (\cite{LW67}).

Nearly at the same time (around 1967), it was fund that,
if consider the cyclic codes with such code length, whose any prime divisor  
is coprime to ${\rm char}\,\F$ (the characteristic of $\F$)
and less than a given number (though the code length itself goes to infinity), 
then such cyclic codes are not asymptotically good (\cite{B67}).
Later, another subclass of the cyclic codes of length $n=mp^k$ with $k\to\infty$, 
where $p={\rm char}\,\F$ and $m$ is fixed and coprime to~$p$
(such codes are the so-called {\em repeated-root cyclic codes}), 
is proved to be asymptotically bad (\cite{CMSS91, JLLX}).
And, 
one more asymptotically bad subclass of cyclic codes is shown in \cite{MW06}.

Therefore, it is a 
 long-standing mysterious open question that:
{\it whether or not the cyclic codes over a finite field are asymptotically good?} 

However, from a long time ago the quasi-cyclic codes of index $2$
are proved to be asymptotically good (\cite{CPW, C, K}). 
What is more, 
 even if another interesting mathematical property ``{\em self-duality}''
(cf. Eq.\eqref{inner} below) is attached,  
the binary self-dual quasi-cyclic codes of index $2$ are still proved to be
asymptotically good~(\cite{MW07}).  
And, the asymptotic goodness of self-dual quasi-cyclic codes 
(but with index $t\to\!\infty$) are also obtained (\cite{D, LS}).  

The dihedral groups are non-abelian but near to cyclic groups 
(they have cyclic subgroups of index $2$).  In \cite{BM},
the binary dihedral codes are proved to be asymptotically good;
and, the binary quasi-abelian codes of fixed index $t\ge 2$
are also proved to be asymptotically good. 
The asymptotic goodness of quasi-abelian codes of fixed index $t\ge 2$ 
has been extended to any $q$-ary case in \cite{L PhD}. 
And the asymptotic goodness of dihedral codes 
has been extended to any $q$-ary case recently in \cite{FL20}. 

This chapter consists of some of our studies after 2010 
on the asymptotic properties of several classes of quasi-group codes.
We'll explain the studies in a consistent and self-contained style.  
 
We begin with the classical results on linear codes in Section \ref{linear codes}. 

Section \ref{group codes} is a brief sketch about group codes and quasi-group codes.

A general result on the volumes of subsets of the so-called balanced codes
 is described in Section \ref{balanced codes}.

In Section \ref{quasi-abelian}, for an abelian group $G$, the
quasi-$\F G$ codes of index $t\to\infty$ are considered, and the GV-bound
is proved to be a phase transition point of the parameters of the quasi-$\F G$ codes;
in particular, such codes are asymptotically good.

In Section \ref{fractional}, 
we first prove that the quasi-abelian codes of index $2$ is asymptotically good 
(the asymptotic goodness of quasi-abelian codes 
of any fixed index $t\ge 2$ mentioned above can be proved in a similar way);
then we exhibit the asymptotic goodness of the quasi-cyclic codes of
{\em fractional index} in a concise and uniform way.

The self-dual (self-orthogonal) quasi-abelian codes of index $2$
are investigated in Section \ref{self-dual}, they are still asymptotically good 
(the self-dual ones exist if and only if $q\,{\not\equiv}\,3\!\pmod 4$
(Corollary~\ref{c self-dual exist} below), while the
self-orthogonal ones exist unconditionally). 
But, in our study their parameters do not attain the GV-bound
(as a comparison, the asymptotic good quasi-abelian codes of index $2$
mentioned above have parameters attaining the GV-bound).

Finally the story on dihedral codes is described in Section \ref{dihedral codes}.
We not only extend the asymptotic goodness of
 dihedral codes to any $q$-ary case (as mentioned above),
 and investigate more precisely the algebraic properties of the
 asymptotic good dihedral codes as well.

We treat the several topics in a consistent random style.  
To make the chapter self-contained, 
we put a sketch about the probabilistic method used in this chapter 
in Appendix (Section \ref{appendix}). 

At the end of each section 
(except for the sections \ref{group codes} and \ref{appendix}),
few comments on the references related to the story of the section
are remarked.

\section{Asymptotic property of linear codes}\label{linear codes}

We sketch preliminaries about linear codes and asymptotic properties,
then turn to a kind of random linear codes.

\subsection{About linear codes and asymptotic properties}

In this chapter $\F$ is always a finite field with cardinality $|\F|=q$, 
which is a prime power.
By $|S|$ we denote the cardinality of any set $S$.

Let $n$ be a positive integer.
Let $\F^n=\{a=(a_1,\cdots,a_n)\,|\,a_i\in\F\}$,
which is an $\F$-vector space of dimension $n$.  
Any sequence $a=(a_1,\cdots,a_n)\in\F^n$
is said to be a {\em word} of length $n$ over $\F$. 
For $a\in\F^n$,  $\w(a):=|\{i\,|\,1\le i\le n, a_i\ne 0\}|$ 
is called the {\em Hamming weight} of $a$. 
For $a,b\in\F^n$, the {\em Hamming distance} 
is defined as $\d(a,b):=\w(a-b)$. 

Any non-empty subset $C\subseteq \F^n$ is called a 
{\em code over $\F$ of length $n$}.
And $\d(C):=\min\{{\rm d}(c,c')\,|\,c\ne c'\in C\}$ is called 
the {\em minimum distance} of $C$.
If a code $C$ is given, any word in $C$ is called a {\em code word}.

Any subspace $C$ of $\F^n$, denoted by $C\le\F^n$,  
is called a {\em linear code over~$\F$ of length $n$}; 
more precisely, $C$ is said to be an $[n,k,d]$ code over $\F$, 
where $k=\dim C$ and $d=\d(C)$. 
Note that $\F^n$ is equipped with the so-called {\em euclidean inner product}:
\begin{equation}\label{inner}
\langle a,b\rangle=a_1b_1+\cdots+a_n b_n, \qquad
a,b\in\F^n.
\end{equation}
Set $C^\bot\! :=\{a\in\F^n\,|\,\langle c,a\rangle=0,~\forall~ c\in C\}$. 
If $C= C^\bot$ ($C\subseteq C^\bot$, respectively), 
then $C$ is said to be {\em self-dual} ({\em self-orthogonal}, respectively).
On the other hand,  $C$ is called a {\em linear complementary dual} 
(LCD in short) code if $C\cap C^\bot=0$.

\bex
If $C\le \F^n$, then ${\rm d}(C)={\rm w}(C)$, 
where ${\rm w}(C):=\min\{{\rm w}(c)\,|\,0\ne c\in C\}$
is called the {\em minimum weight} of the linear code $C$.
\eex

When we consider asymptotic properties of codes (with code length $n\to\infty$), 
the following relative versions are convenient:
\begin{itemize}
\item \vskip-4pt
 Denote $\R(C)=\!\log_q|C|/n$ ($=\dim C/n $ if $C$ is linear), 
 called the {\em rate} of $C$.
\item \vskip-5pt
 For $a\in\F^n$, $\w(a)/n$ is called the {\rm relative weight} of $a$.
\item\vskip-4pt
 Denote $\Delta(C)={\rm d}(C)/n$,
called the {\em relative minimum distance} of $C$.
If $C$ is a linear code, then $\Delta(C)={\rm d}(C)/n=\w(C)/n$, 
which is also called the {\em relative minimum weight} of $C$.  
\end{itemize}

\vskip-5pt\noindent
The bigger both the rate and the relative minimum distance are, 
the better the codes are.

\begin{definition}\label{asym good}\rm
A code sequence $C_1,C_2,\cdots$ is said to be {\em asymptotically good} 
if the lengths $n_i$ of $C_i$ goes to infinity and
there is a positive real number $\delta$ such that 
$\R(C_i)>\delta$ and $\Delta(C_i)>\delta$ for $i=1,2,\cdots$.  

A class of codes is said to be {\em asymptotically good}
if there exists an asymptotically good code sequence $C_1,C_2,\cdots$
within the class.
\end{definition}

\begin{lemma}
For any $\varepsilon>0$ there is an integer $N$ such that, 
if a code $C$ has length $n>N$ and $\Delta(C)> 1-q^{-1}$, 
then $\R(C)<\varepsilon$.
\end{lemma}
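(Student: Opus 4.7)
The plan is to apply the classical Plotkin bound argument, which quantifies the well-known fact that the relative minimum distance of a non-trivial code cannot exceed $1 - q^{-1}$ by much.

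First I would estimate the total pairwise distance $S := \sum_{c\ne c'\in C}\d(c,c')$ in two different ways. A lower bound follows directly from the definition of minimum distance: $S\ge |C|(|C|-1)\,\d(C)$. For the upper bound, fix a coordinate $i$ and set $n_{i,a}=|\{c\in C:c_i=a\}|$ for each $a\in\F$. The contribution of coordinate $i$ to $S$ equals $\sum_{a\in\F}n_{i,a}(|C|-n_{i,a})=|C|^2-\sum_a n_{i,a}^2$, and by Cauchy--Schwarz (using $\sum_a n_{i,a}=|C|$ across the $q$ values of $a$) this is at most $|C|^2(1-q^{-1})$. Summing over the $n$ coordinates yields $S\le n\,|C|^2(1-q^{-1})$.

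Combining the two estimates and dividing by $|C|$ gives
\[
|C|\bigl(\d(C)-n(1-q^{-1})\bigr)\;\le\;\d(C).
\]
Under the hypothesis $\Delta(C)>1-q^{-1}$, the quantity $s:=\d(C)-n(1-q^{-1})$ is strictly positive. Here I would invoke the integrality of $\d(C)$ and $n$: since $s-n/q$ is an integer, the smallest positive value of $s$ is either $1$ (when $q\mid n$) or the fractional part $\{n/q\}\ge q^{-1}$ (when $q\nmid n$), so in either case $s\ge q^{-1}$. Therefore $|C|\le \d(C)/s\le qn$.

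Taking $q$-logarithms and dividing by $n$ gives $\R(C)\le (1+\log_q n)/n$, which tends to $0$ as $n\to\infty$. Hence for any prescribed $\varepsilon>0$ it suffices to choose $N$ large enough that $(1+\log_q N)/N<\varepsilon$. The only delicate step is the integrality observation that $s$ is bounded below by $q^{-1}$; without it, the Plotkin inequality alone would not prevent $|C|$ from growing with $n$, since $s$ could in principle be arbitrarily small.
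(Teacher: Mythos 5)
Your proof is correct and follows essentially the same route as the paper: derive the Plotkin bound $|C|\le \d(C)/(\d(C)-n(1-q^{-1}))$, then bound the denominator below by $q^{-1}$ via integrality to get $|C|\le qn$ and hence $\R(C)\le(1+\log_q n)/n\to 0$. You are right that the integrality step is the crux; the paper actually asserts the chain $\frac{dq}{dq-(q-1)n}\le dq\le nq$ without comment, which silently relies on the fact that $dq-(q-1)n$ is a positive integer (so $\ge 1$), and your proposal makes that observation explicit.
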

\begin{proof}
By Plotkin Bound (Exercise \ref{Plotkin} and its notation), 
$\frac{1}{M}\ge\frac{\Delta-(1-q^{-1})}{\Delta}$. 
Since $\Delta(C)> 1-q^{-1}$, we have that 
$M\le\frac{\Delta}{\Delta-(1-q^{-1})}=\frac{dq}{dq-(q-1)n}\le dq\le nq$.
So
$\textstyle
\R(C)=\frac{\log_q M}{n}\le \frac{1+\log_q n}{n}
 \;\mathop{\longrightarrow}\limits_{n\to\infty}\; 0. 
$
\end{proof}

\bex\label{Plotkin}
Let $C\subseteq\F^n$, $M=|C|$, $d=\d(C)$, $\Delta=\frac{d}{n}$. 

(1)~ Let $T=\sum_{c\ne c'\in C}\d(c,c')$ be the total sum of distances between
code words of~$C$. Then $2T\ge M(M-1)d$.

(2)
Let 
${\cal M}=\begin{pmatrix} c_{11} &\cdots& c_{1j}&\cdots & c_{1n}\\
 \cdots& \cdots &\cdots& \cdots &\cdots\\ 
 c_{M1} &\cdots & c_{1j}&\cdots & c_{Mn} \end{pmatrix}$
 be the matrix whose rows are all code words of $C$. 
For $\alpha\in\F$, let $t_{\alpha,j}$ denote the number of entries 
in the $j$-th column of ${\cal M}$ that equals~$\alpha$. 
Then $2T=nM^2-\sum_{j=1}^n\sum_{\alpha\in\F}t^2_{\alpha,j}$.
\hint{in the $j$-th column, the pair $(\alpha,\alpha')$ 
for $\alpha\ne\alpha'\in\F$ appears $t_{\alpha,j}t_{\alpha',j}$ times;
and $\sum_{\alpha\ne\alpha'\in\F}t_{\alpha,j}t_{\alpha',j}
=\frac{1}{2}\big((\sum_{\alpha\in\F}t_{\alpha,j})^2
  -(\sum_{\alpha\in \F}t_{\alpha,j}^2)\big)$.}

(3)~ $2T\le (1-q^{-1})nM^2$.  
\hint
{$\sum_{\alpha\in \F}t_{\alpha,j}^2\big/|\F|\ge 
  (\sum_{\alpha\in \F}t_{\alpha,j}\big/|\F|)^2 $.}

(4) ({\sf Plotkin Bound})~ $\frac{1}{M}\ge\frac{\Delta-(1-q^{-1})}{\Delta}$.
\hint{combine (1) and (3).}
\eex

Therefore, to consider asymptotic properties, 
in Definition \ref{asym good} only the real number 
$\delta\in[0,1-q^{-1}]$ makes sense.

\begin{definition}\label{d entropy}\rm
The following is called the {\em $q$-entropy} function: 
\begin{equation}\label{e entropy}
h_q(\delta)=\delta\log_q(q-1)-\delta\log_q\delta-(1-\delta)\log_q(1-\delta),
\quad \delta\in [0,1-q^{-1}],
\end{equation}
which is an increasing and concave function on the interval $[0,1-q^{-1}]$
with $h_q(0)=0$ and $h_q(1-q^{-1})=1$. 
And $g_q(\delta)=1-h_q(\delta)$ is called the {\em asymptotic GV-bound}, 
or {\em GV-bound} for short.
\end{definition}

\begin{figure}[h]\label{p GV}
\begin{center}\setlength{\unitlength}{0.8pt}
\begin{picture}(125,125)
\put(10,110){\line(1,0){100}} \put(110,10){\line(0,1){100}}
\bezier{50}(92,10)(92,60)(92,110)
\thicklines
\put(10,10){\vector(0,1){115}}\put(10,10){\vector(1,0){115}}
\put(-18,118){\scriptsize$h_q(\delta)$} \put(122,1){\scriptsize$\delta$}
\qbezier(10,10)(18,65)(48,94) \qbezier(48,94)(68,110)(92,109.8)
\put(10,110){\line(1,0){2}} \put(4,105){\scriptsize$1$}
\put(92,10){\line(0,1){2}} \put(85,1){\scriptsize$1$-$\frac{1}{q}$}
\end{picture}
\hskip17mm
\begin{picture}(125,125)
\put(10,110){\line(1,0){100}} \put(110,10){\line(0,1){100}}
\thicklines
\put(10,10){\vector(0,1){115}}\put(10,10){\vector(1,0){115}}
\put(-18,118){\scriptsize$g_q(\delta)$} \put(122,1){\scriptsize$\delta$}
\qbezier(10,110)(18,55)(48,26) \qbezier(48,26)(68,10)(92,10.3)
\put(10,110){\line(1,0){2}} \put(4,105){\scriptsize$1$}
\put(92,10){\line(0,1){2}} \put(85,1){\scriptsize$1$-$\frac{1}{q}$}
\end{picture}
\caption{Graphs of $h_q(\delta)$ and $g_q(\delta)$}\label{graph h_q}
\end{center}
\end{figure}
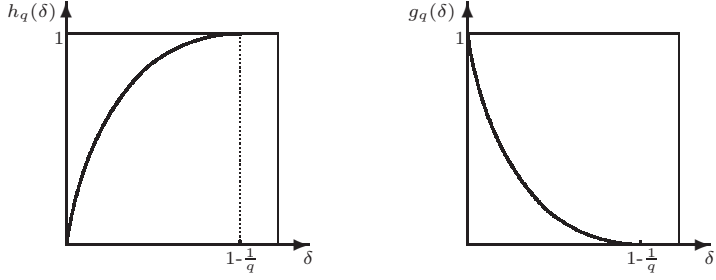

\bex\label{e g_q}
$h_q(\delta)$ is an increasing concave function on the interval $[0,1-\frac{1}{q}]$ 
with $h_q(0)=0$ and $h_q(1-q^{-1})=1$; 
while $g_q(\delta)$ is a decreasing convex function 
on the interval $[0,1-\frac{1}{q}]$ 
with $g_q(0)=1$ and $g_q(1-q^{-1})=0$. 
(Figure \ref{graph h_q}.)
\eex

Let $\delta\in (1,1-q^{-1})$. We denote 
\begin{equation}\label{Hamming ball}
(\F^n)^{\le\delta}=\{a\,|\,a\in\F^n, \w(a)\le\delta n\}, 
\end{equation}
which is known as the {\em Hamming ball} 
with radius $\lfloor\delta n\rfloor$ centred at $0$.

\bex
The cardinality 
$|(\F^n)^{\le\delta}|=\sum_{i=0}^t\binom{n}{i}(q-1)^i$, 
where $\binom{n}{i}$'s denote the binomial coefficients, 
and $t=\lfloor\delta n\rfloor$ denotes the largest integer less or equal $\delta n$
(so $t\approx \delta n$ and in the following we'll write $\sum_{i=0}^{\delta n}$).  
\eex

\begin{lemma}\label{<Hamming ball<}
Let $0<\delta<1-\frac{1}{q}$. Then~
$q^{n\big(h_q(\delta)-\frac{\log_q(n+1)}{n}\big)}\le
   \big|(\F^n)^{\le\delta}\big|\le q^{nh_q(\delta)}. $
   
\noindent{\rm({\bf Remark}: 
That is, $|(\F^n)^{\le\delta}|=q^{n(h_q(\delta)-o(1))}$,
 where $o(1)$ is a non-negative infinitesimal; more precisely, there are
 $\varepsilon_1, \varepsilon_2, \cdots$ such that $0\le\varepsilon_n\le\frac{\log_q(n+1)}{n}$, 
 and $|(\F^n)^{\le\delta}|=q^{n(h_q(\delta)-\varepsilon_n)}$.)}
\end{lemma}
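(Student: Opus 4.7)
The plan is to handle the two inequalities separately, with both resting on the binomial identity $\bigl(\delta+(1-\delta)\bigr)^n=1$ and on the fact that, because $\delta<1-1/q$, the quantity $r:=\frac{(q-1)(1-\delta)}{\delta}$ satisfies $r\ge 1$.

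For the upper bound I would use a standard weighting trick. Since $r^{t-i}\ge 1$ whenever $i\le t$, one has
\[
\sum_{i=0}^{t}\binom{n}{i}(q-1)^{i}\;\le\;\sum_{i=0}^{t}\binom{n}{i}(q-1)^{i}r^{t-i}
\;=\;r^{t}\sum_{i=0}^{t}\binom{n}{i}\Bigl(\tfrac{\delta}{1-\delta}\Bigr)^{i}.
\]
I would then enlarge the range of summation to $i=0,\dots,n$ and recognize the tail as $(1+\delta/(1-\delta))^{n}=(1-\delta)^{-n}$ via the binomial theorem. Combining with $r^{t}\le r^{\delta n}$ (using $t\le\delta n$ and $r\ge 1$) and unwinding, one obtains $r^{\delta n}(1-\delta)^{-n}=(q-1)^{\delta n}\delta^{-\delta n}(1-\delta)^{-(1-\delta)n}=q^{nh_{q}(\delta)}$, as required.

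For the lower bound I would throw away every layer of the ball except the $t$-th, so that $|(\F^{n})^{\le\delta}|\ge\binom{n}{t}(q-1)^{t}$. To estimate this single term I would invoke the probabilistic identity $1=\sum_{i=0}^{n}\binom{n}{i}\delta^{i}(1-\delta)^{n-i}$: it is a sum of $n+1$ nonnegative terms, so at least one is $\ge 1/(n+1)$. Since the binomial mass $\binom{n}{i}\delta^{i}(1-\delta)^{n-i}$ is unimodal with mode located at $i=\lfloor(n+1)\delta\rfloor$, which sits at $t=\lfloor\delta n\rfloor$ (or differs from it by at most $1$, where a small log-concavity step closes the gap), the inequality $\binom{n}{t}\delta^{t}(1-\delta)^{n-t}\ge 1/(n+1)$ is available. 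Rearranging this into $\binom{n}{t}(q-1)^{t}\ge r^{t}(1-\delta)^{n}/(n+1)$ and unwinding as above yields the target bound $q^{nh_{q}(\delta)}/(n+1)$.

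The hard part, in both directions, is bookkeeping the gap between the integer truncation $t=\lfloor\delta n\rfloor$ and the real quantity $\delta n$ that appears in the exponent of $q^{nh_{q}(\delta)}$: the correction factor $r^{t-\delta n}$ must be shown to lie on the ``helpful'' side. In the upper bound this is free because $r\ge 1$ and $t\le\delta n$ makes $r^{t}\le r^{\delta n}$. In the lower bound the same monotonicity is needed in reverse, and one must verify that the loss incurred by replacing $\delta n$ by $t$ (together with the loss from the pigeonhole step) is swallowed by the single $\log_{q}(n+1)/n$ term in the exponent. This bookkeeping is the only nontrivial point; everything else is the binomial theorem.
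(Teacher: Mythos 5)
The upper bound direction is fine and is essentially the paper's argument re-parametrized: the paper normalizes by $q^{nh_q(\delta)}$ and uses that $\big(\tfrac{\delta}{(q-1)(1-\delta)}\big)^{\delta n}\le\big(\tfrac{\delta}{(q-1)(1-\delta)}\big)^{i}$ for $i\le \lfloor\delta n\rfloor$, which is the same inequality as your $r^{t-i}\ge 1$; both then close with the binomial theorem.

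The lower bound, however, does not follow from the steps you wrote, and the gap is in the place you flagged but did not fill. From $\binom{n}{t}\delta^t(1-\delta)^{n-t}\ge \frac{1}{n+1}$ one obtains $\binom{n}{t}(q-1)^t\ge \frac{r^t(1-\delta)^{-n}}{n+1}$ (note the exponent on $1-\delta$ should be $-n$, not $n$). But the target $q^{nh_q(\delta)}/(n+1)$ equals $r^{\delta n}(1-\delta)^{-n}/(n+1)$, and since $r\ge 1$ and $t\le\delta n$ we have $r^t\le r^{\delta n}$: the ``unwinding'' runs the wrong way, giving only the weaker bound $q^{nh_q(\delta)}/\big(r^{\theta}(n+1)\big)$ with $\theta=\delta n-t\in[0,1)$. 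Asserting that this loss is ``swallowed by the $\log_q(n+1)/n$ term'' is not a proof: that term is already entirely spent on the pigeonhole factor $1/(n+1)$, leaving nothing to absorb $r^{\theta}$. There is a second, related soft spot: the pigeonhole step shows that the maximal binomial mass is $\ge 1/(n+1)$, but the mode $\lfloor(n+1)\delta\rfloor$ can sit at $t+1$ rather than $t$ (precisely when the fractional part of $\delta n$ exceeds $1-\delta$), in which case $\binom{n}{t}\delta^t(1-\delta)^{n-t}$ is strictly smaller than the mode value and your starting inequality is itself not ``available'' without an actual argument. The paper routes this direction through Exercise~\ref{e binomial}(1)--(3), which is morally the same estimate and shares the same rounding subtlety, so you are following its route; but as written, your lower bound stops one genuine step short of the claim, and the ``small log-concavity step'' you invoke must be supplied, not just named.
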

\begin{proof} 
Since $q^{nh_q(\delta)}=\big(q^{h_q(\delta)}\big)^n
 =(q-1)^{\delta n}\delta^{\delta n}(1-\delta)^{(1-\delta)n}$, we have
$$\textstyle
  \frac{|(\F^n)^{\le\delta}|}{q^{nh_q(\delta)}}=
    \frac{\sum_{i=0}^{\delta n}\binom{n}{i}(q-1)^i}
    {(q-1)^{\delta n}\delta^{\delta n}(1-\delta)^{(1-\delta)n}}
  =\sum_{i=0}^{\delta n}\binom{n}{i}(q-1)^i(1-\delta)^{n}
   \Big(\frac{\delta}{(q-1)(1-\delta)}\Big)^{\delta n}.
$$
Because $\delta\le 1-\frac{1}{q}$, we have $\frac{\delta}{q-1}\le 1-\delta$,
hence $\frac{\delta}{(q-1)(1-\delta)}\le 1$. Thus
\begin{align*}\textstyle
 \frac{|(\F^n)^{\le\delta}|}{q^{nh_q(\delta)}}
 &\textstyle\le \sum_{i=0}^{\delta n}\binom{n}{i}(q-1)^i(1-\delta)^{n}
   \Big(\frac{\delta}{(q-1)(1-\delta)}\Big)^{i}
   =\sum_{i=0}^{\delta n}\binom{n}{i}\delta^{i}(1-\delta)^{n-i}\\
 &\textstyle \le\sum_{i=0}^{n}\binom{n}{i}\delta^{i}(1-\delta)^{n-i}
 =(\delta+1-\delta)^n=1. 
\end{align*}
On the other hand, by the following Exercise \ref{e binomial}(3), 
\begin{align*}
 |(\F^n)^{\le\delta}|
 &\textstyle =\sum_{i=0}^{\delta n}\binom{n}{i}(q-1)^i
      \ge\binom{n}{\delta n}(q-1)^{\delta n}
      =\binom{n}{\delta n}q^{\delta n\log_q(q-1)}\\
 &\ge q^{n\big(-\delta\log_q\delta-(1-\delta)\log_q(1-\delta)
   -\frac{\log_q(n+1)}{n}\big)}\cdot q^{\delta n\log_q(q-1)}\\
 &= q^{n\big(\delta\log_q(q-1)-\delta\log_q\delta-(1-\delta)\log_q(1-\delta)
   -\frac{\log_q(n+1)}{n}\big)}.
\end{align*}
That is, $|(\F^n)^{\le\delta}|\ge q^{n\big(h_q(\delta)-\frac{\log_q(n+1)}{n}\big)}$.
\end{proof}

\bex\label{e binomial}
Let $0<\delta<1$. Consider 
$
 1=\big(\delta+(1-\delta)\big)^n=\sum_{i=0}^n\binom{n}{i}\delta^{i}(1-\delta)^{n-i}
$.

(1) For $0\le i\le n$, 
$\binom{n}{i}\delta^{i}(1-\delta)^{n-i}\le\binom{n}{\delta n}\delta^{\delta n}(1-\delta)^{(1-\delta)n}$.
\hint{consider the ratio 
$\binom{n}{i}\delta^{i}(1-\delta)^{n-i}
  \big/\binom{n}{i+1}\delta^{i+1}(1-\delta)^{n-i-1}$.}

(2)
$(n+1)\cdot\binom{n}{\delta n}\delta^{\delta n}(1-\delta)^{(1-\delta)n}
 \ge \sum_{i=0}^n\binom{n}{i}\delta^i(1-\delta)^{n-i} =1.$

(3)
$\binom{n}{\delta n}\ge
q^{n\big(-\delta\log_q\delta-(1-\delta)\log_q(1-\delta)-\frac{\log_q(n+1)}{n}\big)}.
$ 
\eex

\subsection{Random linear codes}

\smallskip
Now we study the asymptotic properties of a kind of random linear codes.
Some fundamentals about probabilistic methods are sketched in 
Appendix (Section \ref{appendix}).  

\begin{definition}\label{C_M}\rm
Assume that $0<r<1$, $k=\lfloor rn\rfloor$,  and $0<\delta<1-\frac{1}{q}$.
Let $\F^{k\times n}=\big\{M=(m_{ij})_{k\times n}\;\big|\;m_{ij}\in\F\big\}$, 
where $(m_{ij})_{k\times n}$ denotes the $k\times n$ matrix over $\F$ 
with $(i,j)$-entries $m_{ij}$.
We consider $\F^{k\times n}$ as a probability space with equal probability for each
sample $M\in\F^{k\times n}$; and take the following  notation.  
\begin{itemize}
\item[(1)] \vskip-2pt
$C_M=\big\{bM\;\big|\; b\in\F^k\big\}\le\F^n$
is a {\em random linear code} over the probability space $\F^{k\times n}$, 
where $b=(b_1,\cdots,b_k)\in\F^k$. 
In a notation of Shannon's information theory,  
$C_M$ for $M\in\F^{k\times t}$
is called a {\em random linear code ensemble};
cf. \cite[ch.6]{MM}.
\item[(2)] \vskip-4pt
 Both $\Delta(C_M)$ and $\R(C_M)$ are random variables; \\
then $\R(C_M)\le \frac{k}{n}\le r$, and 
 $\R(C_M)=\frac{k}{n}$ $\iff$ ${\rm rank}(M)=k$. 
\item[(3)] \vskip-4pt
For each $b\in\F^k$, ~
$X_b=\begin{cases}1, & 0<\frac{\w(bM)}{n}\le\delta;\\ 0, &\mbox{otherwise}; \end{cases}$ is a $0$-$1$ random variable. 
\item[(4)] \vskip-4pt
$X=\sum_{b\in \F^k}X_b$ is a non-negative integer random variable,
which stands for the number of the non-zero code words in $C_M$ 
with relative weight $\le\delta$.\\
Thus $\Delta(C_M)\le\delta$ if and only if $X\ge 1$; i.e., 
for the probabilities we have
\begin{equation}\label{X>1}
\Pr(\Delta(C_M)\le\delta)=\Pr(X\ge 1).
\end{equation}
\end{itemize}
\end{definition}

\begin{lemma}\label{l E(X)}
$\lim\limits_{n\to \infty}\E(X)
=\begin{cases}0, & r<g_q(\delta);\\ \infty, & r>g_q(\delta);  \end{cases}$
where $\E(X)$ denotes the expectation (cf. Definition \ref{d random var}) 
of the random variable $X$ defined in Definition \ref{C_M}(4).
\end{lemma}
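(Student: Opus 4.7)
The plan is to compute $\E(X)$ exactly (up to a factor of $-1$) by linearity of expectation, and then apply the two-sided estimate of $|(\F^n)^{\le\delta}|$ from Lemma \ref{<Hamming ball<} to read off the limit.

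First I would decompose $\E(X)=\sum_{b\in\F^k}\E(X_b)=\sum_{b\in\F^k}\Pr\bigl(0<\w(bM)\le\delta n\bigr)$. The zero vector $b=0$ contributes nothing since $bM=0$. For any fixed $b\ne 0$, the key observation is that when $M$ is uniformly distributed on $\F^{k\times n}$, the row-vector $bM\in\F^n$ is uniformly distributed on $\F^n$: if $b_i\ne 0$, then for any prescribed values of the other rows $M_1,\ldots,M_{i-1},M_{i+1},\ldots,M_k$, the map $M_i\mapsto bM$ is an affine bijection of $\F^n$. Consequently
\begin{equation*}
\Pr(X_b=1)=\frac{|(\F^n)^{\le\delta}|-1}{q^n},\qquad b\in\F^k\setminus\{0\},
\end{equation*}
and therefore $\E(X)=(q^k-1)\cdot\bigl(|(\F^n)^{\le\delta}|-1\bigr)\big/q^n$.

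Next I would plug in the estimate $|(\F^n)^{\le\delta}|=q^{n(h_q(\delta)-\varepsilon_n)}$ from Lemma \ref{<Hamming ball<}, where $0\le\varepsilon_n\le\log_q(n+1)/n$. Using $q^k\le q^{rn}$ and $q^k\ge q^{rn-1}$ (since $k=\lfloor rn\rfloor$), together with $g_q(\delta)=1-h_q(\delta)$, the exponent in $\E(X)$ becomes approximately $n\bigl(r-g_q(\delta)-\varepsilon_n\bigr)$, with vanishing correction terms coming from the ``$-1$''s and from the rounding of $k$.

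For the case $r<g_q(\delta)$, since $\varepsilon_n\ge 0$, one has $r-g_q(\delta)-\varepsilon_n\le r-g_q(\delta)<0$ for all $n$, so $\E(X)\to 0$. For the case $r>g_q(\delta)$, pick $\eta:=r-g_q(\delta)>0$; since $\varepsilon_n\to 0$, eventually $\varepsilon_n<\eta/2$, whence the exponent is at least $n\eta/2-O(1)\to\infty$, giving $\E(X)\to\infty$. No step here is a genuine obstacle; the only care needed is to separate the two multiplicative ``$-1$'' corrections from the dominant exponential factor, which is routine.
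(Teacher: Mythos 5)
Your proposal is correct and follows essentially the same route as the paper's proof: linearity of expectation, the observation that $bM$ is uniformly distributed on $\F^n$ for $b\ne 0$ (the paper phrases this via the surjective linear map $M\mapsto bM$ with kernel of size $q^{kn-n}$, which is equivalent to your affine-bijection-in-one-row argument), and then the two-sided bound on $|(\F^n)^{\le\delta}|$ from Lemma~\ref{<Hamming ball<}. Your explicit case split at the end is slightly more detailed than the paper's one-line ``from which the lemma follows at once,'' but the content is the same.
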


\begin{proof} 
By Definition \ref{C_M}(3), $X_0=0$ hence $\E(X_0)=0$.
Let $0\ne b\in\F^k$. We compute $\E(X_b)$. Note that 
$ \F^{k\times n}\to \F^n,~ M\mapsto bM, $
is a surjective linear map, and the size of the kernel equals 
$|\F^{k\times n}|\big/|\F^n|=q^{kn-n}$. Thus
for any $a\in\F^n$ there are exactly $q^{kn-n}$ elements $M\in\F^{k\times n}$
such that $bM=a$; and $X_b=1$ if and only if $0<\frac{\w(a)}{n}\le \delta$.
So (cf. Eq.\eqref{E of 0-1})
$$\textstyle
\E(X_b)=\Pr(X_b=1)
 =\frac{(|(\F^n)^{\le\delta}|-1)\cdot q^{kn-n}}{|\F^{k\times n}|}
 =\frac{|(\F^n)^{\le\delta}|-1}{q^n}; 
$$
and by Lemma \ref{<Hamming ball<},
\begin{equation}\label{E(X_b)}\textstyle
\E(X_b)=\frac{q^{n(h_q(\delta)-o(1))}-1}{q^n}=q^{-n(1-h_q(\delta)+o(1))}
=q^{-n(g_q(\delta)+o(1))},
\end{equation}
where $o(1)$ is a non-negative infinitesimal. 
By the linearity of the expectation,
$$\textstyle
\E(X)=\sum_{b\in\F^k-\{0\}}\E(X_b)=(q^k-1)q^{-n(g_q(\delta)+o(1))}
=q^{-n\big(g_q(\delta)-\frac{k}{n}+o(1)\big)}.
$$
Note that $\lim\limits_{n\to\infty}\frac{k}{n}=r$. So
$
\lim\limits_{n\to\infty}\E(X)
=\lim\limits_{n\to\infty}q^{-n\big(g_q(\delta)-r+o(1)\big)},
$ 
 from which the lemma follows at once.
\end{proof}

\begin{theorem}\label{Delta(C_M) >delta}
$\lim\limits_{n\to \infty}\Pr\big(\Delta(C_M)>\delta\big)
=\begin{cases}1, & r<g_q(\delta);\\ 0, & r>g_q(\delta).  \end{cases}$
\end{theorem}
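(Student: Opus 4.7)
The plan is to rewrite the target probability using the random variable $X$ introduced in Definition \ref{C_M}, namely
\[
\Pr\bigl(\Delta(C_M)>\delta\bigr)=1-\Pr(X\ge 1)
\]
by Eq.\eqref{X>1}, and then handle the two cases with the first and second moment methods, using Lemma \ref{l E(X)} as the input.

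For the case $r<g_q(\delta)$, the upper tail is easy: Markov's inequality gives $\Pr(X\ge 1)\le \E(X)$, and Lemma \ref{l E(X)} says $\E(X)\to 0$. Thus $\Pr(\Delta(C_M)>\delta)\to 1$ immediately.

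For the case $r>g_q(\delta)$, where $\E(X)\to\infty$, I would use Chebyshev's inequality in the form $\Pr(X=0)\le \mathrm{Var}(X)/\E(X)^2$, so the task reduces to controlling $\mathrm{Var}(X)$. Write
\[
\E(X^2)=\sum_{b,b'\in\F^k\setminus\{0\}}\E(X_bX_{b'}),
\]
and split the sum by the $\F$-linear relation between $b$ and $b'$. If $b,b'$ are linearly independent, then the map $M\mapsto (bM,b'M)$ from $\F^{k\times n}$ onto $\F^n\times\F^n$ is surjective with kernel of size $q^{(k-2)n}$, so $(bM,b'M)$ is uniform on $\F^n\times\F^n$; consequently $X_b$ and $X_{b'}$ are independent and $\E(X_bX_{b'})=\E(X_b)\E(X_{b'})$. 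If instead $b'=\lambda b$ for some $\lambda\in\F^*$, then $b'M=\lambda bM$ has the same weight as $bM$, so $X_{b'}=X_b$ and $\E(X_bX_{b'})=\E(X_b)$. There are $(q-1)$ such $\lambda$ for each $b$, so the dependent contribution is at most $(q-1)\E(X)$, while the independent contribution is at most $\E(X)^2$. This gives $\mathrm{Var}(X)\le (q-1)\E(X)$, hence
\[
\Pr(X=0)\le \frac{\mathrm{Var}(X)}{\E(X)^2}\le \frac{q-1}{\E(X)}\longrightarrow 0,
\]
so $\Pr(\Delta(C_M)>\delta)\to 0$.

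The routine direction is the first case. The substantive step is the second moment estimate, and within it the key point is the pairwise independence of $X_b,X_{b'}$ for linearly independent $b,b'$; once that is identified, the variance bound $\mathrm{Var}(X)=O(\E(X))$ is immediate and Chebyshev finishes the proof. No further delicate estimate is needed because Lemma \ref{l E(X)} has already produced the threshold behavior of $\E(X)$ across the GV-bound.
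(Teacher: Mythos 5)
Your proof is correct. The first case ($r<g_q(\delta)$) is identical to the paper's: Markov plus Lemma \ref{l E(X)}. For the second case ($r>g_q(\delta)$) both you and the paper use a second-moment argument that hinges on the same structural observation: for $b,b'$ linearly independent, $(bM,b'M)$ is uniform on $\F^n\times\F^n$ so $X_b,X_{b'}$ are pairwise independent, while for $b'=\lambda b$ ($\lambda\in\F^\times$) one has $X_{b'}=X_b$, and there are only $q-1$ such dependent partners per $b$. Where you differ is in which second-moment inequality delivers the punchline. You bound the variance, $\mathrm{Var}(X)\le(q-1)\E(X)$, and apply Chebyshev to get $\Pr(X=0)\le(q-1)/\E(X)\to0$. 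The paper instead applies the conditional-expectation form (Lemma \ref{2'nd moment}), $\Pr(X\ge1)\ge\sum_b\E(X_b)/\E(X\mid X_b=1)$, and computes $\E(X\mid X_b=1)=(q-1)+(q^k-q)\E(X_b)$ via the Total Probability Formula; this yields $\Pr(X\ge1)\ge\big(\tfrac{q-1}{(q^k-1)\E(X_b)}+\tfrac{q^k-q}{q^k-1}\big)^{-1}\to1$. The conditional-expectation inequality is in principle sharper than Chebyshev (one can derive $\Pr(X\ge1)\ge\E(X)^2/\E(X^2)\ge 1-\mathrm{Var}(X)/\E(X)^2$ from it), but here the gain is immaterial since $\E(X)\to\infty$ already; your variance computation is the more elementary and textbook-standard route, and in fact the bound $\mathrm{Var}(X)\le(q-1)\E(X)$ makes the pairwise-independence structure more transparent than the conditional computation does.
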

\begin{proof}
If $r\!<\! g_q(\delta)$, by Eq.\eqref{X>1}, 
Markov Inequality(Lemma \ref{Markov}) and Lemma~\ref{l E(X)},
$$\lim\limits_{n\to \infty}\Pr\big(\Delta(C_M)\le\delta\big)
=\lim\limits_{n\to \infty}\Pr(X\ge 1)
\le\lim\limits_{n\to \infty}\E(X)=0.$$
That is, $\lim\limits_{n\to \infty}\Pr\big(\Delta(C_M)>\delta\big)=1$.

In the following, we assume that $r>g_q(\delta)$, 
and apply Lemma \ref{2'nd moment} to show that 
$\lim\limits_{n\to \infty}\Pr\big(X\!\ge\! 1\big)=1$; by~Eq.\eqref{X>1},
it implies that $\lim\limits_{n\to \infty}\Pr\big(\Delta(C_M)\!>\!\delta\big)=0$.

Given any $0\ne b\in\F^k$. By the linearity of the expectation, we have
\begin{equation}\label{X|X_b}\textstyle
 \E(X|X_b=1)=\sum_{b'\in\F^k}\E(X_{b'}|X_b=1) 
=\sum_{0\ne b'\in\F^k}\E(X_{b'}|X_b=1). 
\end{equation}
Let $0\ne b'\in\F^k$. 
The expectation $\E(X_b)$ is obtained in Eq.\eqref{E(X_b)}; 
in particular, $\E(X_{b'})=\E(X_b)$. 
Further, there are two cases: 
\begin{itemize}
\item\vskip-5pt
$b'$ and $b$ are linearly dependent. The 
probability of this case is $\frac{q-1}{q^k-1}$.
In this case $b'M$ and $bM$
are also linearly dependent, hence $\w(b'M)=\w(bM)$. Thus,
$$\E(X_{b'}|X_b=1)=\Pr(X_{b'}=1|X_b=1)=1.$$
\item\vskip-3pt
$b'$ and $b$ are linearly independent. 
The probability of this case is $1-\frac{q-1}{q^k-1}=\frac{q^k-q}{q^k-1}$.
In this case, for any $a', a\in\F^n$ there is a matrix $M\in\F^{k\times n}$
such that $(b'M,bM)=(a',a)$. When $M$ is running on $\F^{k\times n}$,
$(b'M,bM)$ is running on $\F^n\times\F^n$ uniformly.   
Thus the events ``$X_{b'}=1$'' and ``$X_b=1$'' are independent, 
hence (cf. the note after Lemma \ref{conditional})
$$\E(X_{b'}|X_b=1)=\Pr(X_{b'}=1|X_b=1)=\E(X_{b'})=\E(X_b).$$  
\end{itemize}
\vskip-3pt
By Total Probability Formula (Lemma \ref{l total}),
\begin{align*}
\textstyle
  \E (X_{b'}|X_b=1)=1\cdot\frac{q-1}{q^k-1}+
  \E (X_{b})\cdot\frac{q^k-q}{q^k-1}
  =\frac{(q-1)+(q^k-q) \E (X_{b})}{q^k-1}\,.
\end{align*}
Combining it with Eq.\eqref{X|X_b}, for any $0\ne b\in\F^k$ we have: 
\begin{align*}\textstyle
\E(X|X_b=1)=\sum_{0\ne b'\in\F^k}
 \frac{(q-1)+(q^k-q) \E (X_{b})}{q^k-1}
 =(q-1)+(q^{k}-q) \E (X_{b})\,.
\end{align*}
By Lemma~\ref{2'nd moment},
$$\textstyle
 \Pr(X\ge 1)\ge\sum_{{0}\ne{b}\in \F^k}
 \frac{\E(X_b)}{\E(X|X_b=1)}
 =\sum_{{0}\ne{b}\in \F^k}
  \frac{\E (X_{b})}{(q-1)+(q^{k}-q) \E (X_{b})}\,.
$$
Given $0\ne c\in\F^k$. Then $\E (X_{b})= \E (X_{c})$ 
for any $0\ne b\in\F^k$. We get 
$$
  \Pr(X\ge 1)\ge\frac{(q^k-1) \E (X_c)}
    {(q-1)+(q^k-q) \E (X_{{c}})}
  =\frac{1}
    {\frac{q-1}{(q^k-1) \E (X_{{b}})}+\frac{q^k-q}{q^k-1}}\,.
$$
Clearly, $\lim\limits_{n\to\infty}\frac{q^k-q}{q^k-1}
   =\lim\limits_{n\to\infty}\frac{q^{rn}-q}{q^{rn}-1}=1$.
By the linearity of the expectation again, 
$$\textstyle
 \E (X)=\sum_{0\ne b\in \F^k}\E(X_{b})
  =(q^k-1) \E (X_{c}).
$$ 
Because $ r > g_q(\delta)$,  
by Lemma \ref{l E(X)} we get
$\lim\limits_{n\to\infty}(q^k-1) \E (X_{c})=
 \lim\limits_{n\to\infty} \E (X)=\infty$. Thus
$$
 \lim\limits_{n\to\infty}\Pr(X\ge 1)\ge
 \lim\limits_{n\to\infty} \frac{1}
    {\frac{q-1}{(q^k-1) \E (X_{b})}+\frac{q^k-q}{q^k-1}}
 =1.
$$
That is, $\lim\limits_{n\to\infty}\Pr(X\ge 1)=1$.
\end{proof}

As explained in Definition \ref{C_M}(2), 
the rate $\R(C_M)\le\frac{k}{n}\le r$, and 
$\R(C_M)=\frac{k}{n}$ if and only if ${\rm rank}(M)=k$.
We take $N\in\F^{k\times n}$ such that ${\rm rank}(N)=k$. 
Then $C_N$ is a random linear code over $\F$ of length $n$ and dimension $k$.

\begin{theorem}\label{Delta(C_N) >delta}
$\lim\limits_{n\to\infty}\Pr\big(\Delta(C_N)>\delta\big)=\begin{cases}
   1, & r < g_q(\delta);\\
   0, & r > g_q(\delta).\end{cases}
$
\end{theorem}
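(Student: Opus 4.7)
The plan is to derive Theorem \ref{Delta(C_N) >delta} from Theorem \ref{Delta(C_M) >delta} by recognising $C_N$ as $C_M$ conditioned on the event that $M$ has full row rank, and then showing that this conditioning event has probability tending to $1$, so the conditional and unconditional limits coincide.

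More concretely, let $A$ denote the event $\Delta(C_M)>\delta$ in the probability space $\F^{k\times n}$, and let $B$ denote the event ${\rm rank}(M)=k$. Since $C_N$ is uniformly distributed on the set of full-rank matrices, we have
$$\Pr\bigl(\Delta(C_N)>\delta\bigr)=\Pr(A\mid B)=\frac{\Pr(A\cap B)}{\Pr(B)}.$$
From the elementary bounds $\Pr(A)-\Pr(B^c)\le\Pr(A\cap B)\le\Pr(A)$, I would extract the sandwich
$$\frac{\Pr(A)-\Pr(B^c)}{\Pr(B)}\;\le\;\Pr\bigl(\Delta(C_N)>\delta\bigr)\;\le\;\frac{\Pr(A)}{\Pr(B)}.$$

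The next step is to show $\Pr(B)\to 1$. A standard count — the $i$-th row of $M$ must avoid the subspace spanned by the previous rows — gives
$$\Pr(B)=\prod_{i=0}^{k-1}\bigl(1-q^{\,i-n}\bigr)\;\ge\;1-\sum_{i=0}^{k-1}q^{\,i-n}\;\ge\;1-\frac{q^{\,k-n}}{1-q^{-1}}.$$
Since $k\le rn$ with $r<1$, we have $k-n\to-\infty$, whence $\Pr(B)\to 1$ and $\Pr(B^c)\to 0$.

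Combining these two steps with Theorem \ref{Delta(C_M) >delta} finishes the argument. In the regime $r<g_q(\delta)$, Theorem \ref{Delta(C_M) >delta} gives $\Pr(A)\to 1$ while $\Pr(B^c)\to 0$, so the lower bound of the sandwich forces $\Pr(\Delta(C_N)>\delta)\to 1$; in the regime $r>g_q(\delta)$, Theorem \ref{Delta(C_M) >delta} gives $\Pr(A)\to 0$ while $\Pr(B)\to 1$, so the upper bound forces $\Pr(\Delta(C_N)>\delta)\to 0$. I do not expect a serious obstacle: all the hard work — in particular the first and second moment computations behind $\Pr(\Delta(C_M)>\delta)$ — is already in Theorem \ref{Delta(C_M) >delta}, and the only new ingredient is the rank-deficiency estimate, which is comfortably strong (in fact $\Pr(B^c)$ decays exponentially) compared with what the comparison requires.
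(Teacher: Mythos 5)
Your proposal is correct and takes essentially the same route as the paper: condition on the full-rank event, show that event has probability tending to $1$, and transfer the limit from Theorem \ref{Delta(C_M) >delta}. The only cosmetic difference is that the paper uses the Total Probability Formula plus Exercise \ref{e rank=}(4) where you use a two-sided sandwich and a direct product count of $\Pr(\mathrm{rank}(M)=k)$; these are interchangeable.
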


\begin{proof} 
The probability $\Pr\!\big(\Delta(C_N)\!>\!\delta\big)$
is written as a conditional probability: 
$$
 \Pr\big(\Delta(C_N)>\delta\big)
   =\Pr\big(\Delta(C_M)>\delta\mid{\rm rank}(M)=k\big).
$$
By Total Probability Formula,
\begin{align*}
\Pr\big(\Delta(C_M)>\delta\big)
 &=\Pr\big(\Delta(C_M)>\delta\mid{\rm rank}(M)=k\big)
        \Pr\big({\rm rank}(M)=k\big)\\
 &~~ +\Pr\big(\Delta(C_M)>\delta\mid{\rm rank}(M)<k\big)
    \Pr\big({\rm rank}(M)<k\big).
\end{align*}
By Exercise \ref{e rank=} below, 
$$
 \lim_{n\to\infty}\Pr\big({\rm rank}(M)<k\big)=0\,,\qquad
 \lim_{n\to\infty}\Pr\big({\rm rank}(M)=k\big)=1\,.
$$
Thus
$$
\lim_{n\to\infty}\Pr\big(\Delta(C_M)>\delta\mid{\rm rank}(M)=k\big)
 =\lim_{n\to\infty}\Pr\big(\Delta(C_M)>\delta\big)\,.
$$
So, the theorem follows from Theorem \ref{Delta(C_M) >delta} at once.
\end{proof}

\bex\label{e rank=}
Let $M$ be a random $k\times n$ matrix over $\F$, 
where $k=\lfloor rn\rfloor$ with $r\in(0,1)$. 

(1)~ ${\rm rank}\,M<k$ if and only if all the column vectors of 
$M$ belong to one and the same $(k-1)$-dimensional subspace of $\F^k$.

(2)~ Let $W\le\F^k$ be a $(k-1)$-dimensional subspace.
Then the probability that all the column vectors of $M$ belong to $W$
is equal to $\frac{1}{q^n}$. 
\hint{the probability that a random vector of $\F^k$ 
 belongs to $W$ equals $q^{k-1}/q^k=1/q$.}

(3)~ The number of the $(k-1)$-dimensional subspaces of $\F^k$
is equal to $\frac{q^k-1}{q-1}$. \\
\hint{$(k-1)$-dimensional subspaces are 1-1 corresponding to 
the $1$-dimensional subspaces;
 the number of $1$-dimensional subspaces is equal to $(q^k-1)/(q-1)$.}

(4)~ $\Pr\big({\rm rank}(M)<k\big)\le q^k/q^n$, hence
$\lim\limits_{n\to\infty}\Pr\big({\rm rank}(M)<k\big)=0$.\\
\hint{\kern-2pt$\Pr\!\big({\rm rank}(M)\!<\!k\big)\le
 \sum_{W}\Pr\big(\mbox{all columns of $M$ belong to $W$}\big)$,
 where the subscript $W$ runs over $(k\!-\!1)$-dimensional subspaces of $\F^k$.}
\eex

As a consequence of the first equality (the case ``$r< g_q(\delta)$'') 
of Theorem \ref{Delta(C_N) >delta}, 
we have:

\begin{corollary}\label{c C_N}
If $\delta\in(0,1-q^{-1})$ and $0<r<g_q(\delta)$, 
then there exist linear codes $C_1, C_2,\cdots$ 
with the length $n_i$ of $C_i$ going to infinity such that
$\lim\limits_{i\to\infty}\R(C_i)=r$ and
$\Delta(C_i)>\delta$ for all $i=1,2,\cdots$. 
\end{corollary}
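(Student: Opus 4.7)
The plan is to extract a single deterministic code of each sufficiently large length from the probabilistic statement of Theorem~\ref{Delta(C_N) >delta}. Since $r<g_q(\delta)$, that theorem guarantees
\[
\lim_{n\to\infty}\Pr\bigl(\Delta(C_N)>\delta\bigr)=1,
\]
where $N$ ranges uniformly over rank-$k$ matrices in $\F^{k\times n}$ with $k=\lfloor rn\rfloor$. In particular, for all sufficiently large $n$ this probability is strictly positive, so there must exist at least one concrete $N_n\in\F^{k\times n}$ of rank~$k$ for which the associated code $C_{N_n}\le\F^n$ satisfies $\Delta(C_{N_n})>\delta$.

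Having fixed such an $N_n$ for each large $n$, I would choose a strictly increasing sequence $n_1<n_2<\cdots$ (e.g.\ by taking $n_i$ larger than some threshold after which the probability exceeds, say, $1/2$) and set $C_i:=C_{N_{n_i}}$. By construction, each $C_i$ is a linear code of length $n_i$ with $\Delta(C_i)>\delta$. The rate is $\R(C_i)=\dim C_i/n_i=k_i/n_i=\lfloor r n_i\rfloor/n_i$, and since
\[
r-\tfrac{1}{n_i}<\tfrac{\lfloor rn_i\rfloor}{n_i}\le r,
\]
we get $\lim_{i\to\infty}\R(C_i)=r$, as required.

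There is no real obstacle here: the probabilistic heavy lifting has already been done in Theorem~\ref{Delta(C_N) >delta}, and the remaining task is just the standard derandomization step of replacing an event of probability tending to $1$ with a specific witness for each large $n$. The only mild point to note is that one must take $N_n$ of full rank $k$ so that $\dim C_{N_n}=k_i$ and the rate computation above is valid; this is precisely why the theorem was stated for $C_N$ rather than for $C_M$ with $M$ unrestricted.
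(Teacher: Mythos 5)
Your argument is exactly the intended one: the paper presents the corollary as an immediate consequence of the case $r<g_q(\delta)$ of Theorem~\ref{Delta(C_N) >delta}, and your derandomization step (positive probability implies a concrete witness $N_n$ of full rank, then $\lfloor rn_i\rfloor/n_i\to r$) is the standard way to make that implication explicit. No gaps; same approach as the paper.
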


In other words, there are asymptotically good linear code sequences
whose parameters attain the GV-bound.  

\begin{remark}\label{r linear}\rm
We conclude this section by some remarks. 

(1) The first inequality of Theorem \ref{Delta(C_N) >delta} is just the 
well-known GV-bound in \cite{G}, \cite{V}; see Corollary~\ref{c C_N}.  
While the second inequality of Theorem~\ref{Delta(C_N) >delta}
was proved in \cite{P67}. The argument we stated here is based on \cite{FLLSX}.

(2) As we have seen, Theorem~\ref{Delta(C_N) >delta} follows from
Theorem~\ref{Delta(C_M) >delta} easily.
The theorems imply that, if $\delta\in(0,1-q^{-1})$ is given, 
then $g_q(\delta)$ is a threshold point
(phase transition point) for the event ``$\Delta(C_M)>\delta$''.
Another explanation: if $r$ is fixed, then the random variable 
$\Delta(C_M)$ is asymptotically distributed at the point $g_q^{-1}(r)$, 
where $g_q^{-1}(\cdot)$ denotes the inverse function of~$g_q(\cdot)$.   

(3) The argument for the first inequality of 
Theorem \ref{Delta(C_M) >delta} is usually 
called the {\em first moment method};  
while the so--called {\em second moment method}
is used to prove the second inequality, which is more complicated than the
first moment method. See Section \ref{appendix} Appendix please. 
\end{remark}

\section{Fundamentals on group codes}\label{group codes} 

From now on, 
 $\F$ is always a finite field with $|\F|=q$, and
 $G$ is always a finite group of order $n$. As usual, $1_G$ (or $1$ for short)
denotes the identity of $G$.

Let $\F G=\big\{\sum_{x\in G}a_x x\,|\,a_x\in\F\big\}$
be the $\F$-vector space with basis $G$. 
The multiplication of the group $G$ is linearly extended to
a multiplication of~$\F G$:
\begin{equation}\label{multiplication}
\textstyle
\big(\sum_{x\in G}a_x x\big)\cdot\big(\sum_{y\in G}b_y y\big)
=\sum_{z\in G}\big(\sum_{xy=z}a_xb_y\big)z.
\end{equation}
Then $\F G$ is a (finite) ring, more precisely, 
an $n$-dimensional  $\F$-algebra, 
called the {\em group algebra} of $G$ over $\F$.
Any element $\sum_{x\in G}a_x x\in\F G$ can be 
viewed as a word $(a_x)_{x\in G}\in\F^n$ of length $n$, i.e., 
the entries of the word labelled by $G$ 
and the $x$-th entry of  the word 
equals $a_x$. Then $\F G$ is $\F$-linearly isomorphic to $\F^n$. 
For $a=\sum_{x\in G}a_x x, b=\sum_{x\in G}b_x x\in\F G$,
 the euclidean inner product of $a,b$ is defined as (see Eq.\eqref{inner}): 
$\langle  a,b\rangle=\sum_{x\in G}a_xb_x$.

\begin{definition}\rm
Any left ideal $C$ of $\F G$, denoted by $C\le \F G$, is called a {\em group code} 
of the group $G$ over the field $\F$, 
or an {\em $\F G$-code} for short. 
With the euclidean inner product as above, 
 self-orthogonal $\F G$-codes, self-dual $\F G$-codes and
 LCD  $\F G$-codes etc. are defined as usual.

If $G$ is abelian, then $\F G$ is commutative,
 and ``left ideal'', ``right ideal'' and
``ideal (two-sided ideal)'' are the same, 
and the $\F G$-codes are called {\em abelian codes}. 
In particular,  the $\F G$-codes are called {\em cyclic codes}
if $G$ is a cyclic group.

Similarly, the $\F G$-codes are called {\em dihedral codes}
if $G$ is a dihedral group.
\end{definition}

\begin{example}\label{e cyclic}\rm
Let $G=\{1,x,\cdots,x^{n-1}\}$, with $x^n=1$, 
be a cyclic group of order~$n$. The $\F G$-codes are just well-known
{\em cyclic codes} of length $n$ over~$\F$, which are extensively studied
and applied for a very long time, e.g., see \cite{HP}.  
An element $\sum_{j=0}^{n-1}a_jx^j\in \F G$ is viewed 
as a word $(a_0,a_1,\cdots,a_{n-1})\in\F^n$. 
And, $x(a_0+a_1x+\cdots+a_{n-1}x^{n-1})=a_{n-1}+a_0x+\cdots+a_{n-2}x^{n-1}$. 
So, multiplying by~$x$ transforms 
$(a_0,a_1,\cdots,a_{n-1})$ to $(a_{n-1},a_0,\cdots,a_{n-2})$; i.e.,
multiplying by~$x$ is a cyclic permutation $\rho\!=\!(0,1,\cdots,n\!-\!1)$ 
on the entries of the words of length~$n$. 
Then $C\subseteq\F G$ is an ideal if and only if 
$C$ is a linear subspace of $\F^n$ and
\begin{equation}\label{eq cyclic}
c=(c_0,c_1,\cdots,c_{n-1})\in C ~\implies~ 
\rho(c)=(c_{n-1},c_0,\cdots,c_{n-2})\in C.
\end{equation}
In fact, it is another common definition: ``$C$ is called a cyclic code if  
$C$ is a linear subspace of $\F^n$ and Eq.\eqref{eq cyclic} holds.''

As exhibited in Introduction, 
it is a long-standing open problem: {\em are the cyclic codes
over a finite field asymptotically good?} 
\end{example}

\begin{remark}\label{Cayley permutation}\rm
For any finite group $G$ of order $n$, we have an explanation
for $\F G$-codes similar to Eq.\eqref{eq cyclic}. 
List the elements of $G$ as 
\begin{equation*}
G=\{x_0=1,x_1,\cdots,x_{n-1}\}. 
\end{equation*}
Any $a=\sum_{j=0}^{n-1}a_jx_j\in \F G$ corresponds to a sequence 
$\vec a=(a_0,a_1,\cdots,a_{n-1})\in \F^n$. 
Any $x_i\in G$ provides a permutation $\rho_i$: 
\begin{equation*}
 x_i \longmapsto \rho_i\!=\!\begin{pmatrix}
 0 & 1 & \cdots & j & \cdots & n-1\\
 0' & 1' & \cdots & j' & \cdots & (n-1)'
\end{pmatrix},
\end{equation*}
where $x_ix_j=x_{j'}$ for $j=0,1,\cdots,n-1$.
In particular,  $0'=i$ because $x_i x_0=x_i$.
The map $x_i\mapsto\rho_i$ is called  
the {\em Cayley representation} of the group $G$, 
and $\rho_i$ is called the {\em Cayley permutation} of $x_i$. 
Clearly, $\rho_0$ is the identity permutation. Then 
$$\textstyle
 x_ia=\sum_{j=0}^{n-1}a_jx_ix_j=\sum_{j=0}^{n-1}a_jx_{j'};
$$ 
i.e.,  $x_ia$ corresponds to the sequence 
$\rho_i(\vec a)$ which is obtained 
by $\rho_i$-permuting on the sequence 
$\vec a=(a_0, a_1,\cdots,a_{n-1})$. 
Assume that the group $G$ is generated by $x_1,\cdots,x_r$, $0< r< n$.
Then $C\subseteq \F G$ is a left ideal  ($\F G$-code) if and only if 
$C$ is a subspace of $\F^n$ and 
\begin{equation*}
\vec c \in C ~\implies~ 
\rho_i(\vec c)\in C,~~\forall ~ i=1,\cdots,r.
\end{equation*}
\end{remark}

\begin{example}\rm
Assume that $G=\{x_0=1,x_1,\cdots,x_{n-1}\}$ 
generated by two elements $x_1,x_2$.
Then, $C\subseteq\F G$ is a left ideal ($\F G$-code) if and only if 
$C$ is a linear subspace of $\F^n$ and:
\begin{equation*}
\vec c\in C ~\implies~ 
\rho_1(\vec c), \rho_2(\vec c)\in C.
\end{equation*}

A very simple non-cyclic example: 
let $q=2$ and $G=\{1,a,b,ab\}$ be the Klein four group,  
i.e., $G$ is abelian and generated by two elements $a,b$ of order~$2$.
Then $x_0=1,x_1=a,x_2=b,x_3=ab$, 
$\rho_1=(01)(23)$, $\rho_2=(02)(13)$.   Consider two subspaces: 
\begin{align*}
 A=\{(0,0,0,0),(1,1,0,0),(1,0,0,0),(0,1,0,0)\}\subseteq\F^4,\\
 B=\{(0,0,0,0),(1,1,0,0),(0,0,1,1),(1,1,1,1)\}\subseteq\F^4.
\end{align*}
Then $A$ is not an $\F G$-code, because 
$\rho_2(1,1,0,0)=(0,0,1,1)\notin A$. While
$B$ is an $\F G$-code, because $\rho_1(B)=B$ and $\rho_2(B)=B$.
\end{example}

\begin{remark}\rm 
We recall some general notation about rings.
In this chapter, any ring $R$  has identity $1_R\ne 0$ (or $1\ne 0$ in short);
subrings and ring homomorphisms are always identity-preserving;
and $R^\times$ denotes the {\em multiplicative unit group} 
of all units (invertible elements) of $R$.
We say that $M$ is a {\em (left) $R$-module} 
if $M$ is an additive group (abelian group with operation ``$+$'') and 
there is a map $R\times M\to M$, $(r,m)\mapsto rm$ 
(the image is denoted by $rm$), such that
\begin{itemize}
\item \vskip-5pt
$r(m+m')=rm+rm'$,~ $\forall$ $r\in R$, $m,m'\in M$; 
\item \vskip-4pt
$(r+r')m=rm+r'm$,~ $\forall$ $r,r'\in R$, $m\in M$; 
\item \vskip-4pt
$(rr')m=r(r'm)$,~ $\forall$ $r,r'\in R$, $m\in M$; 
\item \vskip-4pt
$1_Rm=m$,~ $\forall$ $m\in M$.  
\end{itemize}
\vskip-5pt
$R$-modules are just $R$-vector spaces once $R$ is a field. 
The $R$-submodules, quotient modules, module homomorphisms
(or $R$-homomorphisms), etc. can be defined similarly 
to the  subspaces, quotient spaces, linear transformations of vector spaces, etc. 
Both the image and the kernel of an $R$-module homomorphism 
are $R$-submodules. 
The sum $M_1+\cdots+M_t$ of submodules $M_i$ of an $R$-module $M$ 
is also defined similarly. We say that $M$ is a direct sum of $M_1,\cdots,M_t$
and write $M=M_1\oplus\cdots\oplus M_t$ 
if for any $a\in M$ there are unique $a_i\in M_i$ for $i=1,\cdots,t$
such that $a=a_1+\cdots+a_t$. 
\end{remark}

\bex\label{tilde R-module}
If there is a surjective ring homomorphism $\tilde R\to R$, then
any $R$-module~$M$ is also an $\tilde R$-module.
\eex

The map $R\times R\to R$, $(r,m)\mapsto rm$
(the multiplication of $r$ and $m$), satisfies the above four conditions. 
So $R$ is an $R$-module, called the {\em (left) regular $R$-module}.
The $R$-submodules of the regular module $R$
are just left ideals of $R$. 

A left ideal (two-sided ideal) $I$ of $R$ is said to be {\em complementary} if
there is a left ideal (two-sided ideal) $I'$ of $R$ 
such that $R=I\oplus I'$ (i.e., 
any $r\in R$ is uniquely written as $r=a+a'$ with $a\in I$ and $a'\in I'$; 
or equivalently, $R=I+I'$ and $I\cap I'=0$). 
An element $e\in R$ is called an {\em idempotent} if $e^2=e$.
An idempotent $e$ is said to be central if $e\in{\rm Z}(R)$, where
${\rm Z}(R)$ denotes the center of~$R$, which is a subring of $R$. 

\bex\label{idempotent}
(1)~ A left ideal $I$ of $R$ is complementary if and only if $I$ is generated by 
an idempotent $e$ (i.e., $I=Re$). 
\hint{if $R=I\oplus I'$, then $1=e+e'$ for unique $e\in I$, $e'\in I'$, then
$e=e^2+ee'$, hence $e^2=e$ and $ee'=0$; similarly, $e'^2=e'$ and $e'e=0$.}

(2)~ 
A two-sided ideal $I$ of $R$ is complementary if and only if $I$ is generated by 
a central idempotent $e$ (i.e., $I=Re$).
\hint{if $R=I\oplus I'$, by (1), $1=e+e'$, $I=Re$ and $I'=Re'$;
for any $a\in R$, $a=ae+ae'=ea+e'a$, $ae\in I$, $ea\in I$, 
$ae'\in I'$, $e'a\in I'$; so $ae=ea$; $e\in{\rm Z}(R)$.}
\eex

In a ring $R$, $e_1,\cdots,e_m$ are said to be a 
{\em complete system of central idempotents} 
if $e_1,\cdots,e_m$ are non-zero central idempotents, $e_ie_j\!=\!0$ 
for any $1\!\le\! i\!\ne\! j\le\! m$, and $1=e_1+\cdots+e_m$.

\begin{lemma}\label{complete system idempotent} 
Let $R$ be a ring. The following two are equivalent to each other:
\begin{itemize}
\item[\rm(1)]\vskip-5pt
 $R\!=\!R_1\oplus\cdots\oplus R_m$ with $R_i\ne 0$ for $i\!=\!1,\cdots,m$ being
 two-sided ideals of~$R$.
\item[\rm(2)]\vskip-5pt
 There is a complete system of central idempotents $e_1,\cdots,e_m$ 
 such that $R_i=Re_i$, $i=1,\cdots,m$.
\end{itemize}
\vskip-5pt
If it is the case, then 
each $R_i=Re_i$, $i=1,\cdots,m$, is a ring with identity $e_i$ 
(but not a subring of $R$ in general because $e_i\ne 1_R$ in general),  
and for any left ideal $L$, $L=Le_1\oplus\cdots\oplus Le_m$
with each $Le_i=L\cap R_i$ being a left ideal of $R_i$.
\end{lemma}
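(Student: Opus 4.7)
The plan is to prove the equivalence $(1) \Leftrightarrow (2)$ by constructing each datum canonically from the other, and then to extract the identity of each $R_i$ and the decomposition of an arbitrary left ideal $L$ from this setup.

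For $(1) \Rightarrow (2)$, I would begin with the unique expansion $1 = e_1 + \cdots + e_m$ forced by $R = R_1 \oplus \cdots \oplus R_m$ with $e_i \in R_i$. For any $r \in R_i$, multiplying this relation on both sides by $r$ gives $r = re_1 + \cdots + re_m = e_1 r + \cdots + e_m r$; since the $R_j$ are two-sided ideals, $re_j, e_j r \in R_i \cap R_j = 0$ for $j \ne i$, so $r = re_i = e_i r$. In particular $e_i^2 = e_i$, each $e_i \ne 0$ (otherwise $R_i = R_i e_i = 0$, contradicting $R_i \ne 0$), and $e_i e_j = 0$ for $i \ne j$. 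Centrality of $e_i$ is the subtler point: for arbitrary $r \in R$, expand $re_i = \sum_j e_j r e_i$; since $e_j r \in R_j$ and $R_j$ is two-sided, $e_j r e_i \in R_j \cap R_i = 0$ for $j \ne i$, giving $re_i = e_i r e_i$, and dually $e_i r = e_i r e_i$, so $re_i = e_i r$. The equality $R_i = Re_i$ is then immediate from $R_i = R_i e_i \subseteq Re_i \subseteq R_i$.

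For $(2) \Rightarrow (1)$, centrality of $e_i$ makes $R_i := Re_i = e_i R$ a two-sided ideal, and $1 = e_1 + \cdots + e_m$ yields $R = R_1 + \cdots + R_m$. Directness follows from orthogonality: any $x \in R_i \cap R_j$ for $i \ne j$ can be written $x = y e_i = z e_j$, whence $x = y e_i = y e_i^2 = (z e_j) e_i = 0$. Each $R_i$ is nonzero because $e_i \in R_i$ is a nonzero idempotent.

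For the addendum, the identity $r = re_i = e_i r$ for $r \in R_i$ (shown in the first direction, equally easy from (2)) says $e_i$ is the multiplicative identity of the ring $R_i$. For a left ideal $L$ of $R$, writing $x = xe_1 + \cdots + xe_m$ for each $x \in L$ yields $L = Le_1 + \cdots + Le_m$ with $Le_i \subseteq L \cap R_i$; conversely any $x \in L \cap R_i$ satisfies $x = xe_i \in Le_i$, so $Le_i = L \cap R_i$. Directness of this decomposition of $L$ is inherited from that of $R$, and each $Le_i$ is visibly a left ideal in $R_i$. The main obstacle to anticipate is the centrality argument in $(1) \Rightarrow (2)$: the idempotency and pairwise orthogonality of the $e_i$ use only the direct-sum structure, while centrality genuinely requires both-sided absorption of the $R_j$, and the two-sided sandwich $e_j r e_i$ is the trick that unlocks it.
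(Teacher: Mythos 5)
Your overall strategy mirrors the paper's, and most steps are sound, but there is one genuine gap in $(2)\Rightarrow(1)$. You establish only that $R_i\cap R_j=0$ for $i\ne j$ and then assert that ``directness follows from orthogonality.'' Pairwise trivial intersections do \emph{not} imply that $R_1+\cdots+R_m$ is a direct sum once $m\ge 3$: three distinct lines through the origin in a plane have pairwise trivial intersections, yet their sum is not direct. What is actually needed is that a relation $a_1+\cdots+a_m=0$ with $a_i\in R_i=Re_i$ forces every $a_i=0$; equivalently, that the $R_i$-components of any $a\in R$ are uniquely determined. This follows at once by right-multiplying by $e_i$: writing $a_j=b_je_j$, orthogonality gives $a_je_i=0$ for $j\ne i$ while idempotency gives $a_ie_i=a_i$, so $a_i=ae_i$. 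That ``project via $e_i$'' step is precisely what the paper's proof does, and you already have every ingredient in hand --- you just stopped one step short.

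On the centrality step of $(1)\Rightarrow(2)$, your sandwich argument ($e_jre_i\in R_i\cap R_j=0$ for $j\ne i$, hence $re_i=e_ire_i=e_ir$) is correct, and it differs slightly from the paper's, which compares the two decompositions $a=\sum_j ae_j$ and $a=\sum_j e_ja$, notes that $ae_j$ and $e_ja$ both lie in $R_j$, and invokes the uniqueness of the given direct-sum decomposition once to conclude $ae_j=e_ja$ for all $j$ and all $a$ in a single stroke. Both routes use two-sidedness of the $R_j$ at the crucial point; the paper's is a bit shorter. Your treatment of the addendum agrees with the paper's.
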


\begin{proof}
(1) $\Rightarrow$ (2). By the direct sum in (1), 
we have unique $e_i\in R_i$ such that $1_R=e_1+\cdots+e_m$. 
Then $ae_1+\cdots+ae_m=a1_R=a=1_Ra=e_1a+\cdots+e_ma$
with $ae_i,e_ia\in R_i$. By the direct sum again,
we get $ae_i=e_ia$ for $i=1,\cdots,m$. 
Letting $a$ run over $R$, we see that $e_i\in{\rm Z}(R)$. 
Taking $a=e_j$, we obtain that: $e_ie_j=e_j$ if $i=j$, and $e_ie_j=0$ if $i\ne j$.
Finally, taking $a\in R_i$, we get that $a=ae_i$; hence $R_i=Re_i$.

(2) $\Rightarrow$ (1). 
As $e_i\!\in\!{\rm Z}(R)$, $R_i\!=\!Re_i$ is a two-sided ideal.
For any ${a\!\in\!R}$, $a=a1_R=ae_1+\cdots+ae_m$.
Thus $R=R_1+\cdots+R_m$. 
Assume that $a=a_1+\cdots+a_m$ with $a_i\in R_i=Re_i$; 
we show that $a_j=ae_j$ for $j=1,\cdots,m$, hence the sum is a direct sum.
First, $a_i=be_i$ for some $b\in R$, hence $a_i=be_i=be_ie_i=a_ie_i$
(in other words, $e_i$ is the identity of $R_i$);
then $ae_j=a_1e_1e_j+\cdots+a_je_je_j+\cdots+a_me_me_j=a_je_j=a_j$.

Assume that (1), (2) hold. In the above proof 
we have seen that $R_i$ is a ring with identity $e_i$.
For $a\in L\cap R_i$, $a=ae_i\in Le_i$. Thus $Le_i=L\cap R_i$.
Finally, for any $a\in L$, 
$a=a1_R=ae_1+\cdots+ae_m\in Le_1\oplus\cdots\oplus Le_m$.
So $L=Le_1\oplus\cdots\oplus Le_m$.
\end{proof}

In the above lemma, $R_i=Re_i$ is called the {\em $e_i$-component} of~$R$,
$Le_i$ is called the {\em $e_i$-component} of $L$, 
$ae_i$ is called the {\em $e_i$-component} of $a$, etc.

An idempotent $e$ of a ring $R$ is said to be {\em primitive} if $e\ne 0$, 
and there are no idempotents $e'\ne 0\ne e''$ such that $e=e'+e''$
and $e'e''=0=e''e'$.
And $e$ is said to be a {\em central primitive} 
(or {\em primitive central}) idempotent of $R$
if $e$ is a primitive idempotent of the subring ${\rm Z}(R)$.

\bex\label{e e'}
(1) If $e,e'$ are central idempotents of a ring $R$
and $e$ is primitive central, then either $ee'=e$ or $ee'=0$.
\hint{$e=e'e+(e-e'e)$, $e'e(e-e'e)=0$.}

(2) If $e,e'$ are both primitive central idempotents, 
then either $e=e'$ or $ee'=0$. 

(3) If $e_1,\cdots,e_m$ are a complete system of primitive central idempotents of~$R$,
then $e_1,\cdots,e_m$ are all the distinct primitive central idempotents of~$R$.
\hint{for any primitive central idempotent $e$, $e=e1=ee_1+\cdots+ee_m$;
cite (2) above.}

(4) If $R$ is a finite ring, then the primitive central idempotent exists. 
\hint{if $1$ is not primitive central, then $1=e+e'$, $ee'=0$, $e\ne 0\ne e'$;
so $R=Re\oplus Re'$, $|Re|<|R|$.} 
\eex

\begin{lemma}\label{finite primitive}
Let $R$ be a finite ring. Then
$e_1,\cdots,e_m$ are a complete system of primitive central idempotents of $R$
if and only if $e_1,\cdots,e_m$ are all the distinct primitive central idempotents of~$R$.
\end{lemma}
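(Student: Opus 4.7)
The plan is to prove the two directions separately, with the ``only if'' direction being essentially immediate and the ``if'' direction requiring an argument via the two-sided Peirce decomposition.

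For the ``only if'' direction, assume $e_1,\ldots,e_m$ form a complete system of primitive central idempotents of $R$. This is already handled by Exercise~\ref{e e'}(3): for any primitive central idempotent $e$ of $R$, write $e=e\cdot 1 = ee_1+\cdots+ee_m$, and since each $ee_i$ is either $e_i$ or $0$ (by Exercise~\ref{e e'}(1), applied to either $e$ or $e_i$), exactly one summand is non-zero and equals $e_i$, giving $e=e_i$.

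For the ``if'' direction, suppose $e_1,\ldots,e_m$ are all the distinct primitive central idempotents of $R$ (note $m\ge 1$ by Exercise~\ref{e e'}(4), which guarantees at least one such idempotent exists in the finite ring $R$). First, Exercise~\ref{e e'}(2) gives $e_ie_j=0$ for $i\ne j$, so $e:=e_1+\cdots+e_m$ is a central idempotent. Set $f:=1-e$, which is also a central idempotent; I need to show $f=0$. Assume for contradiction that $f\ne 0$. Then $Rf$ is a non-zero finite ring with identity $f$, and by Exercise~\ref{e e'}(4) it contains a primitive central idempotent $e'$ (of the ring $Rf$).

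The key step, and the main obstacle, is to lift $e'$ to a primitive central idempotent of $R$. For centrality: given any $b\in R$, since $Rf$ is a two-sided ideal we have $e'b, be'\in Rf$, and $(e'b-be')f = e'b-be'$; on the other hand, $bf\in Rf$ commutes with $e'$ inside $Rf$, giving $e'(bf)=(bf)e'$, i.e., $(e'b-be')f=0$ (using that $f$ is central). Combining, $e'b=be'$, so $e'\in{\rm Z}(R)$. For primitivity in $R$: if $e'=g'+g''$ with $g',g''$ non-zero orthogonal central idempotents of $R$, then $g'=g'e'\in Re'\subseteq Rf$ and likewise $g''\in Rf$, contradicting primitivity of $e'$ in $Rf$. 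Hence $e'$ is a primitive central idempotent of $R$. But $e_if=e_i(1-\sum_j e_j)=e_i-e_i=0$ for each $i$, while $e'f=e'\ne 0$, so $e'\ne e_i$ for all $i$, contradicting the hypothesis that $e_1,\ldots,e_m$ exhaust the primitive central idempotents. Therefore $f=0$, i.e.\ $1=e_1+\cdots+e_m$, completing the verification that $\{e_i\}$ is a complete system.
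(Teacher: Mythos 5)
Your proof is correct and follows the same strategy as the paper's: set $e=e_1+\cdots+e_m$, and if $f=1-e\ne 0$ then the nonzero finite ring $Rf$ would contain a primitive central idempotent distinct from every $e_i$, a contradiction. The paper's proof simply asserts this last step, whereas you carefully check that a primitive central idempotent of $Rf$ is in fact a primitive central idempotent of $R$ (verifying both centrality in $R$ and primitivity in ${\rm Z}(R)$) — a genuine gap in the paper's write-up that your version fills.
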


\begin{proof} 
If $e_1,\cdots,e_m$ are all distinct primitive central idempotents of~$R$, then,
by Exercise \ref{e e'}, $e_ie_j=0$ for $1\le i\ne j\le m$;  
and, setting $e:=e_1+\cdots+e_m$, we must have $e=1$; because:
otherwise by Lemma \ref{complete system idempotent}
we have $R=Re\oplus Re'$ with $e'=1-e\ne 0$ and
$Re'$ is a finite ring with identity $e'$,
hence $Re'$ has a primitive central idempotent other than any $e_i$ for $i=1,\cdots,m$. 
\end{proof}

A ring $R$ is said to be {\em simple} if $R$ has no ideals other than $0$ and $R$.
A ring $R$ is said to be {\em semisimple} 
if any left ideal of $R$ is complementary. 

\bex\label{simple}
A commutative simple ring is a field. 
\hint{$Ra\!=\!R$ for any $0\!\ne\! a\in\! R$.}
\eex

\begin{theorem}\label{c semisimple}
If $R$ is a finite commutative semisimple ring and 
 $e_1,\cdots,e_m$ are all the primitive central idempotents, then: 

{\rm(1)} 
$R=Re_1\oplus\cdots\oplus Re_m$, and $Re_i$, $i=1,\cdots,m$, are finite fields.

{\rm(2)} For any ideal $0\ne I\le R$, 
$I=\bigoplus_{e\in E_I}Re=Re_I$, 
where $E_I=\{e_i\,|\,1\le i\le m,\,Ie_i\ne 0\}$ and $e_I=\sum_{e\in E_I}e$.  
In particular,  $I$ is a ring with identity $e_I$.

{\rm(3)} Set $E'_I=\{e_1,\cdots,e_m\}-E_I$ and $I'=\bigoplus_{e'\in E'_I}Re'$, 
then $R=I\oplus I'$, hence the quotient module $R/I\cong I'$.
\end{theorem}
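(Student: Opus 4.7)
The plan is to chain together the three preceding lemmas (Lemma \ref{complete system idempotent}, Lemma \ref{finite primitive}, and the characterizations in Exercises \ref{idempotent} and \ref{simple}) in order, attacking (1) first and then deducing (2) and (3) as essentially bookkeeping.

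For (1), I would begin by invoking Lemma \ref{finite primitive}: since $R$ is finite, the list $e_1,\ldots,e_m$ of \emph{all} primitive central idempotents is automatically a complete system, so Lemma \ref{complete system idempotent} immediately yields $R=Re_1\oplus\cdots\oplus Re_m$ with each $Re_i$ a finite commutative ring with identity $e_i$. The real work is to show each $Re_i$ is a field. By Exercise \ref{simple} it suffices to prove $Re_i$ is simple. The key lemma I would establish is that every ideal $J$ of $Re_i$ is also an ideal of $R$: for any $j\in J$ one has $j=je_i$, so for $a\in R$ we get $aj=(ae_i)j\in J$. Now semisimplicity of $R$, together with Exercise \ref{idempotent}(2), forces $J=Rf$ for some central idempotent $f\in R$; since $J\subseteq Re_i$, we have $f=fe_i$, whence $e_i=f+(e_i-f)$ is a decomposition into orthogonal central idempotents. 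Primitivity of $e_i$ then forces $f=0$ or $e_i-f=0$, i.e.\ $J=0$ or $J=Re_i$.

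For (2), I would apply the last statement of Lemma \ref{complete system idempotent} to the left ideal $I$ to get $I=Ie_1\oplus\cdots\oplus Ie_m$ with each $Ie_i=I\cap Re_i$ an ideal of $Re_i$. Since $Re_i$ is a field by (1), each $Ie_i$ is either $0$ or all of $Re_i$, with the latter case occurring precisely when $e_i\in E_I$. Hence $I=\bigoplus_{e\in E_I}Re$. To rewrite this as $Re_I$, I would use orthogonality: $Re_I=R\sum_{e\in E_I}e=\sum_{e\in E_I}Re$, which coincides with the direct sum above. That $e_I$ is the identity of $I$ is then a one-line check: for $a\in I$, we have $ae_i=0$ whenever $e_i\notin E_I$ (because $ae_i\in Ie_i=0$), so $ae_I=\sum_{e_i\in E_I}ae_i=\sum_i ae_i=a$.

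For (3), I would simply observe that $R=\bigoplus_{i=1}^m Re_i=\big(\bigoplus_{e\in E_I}Re\big)\oplus\big(\bigoplus_{e'\in E'_I}Re'\big)=I\oplus I'$, from which $R/I\cong I'$ is immediate.

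The main obstacle is really (1): proving each $Re_i$ is a field. Everything downstream is either direct-sum bookkeeping or an appeal to the fact that ideals of a field are trivial. The subtle point in (1) is recognizing that an ideal of the summand $Re_i$ lifts to an ideal of $R$ (so that semisimplicity of $R$ applies), and then translating the conclusion back through the primitivity of $e_i$.
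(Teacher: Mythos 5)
Your proposal is correct and follows essentially the same route as the paper's proof: both reduce (1) to showing each summand $Re_i$ is simple by lifting an ideal $J\le Re_i$ to an ideal of $R$, invoking semisimplicity to get a generating idempotent, and using primitivity of $e_i$ to force $J\in\{0,Re_i\}$; and both derive (2) and (3) from the component decomposition $I=\bigoplus_i(I\cap Re_i)$ of Lemma \ref{complete system idempotent}. The only cosmetic difference is that in (1) you split $e_i=f+(e_i-f)$ and cite the definition of primitivity directly, while the paper cites Exercise \ref{e e'}(1); these are interchangeable.
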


\begin{proof}
(1). The first conclusion follows from 
Lemma \ref{finite primitive} and Lemma \ref{complete system idempotent}.
Any non-zero ideal $J$ of $Re_i$ is an ideal of $R$, by Exercise \ref{idempotent}(1), 
$J=Re$ for an idempotent $e\ne 0$.
Since $e\in J\subseteq Re_i$, $e_ie=e\ne 0$. 
But $e_i$ is primitive, by Exercise \ref{e e'}, $e_i=e_ie$.
So, $e=e_i$ and $J=Re_i$. 
Thus, $Re_i$ is a commutative simple ring, hence is a field 
(by Exercise~\ref{simple}). 

(2). By Lemma \ref{complete system idempotent}, 
$I=(I\cap Re_1)\oplus\cdots\oplus(I\cap Re_m)$. 
Since $Re_i$ is a field, $I\cap Re_i=0$ or $Re_i$.
Let $E_I=\{e_i\,|\,1\le i\le m,\, I\cap Re_i=Re_i\}$.
Then  $I=\bigoplus_{e\in E_I}Re=Re_I$.  

(3) follows from (1) and (2) immediately, 
\end{proof}

A remark: the finiteness assumption in 
Lemma \ref{finite primitive} and Theorem \ref{c semisimple}
can be replaced by the so-called Artin's condition. 

\medskip
Turn to the group algebra $\F G$, where $\F$ is a finite field with $|\F|=q$
and $G$ is a finite group of order $n>1$. 

\bex\label{G subset FG}
(1) $\F\to\F G$, $\alpha\mapsto \alpha 1_G$, 
is an injective ring homomorphism.

(2) $G\to(\F G)^\times$, $g\mapsto \sum_{x\in G}\gamma_x x$,  
where $\gamma_x=\begin{cases}1, & x=g;\\ 0, &x\ne g; \end{cases}$
is an injective group homomorphism.
\eex

By Exercise \ref{G subset FG} we can think of that
$\F\subseteq\F G$ and $G \subseteq\F G$.

\bex [{\sf The universal property of group algebras}]
\label{e universal} 
Let $R$ be any ring.

(1)~ 
If $\rho: \F G\to R$ is a ring homomorphism, then
$\rho(\F)$ is a subring of $R$ which is isomorphic to $\F$, 
 $\rho|_G: G\to R^\times$ is a group homomorphism,
and $\rho(G)$ commutes with $\rho(\F)$ element-wise.

(2)~ If ${\rm Z}(R)$ has a subring isomorphic to $\F$ 
(the subring is denoted by $\F$ again, 
 such a ring $R$ is called an {\em $\F$-algebra}),  
and $\tau: G\to R^\times$ is a group homomorphism,
then $\rho:\F G\to R$, 
$\rho\big(\sum_{x\in G}a_xx\big)=\sum_{x\in G}a_x\tau(x)$, 
is a ring homomorphism.
\eex 

\begin{lemma}[{\bfseries Maschck Theorem}] \label{Maschck}
If $\gcd(n,q)=1$, then $\F G$ is semisimple.
\end{lemma}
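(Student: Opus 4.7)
The plan is to show that every left ideal $I \le \F G$ is complementary, that is, to produce a left ideal $I'$ with $\F G = I \oplus I'$. The strategy is to construct an $\F G$-linear projection $\pi \colon \F G \to \F G$ with image exactly $I$; then $I' := \ker\pi$ will automatically be a left ideal and the direct-sum decomposition will follow.

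First I would choose an $\F$-linear complement $J$ of $I$ in $\F G$ (possible since $\F G$ is a finite-dimensional $\F$-vector space) and let $\pi_0 \colon \F G \to \F G$ be the $\F$-linear projection onto $I$ along $J$. The map $\pi_0$ is only $\F$-linear, not $\F G$-linear in general, so I would symmetrize it by averaging over $G$: define
$$\pi(a) \;=\; \frac{1}{n}\sum_{g\in G} g\,\pi_0(g^{-1}a), \qquad a\in \F G.$$
This is the one step that uses the hypothesis $\gcd(n,q)=1$, which guarantees $n\neq 0$ in $\F$, so $\frac{1}{n}$ is a well-defined scalar.

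Next I would verify three properties of $\pi$. (i) $\pi(a)\in I$ for every $a$, because $\pi_0(g^{-1}a)\in I$ and $I$ is a left ideal, so each summand $g\,\pi_0(g^{-1}a)$ lies in $I$. (ii) If $a\in I$, then $g^{-1}a\in I$, hence $\pi_0(g^{-1}a)=g^{-1}a$ and each summand equals $a$, giving $\pi|_I = \mathrm{id}_I$. (iii) To see $\pi$ is $\F G$-linear, it suffices (by $\F$-linearity and the fact that $G$ generates $\F G$ as an $\F$-algebra) to check $\pi(ha) = h\,\pi(a)$ for $h\in G$; this is immediate from the reindexing $g\mapsto hg$ in the sum.

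Combining (i)--(iii), $\pi$ is an $\F G$-linear idempotent on $\F G$ with image $I$, so $\F G = I \oplus \ker\pi$ with $\ker\pi$ a left ideal, proving semisimplicity. There is no serious obstacle in this argument; the only subtlety is the invertibility of $n$ in $\F$, which is exactly the coprimality hypothesis, and without it the averaging construction breaks at the very first step.
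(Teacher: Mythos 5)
Your argument is correct and is essentially identical to the paper's proof: both take an $\F$-linear projection onto the given left ideal, average it over $G$ using the invertibility of $n$ in $\F$, and then verify the averaged map is an $\F G$-linear idempotent with image the ideal, so its kernel is a complementary left ideal. The only difference is notational ($\pi$, $\pi_0$ versus $\overline\rho$, $\rho$).
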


\begin{proof}
Let $L$ be a left ideal of $\F G$. There is an $\F$-subspace $S$ of $\F G$ 
such that $\F G=L\oplus S$ (subspace direct sum).
Assume that $\rho: \F G\to L$ is the projection from $\F G$ onto $L$
with kernel $S$. Set $\overline{\rho}=\frac{1}{n}\sum_{x\in G}x\rho x^{-1}$, i.e., 
for $a\in \F G$, $\overline{\rho}(a)=\frac{1}{n}\sum_{x\in G}x\rho(x^{-1}a)$. 
Then $\overline{\rho}(\F G)\subseteq L$; and, 
for $b\in L$, $\overline{\rho}(b)=\frac{1}{n}\sum_{x\in G}x\rho (x^{-1}b)
  =\frac{1}{n}\sum_{x\in G}x( x^{-1}b)=b$. 
So, $\overline{\rho}$ is a projection of $\F G$ onto $L$. 
For any $y\in G$, $y\overline{\rho} y^{-1}=\overline{\rho}$, 
i.e., $y\overline{\rho}=\overline{\rho} y$. 
Thus $\overline{\rho}$ is an 
$\F G$-endomorphism of~$\F G$.  Let $L'$ be the kernel of $\overline{\rho}$.
Then $L'$ is an $\F G$-submodule and $\F G=L\oplus L'$. 
\end{proof}

\begin{theorem}\label{semisimple abelian}
Let $G$ be an abelian group of order $n>1$ with $\gcd(n,q)=1$.

{\rm(1)} $e_0=\frac{1}{n}\sum_{x\in G}x$ is a primitive idempotent and 
$\F Ge_0=\F e_0\cong \F$.

{\rm(2)} Let $E=\{e_0=\frac{1}{n}\sum_{x\in G}x,~e_1,\cdots,e_m\}$
be the set of all the primitive idempotents of $\F G$. Then the simple ideals
$\F_i:=\F Ge_i$ for $i=0,1,\cdots,m$ are field extensions over $\F$
with $\dim \F_i=d_i\ge 1$, and
\begin{equation}\label{FG=F_0+...}
 \F G = \F_0\;\oplus\;\F_1\;\oplus\;\cdots\;\oplus\;\F_m.
\end{equation}

{\rm(3)} For any ideal ($\F G$-code) $C$ of $\F G$,  
$C=\bigoplus_{e\in E_C}\F Ge=\F Ge_C$, 
where $E_C=\{e_i\,|\,0\le i\le m,\,Ce_i\ne 0\}$ 
and $e_C=\sum_{e\in E_C}e$. In particular, $C$ is a ring with identity $e_C$.

{\rm(4)} Set $E'_I=E-E_I$ and $C'=\bigoplus_{e'\in E'_I}\F Ge'$, 
then $\F G=C\oplus C'$, hence the quotient module $\F G/C\cong C'$.
\end{theorem}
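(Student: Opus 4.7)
The plan is to reduce parts (2)--(4) to Theorem \ref{c semisimple} by invoking Maschke's Theorem, and to dispatch (1) with a direct computation. Since $\gcd(n,q)=1$, Lemma \ref{Maschck} gives that $\F G$ is semisimple, and since $G$ is abelian the group algebra $\F G$ is commutative; hence every idempotent is central, so the notions ``primitive'' and ``primitive central'' coincide in $\F G$. Theorem \ref{c semisimple}(1) then immediately yields the decomposition \eqref{FG=F_0+...} with each $\F_i=\F Ge_i$ a finite field. To make precise that each $\F_i$ is a field extension of $\F$, I would note that the ring map $\F\to\F Ge_i$, $\alpha\mapsto\alpha e_i$, is a nonzero homomorphism from the field $\F$ into the field $\F_i$ and therefore injective, so $\F\cong\F e_i\subseteq\F_i$; set $d_i:=\dim_\F\F_i\ge 1$. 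Parts (3) and (4) are then literal restatements of Theorem \ref{c semisimple}(2)(3), with $R=\F G$ and the list of primitive central idempotents equal to $E$.

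For (1), I would first verify that $e_0$ is an idempotent by expanding $e_0^2=\frac{1}{n^2}\sum_{x,y\in G}xy$ and grouping by the product $z=xy$: each $z\in G$ arises exactly $n$ times (once for each choice of $x$, with $y=x^{-1}z$ forced), so the sum collapses to $\frac{n}{n^2}\sum_{z\in G}z=e_0$. The invertibility of $n$ in $\F$, guaranteed by $\gcd(n,q)=1$, is used here in an essential way. Next, for any $x\in G$ the map $y\mapsto xy$ permutes $G$, so $xe_0=e_0$; extending linearly, $ae_0=\bigl(\sum_{x\in G}a_x\bigr)e_0$ for all $a=\sum_x a_xx\in\F G$. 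Hence $\F Ge_0=\F e_0$, which is $1$-dimensional over $\F$ and isomorphic to $\F$ as a ring. Being a minimal (simple) ideal, it is generated by a primitive idempotent, which forces $e_0$ itself to be primitive.

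I do not foresee any real obstacle here: the only non-bookkeeping step is the short idempotent computation for $e_0$, and everything else is direct transport of Theorem \ref{c semisimple} through the commutativity of $\F G$. The one minor care-point is to spell out the embedding $\F\hookrightarrow\F_i$ so that the phrase ``field extension over $\F$'' in (2) is literally justified, and to observe that $e_0$ indeed appears in the list $E$ with $d_0=1$, consistent with the general structure theorem.
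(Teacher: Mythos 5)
Your proposal is correct and follows the same route as the paper: verify directly that $e_0$ is idempotent with $\F Ge_0=\F e_0\cong\F$, then invoke Maschke's Theorem (Lemma \ref{Maschck}) together with Theorem \ref{c semisimple} for parts (2)--(4), noting commutativity of $\F G$ makes ``primitive'' and ``primitive central'' agree. You fill in a few details the paper leaves implicit (the embedding $\F\hookrightarrow\F_i$ and why minimality of $\F e_0$ forces $e_0$ primitive), and your idempotency check groups by $z=xy$ whereas the paper's observes $ye_0=e_0$ and then $e_0^2=\frac{1}{n}\sum_y ye_0=e_0$, but these are cosmetic differences.
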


\begin{proof}
(1). For $y\!\in\! G$, $ye_0\!=\!e_0$. So $e_0e_0\!=\!e_0$, and 
$\F Ge_0\!=\!\{fe_0\,|\,f\!\in\! \F\}\!=\!\F e_0$.

(2), (3) and (4) follow from Lemma \ref{Maschck} and 
Theorem \ref{c semisimple} immediately.
\end{proof}

\begin{definition}\label{q-coset}\rm
By ${\Bbb Z}_n$ we denote the residue integer ring modulo $n>1$.
Assume that $\gcd(n,q)=1$, i.e., $q\in{\Bbb Z}_n^\times$. 
Let $\langle q\rangle_n$ 
be the cyclic subgroup 
of the multiplicative unit group ${\Bbb Z}_n^\times$ generated by $q$.
For $q^i\in \langle q\rangle_n$, 
the map $a\mapsto aq^i$ for $a\in{\Bbb Z}_n$ 
is a permutation of ${\Bbb Z}_n$. So the group $\langle q\rangle_n$ acts 
 on ${\Bbb Z}_n$. The $\langle q\rangle_n$-orbits 
 of ${\Bbb Z}_n$ are called the {\em $q$-cosets} of  ${\Bbb Z}_n$. 
 For $a\in{\Bbb Z}_n$, the $q$-coset~$Q_a$ containing $a$ is
 $Q_a=\{a, qa, q^2a,\cdots\}$. In particular, $\{0\}$ is a $q$-coset, called 
 the {\em trivial $q$-coset}.  By $\mu_q(n)$ we denote the minimal size of
 non-trivial $q$-cosets of~${\Bbb Z}_n$.
\end{definition}

\bex\label{cyclic mu(n)} 
Keep the notation in Theorem \ref{semisimple abelian}. 
Further assume that $G$ is cyclic.

(1) The $\F Ge_i$'s for $e_i\in E$ 
are 1-1 corresponding to the $q$-cosets $Q_i$'s of ${\Bbb Z}_n$ 
such that $\dim_\F(\F Ge_i)=|Q_i|$.
\hint{$\F Ge_i$ corresponds to a monic irreducible factor $f_i(X)$ of
$X^n-1$ such that $\F Ge_i\cong \F[X]/\langle f_i(X)\rangle$; 
the monic irreducible factors $f_i(X)$ of $X^n-1$ are 1-1 
corresponding to the $q$-cosets $Q_i$ of ${\Bbb Z}_n$
such that $f_i(X)=\prod_{k\in Q_i}(X-\xi^k)$ where $\xi$ is a primitive 
$n$-th root of $1$; cf. \cite[\S3.7,\S4.1]{HP}.}

(2) 
If $n'\,|\,n$ and $n'>1$, then $\mu_q(n')\ge\mu_q(n)$. 
\hint{The cyclic group $G'$ of order $n'$ is 
a quotient group of the cyclic group $G$ of order $n$;
by Exercise \ref{e universal}(2), there is a surjective homomorphism
$\F G\to\F G'$; cite Theorem \ref{semisimple abelian}(4).}

(3) (\cite{BM})~ $\mu_q(n)\!=\!
 \min\big\{\mu_q(p)\,\big|\,\mbox{$p$ runs over the prime divisors of $n$}\big\}$.
\hint{let $n\!\nmid\! a$, $Q=\{a,aq,\cdots,aq^{d-1}\}$ be 
a $q$-coset with size $d=\mu_q(n)$;
there are $p$ and $r>s\ge 0$ such that $p^{r}||n$, $p^{s}|| a$; then
$\{\frac{a}{p^{s}},\frac{a}{p^{s}}q,\cdots,\frac{a}{p^{s}}q^{d-1}\}$
is a $q$-coset of ${\Bbb Z}_p$ with size $\le d$; 
so $\mu_q(n)\ge\mu_q(p)$.}

(4) For a prime $p$,  $\mu_q(p)\!=\!{\rm ord}_{p}(q)$, 
 where ${\rm ord}_{p}(q)=|\langle q\rangle_p|$ is
 the order of $q$ in the multiplicative unit group ${\Bbb Z}_p^\times$. 
\hint{the non-trivial $q$-cosets are the cosets of the subgroup 
$\langle q\rangle_p$ in ${\Bbb Z}_p^\times$.}
\eex

\begin{lemma} \label{c semisimple abelian}
Keep notation in Theorem \ref{semisimple abelian},
in particular, $d_i=\dim_\F(\F_i)$, $1\le i\le m$. Then
$\min\{d_1,\cdots,d_m\}=\mu_q(n)$. 
\end{lemma}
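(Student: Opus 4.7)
The plan is to prove the two inequalities separately. I first record the convenient reformulation
\[
 \mu_q(n) = \min\{{\rm ord}_d(q) : d \mid n,\; d > 1\},
\]
which follows from Definition \ref{q-coset}: for $a \in {\Bbb Z}_n\setminus\{0\}$, the $q$-coset $Q_a$ has size ${\rm ord}_{n/\gcd(n,a)}(q)$, and $d = n/\gcd(n,a)$ runs over all divisors of $n$ strictly greater than $1$ as $a$ varies.

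For the lower bound $d_i \ge \mu_q(n)$ (for $1 \le i \le m$), I project $G$ into the field $\F_i = \F G e_i$ via $g \mapsto g e_i$, and let $\bar G_i$ denote the image. As a finite subgroup of $\F_i^\times$, $\bar G_i$ is cyclic of some order $n_i$, and $n_i$ divides $|G| = n$. I next check that $n_i > 1$: otherwise $g e_i = e_i$ for all $g$, whence $e_0 e_i = \frac{1}{n}\sum_{g\in G} g e_i = e_i$, contradicting the orthogonality $e_0 e_i = 0$ of distinct primitive idempotents. Since $\F_i$ is spanned over $\F$ by $G e_i = \bar G_i$ and $\bar G_i = \langle \zeta_{n_i}\rangle$ for a primitive $n_i$-th root of unity $\zeta_{n_i}$, we get $\F_i = \F(\zeta_{n_i})$, hence $d_i = [\F(\zeta_{n_i}):\F] = {\rm ord}_{n_i}(q)$. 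Combined with $n_i \mid n$, $n_i > 1$, the reformulation gives $d_i \ge \mu_q(n)$.

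For the reverse inequality, Exercise \ref{cyclic mu(n)}(3) furnishes a prime $p \mid n$ with $\mu_q(p) = \mu_q(n)$. The Sylow $p$-subgroup of the abelian group $G$ is a non-trivial abelian $p$-group, and hence (so does $G$) surjects onto ${\Bbb Z}_p$; by Exercise \ref{e universal}(2) this extends to a surjective $\F$-algebra homomorphism $\pi: \F G \twoheadrightarrow \F{\Bbb Z}_p$. For each primitive idempotent $e_j$ of $\F G$, the restriction $\pi|_{\F G e_j}$ has kernel $0$ or all of $\F G e_j$ (since the latter is a field), so $\pi(e_j)$ is either $0$ or a primitive idempotent $f$ of $\F{\Bbb Z}_p$; in the latter case $\pi$ restricts to an isomorphism $\F G e_j \cong \F{\Bbb Z}_p \cdot f$. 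As $\sum_j \pi(e_j) = \pi(1_{\F G}) = 1_{\F{\Bbb Z}_p}$ is the sum of all primitive idempotents of $\F{\Bbb Z}_p$, every primitive idempotent of $\F{\Bbb Z}_p$ arises as some $\pi(e_i)$. Taking $f$ to be the primitive idempotent corresponding to a non-trivial $q$-coset of ${\Bbb Z}_p$ (Exercise \ref{cyclic mu(n)}(1)), we obtain $d_i = \dim_\F \F{\Bbb Z}_p \cdot f = {\rm ord}_p(q) = \mu_q(n)$.

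The main obstacle is the upper-bound direction, where one must cleanly transfer primitive idempotents through $\pi$ and match dimensions of the corresponding simple components. The lower bound, by contrast, is essentially forced once one recognizes $\F_i$ as the cyclotomic extension $\F(\zeta_{n_i})$, whose degree over $\F$ is the Frobenius orbit size ${\rm ord}_{n_i}(q)$.
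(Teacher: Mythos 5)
Your proposal is correct, and it follows the same overall strategy as the paper's proof: for the lower bound, push $\F G$ onto a cyclic quotient algebra to control the dimension of $\F_i$; for the upper bound, pull back a suitable simple component from $\F{\Bbb Z}_p$ where $p\mid n$ realizes $\mu_q(n)=\mu_q(p)$. Your upper-bound argument is essentially the paper's argument made explicit: instead of invoking Theorem~\ref{semisimple abelian}(4) to recognize $\F{\Bbb Z}_p$ as a direct summand of $\F G$ (as $\F G$-modules), you track the primitive idempotents through $\pi$ directly, using that $\sum_j\pi(e_j)=1$, that each nonzero $\pi(e_j)$ is primitive, and that distinct nonzero $\pi(e_j)$'s are distinct primitive idempotents since $\pi(e_j)\pi(e_{j'})=0$; this is sound bookkeeping.

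Where you genuinely diverge is the lower bound. The paper sets up the cyclic case as a base case (via Exercise~\ref{cyclic mu(n)}(1)), then passes from $\F_i$ to a simple ideal of $\F G'$ for the cyclic quotient $G'=\rho_i(G)$, and finally cites the monotonicity $\mu_q(n')\ge\mu_q(n)$ for $n'\mid n$ (Exercise~\ref{cyclic mu(n)}(2)). You shortcut all of this by directly identifying $\F_i=\F(\zeta_{n_i})$ where $n_i=|\rho_i(G)|>1$ divides $n$, so that $d_i={\rm ord}_{n_i}(q)$ is the Frobenius orbit length, and then using the observation that $\mu_q(n)=\min\{{\rm ord}_d(q): d\mid n,\ d>1\}$. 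Your reformulation of $\mu_q(n)$ is correct (since $|Q_a|={\rm ord}_{n/\gcd(n,a)}(q)$), and it is equivalent to the prime-only minimum in Exercise~\ref{cyclic mu(n)}(3)--(4) because ${\rm ord}_p(q)\le{\rm ord}_d(q)$ for any prime $p\mid d$. The net effect is a more self-contained lower bound that avoids treating the cyclic case as a separate prerequisite; the paper's route has the advantage of reusing machinery (Theorem~\ref{semisimple abelian}(4) and the $\mu_q$ exercises) already needed elsewhere.
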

\begin{proof} This result was proved in \cite{BM} for binary case.
We show an alternative proof for the general case. 
Note that, by Exercise~\ref{cyclic mu(n)}(1), the conclusion holds 
for finite cyclic group $G$. 

Let $1\le i\le m$, and 
$\rho_i: \F G=\bigoplus_{j=0}^{m}\F Ge_j \to\F Ge_i=\F_i$ be the projection.
Since $\F_i$ is a finite field, $\F_i^\times$ is a
cyclic group; by Exercise \ref{e universal}(1), $G':=\rho_i(G)$ 
is a cyclic group of order $n'>1$ with $n'|n$ (if $n'=1$ then
$\rho_i(x)=1$ for all $x\in G$, hence $i=0$).
By Exercise \ref{e universal}(2) we get a surjective homomorphism: $\F G'\to\F_i$. 
By Theorem~\ref{semisimple abelian}(4), 
$\F_i$ is isomorphic to a simple ideal of $\F G'$.
Since $G'$ is cyclic, $d_i\ge\mu_q(n')$.
Then, by Exercise~\ref{cyclic mu(n)}(2), $d_i\ge\mu_q(n)$, 
for all $i=1,\cdots,m$.

By Exercise~\ref{cyclic mu(n)}(3), there is a prime $p|n$ and
${\mu_q(n)=\mu_q(p)}$. Then $G$ has a quotient group 
$G'$ of order $p$,  hence $G'$ is cyclic and $\F G'$ has  
a non-trivial simple ideal $I'$ such that $\dim_F I'=\mu_q(p)$. 
By Exercise \ref{e universal}(2), 
we have a surjective homomorphism $\F G\to\F G'$.
By Theorem \ref{semisimple abelian}(4), there is an $i$ with $1\le i\le m$
such that $I'\cong\F_i$ (as $\F G$-modules). Thus, $d_i=\mu_q(p)=\mu_q(n)$. 
\end{proof}

\begin{lemma}\label{l n_1,...}
There exist positive integers $n_1,n_2,\cdots$ coprime to $q$ such that
$\textstyle\lim\limits_{i\to\infty}\frac{\log_q\! n_i}{\mu_q\!(n_i\!)}=0.$
\end{lemma}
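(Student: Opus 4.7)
By Exercise \ref{cyclic mu(n)}(4), for any prime $p$ with $p\nmid q$ we have $\mu_q(p)={\rm ord}_p(q)$. So it suffices to exhibit primes $p_1<p_2<\cdots$, each coprime to $q$, with
\[
\frac{\log_q p_i}{{\rm ord}_{p_i}(q)}\longrightarrow 0,
\]
and then take $n_i:=p_i$.

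The key quantitative input is an elementary bound on ``bad'' primes. For every positive integer $t$, any prime $p$ coprime to $q$ with ${\rm ord}_p(q)\le t$ must divide $q^d-1$ for some $d\le t$, and hence divides the product $\prod_{d=1}^{t}(q^d-1)$, which is strictly less than $q^{t(t+1)/2}$. Since any positive integer $m$ has at most $\log_2 m$ distinct prime divisors, the number of such primes is at most $\frac{t(t+1)}{2}\log_2 q=O(t^2)$.

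For each large $N$, set $t_N=\lfloor N^{1/3}\rfloor$. The primes $p$ with ${\rm ord}_p(q)\le t_N$ number at most $O(N^{2/3})$, whereas Bertrand's postulate (combined with the Chebyshev estimate) guarantees at least $\frac{N}{3\log N}$ primes in the interval $(N/2,N]$. For $N$ sufficiently large the latter strictly exceeds the former, so some prime $p_N\in(N/2,N]$---automatically coprime to $q$ once $N>q$---satisfies ${\rm ord}_{p_N}(q)>N^{1/3}$. Then
\[
\frac{\log_q p_N}{\mu_q(p_N)}=\frac{\log_q p_N}{{\rm ord}_{p_N}(q)}<\frac{\log_q N}{N^{1/3}}\longrightarrow 0,
\]
and letting $N$ range over an increasing subsequence $N_1<N_2<\cdots\to\infty$ and setting $n_i:=p_{N_i}$ delivers the desired sequence.

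The only real subtlety is that the naive lower bound ${\rm ord}_p(q)\ge\log_q(p+1)$ (which follows from $q^{{\rm ord}_p(q)}\ge p+1$) yields only a ratio bounded away from zero, not tending to zero; the counting bound above, which exploits the fact that primes with small $q$-order concentrate in the divisors of the short product $\prod_{d\le t}(q^d-1)$, is what lets us locate primes whose $q$-order is asymptotically much larger than their logarithm.
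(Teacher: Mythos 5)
Your proof is correct and rests on the same core idea as the paper's (Lemma \ref{l n_1,...} together with Exercise \ref{e n_1,...}): primes $p$ with small $q$-order must divide $q^d-1$ for some small $d$, hence divide a short product whose number of prime divisors is bounded, so bad primes are scarce compared to $\pi(N)\sim N/\log N$. The only difference is bookkeeping: the paper fixes a bound $t$ on the prime size, uses the threshold $(\log_q t)^2$ on the order, and proves the stronger fact that primes with large $q$-order have natural density $1$; you instead use the threshold $t_N=\lfloor N^{1/3}\rfloor$ and simply compare $O(N^{2/3})$ bad primes against $\Omega(N/\log N)$ primes in $(N/2,N]$. Both yield the lemma; your version gets a stronger lower bound ($\mu_q(p)\gtrsim p^{1/3}$ along the chosen subsequence) while the paper settles for the weaker $(\log_q p)^2$ but establishes full density, which is all the lemma requires either way. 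One cosmetic nit: to guarantee $p\in(N/2,N]$ is coprime to $q$ you should take $N>2q$ rather than $N>q$ (so $p>N/2>q\ge{\rm char}\,\F$), but this of course is harmless for large $N$.
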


\begin{proof}
By Exercise~\ref{e n_1,...}, there exist primes $p_1,p_2,\cdots$ such that
$\lim\limits_{i\to\infty}\frac{\log_q p_i}{\mu_q(p_i)}\!=\!0$. 
\end{proof}

\begin{remark}\label{r n_1, ...}\rm
(1). Though $p_1,p_2,\cdots$ in the proof are primes,  the integers 
 $n_1,n_2,\cdots$ in the lemma are not necessarily primes;
for example, taking $n_i\!=\!p_i^2$, by Exercise~\ref{cyclic mu(n)}(3) 
we still have 
$\lim\limits_{i\to\infty}\frac{\log_q n_i}{\mu_q(n_i)}\!=\!0$.

(2). By Exercise~\ref{cyclic mu(n)}(3) again, the integers 
 $n_1,n_2,\cdots$ in the lemma are almost odd; because: 
 $\mu_q(2)=1$ hence $\mu_q(n)=1$ for any even $n$.
\end{remark}

\bex\label{e n_1,...}
Let $t>q$ be an integer,  and $\pi(t)$ be the number of the primes 
$\le t$. Set
\begin{align*}
{\cal G}_t =\big\{\mbox{\rm prime $p$}\,\big|\, 
  q<p\le t,~ \mu_q(p)\ge(\log_q t)^2\big\};\\
\overline{\cal G}_t =\big\{\mbox{\rm prime $p$}\,\big|\, 
 q<p\le t,~\mu_q(p)<(\log_q t)^2\big\}.
\end{align*}

(1) If $r<(\log_q t)^2$, $p_1,\cdots,p_k\in \overline{\cal G}_t$ such that
$\mu_q(p_i)=r$, $i=1,\cdots,k$, then $k<(\log_q t)^2$.
\hint{``$\mu_q(p_i)=r$'' implies ``$p_i|(q^r\!-\!1)$''; 
hence $q^r\!-\!1=p_1\cdots p_k s$.}

(2)~ The natural density
$\lim\limits_{t\to\infty}\frac{|{\cal G}_t|}{\pi(t)}=1$.
\hint{by Gauss' Lemma, $\lim\limits_{t\to\infty}\frac{\pi(t)}{t/\ln t}=1$;
by~(1), $|\overline{\cal G}_t|<(\log_q t)^4$; so
$\lim\limits_{t\to\infty}\frac{|\overline{\cal G}_t|}{\pi(t)}
=\lim\limits_{t\to\infty}\frac{|\overline{\cal G}_t|}{t/\ln t}=0$.}
\eex

We'll see that $\mu_q(n)$ plays a key role in discussion of asymptotic
properties of group codes.

\begin{definition}\label{d bar map}\rm
(1)~ Mapping $x\in G$ to $x^{-1}\in G$ is an anti-automorphism of the group $G$.
For $a=\sum_{x\in G} a_xx\in\F G$, let
$\overline a=\sum_{x\in G} a_xx^{-1}$.
Then mapping $a\mapsto \overline a$ 
is an anti-automorphism of $\F G$ of order $2$ 
(i.e., it is bijective linear transformation 
and Exercise \ref{e bar map}(1) holds).
We call the anti-automorphism $a\mapsto\bar a$ by ``bar map''.
Note that the bar map is an automorphism if $G$ is abelian.

(2)~ For $a=\sum_{x\in G}a_xx\in\F G$,  
the map $\sigma$: $\F G\to \F$,~ $a\mapsto a_{1_G}$
($1_G$ is the identity of $G$), is a linear form on $\F G$.
\end{definition}

\bex\label{e bar map}
(1)~ $\overline{\overline a}=a$,~ $\overline{ab}=\overline b\overline a$,~ 
for $a,b\in\F G$.

(2)~ $\sigma(\overline a)=\sigma(a)$,~ $\sigma(ab)=\sigma(ba)$,~
for $a,b\in\F G$.

(3)~ $\langle a,b\rangle=\sigma(a\overline b)=\sigma(\overline a\,b)$, 
  for $a,b\in\F G$.
($\langle a,b\rangle$ denotes the inner product.)

(4)~ $\langle da,b\rangle=\langle a,\overline d b\rangle$,~
for $a,b,d\in\F G$.
\eex

\begin{lemma}\label{l orthogonal and bar}
Assume that $C,D$ are $\F G$-codes.

{\rm(1)}~ $C^\bot:=\{a\in\F G\,|\,\langle c,a\rangle=0,\; \forall\,c\in C\}$ 
is still an $\F G$-code.

{\rm(2)}~ $\langle C,D\rangle=0$ if and only if 
$C\overline D=0$ ($C\overline D:=\{c\,\overline d\,|\,c\in C, d\in D\}$).

{\rm(3)}~ If both $C, D$ are generated by one element, i.e., 
 $C=\F Ge$ and $D=\F Gf$, then $\langle C,D\rangle=0$  if and only if 
 $e \overline{f}=0$.
\end{lemma}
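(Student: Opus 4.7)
The overall strategy is to leverage the identities in Exercise \ref{e bar map}, notably $\langle a,b\rangle = \sigma(a\overline b)$ and $\langle da, b\rangle = \langle a, \overline d\, b\rangle$, together with the fact that $C$ (and $D$) are left ideals of $\F G$.

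For part (1), I would check directly that $C^\bot$ is closed under the left $\F G$-action. Given $a \in C^\bot$ and $d \in \F G$, for any $c \in C$ the symmetry of the euclidean inner product combined with Exercise \ref{e bar map}(4) gives
\[
\langle c, da\rangle \;=\; \langle da, c\rangle \;=\; \langle a, \overline d\, c\rangle.
\]
Since $C$ is a left ideal, $\overline d\, c \in C$, so the right-hand side vanishes. Additive closure of $C^\bot$ is immediate, so $C^\bot$ is itself an $\F G$-code.

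For part (2), the ``if'' direction is formal: if $C\overline D = 0$, then $\langle c,d\rangle = \sigma(c\overline d) = 0$ for every $c \in C, d \in D$ by Exercise \ref{e bar map}(3). For the converse, I plan to fix $c \in C$, $d \in D$ and expand $c\overline d = \sum_{y \in G}\alpha_y y$. The key observation is that $xc \in C$ for every $x \in G$ (left-ideal property), so the hypothesis gives $\sigma(xc\,\overline d) = \langle xc, d\rangle = 0$ for all such $x$. A direct unpacking shows that $\sigma(xc\,\overline d)$ equals the coefficient of $1_G$ in $\sum_y \alpha_y\, xy$, which is precisely $\alpha_{x^{-1}}$. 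Letting $x$ range over $G$ forces every $\alpha_y = 0$, so $c\overline d = 0$; since $c,d$ were arbitrary, $C\overline D = 0$.

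For part (3), I would apply (2) and exploit that the bar map is an anti-automorphism of $\F G$ (Exercise \ref{e bar map}(1)), so $\overline{\F G f} = \overline f\,\F G$ and hence $C\overline D = \F G e \cdot \overline f\,\F G$. If $e\overline f = 0$, then every product $(ae)(\overline f\, b) = a(e\overline f)b$ vanishes, so $C\overline D = 0$. Conversely, taking $a = b = 1_G$ inside $C\overline D$ extracts the single element $e\overline f$, which must then be $0$. Chaining this with (2) gives the stated equivalence. The step requiring the most care is the ``only if'' half of (2): it is there that the left-ideal hypothesis is used non-trivially, via the trick of translating $c$ by elements of $G$ to recover each coefficient of $c\overline d$ from the linear form $\sigma$; the remaining parts are routine manipulations with the bar map and inner-product identities.
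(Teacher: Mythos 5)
Your proof is correct and follows essentially the same route as the paper's: part (1) is Exercise \ref{e bar map}(4) plus symmetry, part (2) uses the left-ideal property to translate $c$ by group elements and read off coefficients of $c\overline d$ via $\sigma$ (you argue directly that all coefficients vanish where the paper argues by contraposition, but the mechanism is identical), and part (3) is a routine specialization of (2).
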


\begin{proof}
(1) follows from Exercise \ref{e bar map}(4).

(2). By Exercise \ref{e bar map}(3),
  $C\overline D=0$ implies that $\langle C,D\rangle\!=\!0$.
Conversely, assume that $c\,\overline d\ne 0$ for $c\in C$, $d\in D$;
then $c\,\overline d =\sum_{x\in G}\lambda_x x$ with $\lambda_z\ne 0$
 for some $z\in G$; 
then $z^{-1}c\in C$ and 
$\langle z^{-1}c,d\rangle=\sigma(z^{-1}c\,\overline d)=\lambda_z\ne 0$.

(3) holds by (2) immediately. 
\end{proof}

\begin{definition}\label{quasi group code}\rm
The product 
$(\F G)^2:=\F G\times\F G=\{(a,b)\,|\,a,b\in\F G\}$ 
is an $\F G$-module.
Any submodule $C$ of $(\F G)^2$, denoted by $C\le(\F G)^2$, is called a
{\em quasi-group code} (or {\em quasi-$\F G$ code}) of {\em index} $2$. 
 If $G$ is cyclic (abelian), then $C\le(\F G)^2$ is also called a 
 {\em quasi-cyclic} ({\em quasi-abelian}) code of index $2$.

More generally, for any integer $t\ge 1$, the
quasi-group codes (quasi-$\F G$ codes) of {\em index $t$} 
can be defined similarly. 
Quasi-group codes (quasi-$\F G$-codes) of index $t$
are also called {\em $t$-quasi-group codes} ({\em $t$-quasi-$\F G$ codes}).

In particular, the quasi-group codes of index~$1$ are just group codes.
\end{definition}

\bex\label{ex inner}
Let $(a,b), (a',b')\in(\F G)^2$. 
And $\langle -,-\rangle$ denotes the inner product.

(1) 
$\big\langle(a,b), (a',b')\big\rangle=\sigma(a\overline{a'}+b\overline{b'})$, 
where $\sigma$ is defined in Definition~\ref{d bar map}(2).

(2) $\big\langle \F G(a,b), \,\F G(a',b')\big\rangle=0$ 
if and only if $a\overline{a'}+b\overline{b'}=0$,
where $\F G(a,b)=\{(da,db)\,|\,d\in\F G\}\le(\F G)^2$
is the quasi-$\F G$ code of index $2$ generated by $(a,b)$. 
\eex

\begin{lemma}\label{l 2-quasi self-dual}
Let $(a,b)\in(\F G)^2$, and $C_{a,b}=\F G(a,b)\le(\F G)^2$. 

{\rm(1)} As $\F G$-modules, $C_{a,b}\cong \F G/{\rm Ann}_{\F G}(a,b)$, 
where ${\rm Ann}_{\F G}(a,b)=\{d\in\F G\,|\,da=db=0\}$.
In particular, $\dim C_{a,b}=n-\dim {\rm Ann}_{\F G}(a,b)$. 

{\rm(2)} $C_{a,b}$ is self-orthogonal if and only if $a\overline a+b\overline b=0$.

{\rm(3)} $C_{a,b}$ is self-dual if and only if $a\overline a+b\overline b=0$
 and ${\rm Ann}_{\F G}(a,b)=0$. 
\end{lemma}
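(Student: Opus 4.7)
The plan is to handle the three parts essentially in order, leveraging the reformulation of orthogonality supplied by Exercise \ref{ex inner}.

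For part (1), I would consider the $\F G$-module homomorphism
\[
\varphi : \F G \longrightarrow (\F G)^2, \qquad d \longmapsto (da, db).
\]
Its image is, by definition of $C_{a,b}=\F G(a,b)$, exactly $C_{a,b}$, and its kernel is exactly $\mathrm{Ann}_{\F G}(a,b)$. The first isomorphism theorem for $\F G$-modules then gives $C_{a,b}\cong \F G/\mathrm{Ann}_{\F G}(a,b)$. Since $\dim_\F \F G = |G| = n$, counting dimensions yields $\dim C_{a,b} = n - \dim \mathrm{Ann}_{\F G}(a,b)$.

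For part (2), self-orthogonality is by definition the statement $\langle C_{a,b}, C_{a,b}\rangle = 0$, which is precisely the hypothesis of Exercise \ref{ex inner}(2) applied with $(a',b')=(a,b)$. That exercise translates the orthogonality condition on the cyclic submodules $\F G(a,b)$ and $\F G(a,b)$ into the algebraic identity $a\overline{a} + b\overline{b} = 0$, giving the desired equivalence immediately.

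For part (3), I would combine (1) and (2) with a dimension count. Note that $(\F G)^2$ carries the nondegenerate inner product $\langle-,-\rangle$ described in Exercise \ref{ex inner}(1), so for any subspace $C\le(\F G)^2$ one has $\dim C + \dim C^{\bot} = 2n$. Self-duality $C_{a,b}=C_{a,b}^\bot$ therefore requires both $C_{a,b}\subseteq C_{a,b}^\bot$ and $\dim C_{a,b} = n$. By (2) the containment is equivalent to $a\overline a + b\overline b = 0$, and by (1) the equality $\dim C_{a,b}=n$ is equivalent to $\dim \mathrm{Ann}_{\F G}(a,b)=0$, i.e.\ $\mathrm{Ann}_{\F G}(a,b)=0$. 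Conversely, if both conditions hold, then $C_{a,b}\subseteq C_{a,b}^\bot$ and the two spaces have equal dimension $n$, hence coincide.

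The only mildly delicate point is the dimension formula $\dim C + \dim C^\bot = 2n$ for subspaces of $(\F G)^2$; this uses nondegeneracy of the inner product on $(\F G)^2$, which follows routinely from the fact that $\langle\,,\,\rangle$ is the standard euclidean pairing once both factors are identified with $\F^n$. No other step should present any obstacle, since parts (1) and (2) are essentially direct applications of tools already established.
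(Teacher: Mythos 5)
Your proposal is correct and matches the paper's (very terse) proof in all three parts: the paper likewise defines the surjective $\F G$-homomorphism $d\mapsto(da,db)$ for (1), cites Exercise~\ref{ex inner}(2) for (2), and derives (3) from (1) and (2). Your fleshed-out dimension argument for (3), using nondegeneracy of the inner product to get $\dim C+\dim C^\bot=2n$, is exactly the reasoning the paper is implicitly relying on.
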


\begin{proof}
(1) $\F G\to C_{a,b}$, $d\mapsto (da,db)$, is a surjective homomorphism.

(2) follows from Exercise \ref{ex inner}(2).

(3) follows from the above (1) and (2).
\end{proof}

In many cases the finite group $G$ we considered is assumed to be 
abelian, hence $\F G$ is a commutative ring. 
We sketch a few fundamentals of linear algebra 
over a commutative ring~$R$ with identity for later quotations. 
By $R^{k\times t}$ we denote the set of all $k\times t$ matrices over $R$;  
and $R^t=R^{1\times t}$ as usual.

\bex\label{e R}
Let $R$ be a commutative ring with identity.
${\bf a}_1,\cdots, {\bf a}_h\in R^t$ are said to be 
{\em linearly independent} if the linear combination 
$\alpha_1{\bf a}_1+\cdots+\alpha_h {\bf a}_h\ne 0$ provided 
$\alpha_1,\cdots,\alpha_h\in R$ are not all zero.
A matrix $A=(a_{ij})_{k\times t}\in R^{k\times t}$, $k\le t$,  
is said to be of {\em full rank} if the rows of $A$ are linearly independent.

(1) $b\!=\!(b_1,\cdots,b_k)\in R^k$ is linearly independent
if and only if ${{\rm Ann}_R(b_1,\cdots,b_k)\!=\!0}$, where
${\rm Ann}_R(b_1,\cdots,b_k)=\{a\in R\,|\,ab_1=\cdots=ab_k=0\}$.

(2) Given $A\in R^{k\times t}$. Then
$\tau_A: R^k\to R^t$, $b\mapsto bA$, is an $R$-homomorphism 
(recall that $R^k=R^{1\times k}$); 
$\tau_A$ is injective if and only if the matrix $A$ over $R$ is of full rank.

(3) Given $b=(b_1,\cdots,b_k)\in R^k$. 
Then $\tau_b: R^{k\times t}\to R^t$, $A\mapsto bA$, 
is an $R$-homomorphism and the image 
${\rm Im}(\tau_b)=(I_b)^t$, where $I_b:=Rb_1+\cdots+Rb_k$.
Further, if $R$ is semisimple, then
$\tau_b$ is surjective (i.e., $I_b=R$) if and only if
$b\in R^k$ is linearly independent.
\hint{if $I_b=R$ then ${\rm Ann}_R(b_1,\cdots,b_k)={\rm Ann}_R(I_b)=0$; conversely, if 
$I_b\ne R$ then $R=I'\oplus I_b$ with 
$0\ne I'\subseteq {\rm Ann}_R(b_1,\cdots,b_k)$.}
\eex

Because of Theorem \ref{semisimple abelian},
the following knowledge would be useful.

\bex\label{e R+R}
Assume that a commutative ring $R=\F_1\oplus \F_2$ is a direct sum
of two finite fields $\F_1$ and $\F_2$ with $|\F_i|=q^{d_i}$, i.e., 
any $a\in R$ can be uniquely written as $a=a^{(1)}+a^{(2)}$
with $a^{(i)}\in \F_i$, and 
$a+b=(a^{(1)}+b^{(1)})+(a^{(2)}+b^{(2)})$,  
$a\cdot b=(a^{(1)}\cdot b^{(1)})+(a^{(2)}\cdot b^{(2)})$.  
Assume that $1\le k\le t$.

(1) Let $b=(b_1,\cdots,b_k)\in R^k$, and 
$b=b^{(1)}+b^{(2)}$ with 
$b^{(i)}=\big(b_1^{(i)},\cdots,b_k^{(i)}\big)\in\F_i^k$.
Then $b$ is linearly independent
 if and only if $b^{(i)}\ne 0$ for $i=1,2$. 
 
(2) Let $a, b,\cdots,d\in R^k$, and $a=a^{(1)}+a^{(2)}$, $b=b^{(1)}+b^{(2)}$, $\cdots$,
$d=d^{(1)}+d^{(2)}$ as above. 
Then $a, b,\cdots,d$ are linearly independent
if and only if
$a^{(i)}, b^{(i)},\cdots,d^{(i)}$ are linearly independent over $\F_i$ for $i=1,2$.

(3) Any $A=(a_{ij})_{k\times t}\in R^{k\times t}$ 
 can be written as $A=A^{(1)}\oplus A^{(2)}$ with
 $A^{(i)}=\big(a_{ij}^{(i)}\big)_{k\times t}\in \F_i^{k\times t}$.  
 And, $A$ has full rank if and only if 
 $A^{(i)}$ has full rank for $i=1,2$.

(4) Select $A\in R^{k\times t}$ randomly. Then
 $\Pr\!\big(\mbox{$A$ has full rank}\big)\ge 
 (1-\frac{q^{d_1k}}{q^{d_1t}})(1-\frac{q^{d_2k}}{q^{d_2t}})$.
 \hint{cf. Exercise \ref{e rank=}(4).}
 
(5) Extend (1)--(4) to the direct sum of $m$ finite fields 
 $R=\F_1\oplus\cdots\oplus \F_m$. 
\eex

\section{Balanced codes}\label{balanced codes}

In this section, 
$\F$ is a finite field with $|\F|=q$,  $n>1$ is an integer,
$\delta\in[0,1-q^{-1}]$ is a real number.

For an index set $I=\{i_1,\cdots,i_d\}$, 
$\F^I=\{(a_{i_1},\cdots,a_{i_d})\,|\, a_{i_j}\in \F\}$.
As usual, $\F^n=\F^I$ with $I=\{1,2,\cdots,n\}$.
For any subset $I'\subseteq I$, there is a natural projection 
$\rho_{I'}:\F^I\to \F^{I'}$.

\begin{definition}\label{d balanced}\rm
Let $C\subseteq \F^n\!=\!\F^I$ with $I\!=\!\{1,\cdots,n\}$.
If there are subsets $I_1,\cdots,I_s$ (with repetition allowed) of 
$I$ and integers $k$, $t$ such that 

{(1)} the cardinality $|I_j|=k$ for $j=1,\cdots,s$;

{(2)} 
for any $i\in I$, the number of such subscripts
$j$ that $i\in I_j$ equals $t$;

\par\hangindent33pt {(3)}  
for any $j\!=\!1,\cdots,s$,
the projection $\rho_j\!:\F^I\to \F^{I_j}$ 
maps $C$ bijectively onto~$\F^{I_j}$;

\par\noindent\hangindent0pt
then, following  \cite{P85}, 
we say that $C$ is a {\em balanced code} over $\F$ of length $n$ and
{\em information length}~$k$, and $I_1,\cdots,I_s$ form
a {\em balanced system of information index sets} of $C$.
\end{definition}

Note that the phrase ``balanced codes'' might be used for
other concepts in literature, e.g., in \cite{IW}. 

\begin{remark}\label{cosets is ...}\rm
If $C\subseteq \F^n$ is a balanced code, 
then, for any $a\in\F^n$, $a+C=\{a+c\mid c\in C\}$ is a balanced code too.
Because: if the projection $\rho_j\!:\F^I\to \F^{I_j}$ 
maps $C$ bijectively onto~$\F^{I_j}$, then $\rho_j$ also maps
 $a+C$ bijectively onto~$\F^{I_j}$.
\end{remark}

\begin{theorem}\label{t balanced}
Let $C$ be a balanced code over $\F$ of length $n$ and information length $k$,
$B\subseteq C$ and $\omega=\sum_{b\in B}\frac{\w(b)}{n|B|}$
(the average relative weight of $B$).
If $0\le\omega\le 1-q^{-1}$, then
\begin{equation}\label{weight balanced}
 |B|\le q^{k h_q(\omega)}.
 \end{equation}
\end{theorem}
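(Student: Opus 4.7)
The balanced structure of $C$ will be exploited via a double-counting and averaging step, reducing the problem to a short-length projection. For $j=1,\ldots,s$ and $b\in B$, write $\w_j(b)$ for the Hamming weight of the projection $\rho_j(b)\in\F^{I_j}$. Each coordinate $i\in I$ belongs to exactly $t$ of the $I_j$'s, so $\sum_{j=1}^s \w_j(b)=t\cdot\w(b)$. Summing over $b\in B$, and using the identity $sk=nt$ (which follows from $\sum_j |I_j|=sk$ on one hand and $=nt$ on the other), one obtains
\[
\sum_{j=1}^s\sum_{b\in B}\w_j(b)\;=\;t\sum_{b\in B}\w(b)\;=\;tn\omega\,|B|.
\]
Averaging over $j$ produces some $j^\ast$ with $\sum_{b\in B}\w_{j^\ast}(b)\le \frac{tn\omega|B|}{s}=k\omega|B|$.

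Because $\rho_{j^\ast}$ maps $C$ bijectively onto $\F^{I_{j^\ast}}$, its restriction to $B\subseteq C$ is injective; so $B':=\rho_{j^\ast}(B)$ has the same cardinality as $B$ and, viewed inside $\F^{I_{j^\ast}}\cong\F^k$, has average relative weight at most $\omega$. The theorem therefore reduces to the following coordinate-free claim: any $B'\subseteq\F^k$ whose average relative weight is at most $\omega\in[0,1-q^{-1}]$ satisfies $|B'|\le q^{kh_q(\omega)}$.

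I would prove this reduction by an entropy argument (working throughout with $q$-ary entropy $H_q$). Let $X=(X_1,\ldots,X_k)$ be uniformly distributed on $B'$ and set $p_i=\Pr(X_i\ne 0)$; the hypothesis reads $\tfrac{1}{k}\sum_i p_i\le\omega$. With the zero-probability $1-p_i$ fixed, the entropy $H_q(X_i)$ is maximized when the remaining mass is spread equally over $\F\setminus\{0\}$, which yields the bound $H_q(X_i)\le h_q(p_i)$ by a direct computation. Combining subadditivity of entropy, concavity of $h_q$, and monotonicity of $h_q$ on $[0,1-q^{-1}]$,
\[
\log_q|B'|\;=\;H_q(X)\;\le\;\sum_{i=1}^k H_q(X_i)\;\le\;\sum_{i=1}^k h_q(p_i)\;\le\;k\,h_q\!\Big(\tfrac{1}{k}\sum_{i=1}^k p_i\Big)\;\le\;k\,h_q(\omega),
\]
which gives $|B|=|B'|\le q^{k h_q(\omega)}$.

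The main technical obstacle is precisely this final reduction: the hypothesis only controls the \emph{average} relative weight in $B'$, not a uniform pointwise weight, so one cannot simply invoke the Hamming-ball estimate (Lemma~\ref{<Hamming ball<}). The entropy/concavity route bypasses this gap cleanly by routing the averaging through the concavity of $h_q$; the double-counting step that precedes it is elementary once the balanced conditions are spelled out.
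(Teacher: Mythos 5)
Your proof is correct, and it departs from the paper's argument in one meaningful way. For the reduction from $C$ balanced of length $n$ to a length-$k$ estimate, the paper lifts $B$ into the $s$-fold product $\widehat B = \rho(B^s)\subseteq \F^{I_1}\times\cdots\times\F^{I_s}\cong\F^{ks}$, shows the average relative weight of $\widehat B$ is exactly $\omega$, and then applies the base-case lemma (the paper's Lemma~\ref{k=n}) to $\widehat B$ in $\F^{ks}$, finally extracting $|B|\le q^{kh_q(\omega)}$ by taking the $s$-th root. You instead use a pigeonhole/averaging argument: from the identity $\sum_{j}\sum_{b\in B}\w_j(b)=tn\omega|B|$ and $sk=nt$ you locate a single index $j^\ast$ with $\sum_{b\in B}\w_{j^\ast}(b)\le k\omega|B|$, and apply the length-$k$ estimate to the injective image $B'=\rho_{j^\ast}(B)$. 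Both reductions are sound; yours is shorter and avoids the product construction, at the cost of only obtaining an upper bound on the average relative weight of $B'$ rather than an exact value (which is harmless because $h_q$ is increasing on $[0,1-q^{-1}]$). The base-case entropy argument is in the same spirit as the paper's Lemma~\ref{k=n}: both start from $\log_q|B'|=H_q(X)\le\sum_i H_q(X_i)$; the paper then applies Jensen's inequality directly to $-x\log_q x$ over the individual column probabilities $p_j(\alpha)$, while you pass through the intermediate bound $H_q(X_i)\le h_q(p_i)$ and then concavity of $h_q$. These are equivalent in content. One small point worth making explicit in your write-up: you invoke concavity of $h_q$ at arguments $p_i$ that can exceed $1-q^{-1}$, whereas the paper only defines $h_q$ on $[0,1-q^{-1}]$; this is fine because the formula \eqref{e entropy} extends to a concave function on all of $[0,1]$, and the final monotonicity step $h_q\!\big(\tfrac{1}{k}\sum_i p_i\big)\le h_q(\omega)$ is applied on the subinterval $[0,\omega]\subseteq[0,1-q^{-1}]$ where $h_q$ is increasing.
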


We'll prove the theorem after Lemma \ref{k=n} below. 
We state here a corollary and 
describe how to apply the theorem and the corollary to quasi-group codes. 
For any $C\subseteq \F^n$, we denote   
\begin{equation}\label{C^<delta}
 \textstyle C^{\le\delta}
 =\big\{c\;\big|\;c\in C,~\frac{{\rm w}(c)}{n}\le\delta\big\}.
\end{equation}

\begin{corollary}\label{c balanced}
If $C$ is a balanced code over $\F$ of length $n$ and information length $k$,
then $|C^{\le\delta}|\le q^{kh_q(\delta)}$. 
\end{corollary}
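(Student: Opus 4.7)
The plan is to apply Theorem \ref{t balanced} directly to the subset $B := C^{\le\delta}$ of $C$, and then control the resulting bound by monotonicity of the $q$-entropy function.

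First I would verify the hypothesis of Theorem \ref{t balanced} for $B = C^{\le\delta}$. By the definition in Eq.\eqref{C^<delta}, every $b \in B$ satisfies $\w(b)/n \le \delta$, so the average relative weight
\[
\omega = \sum_{b\in B}\frac{\w(b)}{n|B|} \le \delta.
\]
Since we are given $\delta \in [0, 1-q^{-1}]$, we have $0 \le \omega \le \delta \le 1 - q^{-1}$, which places $\omega$ in the range required by Theorem \ref{t balanced} (and also in the domain of $h_q$).

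Next I would invoke Theorem \ref{t balanced} to conclude $|B| \le q^{k h_q(\omega)}$. To replace $h_q(\omega)$ by $h_q(\delta)$ I use the fact, recalled in Definition \ref{d entropy} and Exercise \ref{e g_q}, that $h_q$ is increasing on $[0, 1-q^{-1}]$; combined with $\omega \le \delta$, this yields $h_q(\omega) \le h_q(\delta)$, and therefore
\[
|C^{\le\delta}| = |B| \le q^{k h_q(\omega)} \le q^{k h_q(\delta)},
\]
as claimed.

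There is essentially no obstacle here: the corollary is a straightforward specialization of Theorem \ref{t balanced}. The only small point to flag is handling the degenerate case $B = \emptyset$ (in which the average $\omega$ is undefined), but this case is trivial since then $|C^{\le\delta}| = 0 \le q^{k h_q(\delta)}$. The genuine work lies in Theorem \ref{t balanced} itself, which the corollary simply consumes as a black box.
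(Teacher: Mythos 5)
Your proof is correct and is essentially the same as the paper's: both apply Theorem \ref{t balanced} to $B = C^{\le\delta}$, observe that its average relative weight $\omega$ is at most $\delta$, and use the monotonicity of $h_q$ on $[0,1-q^{-1}]$ to conclude. The explicit treatment of the $B=\emptyset$ case is a minor and welcome added detail.
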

\begin{proof}
The average relative weight of $C^{\le\delta}$ is at most $\delta$,
and $h_q(\cdot)$ is an increasing function in $[0, 1-q^{-1}]$.
The corollary follows from Theorem \ref{t balanced} at once.
\end{proof}

In particular, $\F^n$ is itself a balanced code and  
$(\F^n)^{\le\delta}$ is just the Hamming ball 
centered at $0$ with radius $\lfloor\delta n \rfloor$,
see Eq.\eqref{Hamming ball}.
Corollary \ref{c balanced} is an extension of (part of)
Lemma \ref{<Hamming ball<}.

\begin{lemma}[{\cite[Lemma 2.2]{BM}}]\label{g-code balanced}
Group codes are balanced codes.
\end{lemma}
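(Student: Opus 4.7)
The plan is to exhibit an explicit balanced system of information index sets coming from the $G$-action on $\mathbb{F}G$. Let $C\le \mathbb{F}G$ be a group code, regarded as a subspace of $\mathbb{F}^I$ with $I=G$, and set $k=\dim_{\mathbb{F}}C$. First I would pick any single information index set: since a generator matrix of $C$ has rank $k$, there exists $I_0\subseteq G$ with $|I_0|=k$ such that the projection $\rho_{I_0}:\mathbb{F}^G\to\mathbb{F}^{I_0}$ maps $C$ bijectively onto $\mathbb{F}^{I_0}$. The task is then to produce a whole family of such index sets that covers each coordinate the same number of times.

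Next I would translate $I_0$ by the group. For each $x\in G$ define $I_x:=xI_0$; clearly $|I_x|=k$, so condition~(1) of Definition~\ref{d balanced} holds. For condition~(2), observe that given $h\in G$, one has $h\in xI_0$ iff $x\in hI_0^{-1}$, and $|hI_0^{-1}|=|I_0|=k$, so each coordinate $h$ is covered by exactly $t=k$ of the sets $\{I_x\}_{x\in G}$. This is the key bookkeeping step and it is really just left-multiplication homogeneity of $G$ on itself.

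For condition~(3) I would use that $C$ is a \emph{left} ideal. Using the Cayley description in Remark~\ref{Cayley permutation}, for $c=\sum_{g\in G}c_g g\in C$ and $x\in G$ one has $(x^{-1}c)_g=c_{xg}$; therefore
\[
 \rho_{I_x}(c)=(c_{xg})_{g\in I_0}=\rho_{I_0}(x^{-1}c).
\]
Since $C$ is a left ideal, $c\mapsto x^{-1}c$ is an $\mathbb{F}$-linear bijection of $C$ onto itself, and by the choice of $I_0$ the map $\rho_{I_0}|_C$ is a bijection onto $\mathbb{F}^{I_0}\cong \mathbb{F}^{I_x}$. Composing, $\rho_{I_x}|_C$ is a bijection onto $\mathbb{F}^{I_x}$, which gives condition~(3). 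So $C$ is balanced with information length $k=\dim C$, parameters $s=n$ and $t=k$.

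The only delicate point is the bookkeeping of left versus right in the identification of $\mathbb{F}G$ with $\mathbb{F}^G$: one must make sure that the index-set translation $I_0\mapsto xI_0$ corresponds to an $\mathbb{F}G$-module automorphism of $C$, which is precisely where the left-ideal hypothesis enters. Everything else is a direct counting argument, so I expect no genuine obstacle beyond getting the direction of the Cayley action right.
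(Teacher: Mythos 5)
Your proof is correct and follows essentially the same route the paper takes: pick one information set $I_0$ for the linear code $C$, take the family of left translates $xI_0$ for $x\in G$, and use that left multiplication by $x^{-1}$ is a bijection of the left ideal $C$ to verify condition~(3). You have simply filled in the ``one can check'' that the paper leaves to the reader, including the counting $t=k$ and the correct direction of the Cayley action.
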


\begin{proof}
Let $G$ be a finite group, $C\le\F G$ be an $\F G$-code. 
Then $G$ is an $\F$-basis of $\F G$. Let 
$I\subseteq G$ be an information index set of the linear code $C$,
i.e., the projection $\rho_I:\F G\to \F I$ maps $C$ bijectively onto $\F I$.
Then one can check that 
$xI$, $x\in G$, form a balanced system of information index sets of $C$. 
\end{proof}

For $C\subseteq \F^n$,  the product $C^{n'}$ of $n'$ copies of $C$
in $(\F^n)^{n'}$ is as follows:
\begin{equation}\label{product}
 C^{n'}=\big\{({\bf c}_1,\cdots,{\bf c}_{n'})
  \;\big|\; {\bf c}_i\in C,~i=1,\cdots,n'\big\}\subseteq (\F^n)^{n'}
  =\F^{nn'}.
\end{equation}

\begin{lemma}\label{product balanced}
Let $C$ be a balanced code of $\F^n$
with information length $k$. Then $C^{n'}$
is a balanced code of $\F^{nn'}$ with information length $kn'$;
in particular, if $0\le\delta\le 1-q^{-1}$ then
$
\big|(C^{n'})^{\le\delta}\big|\le q^{kn'h_q(\delta)}.
$
\end{lemma}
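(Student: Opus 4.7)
The plan is to establish the balanced structure of $C^{n'}$ directly from that of $C$, and then invoke Corollary~\ref{c balanced} for the weight count.

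First, fix a balanced system of information index sets $I_1,\ldots,I_s$ for $C$, so each $|I_j|=k$ and each coordinate $i\in\{1,\ldots,n\}$ lies in exactly $t$ of the $I_j$'s, while each projection $\rho_{I_j}:\F^n\to\F^{I_j}$ restricts to a bijection on $C$. I identify the coordinate set of $\F^{nn'}=(\F^n)^{n'}$ with $\{1,\ldots,n'\}\times\{1,\ldots,n\}$. For each tuple $(j_1,\ldots,j_{n'})\in\{1,\ldots,s\}^{n'}$, I form
\[
 J_{(j_1,\ldots,j_{n'})}\;:=\;\bigsqcup_{l=1}^{n'}\bigl(\{l\}\times I_{j_l}\bigr)\;\subseteq\;\{1,\ldots,n'\}\times\{1,\ldots,n\}.
\]
These will be the candidate balanced information index sets for $C^{n'}$.

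Next I verify the three properties of Definition~\ref{d balanced}. Clearly $|J_{(j_1,\ldots,j_{n'})}|=kn'$, giving information length $k'=kn'$. For balance, a fixed coordinate $(l,i)$ lies in $J_{(j_1,\ldots,j_{n'})}$ exactly when $i\in I_{j_l}$, which imposes the constraint that $j_l$ is one of the $t$ indices with $i\in I_{j_l}$, while $j_{l'}$ for $l'\ne l$ is unrestricted; hence $(l,i)$ appears in exactly $t\cdot s^{n'-1}$ of the $J$'s, independent of $(l,i)$. Finally, since the projection $\F^n\to\F^{I_{j_l}}$ maps $C$ bijectively onto $\F^{I_{j_l}}$ in each coordinate block, the componentwise product projection $(\F^n)^{n'}\to\prod_{l=1}^{n'}\F^{I_{j_l}}=\F^{J_{(j_1,\ldots,j_{n'})}}$ maps $C^{n'}$ bijectively onto $\F^{J_{(j_1,\ldots,j_{n'})}}$. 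Thus $C^{n'}$ is balanced of length $nn'$ and information length $kn'$.

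The weight bound then follows immediately from Corollary~\ref{c balanced} applied to $C^{n'}$: since $0\le\delta\le 1-q^{-1}$,
\[
 |(C^{n'})^{\le\delta}|\;\le\;q^{k'\,h_q(\delta)}\;=\;q^{kn'h_q(\delta)}.
\]
There is no real obstacle; the only point requiring care is the bookkeeping for the balance condition, namely verifying that every coordinate $(l,i)$ appears in the same number of $J$-sets, which comes out cleanly from the product structure and the fact that the constraint on the $l$-th coordinate is independent of the choices in the other copies.
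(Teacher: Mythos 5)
Your proof is correct and follows essentially the same strategy as the paper: exhibit a balanced system of information index sets for $C^{n'}$ built from products of the $I_j$'s, then invoke Corollary~\ref{c balanced}. The only (cosmetic) difference is that you take the full product family $\{J_{(j_1,\ldots,j_{n'})}\}$ over all tuples $(j_1,\ldots,j_{n'})\in\{1,\ldots,s\}^{n'}$, giving $s^{n'}$ index sets each covering a fixed coordinate $ts^{n'-1}$ times, whereas the paper uses only the diagonal family $\{I_j^{n'}\}_{j=1}^s$ in which the same $I_j$ is repeated in every block, giving $s$ index sets each covering a fixed coordinate $t$ times; both families satisfy Definition~\ref{d balanced}, so the conclusion is the same.
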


\begin{proof} 
Let $\F^n=\F^I$ where $I=\{1,\cdots,n\}$.  We write
$$ I^{n'}=\big\{1^{(1)},\cdots,n^{(1)},~\cdots,~
   1^{(n')},\cdots,n^{(n')}\big\}.$$
Then $\F^{nn'}=\F^{I^{n'}}$.
Assume that subsets $I_1,\cdots,I_s$ of $I$ 
form a balanced system of information index sets of~$C$.
For each $I_j=\{j_1,\cdots,j_k\}$,
we have a subset $I_j^{n'}$ of $I^{n'}$ 
as follows:
$$ I_j^{n'}=\big\{j_1^{(1)},\cdots,j_k^{(1)},~\cdots,~
  j_1^{(n')},\cdots,j_k^{(n')}\big\}.  $$
Then one can check that
$I_1^{n'},~\cdots,~I_s^{n'}$
form a balanced system of information index sets
of the product code $C^{n'}$. 
\end{proof}

The following lemma is a key step for proving Theorem \ref{t balanced}.

\begin{lemma}\label{k=n}
 Let  $\emptyset\ne B\subseteq\F^n$,
$\omega=\sum_{b\in B}\frac{\w(b)}{n|B|}$.
If $0\le\omega\le 1-q^{-1}$, then
\begin{equation}\label{weight balanced}
 |B|\le q^{n h_q(\omega)}.
 \end{equation}
\end{lemma}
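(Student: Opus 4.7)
The plan is to prove $|B|\le q^{nh_q(\omega)}$ by combining a multinomial-style counting identity with the AM--GM inequality, following a weighted-counting trick.

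First, I would introduce the two nonnegative numbers $\lambda:=\omega/(q-1)$ and $\mu:=1-\omega$, satisfying $(q-1)\lambda+\mu=1$, and attach to each word $c\in\F^n$ the weight $\psi(c):=\lambda^{\w(c)}\mu^{n-\w(c)}$. The key identity is the factorization
\[
  \sum_{c\in\F^n}\psi(c)\;=\;\prod_{i=1}^{n}\bigl(\mu+(q-1)\lambda\bigr)\;=\;1,
\]
since at each coordinate the value $0$ contributes $\mu$ and each of the $q-1$ nonzero values contributes $\lambda$. Restricting the sum to $B\subseteq\F^n$ gives $\sum_{b\in B}\psi(b)\le 1$.

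Next I would apply AM--GM to the $|B|$ summands on the left:
\[
  \frac{1}{|B|}\sum_{b\in B}\psi(b)\;\ge\;\Bigl(\prod_{b\in B}\psi(b)\Bigr)^{1/|B|}.
\]
The product evaluates cleanly because the exponents collapse: $\prod_{b\in B}\psi(b)=\lambda^{\sum_b\w(b)}\mu^{\sum_b(n-\w(b))}=\lambda^{\omega n|B|}\mu^{(1-\omega)n|B|}$ by the hypothesis $\sum_{b\in B}\w(b)=\omega n|B|$. Extracting the $|B|$-th root and chaining with the previous bound yields
\[
  |B|\cdot\lambda^{\omega n}\mu^{(1-\omega)n}\;\le\;1.
\]

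Finally, solving for $|B|$ and unfolding the definition of $h_q$,
\[
  |B|\;\le\;\Bigl(\tfrac{q-1}{\omega}\Bigr)^{\omega n}(1-\omega)^{-(1-\omega)n}\;=\;q^{n h_q(\omega)},
\]
as desired. The only real creative step is spotting the weighting $\psi$ that telescopes over $\F^n$ to $1$; once it is in place, AM--GM converts the \emph{average-weight} hypothesis into the \emph{geometric-mean} quantity we need, and the rest is bookkeeping. I would dispose of the degenerate boundaries ($\omega=0$, forcing $B\subseteq\{0\}$; and $\omega=1-q^{-1}$, where the bound becomes the trivial $|B|\le q^n$) separately with the convention $0^0=1$; note that the hypothesis $\omega\le 1-q^{-1}$ is used only to interpret the result on the relevant branch of $h_q$, not in the AM--GM step itself.
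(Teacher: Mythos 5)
Your proof is correct and takes a genuinely different route from the paper's. The paper's argument views the $j$-th column of (the matrix listing) $B$ as a random variable $X_j$ with distribution given by the column frequencies $p_j(\alpha)=t_{\alpha,j}/|B|$, identifies $\log_q|B|$ with the joint entropy $H_q(X_1,\ldots,X_n)$ (since the joint variable is uniform on $B$), then invokes subadditivity of entropy (Lemma~\ref{l inf entropy}) followed by Jensen's inequality (Exercise~\ref{Jensen}) to bound the sum of marginal entropies by $nh_q(\omega)$. You instead weight each word $c$ by $\psi(c)=\lambda^{\w(c)}\mu^{n-\w(c)}$ with $\lambda=\omega/(q-1)$ and $\mu=1-\omega$ (a Bernoulli-type product measure on $\F^n$ with total mass $1$), and convert the constraint $\sum_{b\in B}\psi(b)\le 1$ into the bound via AM--GM; this closes precisely because the geometric mean of $\psi$ over $B$ depends only on the average relative weight $\omega$. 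Your route is more elementary in that it bypasses the entropy machinery entirely; under the hood the AM--GM step is equivalent to the non-negativity of the relative entropy of the uniform distribution on $B$ against $\psi$, whereas the paper effectively arrives at that optimal comparison measure implicitly by Jensen-averaging the marginals. The closing computation $\lambda^{-\omega n}\mu^{-(1-\omega)n}=(q-1)^{\omega n}\omega^{-\omega n}(1-\omega)^{-(1-\omega)n}=q^{nh_q(\omega)}$ checks out, and your observation that the hypothesis $\omega\le 1-q^{-1}$ is needed only to place $\omega$ on the increasing branch of $h_q$ (not in the inequality itself) is also accurate; note too that for $0<\omega\le 1-q^{-1}$ one has $\lambda,\mu>0$, so no factor of the geometric mean vanishes and AM--GM applies without caveat.
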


\begin{proof}
 Set $M=|B|$. Let 
$
{\cal B}=\begin{pmatrix} b_{11} &\cdots& b_{1j}&\cdots & b_{1n}\\
 \cdots& \cdots &\cdots& \cdots &\cdots\\ 
 b_{M1} &\cdots & b_{Mj}&\cdots & b_{Mn} \end{pmatrix}
$
be the matrix whose rows are all $b_i=(b_{i1},\cdots,b_{in})\in B$. 
For $\alpha\in\F$, let $t_{\alpha,j}$ denote the number of entries 
in the $j$-th column of ${\cal B}$ that equals~$\alpha$. 
For the $j$-th column of~${\cal B}$, $1\le j\le n$, we have a random
variable $X_j$ taking values in $\F$ with distribution 
$p_j(\alpha)=\Pr(X_j=\alpha)=t_{\alpha,j}/M$. 
By Definition \ref{d inf entropy}, 
Eq.\eqref{e inf entropy} and Lemma \ref{l inf entropy}, 
the joint variable $(X_1,.\cdots,X_n)$ and the
information entropies with base $q$ satisfy that 
\begin{equation}\label{H_q(X)}
\textstyle
 H_q(\!X_1,\cdots,X_n\!)\le H_q(\!X_1\!)+\cdots+H_q(\!X_n\!)
 = \sum_{j=1}^n\!\sum_{\alpha\in \F}-p_j(\alpha)\log_q\! p_j(\alpha). 
\end{equation} 
For any $a=(a_1,\cdots,a_n)\in \F^n$,  we see that the joint distribution
$$
 p(a_1,\cdots,a_n)= \Pr\big((X_1,\cdots,X_n)=(a_1,\cdots,a_n)\big)
 =\begin{cases}\frac{1}{M}, & (a_1,\cdots,a_n)\in B;\\
 0, & (a_1,\cdots,a_n)\notin B.\end{cases}
$$
By Definition \ref{d inf entropy}, 
\begin{equation}\label{log M}\textstyle
 H_q(X_1,\cdots,X_n) =\sum_{a\in\F^n}-p(a_1,\cdots,a_n)\log_q p(a_1,\cdots,a_n)=\log_q M.
\end{equation}
On the other hand, it can be checked directly that
\begin{equation}\label{omega}\textstyle
 \omega
 =\sum_{\alpha\in \F^\times}\sum_{j=1}^n \frac{p_j(\alpha)}{n};
 \qquad
1-\omega=\sum_{j=1}^n \frac{p_j(0)}{n}.
\end{equation}
Since $-x\log_q x$ is a concave function, 
by Jensen Inequality (Exercise \ref{Jensen}),  
\begin{align*} \textstyle 
\frac{\sum_{j=1}^n\sum_{\alpha\in\F^\times}
  \!-p_j(\alpha)\log_q p_j(\alpha)}{n(q-1)}
\le
-\frac{\sum_{j=1}^n\sum_{\alpha\in\F^\times}p_j(\alpha)}{n(q-1)}
 \log_q\!\frac{\sum_{j=1}^n
 \sum_{\alpha\in\F^\times} p_j(\alpha)}{n(q-1)}.
\end{align*}
By Eq.\eqref{omega},
\begin{align}\label{over F^times}\textstyle
\sum_{j=1}^n\sum_{\alpha\in\F^\times}-p_j(\alpha)\log_q p_j(\alpha)\le
n\big(\omega\log_q(q-1)-\omega\log_q\omega\big).
\end{align}
Similarly,  we have
\begin{align}\label{over 0}\textstyle
\sum_{j=1}^n -p_j(0)\log_q p_j(0)\le -n(1-\omega)\log_q(1-\omega).
\end{align}
Combining Eq.\eqref{over F^times}, Eq.\eqref{over 0},
Eq.\eqref{log M} and Eq.\eqref{H_q(X)},
$$
\log_q M \le n\big(\omega\log_q(q-1)
-\omega\log_q\omega-(1-\omega)\log_q(1-\omega)\big)
=nh_q(\omega).
$$
which is just the inequality (\ref{weight balanced}).
\end{proof}

\smallskip\noindent{\it A Proof of Theorem \ref{t balanced}.}
Assume that $C\subseteq \F^I$ with $I=\{1,\cdots,n\}$, and 
$I_1,\cdots,I_s$ is a balanced system 
of information index sets of $C$ as in Definition~\ref{d balanced},
and $B\subseteq C$, $\omega=\sum_{b\in B}\frac{\w(b)}{Mn}$ where
$M=|B|$.
By (1) and (2) of Definition \ref{d balanced}, 
$$
 ks=|I_1|+\cdots+|I_s|=nt.
$$ 
Let 
$$
V=\F^{I_1}\times\cdots\times\F^{I_s}
=\big\{(v_1,\cdots,v_s)\;\big|\; v_j\in\F^{I_j}, j=1,\cdots,s\big\},
$$
which is an $\F$-vector space of dimension $ks=nt$; i.e., $V=\F^{ks}$.
By (3) of Definition \ref{d balanced}, we have an $\F$-isomorphism 
$$
\rho:~C^s\longrightarrow V, ~~ 
(c_1, \cdots, c_s) \longmapsto \big(\rho_1(c_1),\cdots,\rho_s(c_s)\big).
$$
where $C^s=C\times\cdots\times C$ ($s$ copies) as in Eq.\eqref{product}.
For $B^s\subseteq C^s$, 
denote $\widehat B=\rho(B^s)\subseteq V$.
Then $|\widehat B|=|B|^s=M^s$. 

Let $b=(b_{1},\cdots,b_{n})\in B$. Let $b_i\ne 0$, i.e., 
the $i$-th entry of $b$ contributes $1$ to the total weight of $B$.
By (2) of Definition \ref{d balanced}, 
there are $t$ subscripts $1\le j_1<\cdots<j_t\le s$ such that 
$i\in I_{j_h}$, $h=1,\cdots,t$. For each $j_h$,
there are $M^{s-1}$ words $(\cdots, b,\cdots)\in B^s$
with $b$ appears in $j_h$-th position. Thus,
the contribution of the $i$-th entry $b_i $ of $b$ 
to the total weight of $\widehat B$ is $M^{s-1}t$. 

The total weight of $B$ equals $Mn\omega$. Then
the total weight of $\widehat B$ is 
$$
Mn\omega\cdot M^{s-1}t =M^s nt\omega.
$$
Note that $nt=ks=\dim V$.
So the average relative weight of $\widehat B$ equals $\omega$.
Applying Lemma \ref{k=n} to $\widehat B\subset V$, we get
$$
 M^s=|\widehat B|\le q^{ks h_q(\omega)}.
$$
That is, $|B|=M\le  q^{k h_q(\omega)}$. 
The proof of Theorem \ref{t balanced} is completed. 
\qed

\begin{remark}\rm
The binary version of Theorem \ref{t balanced} (Corollary \ref{c balanced}) 
was proved in \cite{M74}, \cite{P85} and \cite{S86}. 
The general versions stated here are proved in~\cite{FL}.
Because of Lemma~\ref{g-code balanced} and Lemma~\ref{product balanced},
Theorem \ref{t balanced} (in particular, Corollary \ref{c balanced}) 
can be cited to study the weights or distances of quasi-group codes.
\end{remark}

\section{Asymptotic property of quasi-abelian codes}\label{quasi-abelian}

In this section:
\begin{itemize}
\item\vskip-5pt
 $\F$ is a finite field with $|\F|=q$; 
 \item\vskip-5pt
 $n$ is a fixed positive integer such that $\gcd(n,q)=1$; 
 \item \vskip-5pt 
$G$ is an abelian group of order $n$;
\item \vskip-5pt
$r\in(0,1)$ and $\delta\in (0, 1-q^{-1})$ are real numbers; 
\item \vskip-5pt
$t>1$ is an integer, and $k=\lfloor{rt}\rfloor$.
\end{itemize}
\vskip-5pt
We consider the {\em quasi-$\F G$ codes of index $t$}, i.e., 
any $\F G$-submodule  
$$ 
  C\le (\F G)^t=
  \overbrace{\F G\times\cdots\times\F G}^{\mbox{\scriptsize $t$ copies}}
   =\{a=(a_1,\cdots,a_t)\,|\,a_i\in\F G\}.
$$
The order $n$ of $G$ is sometimes called the {\em coindex} of the 
quasi-abelian code~$C$ in literature. 
Each element ${a}=(a_1,\cdots,a_t)\in (\F G)^t$ is identified
with a concatenated word
$\big((a_{1x})_{x\in G},\cdots,(a_{t x})_{x\in G}\big)$
of length $nt$ over $\F$, the Hamming weight
${\w}({a})={\w}(a_1,\cdots,a_t)={\rm w}(a_1)+\cdots+{\rm w}(a_t)$.

We consider the set of $k\times t$ matrices over $\F G$:
\begin{equation}\label{ensemble}
(\F G)^{k\times t}
=\left\{A=\begin{pmatrix}a_{11}&\cdots&a_{1t}\\
 \cdots&\cdots&\cdots\\ a_{k1}&\cdots&a_{kt}\end{pmatrix}
 ~\Bigg|~ a_{ij}\in \F G\right\}, 
\end{equation}
as a probability space with each sample being chosen uniformly at random 
(i.e., at equal probability). 
For $A=(a_{ij})_{k\times t}\in (\F G)^{k\times t}$ 
(i.e., $A$ is a random $k\times t$ matrix over $\F G$), 
we write $A=(A_1,\cdots,A_t)$ with $A_j=(a_{1j},\cdots,a_{kj})^T$
being the $j$'th column of the matrix $A$,
where the superscript ``$T$'' stands for the transpose.
Then we have a 
{\em random quasi-abelian code $C_A$ of index $t$} as follows:
\begin{equation}\label{e C_A}
 C_A=\big\{{b}A=( bA_1,~\cdots,~ bA_t )
 ~\big|~ b=(b_1,\cdots,b_k)\in(\F G)^k\big\},
\end{equation}
where ${b}A_j=b_1a_{1j}+\cdots+b_ka_{kj}\in \F G$.
In particular, if $G=1$ is trivial then $(\F G)^{k\times t}=\F^{k\times t}$ 
and $C_A$ is just the usual random linear code defined in Definition \ref{C_M}.

After the random quasi-$\F G$ code $C_A$ is defined, 
 both the rate $\R(C_A)=\frac{\dim_F C_A}{nt}$ and 
the relative minimum distance $\Delta(C_A)=\frac{{\rm w}(C_A)}{nt}$ 
are random variables. We consider the asymptotic behaviors 
of them with $t\to\infty$ (hence $k\to\infty$). 
The following is the main result which shows, similar to
Theorem~\ref{Delta(C_M) >delta} on linear codes, 
that the GV-bound $g_q(\delta)$ (Definition \ref{d entropy})
is a phase transition point for the random quasi-$\F G$ codes.

\begin{theorem}\label{Delta(C_A) >delta}
Let $A\in(\F G)^{k\times t}$ and $C_A=\{{b}A\mid{b}\in(\F G)^k\}$
where $k=\lfloor rn\rfloor$. Then
$$\lim\limits_{t\to\infty}\Pr\big(\Delta(C_A)>\delta\big)=
\begin{cases}
    1, & r<g_q(\delta); \\  0, & r>g_q(\delta).
\end{cases}
$$
\end{theorem}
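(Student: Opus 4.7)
The plan is to imitate the two-moment proof of Theorem~\ref{Delta(C_M) >delta}, the new feature being that the coefficient ring $\F G$ is no longer a field. By Theorem~\ref{semisimple abelian} I decompose $\F G=\F_0\oplus\F_1\oplus\cdots\oplus\F_m$ into simple ideals (each a finite field with $d_i=\dim_\F \F_i$) and write every $b\in(\F G)^k$ componentwise as $b=b^{(0)}+\cdots+b^{(m)}$ with $b^{(i)}\in\F_i^k$. As in Definition~\ref{C_M}, set $X_b=1$ if $0<\w(bA)/(nt)\le\delta$ and $X_b=0$ otherwise, and $X=\sum_{0\ne b}X_b$.

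For the first moment, fix $0\ne b$, let $S=\{i:b^{(i)}\ne 0\}\ne\varnothing$, and put $I_S=\bigoplus_{i\in S}\F_i$. Over each field $\F_i$, the ideal generated by $b^{(i)}_1,\ldots,b^{(i)}_k$ is $\F_i$ when $i\in S$ and $0$ otherwise, so each coordinate $bA_j$ is uniformly distributed in $I_S$; since the columns $A_1,\ldots,A_t$ of $A$ are independent, $bA$ is uniformly distributed in $I_S^t$. As a group code in $\F G$, $I_S$ is balanced of information length $\dim_\F I_S=\sum_{i\in S}d_i$ by Lemma~\ref{g-code balanced}; hence Lemma~\ref{product balanced} and Corollary~\ref{c balanced} yield
$$
\E(X_b)\le\frac{|(I_S^t)^{\le\delta}|}{|I_S|^t}\le q^{-t(\dim_\F I_S)g_q(\delta)}.
$$
Since the number of $b\in(\F G)^k$ with support $S$ is at most $q^{k\dim_\F I_S}$, summing over $b$ stratified by $S$ gives
$$
\E(X)\le\sum_{\varnothing\ne S\subseteq\{0,\ldots,m\}}q^{t(\dim_\F I_S)(r-g_q(\delta))},
$$
which tends to $0$ when $r<g_q(\delta)$; Markov's inequality (Lemma~\ref{Markov}) and Eq.~\eqref{X>1} then give $\Pr(\Delta(C_A)>\delta)\to 1$.

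For $r>g_q(\delta)$, I first obtain a matching lower bound $\E(X)\to\infty$ by restricting to the ``full support'' stratum $S=\{0,\ldots,m\}$, where $I_S=\F G$ and the ``$\ge$'' part of Lemma~\ref{<Hamming ball<} applies to the ambient Hamming ball in $(\F G)^t=\F^{nt}$. Then I apply the second-moment bound (Lemma~\ref{2'nd moment}). For fixed $0\ne b$, the joint distribution of $(bA,b'A)$ decomposes independently across the components $i$: in coordinate $i$ the pair $(b^{(i)}A^{(i)},b'^{(i)}A^{(i)})\in\F_i^t\times\F_i^t$ is independent uniform when $b^{(i)},b'^{(i)}$ are $\F_i$-linearly independent in $\F_i^k$, and is perfectly coupled via a scalar in $\F_i$ otherwise. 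Stratifying $b'$ by its ``dependence pattern'' relative to $b$ and summing, I expect $\E(X\mid X_b=1)=(1+o(1))\E(X)$, so Lemma~\ref{2'nd moment} gives $\Pr(X\ge 1)\to 1$ and hence $\Pr(\Delta(C_A)>\delta)\to 0$ by Eq.~\eqref{X>1}.

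The hard part will be the second-moment bookkeeping. Where the linear-code argument has a clean binary dichotomy ``$b'$ is a scalar multiple of $b$ or not,'' here one must tally the $2^{|S(b)|}$ dependence patterns indexing subsets of components at which $b'^{(i)}$ is $\F_i$-parallel to $b^{(i)}$. The key technical point is to show that only the ``fully independent'' pattern contributes at leading order, the other patterns yielding strictly smaller exponents in $t$; this is where the strict inequality $r>g_q(\delta)$ and the dominance of the $S=\{0,\ldots,m\}$ stratum enter in an essential way.
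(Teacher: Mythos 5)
Your first-moment step coincides with the paper's (Lemma \ref{l EX_b}, Theorem \ref{E(X)}): stratifying $b$ by the support $S$ is exactly the paper's stratification by the ideal $I_b=\bigoplus_{i\in S}\F_i$, and the balanced-code bounds (Lemma \ref{g-code balanced}, Lemma \ref{product balanced}, Corollary \ref{c balanced}) are applied identically. The lower bound $\E(X)\to\infty$ via the full-support stratum is also the paper's.

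The second moment is where the technical substance lies, and your sketch leaves a genuine gap. You correctly describe the joint law of $(bA,b'A)$ component by component, but you do not say how to bound $\E(X_{b'}\mid X_b=1)$: the conditioning event $\{0<\w(bA)\le\delta nt\}$ is a \emph{global} Hamming-weight constraint over $\F$ which does not factor through the Wedderburn components $\F_i$, so the component-wise ``independent versus coupled'' description by itself does not produce the needed estimate. Note also that ``perfectly coupled via a scalar in $\F_i$'' is weaker than the linear-code dichotomy: if $b'=\lambda b$ with $\lambda\in(\F G)^\times$ then $b'A=\lambda\,bA$, but multiplication by a unit of $\F G$ is not a Hamming isometry in general, so coupling does not give $\Pr(X_{b'}=1\mid X_b=1)=1$. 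The paper closes the gap with three moves absent from your sketch: (i) Lemma \ref{Pr(X>=1)} restricts the sum of Lemma \ref{2'nd moment} to $b\in(\F G)^{k*}$ and shows $\E(X\mid X_b=1)$ is constant there via the substitution $a\mapsto aQ$ with $Q$ invertible; (ii) this permits the one fixed choice $c=(0,\ldots,0,1)$, for which $X_c=1$ pins only the last row $\underline A_k$; (iii) conditioned on $\underline A_k=a_\delta$, $bA$ is uniform on the coset $I_{\bar b}^t+b_ka_\delta$, which is balanced by Remark \ref{cosets is ...}, so Corollary \ref{c balanced} yields $\E(X_b\mid X_c=1)\le q^{-d_{\bar b}tg_q(\delta)}$ (Lemma \ref{EXb|Xb1}). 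With this $c$ your ``dependence pattern'' of $b$ relative to $c$ is precisely the support of $\bar b=(b_1,\ldots,b_{k-1})$, i.e.\ the ideal $I_{\bar b}$, and the Case~1 / Case~2 analysis at the end of the paper's proof is exactly the bookkeeping you defer. The plan is right, but without the coset balanced-code estimate and the invariance reduction to $c$, the second-moment step does not close.
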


Before proving it, we show some consequences of the theorem.

By Exercise \ref{e R}(2), $\tau_A: (\F G)^k\to (\F G)^t$, $b\mapsto bA$, 
is an $\F G$-module homomorphism and $C_A={\rm Im}(\tau_A)$; 
so the rate 
$$\textstyle
 \R(C_A)\le \frac{k}{t}; \quad \mbox{ and }~~ 
  \R(C_A)=\frac{k}{t} \iff  \mbox{$A$ is of full rank}.
$$

Since $\gcd(n,q)=1$, 
Theorem \ref{semisimple abelian} and its notation are applied. 
We have $\F G=\F_0\oplus \F_1\oplus\cdots\oplus \F_m$
with $\F_i$ being field extensions over $\F$ with $\dim \F_i=d_i$
for $i=0,1,\cdots,m$, see Eq.\eqref{FG=F_0+...}.
By Exercise~\ref{e R+R}(4), 
\begin{equation}\label{e full rank}\textstyle
 \lim\limits_{t\to\infty}\Pr(\mbox{\rm $A$ has full rank})
 \ge\lim\limits_{t\to\infty}\prod_{i=0}^m(1-\frac{q^{d_ik}}{q^{d_it}})
  =1.
\end{equation}
Similarly to Theorem \ref{Delta(C_N) >delta}, we get

\begin{theorem}
$\lim\limits_{t\to\infty}
\Pr\big(\Delta(C_A)>\delta\,\big|\,\R(C_A)=\lfloor rt\rfloor/t\big)
=\begin{cases}
    1, & r<g_q(\delta); \\  0, & r>g_q(\delta).
\end{cases}$
\end{theorem}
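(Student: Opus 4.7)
The plan is to deduce this statement from the unconditional version in Theorem \ref{Delta(C_A) >delta}, in exactly the same way that Theorem \ref{Delta(C_N) >delta} was deduced from Theorem \ref{Delta(C_M) >delta}. The conditioning event $\R(C_A)=\lfloor rt\rfloor/t=k/t$ is, by Exercise \ref{e R}(2), the same as the event that $\tau_A\colon(\F G)^k\to(\F G)^t,\ b\mapsto bA$ is injective, i.e.\ that $A$ has full rank over $\F G$. So the conditional probability under discussion equals $\Pr\!\big(\Delta(C_A)>\delta\mid A\text{ has full rank}\big)$.

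First, I would apply the Total Probability Formula (Lemma \ref{l total}) to the partition ``$A$ has full rank'' versus ``$A$ does not have full rank'' to write
\begin{align*}
\Pr\!\big(\Delta(C_A)>\delta\big)
&=\Pr\!\big(\Delta(C_A)>\delta\mid A\text{ full rank}\big)\Pr\!\big(A\text{ full rank}\big)\\
&\quad +\Pr\!\big(\Delta(C_A)>\delta\mid A\text{ not full rank}\big)\Pr\!\big(A\text{ not full rank}\big).
\end{align*}
Eq.\eqref{e full rank}, which comes from the semisimple decomposition $\F G=\F_0\oplus\F_1\oplus\cdots\oplus\F_m$ furnished by Theorem \ref{semisimple abelian} together with Exercise \ref{e R+R}(4) applied in each component, tells us that $\Pr(A\text{ full rank})\to 1$ and $\Pr(A\text{ not full rank})\to 0$ as $t\to\infty$. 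Hence the second summand above vanishes in the limit and the first summand is asymptotically equal to the left-hand side, yielding
$$
\lim_{t\to\infty}\Pr\!\big(\Delta(C_A)>\delta\mid A\text{ full rank}\big)
=\lim_{t\to\infty}\Pr\!\big(\Delta(C_A)>\delta\big).
$$
Applying Theorem \ref{Delta(C_A) >delta} to the right-hand side immediately gives the case split $r<g_q(\delta)$ vs.\ $r>g_q(\delta)$ claimed in the statement.

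There is essentially no obstacle here beyond carefully invoking Eq.\eqref{e full rank}: the real work was already done in Theorem \ref{Delta(C_A) >delta} (via the first- and second-moment arguments) and in establishing that, thanks to $\gcd(n,q)=1$ and Maschke's theorem, the probability of full rank in $(\F G)^{k\times t}$ factors through the product over the simple components $\F_i$ and each factor tends to $1$ as $t\to\infty$. The only subtlety worth double-checking is that the conditioning event has strictly positive probability for all sufficiently large $t$, so that the conditional probability is well-defined; this again follows from the lower bound in Eq.\eqref{e full rank}.
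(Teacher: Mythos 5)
Your proposal is correct and follows exactly the route the paper intends: the paper itself only writes ``Similarly to Theorem \ref{Delta(C_N) >delta}, we get'' after establishing the full-rank characterization of $\R(C_A)=k/t$ and Eq.\eqref{e full rank}, and your write-up faithfully fills in that analogy via the Total Probability Formula and Theorem \ref{Delta(C_A) >delta}. The observation that the conditioning event has positive probability for large $t$ is a minor but welcome addition.
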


As shown in Remark \ref{r linear}, 
the first part (the case $r<g_q(\delta)$) of the theorem implies
that the quasi-$\F G$ codes attain the GV-bound asymptotically.

\begin{corollary}
If $\delta\in(0,1-q^{-1})$ and $0<r<g_q(\delta)$, then
there exist quasi-$\F G$-codes $C_i$ of index $t_i$ such that
the code sequence $C_1,C_2,\cdots$ satisfies that $t_i\to\infty$, 
 $\lim\limits_{i\to\infty}\R(C_i)=r$ and 
 $\Delta(C_i)>\delta$ for all $i=1,2,\cdots$.
\end{corollary}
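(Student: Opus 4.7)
The plan is to derive this corollary as an immediate consequence of the preceding theorem via the standard probabilistic-method trick: an event whose probability tends to $1$ must, in particular, be non-empty for all sufficiently large $t$, and a deterministic sample witnessing it provides the required code.

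First, I would fix $\delta\in(0,1-q^{-1})$ and $r$ with $0<r<g_q(\delta)$, and for each integer $t\ge 1$ set $k_t=\lfloor rt\rfloor$ and work in the probability space $(\F G)^{k_t\times t}$ defined in \eqref{ensemble}. The preceding theorem gives
\[
\lim_{t\to\infty}\Pr\bigl(\Delta(C_A)>\delta \;\big|\; \R(C_A)=k_t/t\bigr)=1,
\]
while \eqref{e full rank} ensures $\Pr(\R(C_A)=k_t/t)=\Pr(\text{$A$ has full rank})\to 1$. Hence the joint event ``$A$ has full rank and $\Delta(C_A)>\delta$'' has probability tending to $1$, so there is a threshold $T_0$ beyond which this event is non-empty.

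Next I would pick any strictly increasing sequence $t_1<t_2<\cdots$ with $t_1>T_0$, and for each $i$ select some $A_i\in(\F G)^{k_{t_i}\times t_i}$ realizing the joint event. Setting $C_i:=C_{A_i}$ gives a quasi-$\F G$ code of index $t_i$, and since $A_i$ has full rank the map $\tau_{A_i}:(\F G)^{k_{t_i}}\to(\F G)^{t_i}$ is injective, so
\[
\R(C_i)=\frac{k_{t_i}}{t_i}=\frac{\lfloor rt_i\rfloor}{t_i}\ \xrightarrow[i\to\infty]{}\ r,
\]
while $\Delta(C_i)>\delta$ holds by construction. The length $nt_i$ of $C_i$ goes to infinity.

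No genuine obstacle arises here, as all the analytic content is packaged inside the second-moment proof of the preceding theorem (which in turn rests on the balanced-code bound of Corollary~\ref{c balanced} and the full-rank estimate of Exercise~\ref{e R+R}(4)). The only subtlety worth flagging is the need to condition on, or intersect with, the full-rank event, so that $\R(C_i)$ is \emph{equal} to $k_{t_i}/t_i$ rather than merely bounded above by it; this is what guarantees the limit is exactly the prescribed $r$ and not some smaller value.
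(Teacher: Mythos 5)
Your proof is correct and is essentially the canonical way to extract a deterministic witness from the preceding probabilistic theorem; the paper states the corollary without proof, relying on exactly this standard argument. You rightly flag the one point that requires care: intersecting with (equivalently, conditioning on) the full-rank event so that $\R(C_i)$ is exactly $\lfloor rt_i\rfloor/t_i$ and thus converges to $r$, rather than merely being bounded above by it.
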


Now we turn to prove Theorem \ref{Delta(C_A) >delta}.
And, also as shown in Remark \ref{r linear}, 
only the first moment method is not enough to prove the theorem, 
we have to apply the second moment method to prove the second part 
(the case $r>g_q(\delta)$) of it. 
We begin with fixing some notation.

\begin{definition}\label{d X_b abelian}\rm
(1)~ For ${b}\in(\F G)^k$ we define a 0-1 variable 
$$X_{b}=
 \begin{cases} 1, & 0<{\rm w}({b}A)\le \delta nt;\\
 0, & {\rm otherwise}. \end{cases}$$
 
(2) Let $X=\sum_{{b}\in(\F G)^k}X_{b}$, which is a non-negative integer variable. 
 Then $X$ stands for the number 
 of the non-zero $bA\in C_A$ with relative weight $\le\delta$. So 
\begin{equation}\label{X ge 1}
 \Pr\big(\Delta(C_A)\le\delta\big)=\Pr(X\ge 1).
\end{equation}

(3)~ For ${b}=(b_1,\cdots,b_k)\in(\F G)^k$, we set
$I_{b}=\F Gb_1+\cdots+ \F Gb_k$ which is the ideal of $\F G$ generated by
$b_1,\cdots,b_k$, 
and denote $d_{b}=\dim I_{b}$.

(4)~ For any ideal $I$ of $\F G$, set $d_I=\dim I$, 
hence $\dim I^k=d_I k$; and set 
$$
 I^{k*}=\big\{{b}=(b_1,\cdots,b_k)\in I^k\;\big|\; I_{b}=I\big\}.
$$
In particular, $(\F G)^{k*}=\{b\in(\F G)^k\mid I_b=\F G\}$.
\end{definition}

Then $(\F G)^k$ can be written as a disjoint union as follows:
\begin{equation}\label{disjoint union}
\textstyle
(\F G)^k=\{0\}\bigcup\Big(\bigcup_{0\ne I\le \F G}I^{k*}\big),
\end{equation}
where the subscript ``$0\ne I\le \F G$'' means that $I$ runs over the
non-zero ideals of $\F G$. 
By Exercise \ref{e R}(3), Eq.\eqref{FG=F_0+...} and Exercise \ref{e R+R}(1), 
$\frac{|(\F G)^{k*}|}{|(\F G)^k|}
 =\prod_{i=0}^m\frac{|\F_i^{k*}|}{|\F_i^k|}$; 
hence
\begin{equation}\label{|FGk*|}\textstyle
\lim\limits_{k\to\infty}\frac{|(\F G)^{k*}|}{|(\F G)^k|}
=\lim\limits_{k\to\infty}\prod_{i=0}^m(1-q^{-d_ik})=1.
\end{equation}

\begin{lemma} \label{l EX_b}
{\rm(1)} For ${b}\in(\F G)^k$,~
${\E}(X_{b})\le q^{- d_{b}tg_q(\delta)}-q^{-d_{b}t}
< q^{- d_{b}tg_q(\delta)}$.

{\rm(2)} For ${b}\in(\F G)^{k*}$,~ 
 ${\E}(X_{b})=\big(|(\F^{nt})^{\le\delta}|-1\big)\big/q^{nt}$. 
 And the following hold:
 
\quad {\rm(2.1)} ${\E}(X_{b})={\E}(X_{b'})$,
   $\forall$ ${b},{b'}\in(\F G)^{k*}$.

\quad {\rm(2.2)} $q^{nk}{\E}(X_{b})\ge
  q^{nt\big(\frac{k}{t}-g_q(\delta)-\frac{\log_q(nt+1)}{nt}\big)}-1$, 
   $\forall$ ${b}\in(\F G)^{k*}$. 
\end{lemma}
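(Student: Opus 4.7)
The plan is to compute the distribution of $bA$ when $A$ is chosen uniformly at random from $(\F G)^{k\times t}$, and then convert the resulting weight count into an expectation bound via the balanced-code machinery already developed. The key observation is that for fixed $b=(b_1,\dots,b_k)\in(\F G)^k$ the map $\tau_b:(\F G)^k\to\F G$, $c\mapsto \sum_i b_i c_i$, is $\F$-linear with image exactly $I_b=\F Gb_1+\cdots+\F Gb_k$. Since any $\F$-linear surjection has fibres of equal size, $\tau_b$ sends the uniform distribution on $(\F G)^k$ to the uniform distribution on $I_b$. Because the columns $A_1,\dots,A_t$ of $A$ are independent and uniform on $(\F G)^k$, the vector $bA=(bA_1,\dots,bA_t)$ is therefore uniformly distributed on the subspace $I_b^t\subseteq(\F G)^t=\F^{nt}$, which has size $q^{d_b t}$.

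For part (1), I would then invoke the balanced-code theory. The ideal $I_b$ is an $\F G$-code, so by Lemma \ref{g-code balanced} it is a balanced code of length $n$ with information length $d_b$; by Lemma \ref{product balanced} the product $I_b^t$ is a balanced code of length $nt$ with information length $d_b t$; and Corollary \ref{c balanced} gives $|(I_b^t)^{\le\delta}|\le q^{d_b t h_q(\delta)}$. Using the uniform distribution computed above and subtracting the zero word,
\[
\E(X_b)=\Pr(0<\w(bA)\le\delta nt)
=\frac{|(I_b^t)^{\le\delta}|-1}{q^{d_b t}}
\le q^{-d_b t(1-h_q(\delta))}-q^{-d_b t}
=q^{-d_b t g_q(\delta)}-q^{-d_b t},
\]
which yields both inequalities of (1).

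For part (2), when $b\in(\F G)^{k*}$ we have $I_b=\F G$, so $d_b=n$ and $I_b^t=(\F G)^t=\F^{nt}$; the uniform computation then gives $\E(X_b)=(|(\F^{nt})^{\le\delta}|-1)/q^{nt}$ directly, and (2.1) follows because the right-hand side no longer involves $b$. For (2.2) I would apply the lower estimate of Lemma \ref{<Hamming ball<}, namely $|(\F^{nt})^{\le\delta}|\ge q^{nt(h_q(\delta)-\log_q(nt+1)/(nt))}$, and multiply the expression for $\E(X_b)$ by $q^{nk}$ to get
\[
q^{nk}\E(X_b)\ge q^{n(k-t)+nt h_q(\delta)-\log_q(nt+1)}-q^{n(k-t)}
= q^{nt\big(\frac{k}{t}-g_q(\delta)-\frac{\log_q(nt+1)}{nt}\big)}-q^{n(k-t)}.
\]
Since $k=\lfloor rt\rfloor<t$ gives $q^{n(k-t)}<1$, the last term can be replaced by $-1$ to conclude.

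The only potentially tricky step is the first one—recognising that the randomness of $A$ pushes forward through the $\F$-linear map $\tau_b$ to produce the uniform distribution on $I_b^t$ (not just on some larger ambient space). Once this is identified, the rest of the argument is a straightforward application of Lemma \ref{g-code balanced}, Lemma \ref{product balanced}, Corollary \ref{c balanced}, and Lemma \ref{<Hamming ball<}, and no second-moment considerations are needed at this stage.
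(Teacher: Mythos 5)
Your proof is correct and takes essentially the same route as the paper: push the uniform distribution on $(\F G)^{k\times t}$ forward through $\tau_b$ to get the uniform distribution on $I_b^t$, then apply Lemma~\ref{g-code balanced}, Lemma~\ref{product balanced}, Corollary~\ref{c balanced} for the upper bound and Lemma~\ref{<Hamming ball<} for the lower bound, using $k<t$ to replace $-q^{n(k-t)}$ by $-1$ in (2.2). The paper phrases the first step as counting fibres of the $\F G$-linear map $A\mapsto bA$ rather than as a pushforward of uniform measure, but these are the same observation.
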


\begin{proof} 
Since $\tau_{b}:(\F G)^{k\times t}\to(\F G)^{t}$, $A\mapsto bA$, 
is an $\F G$-homomorphism (Exercise~\ref{e R}(3))
and $|(\F G)^{k\times t}|=q^{nkt}$
and $|(I_b)^{t}|=q^{d_b t}$, 
the number of the pre-images in $(\F G)^{k\times t}$ of any
${a}\in (I_{b})^t$ is equal to $\frac{q^{nkt}}{q^{d_{b}t}}$.
By Definition~\ref{d X_b abelian}(1), we get that
\begin{equation}\label{e EX_b}
 \textstyle
 {\E}(X_{b})
 =\big(|((I_{b})^t)^{\le\delta}\big|-1)
  \cdot\frac{q^{nkt}}{q^{d_{b}t}}\Big/q^{nkt}
 =\big(|((I_{b})^t)^{\le\delta}|-1\big)\Big/q^{d_{b}t}.
\end{equation}

(1).
By Lemma \ref{g-code balanced} and Lemma \ref{product balanced},
$\big|((I_{b})^t)^{\le\delta}\big|\le q^{d_{b}th_q(\delta)}$.
(1) is obtained.

(2).
For ${b}\in(\F G)^{k*}$ we have $I_{b}=\F G$ and $d_{b}=n$.
The equality in~(2) follows from Eq.\eqref{e EX_b}. (2.1) is trivial. 
By Lemma \ref{<Hamming ball<}, 
$|(\F^{nt})^{\le\delta}|\ge 
q^{nt h_q(\delta)-\log_q(nt+1)}$, hence
$\E(X_b)\ge q^{-ntg_q(\delta)-\log_q(nt)}-q^{-nt}$.
Since $\frac{q^{nk}}{q^{nt}}\leq 1$,  (2.2) holds.
\end{proof}

\begin{theorem}\label{E(X)}
$\lim\limits_{t\to\infty}\E(X)=
  \begin{cases}
   0,& r<g_q(\delta);\\ \infty, & r>g_q(\delta). 
  \end{cases}$
\end{theorem}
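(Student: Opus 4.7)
The plan is to split $\E(X)=\sum_{b\in(\F G)^k}\E(X_b)$ (via linearity of expectation, noting $\E(X_0)=0$) along the disjoint union in Eq.\eqref{disjoint union}, so that
\[
\E(X)=\sum_{0\ne I\le \F G}\sum_{b\in I^{k*}}\E(X_b).
\]
Lemma \ref{l EX_b} already controls the inner summands either by $d_I$ (part (1), an upper bound valid for every $b$) or by the size of the Hamming ball (part (2), sharp for $b\in(\F G)^{k*}$), so the two cases of the theorem will be handled by taking the right bound.

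For the easy direction $r<g_q(\delta)$, I would use Lemma \ref{l EX_b}(1), the trivial estimate $|I^{k*}|\le|I^k|=q^{d_Ik}$, and the fact that $\F G$ has only finitely many ideals (since $\F G$ is finite) to write
\[
\E(X)\le\sum_{0\ne I\le\F G}q^{d_Ik}\cdot q^{-d_Itg_q(\delta)}
=\sum_{0\ne I\le\F G}q^{-d_I(tg_q(\delta)-k)}.
\]
Since $k=\lfloor rt\rfloor$ with $r<g_q(\delta)$, the exponent $tg_q(\delta)-k\ge t(g_q(\delta)-r)-1\to+\infty$, and $d_I\ge 1$ for every non-zero ideal; each of the (finitely many) terms tends to $0$, giving $\E(X)\to 0$.

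For the harder direction $r>g_q(\delta)$, I would discard all contributions except $b\in(\F G)^{k*}$. By Lemma \ref{l EX_b}(2.1) the summand is constant on $(\F G)^{k*}$, so fixing any $b_0\in(\F G)^{k*}$,
\[
\E(X)\ge |(\F G)^{k*}|\cdot\E(X_{b_0})
=\frac{|(\F G)^{k*}|}{|(\F G)^k|}\cdot q^{nk}\E(X_{b_0}).
\]
Applying Lemma \ref{l EX_b}(2.2) to the rightmost factor yields
\[
\E(X)\ge\frac{|(\F G)^{k*}|}{|(\F G)^k|}\Big(q^{nt(k/t-g_q(\delta)-\log_q(nt+1)/nt)}-1\Big).
\]
By Eq.\eqref{|FGk*|} the prefactor tends to $1$, and since $k/t\to r>g_q(\delta)$ and $\log_q(nt+1)/nt\to 0$, the exponent grows linearly in $t$, so $\E(X)\to\infty$.

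The only subtle point is arranging the bookkeeping so that the upper-bound argument genuinely exploits the finiteness of the ideal lattice of $\F G$ rather than an (unavailable) uniform bound; once the disjoint union decomposition is in place this is routine. The lower-bound direction is essentially a single-orbit estimate on $(\F G)^{k*}$ and needs only Eq.\eqref{|FGk*|} plus part (2.2) of Lemma \ref{l EX_b}, so no second moment computation enters here — the second moment method is what will later promote $\E(X)\to\infty$ to $\Pr(X\ge 1)\to 1$ in Theorem \ref{Delta(C_A) >delta}.
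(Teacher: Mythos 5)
Your proposal matches the paper's proof essentially line by line: you split $\E(X)$ over the disjoint union $(\F G)^k = \{0\} \cup \bigcup_{0\ne I\le\F G} I^{k*}$, use Lemma \ref{l EX_b}(1) with $|I^{k*}|\le q^{d_Ik}$ and the finiteness of the ideal lattice for the $r<g_q(\delta)$ case, and restrict to $(\F G)^{k*}$ with Lemma \ref{l EX_b}(2.1)--(2.2) and Eq.\eqref{|FGk*|} for the $r>g_q(\delta)$ case. The only cosmetic difference is that you rewrite $q^{d_I t(k/t - g_q(\delta))}$ as $q^{-d_I(tg_q(\delta)-k)}$ before taking the limit; the argument is the same.
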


\begin{proof}
Assume that $r<g_q(\delta)$.
By Lemma \ref{l E} and Eq.\eqref{disjoint union} we obtain 
$$\textstyle
{\E}(X)
={\E}\big(\sum _{b\in (\F G)^k} X_{b}\big)
=\sum _{b\in (\F G)^k}{\E}(X_{b})
=\sum_{0\ne I\le \F G}\sum_{b\in I^{k*}}
{\E}(X_{b}).
$$
For any ideal $I$ of $\F G$ and $b\in I^{k*}$,
$d_{b}= d_I$, $|I^{k*}|\le |I^k|=q^{d_I k}$, 
by Lemma~\ref{l EX_b}(1) we have
\begin{equation*}
\textstyle
\sum_{b\in I^{k*}}{E}(X_{b})
\le q^{d_Ik}q^{- d_I t g_q(\delta)}
=q^{d_I t \big(\frac{k}{t}-g_q(\delta)\big)}.
\end{equation*}
Since $k=\lfloor rt\rfloor$, we have
$\lim\limits_{t\to\infty}\big(\frac{k}{t}-g_q(\delta)\big)=r-g_q(\delta)<0$.
Note that $\F G$ has only finitely many ideals, we get
$$\textstyle
\lim\limits_{t\to\infty}{\E}(X)
=\sum_{0\ne I\le \F G}\lim\limits_{t\to\infty}\sum_{b\in I^{k*}}
{\rm E}(X_{b}) =0.
$$

Next we assume that $r>g_q(\delta)$.
Fixing a ${b_1}\in(\F G)^{k*}$,
from Eq.\eqref{disjoint union} and Lemma \ref{l EX_b}(2) we have:
$$\textstyle
{\E}(X)\ge\sum_{b\in(\F G)^{k*}}{\E}(X_{b})
=|(\F G)^{k*}|\cdot{\E}(X_{b_1})
=\frac{|(\F G)^{k*}|}{|(\F G)^k|}\cdot q^{nk}{\E}(X_{\bf b_1}).
$$
Since $k\to\infty$ as $t\to\infty$, by Eq.\eqref{|FGk*|}, 
we have $\lim\limits_{t\to\infty}\frac{|(\F G)^{k*}|}{|(\F G)^k|}=1$.
Note that 
$\lim\limits_{n\to\infty}\big(\frac{k}{n}-g_q(\delta)\big)=r-g_q(\delta)>0$,
by Lemma \ref{l EX_b}(2.2),
we obtain that
$$\textstyle
\lim\limits_{t\to\infty}{\E}(X)
\geq\lim\limits_{t\to\infty}\frac{|(\F G)^{k*}|}{|(\F G)^k|}
\big(q^{nt\big(\frac{k}{t}-g_q(\delta)-\frac{\log_q(nt+1)}{nt}\big)}-1
\big)
=\infty.
$$
The proof of the theorem is finished. 
\end{proof}

\begin{lemma}\label{Pr(X>=1)}
For any $b\in (\F G)^{k*}$,~ 
$\Pr(X\ge 1)\ge
 \frac{|\F G^{k*}|\cdot \E(X_{b})}{\E(X|X_{b}=1)}.
$
\end{lemma}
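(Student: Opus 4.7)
The plan is to apply the general second-moment inequality (Lemma~\ref{2'nd moment}) in the form
\[
\Pr(X\ge 1)\ \ge\ \sum_{b'}\frac{\E(X_{b'})}{\E(X\mid X_{b'}=1)},
\]
where the sum may be restricted to any subset of those $b'$ with $\E(X_{b'})>0$. Restricting the sum to $(\F G)^{k*}$ still gives a valid lower bound, so the lemma will follow once I show the summand is constant for $b'\in(\F G)^{k*}$: there are then $|(\F G)^{k*}|$ equal terms, each of the advertised shape.

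Constancy of the numerator $\E(X_{b'})$ on $(\F G)^{k*}$ is exactly Lemma~\ref{l EX_b}(2.1). Constancy of the denominator is the real content, and I plan to get it by a symmetry argument: for any pair $b,b'\in(\F G)^{k*}$ I will produce an invertible $P\in GL_k(\F G)$ with $b'=bP$, so that $\phi\colon A\mapsto PA$ is a measure-preserving bijection of the uniform probability space $(\F G)^{k\times t}$. Since $b(PA)=b'A$, one has $X_b\circ\phi=X_{b'}$; and since $c\mapsto cP$ is a bijection of $(\F G)^k$, one has $X\circ\phi=\sum_c X_{cP}=X$. A change of variables in the definition of the conditional expectation, together with $\Pr(X_b=1)=\Pr(X_{b'}=1)$ (again from Lemma~\ref{l EX_b}(2.1)), then yields $\E(X\mid X_b=1)=\E(X\mid X_{b'}=1)$.

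The only step where care is needed, and the main (though modest) obstacle, is the construction of $P$: it amounts to extending a linearly independent $k$-tuple over the ring $\F G$ to a basis of $(\F G)^k$, which is not automatic over a general ring. I plan to handle it via the Maschke decomposition $\F G=\F_0\oplus\cdots\oplus\F_m$ of Theorem~\ref{semisimple abelian}. By Exercise~\ref{e R+R}, the condition $b\in(\F G)^{k*}$ (equivalently $I_b=\F G$, equivalently $b$ linearly independent) is equivalent to every component $b^{(i)}\in\F_i^k$ being nonzero; standard linear algebra over each field $\F_i$ then extends $b^{(i)}$ to a basis and produces an invertible $P_b^{(i)}\in GL_k(\F_i)$ whose first row is $b^{(i)}$. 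Assembling the $P_b^{(i)}$ componentwise across the direct-sum decomposition yields an invertible $P_b\in GL_k(\F G)$ with $b=(1,0,\ldots,0)P_b$; doing the same for $b'$ and setting $P=P_b^{-1}P_{b'}$ gives the desired matrix. Once $P$ is in hand, the remainder is a formal transcription of the second-moment calculation from the linear-code case of Theorem~\ref{Delta(C_M) >delta}.
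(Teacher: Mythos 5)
Your proposal is correct and follows essentially the same route as the paper: apply Lemma~\ref{2'nd moment}, drop the terms with $b\notin(\F G)^{k*}$, and then show both the numerator and the denominator of each remaining summand are constant on $(\F G)^{k*}$ by exploiting the transitive right action of $GL_k(\F G)$ on $(\F G)^{k*}$ (the paper phrases the denominator's invariance through the counts $|\mathcal{A}(a,b)|$ and the change of variable $a\mapsto aQ$, while you phrase it as $X\circ\phi=X$ and $X_b\circ\phi=X_{b'}$ for $\phi:A\mapsto PA$, which is the same bijection argument). The one point where you add detail is the explicit Maschke-componentwise construction of the invertible $P$ with $b'=bP$; the paper simply asserts its existence ``by the definition of $(\F G)^{k*}$'', so your elaboration fills that small gap correctly.
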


\begin{proof} Since  $(\F G)^{k*}$ is a part of $(\F G)^k$,
by Lemma \ref{2'nd moment} we get that
\begin{equation*}\textstyle
 \Pr(X\ge 1)
  \ge \sum_{b\in (\F G)^{k}}
    \frac{\E(X_b)}{\E(X|X_{b}=1)}
  \ge \sum_{b\in (\F G)^{k*}}
    \frac{\E(X_{b})}{\E(X|X_{b}=1)}.
\end{equation*}
In Lemma \ref{l EX_b}(2.1) we have seen that the value of ${\rm E}(X_{b})$
is independent of the choice of the element $b\in (\F G)^{k*}$.
We now claim that $\E(X|X_{b}=1)$
 is independent of the choice of $b\in (\F G)^{k*}$,
which completes the proof of the lemma. Set
$$
{\cal A}(a,b)=
\left\{A\in (\F G)^{k\times t}\,\big|\,
  1\le \w(aA),\w(bA)\le nt\delta\right\}, ~~a\in (\F G)^k.
$$
By the linearity of expectations (Lemma~\ref{l E}) 
and the conditional probability formula (Lemma \ref{conditional}),  
we have
\begin{align*}
 \E(X|X_{b}=1)
 &\textstyle
   = \E\Big(\sum_{a\in(\F G)^k} X_{a}\Big|X_{b}=1\Big)
   =\sum_{a\in(\F G)^k}\E(X_{a}|X_{b}=1) \\
 &\textstyle
  = \sum_{a\in(\F G)^k}
   \frac{\Pr\big(X_{a}=1\;\&\;X_{b}=1\big)}{\Pr(X_{b}=1)}
  = \sum_{a\in(\F G)^k}\frac{|{\cal A}(a,b)|}{|(\F G)^{k\times t}|}
       \cdot \frac{1}{\E(X_{b})}\\
&\textstyle
  = \frac{1}{q^{nkt}\cdot \E(X_b)}
  \sum_{a\in(\F G)^k}|{\cal A}(a,b)|\,.
\end{align*} 
For any $b'\in(\F G)^{k*}$,
by the definition of $(\F G)^{k*}$,
there exists an invertible
$k\times k$ matrix $Q$ over $\F G$ such that $b'=bQ$.
It is easy to verity that
$|{\cal A}(a,b)|=|{\cal A}(a,b'Q^{-1})|=|{\cal A}(aQ,b')|.$
By Lemma \ref{l E}(2.1), we have that
\begin{eqnarray*}
\textstyle
\E(X|X_{b'}\!=\!1)
=\frac{1}{q^{nkt}\cdot \E(X_{b'})}
   \sum_{a\in(\F G)^k}|{\cal A}(a,b')|
=\frac{1}{q^{nkt}\cdot\E(X_{b})}
   \sum_{a\in(\F G)^k}|{\cal A}(aQ,b)|.
\end{eqnarray*}
Noting that $aQ$ runs over $(\F G)^k$ when $a$ runs over $(\F G)^k$,
we obtain that
$\E(X|X_{b'}=1)=\E(X|X_{b}=1).$
\end{proof}

In the following we fix 
\begin{equation}\label{c=} 
 c=(0,\cdots,0,1)\in(\F G)^{k*}, 
\end{equation}
and estimate the value of the conditional expectation
$\E(X_{b}|X_{c}=1)$ for $b=(b_1,\cdots,b_k)\in (\F G)^k$.
Set
$$
 \bar{b}=(b_1,\cdots,b_{k-1})\in (\F G)^{k-1}, ~~~
  I_{\bar{b}}=\F Gb_1+\cdots+\F Gb_{k-1}, ~~~
  d_{\bar{b}}=\dim_F I_{\bar{b}}\,.
$$
\begin{lemma}\label{EXb|Xb1}
{\rm(1)} 
 $\E(X_{b}|X_{c}=1) \le q^{-d_{\bar{b}}t g_q(\delta)}$.
 
{\rm(2)} If $\bar{b}\in (\F G)^{(k-1)*}$, then
$\E(X_b|X_c=1)=\E(X_c)\le q^{-nt g_q(\delta)}$.
\end{lemma}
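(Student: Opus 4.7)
The plan is to condition on the last row of $A$ and reduce both parts to the balanced-code bound in Corollary \ref{c balanced}. Writing the rows of $A$ as $R_1,\ldots,R_k\in (\F G)^t$, note that $cA=R_k$ and $bA=\sum_{i=1}^{k-1}b_iR_i + b_kR_k$, so the event $X_c=1$ depends only on $R_k$. First I would fix $R_k=\alpha$ for some $\alpha\in(\F^{nt})^{\le\delta}\setminus\{0\}$; conditional on this, $R_1,\ldots,R_{k-1}$ are still uniform and independent over $(\F G)^t$. For each coordinate $j$, the map $(a_{1j},\ldots,a_{(k-1)j})\mapsto \sum_{i=1}^{k-1}b_ia_{ij}$ is a surjective $\F$-linear map onto $I_{\bar b}$ with fibers of constant size $q^{n(k-1)}/q^{d_{\bar b}}$, so coordinate-wise $bA$ is uniformly distributed on the coset $b_k\alpha+I_{\bar b}^t\subseteq(\F G)^t$.

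The key structural observation is that $I_{\bar b}$ is an $\F G$-code, hence a balanced code by Lemma \ref{g-code balanced}; by Lemma \ref{product balanced}, $I_{\bar b}^t$ is a balanced code of $\F^{nt}$ with information length $d_{\bar b}t$; and by Remark \ref{cosets is ...}, the coset $b_k\alpha+I_{\bar b}^t$ is balanced with the same information length. Applying Corollary \ref{c balanced}, the number of words in this coset with relative weight at most $\delta$ is bounded by $q^{d_{\bar b}th_q(\delta)}$. Since the coset has cardinality $q^{d_{\bar b}t}$, we get
\[
\Pr\bigl(X_b=1\,\big|\,cA=\alpha\bigr)\le \frac{q^{d_{\bar b}th_q(\delta)}}{q^{d_{\bar b}t}}=q^{-d_{\bar b}t(1-h_q(\delta))}=q^{-d_{\bar b}tg_q(\delta)},
\]
uniformly in $\alpha$. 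Averaging over $\alpha$ under the conditional distribution $\Pr(\,\cdot\mid X_c=1)$ gives (1).

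For part (2), when $\bar b\in(\F G)^{(k-1)*}$ we have $I_{\bar b}=\F G$, so $b_k\alpha+I_{\bar b}^t=(\F G)^t$ regardless of $\alpha$. Hence conditional on $cA=\alpha$, the vector $bA$ is uniform on all of $(\F G)^t$, independently of $\alpha$; in particular $X_b$ is independent of $cA$ (and of $X_c$), so $\E(X_b\mid X_c=1)=\E(X_b)$. Since $I_{\bar b}=\F G$ forces $I_b=\F G$, we have $b\in(\F G)^{k*}$, so Lemma \ref{l EX_b}(2.1) gives $\E(X_b)=\E(X_c)$, and Lemma \ref{l EX_b}(1) applied to $c$ (with $d_c=n$) yields the bound $\E(X_c)\le q^{-ntg_q(\delta)}$.

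The main step to get right is the distributional claim: conditioning on $R_k=\alpha$ leaves $bA$ uniformly distributed over the \emph{full} coset $b_k\alpha+I_{\bar b}^t$, not merely over some proper subset, so that Corollary \ref{c balanced} can be invoked with the correct denominator $q^{d_{\bar b}t}$. This is just constancy of fiber size column by column, but it is the pivotal point that separates the trivial Markov-type bound from the sharper geometric bound needed to close the second-moment argument in Theorem \ref{Delta(C_A) >delta}.
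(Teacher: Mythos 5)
Your proof is correct and follows essentially the same route as the paper: fix the last row, observe that $bA$ is uniformly distributed on the coset $b_k\alpha+I_{\bar b}^t$, invoke the balanced-code bound (Lemmas \ref{g-code balanced}, \ref{product balanced}, Remark \ref{cosets is ...}, Corollary \ref{c balanced}) for the numerator $q^{d_{\bar b}th_q(\delta)}$, and in part (2) note that the coset is all of $(\F G)^t$, which gives independence of $X_b$ from $X_c$ and hence $\E(X_b\mid X_c=1)=\E(X_b)=\E(X_c)$. Your phrasing via conditional distributions and averaging over $\alpha$ is a slightly cleaner packaging of the paper's explicit sum over $a_\delta$, but the underlying counting is identical.
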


\begin{proof}
(1).  For any $A\in(\F G)^{k\times t}$,
we denote by $\underline A_i$ the $i$'th row of $A$.
Clearly,
\begin{align*}
& \E(X_{b}|X_{c}=1)
=\frac{\big|\{A\in(\F G)^{k\times t}\mid 1\le \w(bA),~
 \w(\underline A_k)\le nt\delta\}\big|}
 {q^{nkt}\cdot \E(X_{c})}\\
&~~=\frac{\big|\{A\in(\F G)^{k\times t}\mid 1\le\w(bA)\le nt\delta,~
 0\ne\underline A_k \in((\F G)^t)^{\le\delta}\}\big|}
{q^{nkt}\cdot \E(X_{c})}\,.
\end{align*}
where $((\F G)^t)^{\le\delta}
=\{a\in(\F G)^t\mid \w(a)\le nt\delta\}$,
see Eq.\eqref{C^<delta}.
Given any non-zero 
${a}_\delta=(a_{k1},\cdots,a_{kt})\in((\F G)^t)^{\le\delta}$,
we denote
$${\cal A}(c)_{{a}_\delta}=\left\{
A\in(\F G)^{k\times t}\,\big|\, \underline A_k=a_\delta\right\}.$$
Any ${b}=(b_1,\cdots,b_{k-1},b_k)\in (\F G)^k$ induces a map:
\begin{equation}
\begin{array}{cccl}
 \bar\tau_{b}: & {\cal A}(c)_{{a}_\delta}
   & \longrightarrow & (\F G)^t, \\
   &A & \longmapsto & bA
   =b_1\underline A_1+\cdots+b_{k-1} \underline A_{k-1}+b_k{a}_\delta\,.
\end{array}
\end{equation}
It is easy to see that the image of the map $\bar\tau_{b}$
is a coset of
$I_{\bar{b}}^t\subseteq(\F G)^t$ as follows
\begin{equation}\label{Ib}
I_{\bar{b}}^t+b_k{a}_\delta,\qquad{\rm with~ cardinality}~~
    |I_{\bar{b}}^t+b_k{a}_\delta|=|I_{\bar{b}}^t|
     =q^{d_{\bar{b}}t};
\end{equation}
and the number of the pre-images in
${\cal A}(c)_{a_\delta}$ of
any $a\in I_{\bar{b}}^t+b_k{a}_\delta$ is equal to
$\frac{q^{n(k-1)t}}{q^{d_{\bar b}t}}=q^{n(k-1)t-d_{\bar{b}}t}$,
 which is independent of the choices of ${a}_\delta$ and $a$.
Hence, we get that
\begin{eqnarray}\label{E(Xb|Xc=1)}
\E(X_b|X_c=1)
=\frac{\sum_{0\ne a_\delta\in((\F G)^t)^{\le\delta}}
\big(|(I_{\bar{b}}^t+b_k {a}_\delta)^{\le\delta}|-\lambda_{a_\delta}\big)
q^{n(k-1)t-d_{\bar{b}}t}}{q^{nkt}\cdot \E(X_c)},
\end{eqnarray}
where
\begin{equation}\label{lambda}
\lambda_{a_\delta}=\begin{cases}1, 
    & {\rm if}~{0}\in I_{\bar{b}}^t+b_k a_\delta;\\
 0, &\mbox{otherwise.}\end{cases}
\end{equation}
By Lemma~\ref{g-code balanced}, Lemma~\ref{product balanced}, 
Remark \ref{cosets is ...} and Corollary~\ref{c balanced},
we see that $I_{\bar{b}}^t+b_k a_\delta$ is a balanced code,
and $|(I_{\bar{b}}^t+b_k a_\delta)^{\le\delta}|
  \leq q^{d_{\bar{b}}t h_q(\delta)}$,
hence,
\begin{align*}
\E(X_b|X_c=1)
&
 \le\frac{\left(|((\F G)^t)^{\le\delta}|-1\right)\cdot
  q^{d_{\bar{b}} t h_q(\delta)}q^{n(k-1)t- d_{\bar{b}}t}}
  {q^{nkt}{\rm E}(X_{c})}\\
&
 =  \frac{q^{nt} \E(X_{c})\cdot
  q^{d_{\bar{b}} t h_q(\delta)}q^{n(k-1)t- d_{\bar{b}}t}}
  {q^{nkt}\cdot \E(X_c)}=q^{-d_{\bar{b}}n g_q(\delta)}.
\end{align*}

(2).  
Assume that $\bar{b}\in (\F G)^{(k-1)*}$.
Then $d_{\bar{b}}=n$, $I_{\bar{b}}^t=(\F G)^t$ and
$I_{\bar{b}}^t+b_k a_\delta=(\F G)^t$;
in particular, $\lambda_{a_\delta}=1$ in Eq.\eqref{lambda}.
So we obtain 
\begin{align*}
 \E(X_{b}|X_c=1)=
  \frac{\left(|((\F G)^t)^{\le\delta}|-1\right)\cdot q^{nt}
   \E(X_c)q^{n(k-1)t- d_{\bar{b}}t}}
  {q^{nkt}\cdot \E(X_c)}
  =\E(X_c).
\end{align*}
 We are done.
\end{proof}

We are ready to complete a proof of Theorem \ref{Delta(C_A) >delta}.

\medskip
\noindent{\it A proof of Theorem \ref{Delta(C_A) >delta}.}

By Eq.\eqref{X ge 1}, we need to prove that 
$\lim\limits_{t\to\infty}\Pr(X\ge 1)=
\begin{cases}0, & r< g_q(\delta);\\ 1, & r> g_q(\delta).  \end{cases}$

If $r< g_q(\delta)$, then by Markov Inequality (Lemma \ref{Markov}) 
and Theorem \ref{E(X)}, 
$$\lim\limits_{t\to\infty}\Pr(X\ge 1)
  \le \lim\limits_{t\to\infty}\E(X)=0. 
$$

Next we assume that  $r>g_q(\delta)$.
By Lemma \ref{Pr(X>=1)}, it is enough to prove that
\begin{equation}\label{aim}
\lim_{t\to\infty}\frac{\E(X|X_c=1)}{|(\F G)^{k*}|\cdot \E(X_c)}
=\lim_{t\to\infty}
\frac{\sum_{b\in(\F G)^k} \E(X_b|X_c=1)}
{|(\F G)^{k*}|\cdot {\rm E}(X_c)} = 1,
\end{equation}
where $c\!=\!(0,\cdots,0,1)$ as in Eq.\eqref{c=}.
Since $(\F G)^{k-1}=\bigcup_{I\le\F G}I^{(k-1)*}$ 
(see Eq.\eqref{disjoint union}, 
but this time we consider $(\F G)^{k-1}$), we have
\begin{eqnarray*}
\sum_{b\in(\F G)^k} \E(X_b|X_c=1)
=\sum_{I\le \F G}\,\sum_{\bar{b}\in I^{(k-1)*}}\,
 \sum_{b_k\in \F G} \E(X_b|X_c=1),
\end{eqnarray*}
where $b=(b_1,\cdots,b_{k-1},b_k)$
and $\bar{b}=(b_1,\cdots,b_{k-1})$.
Thus
\begin{equation}\label{S_I}
\frac{\sum_{b\in(\F G)^k} \E(X_b|X_c=1)}
{|(\F G)^{k*}|\cdot \E(X_c)}
=\sum_{I\le \F G} S_I,
\end{equation}
where 
$$
S_I=\frac{\sum_{\bar{b}\in I^{(k-1)*}}\,
\sum_{b_k\in \F G}\E(X_b|X_c=1)}{|(\F G)^{k*}|\cdot \E(X_c)},
\qquad I\le\F G.
$$
We compute $\lim\limits_{k\to\infty}S_I$ for ideals $I$ of $\F G$ in two cases.

{\it Case 1}:~ $I=\F G$, hence  $I^{(k-1)*}=(\F G)^{(k-1)*}$.
By Lemma \ref{EXb|Xb1} we have
\begin{eqnarray*}
S_I=\frac{|(\F G)^{(k-1)*}|\cdot q^n\cdot \E(X_c)}
  {|(\F G)^{k*}|\cdot \E(X_c)}
=\frac{|(\F G)^{(k-1)*}|}{|(\F G)^{k-1}|}\cdot
 \frac{|(\F G)^{k}|}{|(\F G)^{k*}|}\,.
\end{eqnarray*}
By Eq.\eqref{|FGk*|}, we get ($k\to\infty$ while $t\to\infty$)
\begin{equation*}
\lim_{t\to\infty}S_I =1.
\end{equation*}

{\it Case 2}:~ $I\ne \F G$.  Then $d_{\bar{b}}=d_I=\dim_F I<n$
for any ${\bar{b}}\in I^{(k-1)*}$,
and $|I^{(k-1)*}|\le |I^{(k-1)}|=q^{d_I(k-1)}$.
By Lemma \ref{l EX_b}(2.2) and Lemma \ref{EXb|Xb1}(1), we have
\begin{eqnarray*}
S_I
&\le&
\frac{q^{d_I(k-1)}q^n\cdot q^{-d_I t g_q(\delta)}}
 {\frac{|(\F G)^{k*}|}{|(\F G)^{k}|}
  \big(q^{nt\big(\frac{k}{t}-g_q(\delta)-\frac{\log_q(nt+1)}{nt}\big)}-1)}\\
&\le&q^n\cdot\frac{|(\F G)^{k}|}{|(\F G)^{k*}|}\cdot
\frac{q^{d_I t (\frac{k}{t}-g_q(\delta))}}
 {q^{nt\big(\frac{k}{t}-g_q(\delta)-\frac{\log_q(nt+1)}{nt}\big)}-1}\,.
\end{eqnarray*}
Since $k=\lfloor rn\rfloor$ and $1>r>g_q(\delta)$, then
$\lim\limits_{k\to\infty}\big(\frac{k}{t}-g_q(\delta)\big)=r-g_q(\delta)>0$.
Recalling that $d_I<n$ and
$\lim\limits_{k\to\infty}\frac{|(\F G)^{k}|}{|(\F G)^{k*}|}=1$
(see Eq.\eqref{|FGk*|}), we get that
\begin{equation*}
\lim_{t\to\infty}S_I =0.
\end{equation*}

Finally,  since there are only finitely many ideals of $\F  G$,
by Eq.\eqref{S_I} and the limits obtained in the above two cases, 
Eq.\eqref{aim} is proved.

\begin{remark}\rm
The results in this section are published in \cite{FL}. 
In \cite{FL} we didn't assume that 
the order of the abelian group $G$ is coprime to $q$; i.e.,
the group algebra $\F G$ may be not semisimple 
(in the non-semisimple case 
 the direct summands $\F Ge_i$ in Eq.\eqref{FG=F_0+...}
may be local rings). 
To make this chapter self-contained, 
here we avoid the non-semisimple case. 
However, the arguments here are in fact valid for 
the non-semisimple case.
\end{remark}

\section{Quasi-cyclic codes of fractional index}\label{fractional}

In this section we study the asymptotic properties of the so-called 
{\em quasi-cyclic codes of index $1\frac{1}{\alpha}$}, 
where $\alpha$ is a positive integer. 
We introduced such {\em fractional index} 
so as to refine the indexes between $1$ and $2$ of quasi-cyclic codes
(\cite{FL15,FL16}). 
A special case is that $\alpha=1$, 
i.e., quasi-cyclic codes of index $2$.  
Thus we study the asymptotic property of the quasi-abelian codes of index $2$ firstly, 
and then turn to the quasi-cyclic codes of index $1\frac{1}{\alpha}$.

\subsection{Quasi-abelian codes of index $2$}\label{ss 2-quasi-abelian}

Let $\F$ be a finite field with $|\F|=q$.  
Let $n>1$ be an integer with $\gcd(n,q)=1$, 
and $G$ be any abelian group of order $n$.
Let $\delta\in(0,1-q^{-1})$.

Denote $(\F G)^2=\F G\times \F G$. For $(a,a')\in(\F G)^2$ 
we have a quasi-abelian code of index $2$: 
\begin{equation}\label{e C_a,a'}
 C_{a,a'}=\{(ba,ba')\,|\, b\in\F G\}\le (\F G)^2.
\end{equation}
Then $(\F G)^2=(\F G)^{1\times 2}$ and $C_{a,a'}$
are the special case of
Eq.\eqref{ensemble} and Eq.\eqref{e C_A} for $k=1$ and $t=2$.
Note that in Section \ref{quasi-abelian} we considered 
the asymptotic properties of $\Delta(C_A)$ and $\R(C_A)$
when $n$ is fixed and $t\to\infty$; 
whereas we are now concerned with the 
asymptotic properties of $\Delta(C_{a,a'})$ and $\R(C_{a,a'})$
when $k=1$ and $t=2$ but $n\to\infty$.

For $b\in\F G$, we denote $I_b=\F Gb$ and $d_b=\dim I_b$, 
and define 
\begin{equation}\label{X_b,X}
X_b=\begin{cases}1, & 0<\w(ba,ba')\le 2n\delta;\\
  0, & \mbox{otherwise;}   \end{cases}
  ~~~~\mbox{and}~~~~ X=\sum_{b\in\F G}X_b.
\end{equation}
Then Eq.\eqref{X ge 1} still holds; and by Lemma \ref{l EX_b}(1) we have
\begin{equation}\label{e E(X_b)}
 {\E}(X_{b}) < q^{- 2d_{b}g_q(\delta)}.
\end{equation}
Let $I\le\F G$.  
By Lemma \ref{c semisimple abelian},
if $I\ne\F Ge_0=\F e_0$ then $\dim I\ge\mu_q(n)$, 
where $\mu_q(n)$ is defined in Definition \ref{q-coset}.
For any $\ell$ with $\mu_q(n)\le\ell\le n$, let
$$
\Gamma_\ell=\big\{\,I\,\big|\,I\le\F G,\,\dim I=\ell\,\big\}, ~~~~~
I^*=\big\{\,a\,|\,a\in I,\, I_a(=\F Gb)=I\,\big\}.
$$
Refining Eq.\eqref{disjoint union}, we have 
\begin{equation}\label{e union}
\textstyle\F G=\F e_0\bigcup\big(\bigcup_{\ell=\mu_q(n)}^n
 \bigcup_{I\in\Gamma_\ell} I^*\big).
\end{equation}
\begin{lemma}\label{l |Gamma_ell|}
$|\Gamma_\ell|\le n^{\ell/\mu_q(n)}$.
\end{lemma}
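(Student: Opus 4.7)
My plan is to use the semisimple decomposition of $\F G$ from Theorem \ref{semisimple abelian}, combined with the dimension lower bound from Lemma \ref{c semisimple abelian}, to reduce the count $|\Gamma_\ell|$ to a counting estimate on subsets. By Theorem \ref{semisimple abelian}, $\F G = \F Ge_0 \oplus \F Ge_1 \oplus \cdots \oplus \F Ge_m$ with $\F Ge_0 = \F e_0$ of dimension $1$ and each simple component $\F Ge_i$ a field extension of $\F$ of dimension $d_i$; by Lemma \ref{c semisimple abelian}, $d_i \ge \mu_q(n)$ whenever $i \ge 1$. Moreover, by Theorem \ref{semisimple abelian}(3), every ideal $I \le \F G$ is uniquely of the form $I = \bigoplus_{i \in S}\F Ge_i$ for some $S \subseteq \{0,1,\ldots,m\}$, with $\dim I = \sum_{i \in S} d_i$.

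The first key step is to observe that for $I \in \Gamma_\ell$, letting $s := |S \cap \{1,\ldots,m\}|$ denote the number of non-trivial simple components, the lower bound $d_i \ge \mu_q(n)$ for $i \ge 1$ forces $s \cdot \mu_q(n) \le \ell$, so $s \le \lfloor \ell/\mu_q(n)\rfloor$. Consequently, the map $I \mapsto \bigl(S \cap \{1,\ldots,m\},\, \mathbf 1_{\{0 \in S\}}\bigr)$ injects $\Gamma_\ell$ into the set of pairs consisting of an at most $\lfloor \ell/\mu_q(n)\rfloor$-element subset of $\{1,\ldots,m\}$ together with a binary flag recording whether $e_0 \in I$.

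Next I would use the identity $\sum_{i=0}^m d_i = n$, together with $d_0 = 1$ and $d_i \ge \mu_q(n)$ for $i \ge 1$, to obtain $1 + m\cdot\mu_q(n) \le n$, i.e.\ $m \le (n-1)/\mu_q(n)$, and in particular $m+1 \le n$. This gives the counting bound
\[
|\Gamma_\ell| \;\le\; 2\sum_{s=0}^{\lfloor \ell/\mu_q(n)\rfloor}\binom{m}{s},
\]
from which the claimed estimate $n^{\ell/\mu_q(n)}$ is to be derived by inserting $\binom{m}{s}\le m^s$ and using a geometric-series control of the tail, leveraging $m \le (n-1)/\mu_q(n)$ to save a factor of $\mu_q(n)^s$ at each term.

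The step I expect to be the main obstacle is the final numerical tightening: the crude bound $\binom{m}{s}\le m^s\le n^s$ with a factor $2$ and a short sum overshoots $n^{\ell/\mu_q(n)}$ by a small multiplicative constant. Closing this gap cleanly requires combining the sharper estimate $m \le (n-1)/\mu_q(n)$ with the slack $n^{\ell/\mu_q(n)-\lfloor \ell/\mu_q(n)\rfloor}$ in the exponent (and, when $\ell$ is an exact multiple of $\mu_q(n)$, with the fact that $\mu_q(n)^{\lfloor \ell/\mu_q(n)\rfloor}$ dominates the combined constant factors whenever $\mu_q(n)\ge 2$, which is the only case needed since $\mu_q(n)=1$ forces $\Gamma_\ell$ to be trivially small).
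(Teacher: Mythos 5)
Your overall strategy matches the paper's: use the semisimple decomposition $\F G=\bigoplus_i\F_i$, note $\dim\F_i\ge\mu_q(n)$ for $i\ge 1$, and count the subsets of components. However, your argument has a genuine gap in the step you yourself flag, and the fallback you sketch does not repair it.

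First, the factor of $2$ in your bound $|\Gamma_\ell|\le 2\sum_{s\le t}\binom{m}{s}$ (with $t=\lfloor\ell/\mu_q(n)\rfloor$) is unnecessary and is part of what prevents you from closing the estimate: since $\ell=\dim I$ is fixed, the set $S\cap\{1,\ldots,m\}$ already determines whether $e_0\in I$ (it does iff $\sum_{i\in S\cap\{1,\ldots,m\}}d_i=\ell-1$), so $I\mapsto S\cap\{1,\ldots,m\}$ is injective on $\Gamma_\ell$ with no extra flag. Second, the ``geometric-series plus save a factor $\mu_q(n)^s$'' plan does not cleanly absorb the constants: using $\binom{m}{s}\le m^s$ and $\sum_{s\le t}m^s\le 2m^t$ you reach $4m^t\le 4(n/\mu_q(n))^t$, and the residual exponent $\{\ell/\mu_q(n)\}$ vanishes precisely when $\ell$ is a multiple of $\mu_q(n)$; one then needs $\mu_q(n)^t\ge 4$, which fails e.g. when $\mu_q(n)=2,\ell=2,t=1$ or $\mu_q(n)=3,\ell=3,t=1$. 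Third, the claim that ``$\mu_q(n)=1$ forces $\Gamma_\ell$ to be trivially small'' is false: if $\mu_q(n)=1$ one can have all $d_i=1$, in which case $|\Gamma_\ell|=\binom{n}{\ell}$, which is exponentially large for $\ell\approx n/2$ (it is still $\le n^\ell$, so the lemma holds, but not for the reason you give, and your $4m^t$ route does not deliver it).

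The clean way to close is to drop the factor of $2$ and use the elementary inequality
\[
\sum_{s=0}^{t}\binom{m}{s}\;\le\;\binom{m+t}{t}\;\le\;(m+1)^t,
\]
(the second inequality since each factor $(m+j)/j\le m+1$ for $j=1,\ldots,t$), which gives $|\Gamma_\ell|\le(m+1)^t\le n^{t}\le n^{\ell/\mu_q(n)}$ uniformly for all $\mu_q(n)\ge 1$, using only $m+1\le n$. This is in effect what the paper's one-line proof is asserting when it says ``at most $\ell/\mu_q(n)$ summands, thus $|\Gamma_\ell|\le(m+1)^{\ell/\mu_q(n)}$'': at most $t$ (non-$\F_0$) summands chosen from $m+1$ components gives $(m+1)^t$.
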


\begin{proof}
For $I\in\Gamma_\ell$, by Theorem \ref{semisimple abelian}(3), 
$I$ is a direct sum of some of $\F_i$'s for $i=0,1,\cdots,m$.
Except for $\F_0$, 
the summands of $I$ are of dimension $\ge\mu_q(n)$.
There are at most $\ell/\mu_q(n)$ summands. 
Thus $ |\Gamma_\ell|\le (m+1)^{\ell/\mu_q(n)}\le n^{\ell/\mu_q(n)}$.
\end{proof}

\bex\label{g(delta)>)}
Let $\delta\in(0,1-q^{-1})$ and $\gamma\in(0,1)$. 
If $g_q(\delta)>\gamma$ then $\delta<(1-\gamma)(1-q^{-1})$.
\hint{$g_q(\cdot)$ is convex, $g_q(\delta)>\gamma g_q(0)+(1-\gamma)g_q(1-q^{-1})
  \ge g_{q}\big(\gamma\!\cdot\! 0+(1-\gamma)(1-q^{-1})\big)
   =g_{q}\big((1-\gamma)(1-q^{-1})\big)$; 
  and $g_q(\cdot)$ is decreasing.}
\eex

\begin{lemma}\label{EX for C_a,a'}
If $g_q(\delta)-\frac{1}{2}-\frac{\log_q n}{\mu_q(n)}>0$, then
$\E(X)\le
q^{- 2\mu_q(n) \big(g_q(\delta)-\frac{1}{2}-\frac{\log_q n}{\mu_q(n)}\big)}$.
\end{lemma}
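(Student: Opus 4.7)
The plan is to split $\E(X)=\sum_{b\in\F G}\E(X_b)$ along the partition in Eq.\eqref{e union}, bound each $\Gamma_\ell$-slice by combining the pointwise estimate \eqref{e E(X_b)} with the counting bound $|\Gamma_\ell|\le n^{\ell/\mu_q(n)}$ from Lemma~\ref{l |Gamma_ell|}, and then observe that under the hypothesis the resulting sum in $\ell$ is geometric with ratio $<1$. The only subtlety is that the one-dimensional summand $\F e_0$ must be disposed of by a direct computation rather than by the generic estimate \eqref{e E(X_b)}.

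First I would observe that the hypothesis forces $\delta<1/2$. Since $g_q(\delta)>\tfrac{1}{2}+\log_q n/\mu_q(n)>\tfrac{1}{2}$, one has $h_q(\delta)<\tfrac{1}{2}$. A direct evaluation gives $h_q(1/2)=\tfrac{1}{2}\log_q(q-1)+\log_q 2$, and the inequality $h_q(1/2)\ge\tfrac{1}{2}$ is equivalent to $4(q-1)\ge q$, hence holds for every $q\ge 2$; the monotonicity of $h_q$ on $[0,1-q^{-1}]$ then yields $\delta<1/2$. Next I would handle $b\in\F e_0$. Writing $b=\alpha e_0$ and using $e_0=n^{-1}\sum_{x\in G}x$, a short computation gives $e_0 a=\bigl(\sum_x a_x\bigr) e_0$, so $bA=\bigl(\alpha\varepsilon(a)\,e_0,\,\alpha\varepsilon(a')\,e_0\bigr)$ where $\varepsilon$ is the augmentation. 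Because $\w(e_0)=n$, the weight $\w(bA)$ lies in $\{0,n,2n\}$, and the event $0<\w(bA)\le 2n\delta$ is empty since $2n\delta<n$; thus $\E(X_b)=0$ for every $b\in\F e_0$.

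For $b\in I^*$ with $I\in\Gamma_\ell$ and $\mu_q(n)\le\ell\le n$, $d_b=\ell$, so Eq.\eqref{e E(X_b)} gives $\E(X_b)<q^{-2\ell g_q(\delta)}$; combining with $|I^*|\le|I|=q^\ell$ and $|\Gamma_\ell|\le n^{\ell/\mu_q(n)}$, the $\ell$-slice contributes at most $q^{\ell(1-2g_q(\delta)+\log_q n/\mu_q(n))}$. The hypothesis is equivalent to $1-2g_q(\delta)+\log_q n/\mu_q(n)<-\log_q n/\mu_q(n)<0$, so the terms decrease geometrically in $\ell$; bounding the sum by (number of terms $\le n$) times the largest term (at $\ell=\mu_q(n)$) gives
\begin{equation*}
\E(X)\le n\cdot q^{\mu_q(n)(1-2g_q(\delta))+\log_q n}=q^{2\log_q n+\mu_q(n)(1-2g_q(\delta))}=q^{-2\mu_q(n)\bigl(g_q(\delta)-\tfrac{1}{2}-\tfrac{\log_q n}{\mu_q(n)}\bigr)},
\end{equation*}
as claimed.

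The main obstacle is the $\F e_0$ piece. Applying Lemma~\ref{l EX_b}(1) there blindly would add a residual term of order $q^{1-2g_q(\delta)}$, and since $(\mu_q(n)-1)(2g_q(\delta)-1)$ need not be controlled by $2\log_q n$, this term cannot in general be absorbed into the target $n^2 q^{\mu_q(n)(1-2g_q(\delta))}$. The geometric fact that $e_0$ is the ``flat'' vector of weight exactly $n$, together with the hypothesis-forced $\delta<1/2$, is what makes the $\F e_0$ contribution vanish outright, and once this is done the rest is a routine geometric-series bound.
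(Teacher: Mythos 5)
Your proof is correct and follows essentially the same route as the paper: kill the $\F e_0$ terms by the weight-$n$ observation, decompose along Eq.\eqref{e union}, combine $|I^*|\le q^\ell$, $|\Gamma_\ell|\le n^{\ell/\mu_q(n)}$ and Eq.\eqref{e E(X_b)}, and bound the decreasing geometric sum by $n$ times its leading term at $\ell=\mu_q(n)$. The one cosmetic difference is that you derive $\delta<\tfrac{1}{2}$ by evaluating $h_q(1/2)$ directly, whereas the paper invokes the convexity estimate of Exercise~\ref{g(delta)>)}; both suffice.
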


\begin{proof}
Since $g_q(\delta)>\frac{1}{2}$,
by Exercise \ref{g(delta)>)} we get that 
$\delta<\frac{1}{2}(1-q^{-1}) <\frac{1}{2}$.
For $b\in \F e_0$, recalling that $e_0=\frac{1}{n}\sum_{x\in G}x$,
we see that $\w(ba,ba')=0$, $n$ or $2n$, hence 
$\E(X_b)=0$, see the definition of $X_b$ in Eq.\eqref{X_b,X}. 
By the linearity of expectations and Eq.\eqref{e union}, we have
\begin{equation}\label{e E(X)}
\textstyle\E(X)
 =\sum_{b\in\F G}\E(X_b)
 =\sum_{\ell=\mu_q(n)}^n \sum_{I\in\Gamma_\ell}
 \sum_{b\in I^*}\E(X_b).
\end{equation}
Let $I\in\Gamma_\ell$; for $b\in I^*$, $d_b=\dim I_b=\dim I=\ell$; 
hence, by Eq.\eqref{e E(X_b)} we have
\begin{align*}
\textstyle
 \sum_{b\in I^*}\E(X_b)
&\textstyle
   \le \sum_{b\in I^*} q^{- 2\ell g_q(\delta)}
=|I^*|\cdot q^{- 2\ell g_q(\delta)}\\
& \le  q^\ell q^{- 2\ell g_q(\delta)}
  =   q^{- 2\ell (g_q(\delta)-\frac{1}{2})}.
\end{align*}
Then
$$\textstyle
\E(X)
 \le \sum_{\ell=\mu_q(n)}^n \sum_{I\in\Gamma_\ell}
  q^{- 2\ell (g_q(\delta)-\frac{1}{2})}
  =\sum_{\ell=\mu_q(n)}^n |\Gamma_\ell|\cdot
  q^{- 2\ell (g_q(\delta)-\frac{1}{2})}.
$$
By Lemma \ref{l |Gamma_ell|}, 
\begin{align*}
\E(X)
\textstyle
   \le\sum_{\ell=\mu_q(n)}^n n^{\frac{\ell}{\mu_q(n)}}
   q^{- 2\ell (g_q(\delta)-\frac{1}{2})}
 =\sum_{\ell=\mu_q(n)}^n 
  q^{- 2\ell \big(g_q(\delta)-\frac{1}{2}-\frac{\log_q n}{2\mu_q(n)}\big)}.
\end{align*}
Since $g_q(\delta)-\frac{1}{2}-\frac{\log_q n}{2\mu_q(n)}>0$
and $\ell\ge\mu_q(n)$, 
\begin{align*}
\E(X)
 &\textstyle \le \sum_{\ell=\mu_q(n)}^n 
  q^{- 2\mu_q(n) \big(g_q(\delta)-\frac{1}{2}-\frac{\log_q n}{2\mu_q(n)}\big)}
\le n\cdot 
 q^{- 2\mu_q(n) \big(g_q(\delta)-\frac{1}{2}-\frac{\log_q n}{2\mu_q(n)}\big)}
 \\
 & =
  q^{- 2\mu_q(n) \big(g_q(\delta)-\frac{1}{2}-\frac{\log_q n}{2\mu_q(n)}\big)
  +\log_q n}
 =q^{- 2\mu_q(n) \big(g_q(\delta)-\frac{1}{2}-\frac{\log_q n}{\mu_q(n)}\big)}.
\end{align*}
We are done.
\end{proof}

\begin{lemma}\label{l Pr R}
The rate $\R(C_{a,a'})\le\frac{1}{2}$; and
$$\textstyle
 (1-q^{-2})q^{-q^{\log_q(2n)-2\mu_q(n)}}\le
 \Pr\big(\R(C_{a,a'})=\frac{1}{2}\big)\le 1-q^{-2}.
$$
\end{lemma}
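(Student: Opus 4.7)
The plan is to reduce $\Pr\bigl(\R(C_{a,a'})=\frac{1}{2}\bigr)$ to a product over the Wedderburn components of $\F G$ and then bound it on both sides. Since $C_{a,a'}$ is the image of the $\F G$-homomorphism $\F G\to(\F G)^2$, $b\mapsto(ba,ba')$, we have $\dim_\F C_{a,a'}\le n$, hence $\R(C_{a,a'})\le n/(2n)=\frac{1}{2}$, with equality if and only if $\mathrm{Ann}_{\F G}(a,a')=0$ (cf.\ Lemma \ref{l 2-quasi self-dual}(1)).

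By Theorem \ref{semisimple abelian}, $\F G=\F_0\oplus\F_1\oplus\cdots\oplus\F_m$ where $\F_0=\F e_0$ has $d_0:=\dim_\F\F_0=1$ and, by Lemma \ref{c semisimple abelian}, $d_i:=\dim_\F\F_i\ge\mu_q(n)$ for each $i\ge 1$. Decomposing $a=\sum_i a^{(i)}$ and $a'=\sum_i a'^{(i)}$ componentwise, one checks that $\mathrm{Ann}_{\F G}(a,a')=\bigoplus_{i=0}^m\mathrm{Ann}_{\F_i}\bigl(a^{(i)},a'^{(i)}\bigr)$; since each $\F_i$ is a field, the $i$-th summand vanishes exactly when $\bigl(a^{(i)},a'^{(i)}\bigr)\ne(0,0)$. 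Under the uniform distribution on $(\F G)^2\cong\prod_{i=0}^m\F_i^2$, the components $\bigl(a^{(i)},a'^{(i)}\bigr)$ are independent and uniform on $\F_i^2$, so
\[
 \Pr\bigl(\R(C_{a,a'})=\tfrac{1}{2}\bigr)=\prod_{i=0}^m\bigl(1-q^{-2d_i}\bigr).
\]

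The upper bound is immediate: the $i=0$ factor equals $1-q^{-2}$ and all the other factors lie in $(0,1]$. For the lower bound I factor out the $i=0$ term and aim to show $\prod_{i=1}^m(1-q^{-2d_i})\ge q^{-2nq^{-2\mu_q(n)}}$. Applying the elementary estimate $-\ln(1-x)\le x/(1-x)$, and using $d_i\ge 1$ so that $1-q^{-2d_i}\ge 1-q^{-2}$, I get $-\log_q(1-q^{-2d_i})\le q^{-2d_i}/\bigl((1-q^{-2})\ln q\bigr)$. Summing over $i\ge 1$, and then using $d_i\ge\mu_q(n)$ together with $m\le n-1$ (since $\sum_{i\ge 1}d_i=n-1$ with each $d_i\ge 1$), the sum is bounded by $n\,q^{-2\mu_q(n)}/\bigl((1-q^{-2})\ln q\bigr)$.

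The remaining step is to verify that $(1-q^{-2})\ln q\ge\frac{1}{2}$ for every prime power $q\ge 2$; the function $q\mapsto(1-q^{-2})\ln q$ is easily seen to be increasing on $[2,\infty)$, and at $q=2$ it equals $\frac{3}{4}\ln 2\approx 0.52>\frac{1}{2}$. This gives $\prod_{i=1}^m(1-q^{-2d_i})\ge q^{-2nq^{-2\mu_q(n)}}=q^{-q^{\log_q(2n)-2\mu_q(n)}}$, completing the lower bound. The only real subtlety is that the naive bound $-\ln(1-x)\le 2x$ would force $q\ge 3$, so one must exploit the sharper form $-\ln(1-x)\le x/(1-x)$ to absorb the case $q=2$; this is precisely what dictates the exponent $2n$ rather than $n$ in the statement.
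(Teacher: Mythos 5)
Your proof is correct and follows the same overall structure as the paper's: reduce to the product formula $\Pr\bigl(\R(C_{a,a'})=\frac{1}{2}\bigr)=\prod_{i=0}^m(1-q^{-2d_i})$ via the Wedderburn decomposition, peel off the $i=0$ factor $1-q^{-2}$, and then bound $\prod_{i\ge 1}(1-q^{-2d_i})$ from below using $d_i\ge\mu_q(n)$ and $m<n$. The only real divergence is in the final estimation: the paper replaces each $d_i$ by $\mu_q(n)$ and then invokes the monotonicity of $(1-t^{-1})^t$ together with $(1-2^{-1})^2=\frac{1}{4}\ge q^{-2}$, whereas you take logarithms, apply $-\ln(1-x)\le x/(1-x)$ factor by factor, and close with the numerical check $(1-q^{-2})\ln q\ge\frac{1}{2}$ (sharp at $q=2$). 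Both are elementary one-line calculus facts calibrated to absorb the worst case $q=2$ and yield exactly the same exponent $2n$; the paper's version is marginally more self-contained since it is stated as an exercise (Exercise~\ref{e q^k-1}) and reused elsewhere, while yours is arguably a more transparent explanation of why the factor $2$ appears, as you note at the end.
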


\begin{proof}
By Lemma \ref{l 2-quasi self-dual} and  Exercise \ref{e R}(3), 
$$\textstyle
 \R(C_{a,a'})\le\frac{1}{2};~~~ \mbox{and} ~~
 \R(C_{a,a'})=\frac{1}{2} \iff (a,a')\in(\F G)^{2*}.
$$
By Eq.\eqref{FG=F_0+...} (and note that $d_0=1$),
$$\textstyle
\Pr\big(\R(C_{a,a'})\!=\!\frac{1}{2}\big)=
\frac{|(\F G)^{2*}|}{|(\F G)^2|}
 =\prod_{i=0}^m\frac{|\F_i^{2*}|}{|\F_i^2|}
 =\frac{q^2-1}{q^2}\prod_{i=1}^m\frac{q^{2d_i}-1}{q^{2d_i}}.
$$
Since $d_i\ge\mu_q(n)$ for $i=1,\cdots,m$ and $m<n$, 
\begin{equation}\label{Pr R}
\textstyle
\Pr\big(\R(C_{a,a'})\!=\!\frac{1}{2}\big)
\ge (1\!-\!q^{-2})\prod_{i=1}^m(1\!-\!q^{-2\mu_q(n)})
\ge (1\!-\!q^{-2})\cdot(1\!-\!q^{-2\mu_q(n)})^n.
\end{equation}
And
$$
\textstyle
(1-q^{-2\mu_q(n)})^n
=(1-q^{-2\mu_q(n)})^{q^{2\mu_q(n)}\frac{n}{q^{2\mu_q(n)}}}.
$$
The sequence $(1-t^{-1})^t$ for $t=2,3,\cdots$ is increasing and
$(1-2^{-1})^2=2^{-2}\ge q^{-2}$. Thus
$$
\textstyle
(1-q^{-2\mu_q(n)})^n
\ge (q^{-2})^{\frac{n}{q^{2\mu_q(n)}}} = q^{-\frac{2n}{q^{2\mu_q(n)}}}
=q^{-q^{\log_q(2n)-2\mu_q(n)}}.
$$
Combining it with Eq.\eqref{Pr R}, we finished the proof of the lemma.
\end{proof}

\bex\label{e q^k-1}
If $k_1\le\cdots\le k_m$ be integers and $k_1\ge\log_q\!m$
 then 

(1)~ $(q^{k_1}-1)\cdots(q^{k_m}-1)\ge q^{k_1+\cdots+k_m-2}$;
\hint{\!
$\frac{(q^{k_1}\!-1)\cdots(q^{k_m}\!-1)}{(q^{k_1})\cdots(q^{k_m})}
=\prod_{i=1}^{m}(1\!-\!\frac{1}{q^{k_i}})$ 
$\ge (1-\frac{1}{q^{k_1}})^m\ge (1-\frac{1}{q^{k_1}})^{q^{k_1}}$; 
 $(1-t^{-1})^t$ is increasing and $(1-2^{-1})^2=2^{-2}\ge q^{-2}$.}

(2)~ $(q^{k_1}+1)\cdots(q^{k_m}+1)\le q^{k_1+\cdots+k_m+2}$.
\hint{$(1+t^{-1})^t<2^{2}\le q^{2}$.}
\eex

By Lemma \ref{l n_1,...} we have positive integers $n_1,n_2,\cdots$
coprime to $q$ such that 
$$
 \lim\limits_{i\to\infty}\textstyle\frac{\log_q n_i}{\mu_q(n_i)}=0.
$$

\begin{theorem}\label{t C_a,a'}
Assume that $\delta\in(0,1-\frac{1}{q})$ and $g_q(\delta)>\frac{1}{2}$,  
and integers $n_1,n_2,\cdots$ are as above.
For $i=1,2,\cdots$,
let $G_i$ be any abelian group of order $n_i$, and 
 $C^{(i)}_{a,a'}$ be the random 2-quasi-$\F G_i$-codes
 as in Eq.\eqref{e C_a,a'}. Then

{\rm(1)}~ 
$\lim\limits_{i\to\infty}\Pr\big(\Delta(C^{(i)}_{a,a'})>\delta\big)=1$.

{\rm(2)}~  
$\lim\limits_{i\to\infty}\Pr\big(\R(C^{(i)}_{a,a'})=\frac{1}{2}\big)=1-q^{-2}$.
\end{theorem}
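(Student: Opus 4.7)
The plan is to derive both parts of the theorem by reading off asymptotic information from the two preparatory lemmas already at hand (Lemma \ref{EX for C_a,a'} and Lemma \ref{l Pr R}), combined with the defining property $\lim_{i\to\infty}\log_q n_i / \mu_q(n_i) = 0$ of the sequence supplied by Lemma \ref{l n_1,...}.

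For part (1), I first observe that since $g_q(\delta) > 1/2$ is a fixed positive constant and $\log_q n_i / \mu_q(n_i) \to 0$, there exists $i_0$ such that for all $i \ge i_0$ the quantity $\beta_i := g_q(\delta) - \tfrac{1}{2} - \tfrac{\log_q n_i}{\mu_q(n_i)}$ is strictly positive and in fact converges to $g_q(\delta) - \tfrac{1}{2} > 0$. Applying Lemma \ref{EX for C_a,a'} with the abelian group $G_i$ of order $n_i$ in place of $G$, one gets $\E(X^{(i)}) \le q^{-2\mu_q(n_i)\beta_i}$. Because $\log_q n_i / \mu_q(n_i) \to 0$ forces $\mu_q(n_i) \to \infty$, while $\beta_i$ stays bounded away from $0$, the exponent $-2\mu_q(n_i)\beta_i \to -\infty$. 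Hence $\E(X^{(i)}) \to 0$. Now Markov's inequality (Lemma~\ref{Markov}) together with Eq.\eqref{X ge 1} yields
\begin{equation*}
\Pr\bigl(\Delta(C^{(i)}_{a,a'}) \le \delta\bigr) = \Pr(X^{(i)} \ge 1) \le \E(X^{(i)}) \longrightarrow 0,
\end{equation*}
which is exactly $\Pr(\Delta(C^{(i)}_{a,a'}) > \delta) \to 1$.

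For part (2), the upper bound $\Pr(\R(C^{(i)}_{a,a'}) = 1/2) \le 1 - q^{-2}$ is already in Lemma \ref{l Pr R} and is independent of $i$; so the only content is matching the lower bound. The lower bound in that lemma reads $(1-q^{-2}) q^{-q^{\log_q(2n_i) - 2\mu_q(n_i)}}$, so it suffices to show the doubly-exponential correction tends to $1$, i.e.\ $q^{\log_q(2n_i) - 2\mu_q(n_i)} \to 0$, which is equivalent to $2\mu_q(n_i) - \log_q(2n_i) \to +\infty$. Dividing by $\mu_q(n_i)$, this ratio equals $2 - \tfrac{\log_q(2n_i)}{\mu_q(n_i)} = 2 - \tfrac{\log_q 2}{\mu_q(n_i)} - \tfrac{\log_q n_i}{\mu_q(n_i)}$, which tends to $2$ by the defining property of $(n_i)$ and because $\mu_q(n_i)\to\infty$; since $\mu_q(n_i) \to \infty$ as well, the product $2\mu_q(n_i) - \log_q(2n_i) \to \infty$. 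Thus the lower bound converges to $(1-q^{-2})\cdot q^{0} = 1-q^{-2}$, squeezing $\Pr(\R(C^{(i)}_{a,a'}) = 1/2)$ to $1-q^{-2}$.

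Neither step is really a genuine obstacle because both lemmas have done the heavy lifting; the only thing one must be careful about is extracting $\mu_q(n_i) \to \infty$ from the hypothesis $\log_q n_i / \mu_q(n_i) \to 0$. This is immediate once one notes that the $n_i$ tend to infinity (otherwise $\mu_q$ would be bounded and the numerator would dominate), so one may wish to note this explicitly at the start of the argument, or simply remark that it is built into the construction in Exercise \ref{e n_1,...} (where the $p_i$ are distinct primes with $\mu_q(p_i) \ge (\log_q t_i)^2 \to \infty$).
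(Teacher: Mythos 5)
Your proof is correct and follows the same approach as the paper: part (1) applies Lemma \ref{EX for C_a,a'} together with Markov's inequality and the observation that $\mu_q(n_i)\to\infty$, while part (2) reads off the two-sided bound from Lemma \ref{l Pr R} and checks that the correction factor tends to $1$. The only superficial difference is that the paper introduces a uniform $\varepsilon>0$ below $g_q(\delta)-\tfrac12-\tfrac{\log_q n_i}{\mu_q(n_i)}$ rather than noting convergence of $\beta_i$, and the extra care you take over $\mu_q(n_i)\to\infty$ is a small but welcome clarification of a step the paper states without comment.
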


\begin{proof}
(1). 
Since $g_q(\delta)>\frac{1}{2}$ and 
$\lim\limits_{i\to\infty}\textstyle\frac{\log_q n_i}{\mu_q(n_i)}=0$,
we can take a positive real number $\varepsilon$ such that
$$\textstyle
 g_q(\delta)-\frac{1}{2}-\frac{\log_q n_i}{\mu_q(n_i)}>\varepsilon,
 ~~~~ i=1,2,\cdots.
$$
Note that $\mu_q(n_i)\to\infty$. By Lemma \ref{EX for C_a,a'},
\begin{align*}
\lim\limits_{i\to\infty}\Pr\big(\Delta(C^{(i)}_{a,a'})\le\delta\big)
&=\lim\limits_{i\to\infty}\Pr(X\ge 1)
\le\lim\limits_{i\to\infty}\E(X)\\
&\le\lim\limits_{i\to\infty}
q^{- 2\mu_q(n_i) \big(g_q(\delta)-\frac{1}{2}-\frac{\log_q n_i}{\mu_q(n_i)}\big)}
=0.
\end{align*}

(2). Since $\frac{\mu_q(n_i)}{\log_q n_i}\to\infty$ (with $i\to\infty$), 
$\log_q(2n_i)-2\mu_q(n_i)\to-\infty$; hence
$$
\lim\limits_{i\to\infty}q^{-q^{\log_q(2n_i)-2\mu_q(n_i)}}=1.
$$
And (2) follows from Lemma \ref{l Pr R} at once.
\end{proof}

\begin{corollary}\label{c C_a,a'}
Keep the notation as in Theorem \ref{t C_a,a'}. Then
$$\textstyle
 \lim\limits_{i\to\infty}\Pr\big(\Delta(C_{a,a'}^{(i)})>\delta\;\big|\; 
   \R(C_{a,a'}^{(i)})\!=\!\frac{1}{2}\big)=1.$$
\end{corollary}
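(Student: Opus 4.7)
The plan is to derive the conditional limit directly from Theorem~\ref{t C_a,a'} by the elementary identity $\Pr(A\mid B)=\Pr(A\cap B)/\Pr(B)$, which is legitimate because the denominator will tend to a strictly positive limit. Write $A_i$ for the event $\Delta(C^{(i)}_{a,a'})>\delta$ and $B_i$ for the event $\R(C^{(i)}_{a,a'})=\tfrac{1}{2}$. Theorem~\ref{t C_a,a'}(1) supplies $\Pr(A_i)\to 1$, while Theorem~\ref{t C_a,a'}(2) supplies $\Pr(B_i)\to 1-q^{-2}$, whence $\Pr(B_i^c)\to q^{-2}$.

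Next, decompose $\Pr(A_i)=\Pr(A_i\cap B_i)+\Pr(A_i\cap B_i^c)$ and bound the last term crudely by $\Pr(B_i^c)$; this yields
$$
\Pr(A_i\cap B_i)\;\ge\;\Pr(A_i)-\Pr(B_i^c)\;\longrightarrow\;1-q^{-2}.
$$
Combining this with the trivial upper bound $\Pr(A_i\cap B_i)\le\Pr(B_i)\to 1-q^{-2}$, one concludes $\Pr(A_i\cap B_i)\to 1-q^{-2}$. Dividing by $\Pr(B_i)$ and using that its limit is nonzero then gives
$$
\Pr(A_i\mid B_i)\;=\;\frac{\Pr(A_i\cap B_i)}{\Pr(B_i)}\;\longrightarrow\;\frac{1-q^{-2}}{1-q^{-2}}\;=\;1.
$$

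There is essentially no technical obstacle: the corollary instantiates the standard principle that if $\Pr(A_i)\to 1$ and $\Pr(B_i)$ stays bounded away from zero, then conditioning on $B_i$ preserves the asymptotic certainty of $A_i$. The only point requiring a moment's care is the positivity of the denominator limit $1-q^{-2}$, which is immediate since $q\ge 2$; this is exactly why part~(2) of Theorem~\ref{t C_a,a'} was stated with an explicit nonzero limit rather than merely as a lower bound.
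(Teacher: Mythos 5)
Your proof is correct and follows essentially the same route as the paper, which simply invokes Exercise~\ref{e total} (the inequality $\Pr(G\mid E)\ge(\Pr(G)-\Pr(F))/\Pr(E)$) together with Theorem~\ref{t C_a,a'}; your argument just spells out that inequality and the passage to the limit explicitly.
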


\begin{proof}
By Exercise \ref{e total}(2), it follows from Theorem \ref{t C_a,a'} at once.  
\end{proof}

\begin{theorem}\label{2-quasi good}
Assume that $\delta\in(0,1-\frac{1}{q})$ and $g_q(\delta)>\frac{1}{2}$; 
and positive integers $n_1,n_2,\cdots$ satisfy that  
$\lim\limits_{i\to\infty}\frac{\log_q n_i}{\mu_q(n_i)}=0$;
and $G_i$ for $i=1,2,\cdots$ are abelian group of order $n_i$.
Then for $i=1,2,\cdots$ there exist 2-quasi-$\F G_i$ codes $C^{(i)}$
such that
\begin{itemize}
\item\vskip-3pt
 $\R(C^{(i)})=\frac{1}{2}$ for all $i=1,2,\cdots$;
\item\vskip-3pt 
$\Delta(C^{(i)})>\delta$ for all $i=1,2,\cdots$.
\end{itemize}
\end{theorem}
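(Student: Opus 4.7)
The plan is a direct application of the probabilistic method to the random quasi-abelian code ensemble $C^{(i)}_{a,a'}$ introduced in Eq.\eqref{e C_a,a'}, viewed over the uniform probability space $(\F G_i)^2$. Since each such probability space is finite, in order to produce a concrete deterministic code $C^{(i)}$ with both $\R(C^{(i)})=\tfrac{1}{2}$ and $\Delta(C^{(i)})>\delta$, it will suffice to show that, at least for all sufficiently large $i$, the probability of the joint event
\[
  \mathcal E_i := \bigl\{(a,a')\in(\F G_i)^2 \;\big|\; \R(C^{(i)}_{a,a'})=\tfrac{1}{2}\ \text{and}\ \Delta(C^{(i)}_{a,a'})>\delta\bigr\}
\]
is strictly positive.

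First I would combine the two parts of Theorem~\ref{t C_a,a'}. Writing $A_i=\{\Delta(C^{(i)}_{a,a'})>\delta\}$ and $B_i=\{\R(C^{(i)}_{a,a'})=\tfrac{1}{2}\}$, the elementary bound $\Pr(A_i\cap B_i)\ge\Pr(B_i)-\Pr(A_i^c)$, together with $\Pr(A_i^c)\to 0$ from Theorem~\ref{t C_a,a'}(1) and $\Pr(B_i)\to 1-q^{-2}$ from Theorem~\ref{t C_a,a'}(2), gives
\[
 \liminf_{i\to\infty}\Pr(\mathcal E_i)\ \ge\ 1-q^{-2}\ >\ 0.
\]
Equivalently, one could multiply the conditional probability in Corollary~\ref{c C_a,a'} by the marginal $\Pr(B_i)$. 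Either route yields some threshold $i_0$ such that $\Pr(\mathcal E_i)>0$ for every $i\ge i_0$.

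Positivity of the probability on a finite uniform probability space forces $\mathcal E_i$ to be non-empty; hence for each $i\ge i_0$ I would pick any pair $(a_i,a'_i)\in\mathcal E_i$ and set $C^{(i)}:=C^{(i)}_{a_i,a'_i}$, which by construction meets both required conclusions. For the finitely many small indices $i<i_0$, one either verifies the existence of suitable codes ad hoc, or—as is customary for asymptotic-goodness statements—simply passes to the tail subsequence $n_{i_0},n_{i_0+1},\ldots$ (which still satisfies $\tfrac{\log_q n_i}{\mu_q(n_i)}\to 0$) and relabels. I do not expect a serious technical obstacle here: Theorem~\ref{t C_a,a'} carries essentially all the weight, and the only conceptual point to keep in mind is that the two events must be intersected, not merely bounded separately, in order to secure both the rate $\tfrac12$ and the distance lower bound simultaneously in a single deterministic code; the loss of the factor $1-q^{-2}$ from Theorem~\ref{t C_a,a'}(2) is harmless because we need only the limit to be strictly positive.
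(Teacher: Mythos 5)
Your proof is correct and captures exactly what the paper intends: the theorem is stated immediately after Corollary~\ref{c C_a,a'} with no further argument, so the implicit proof is precisely the positive-probability-on-a-finite-space reasoning you spell out, using either $\Pr(A_i\cap B_i)\ge\Pr(B_i)-\Pr(A_i^c)\to 1-q^{-2}>0$ or the conditional form. One small caution on your closing remark: the ``ad hoc verification'' option for small indices is not actually available in general---for small $n_i$ a 2-quasi-$\F G_i$ code of rate exactly $\tfrac12$ with $\Delta>\delta$ need not exist---so the tail-relabelling route is the one that genuinely works, and it is the reading the paper's colloquial ``for all $i=1,2,\cdots$'' is meant to carry.
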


In conclusion, there are asymptotically good quasi-abelian codes of index $2$ 
attaining the GV-bound.

\subsection{Quasi-cyclic codes of fractional index}\label{ss fractional}

In this subsection,
 $\F$ is a finite field with $|\F|=q$, $n>1$ and $\gcd(n,q)=1$,  
 and $\delta\in(0,1-q^{-1})$ as before.
Assume that $\alpha$ is a positive integer; and: 
\begin{itemize}
\item\vskip-3pt
$G=\{1,x,\cdots,x^{n-1}\}$ is a cyclic group of order $n$; 
\item\vskip-3pt
$\tilde G=\{1,\tilde x,\cdots, \tilde x^{\alpha n-1}\}$ 
 is a cyclic group of order $\alpha n$.
\end{itemize}
\vskip-3pt
There is a natural surjective group homomorphism
$\tilde G\to G$, $\tilde x^i\mapsto x^i$. 
By Exercise \ref{e universal}(2), 
we have a natural surjective algebra homomorphism
$$\textstyle
 \F\tilde G\longrightarrow\F G, ~ 
 \sum_{i=0}^{\alpha n-1}a_i\tilde x^i ~\longmapsto~
 \sum_{i=0}^{\alpha n-1}a_ix^i 
 =\sum_{i=0}^{n-1}\big(\sum_{j=0}^{\alpha-1}a_{i+ jn}\big)x^i.
$$
Thus, $\F G$ is an $\F\tilde G$-module, and any $\F G$-module is also
an $\F\tilde G$-module.

\begin{definition}\label{d fractional}\rm
Any $\F\tilde G$-submodule $C$ of the $\F\tilde G$-module
$$\textstyle
 \F\tilde G\times \F G=\big\{(\tilde a,a)\,\big|\,  
 \tilde a=\sum_{i=0}^{\alpha n-1}\tilde a_i\tilde x^i\in\F\tilde G,
  ~ a=\sum_{i=0}^{n-1}a_ix^i\in\F G\big\},
$$
denoted by $C\le \F\tilde G\times \F G$, 
is said to be a {\em quasi-cyclic code of index $1\frac{1}{\alpha}$}.
\end{definition}

In particular, if $\alpha=1$ 
then quasi-cyclic codes of index $1\frac{1}{\alpha}$
are quasi-cyclic codes of index $2$. In Subsection \ref{ss 2-quasi-abelian},
we have seen that there exist asymptotically good
$2$-quasi-abelian code sequences attaining GV-bound.
With that good asymptotic property we can exhibit 
a good asymptotic property of quasi-cyclic codes of fractional index.

We write $a=\sum_{i=0}^{n-1}a_ix^i\in\F G$ as $a(x)\in\F G$.
And the multiplication $a(x)b(x)$ is the same 
as polynomial multiplication but modulo $x^n-1$.
  
\begin{lemma}\label{l varphi}
Let 
$\varphi=1+\tilde x^{n}+\cdots+\tilde x^{n(\alpha-1)}\in\F\tilde G$.
Then
$$\tilde\varphi: ~\F G\longrightarrow\F\tilde G,~~
   a(x)\longmapsto a(\tilde x)\varphi,
$$
is an $\F\tilde G$-module injective homomorphism, and
\begin{equation}\label{alpha weight}
 \w\big(\tilde\varphi(a(x))\big)
 =\alpha\w\big(a(x)\big),~~~~~
   \forall~a(x)\in\F G.
\end{equation}
\end{lemma}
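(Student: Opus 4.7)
The plan is to unpack $\tilde\varphi(a(x))=a(\tilde x)\varphi$ in the monomial basis $\{1,\tilde x,\ldots,\tilde x^{\alpha n-1}\}$ of $\F\tilde G$. Writing $a(x)=\sum_{i=0}^{n-1}a_ix^i$, I would expand
\[
 a(\tilde x)\varphi \;=\; \sum_{j=0}^{\alpha-1}\sum_{i=0}^{n-1} a_i\,\tilde x^{\,i+jn},
\]
and observe that as $(i,j)$ ranges over $\{0,\ldots,n-1\}\times\{0,\ldots,\alpha-1\}$, the exponents $i+jn$ are pairwise distinct and exhaust $\{0,1,\ldots,\alpha n-1\}$. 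Thus each scalar $a_i$ occupies exactly $\alpha$ distinct basis positions of $\F\tilde G$ with no cancellation between terms, which immediately gives Eq.\eqref{alpha weight}; and since $\tilde\varphi(a(x))=0$ forces $\w(a(x))=0$ and hence $a(x)=0$, the injectivity is free.

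It remains to verify that $\tilde\varphi$ respects the $\F\tilde G$-action. Additivity and $\F$-linearity are immediate from the defining formula. Given $\tilde b=b(\tilde x)\in\F\tilde G$, the $\F\tilde G$-action on $\F G$ factors through the surjection $\F\tilde G\to\F G$ of Exercise~\ref{e universal}(2), so $\tilde b\cdot a(x)=b(x)a(x)$ computed in $\F G$. I would then take the honest polynomial product $b(X)a(X)=q(X)(X^n-1)+r(X)$ with $\deg r<n$; the left-hand side $\tilde\varphi(\tilde b\cdot a(x))=r(\tilde x)\varphi$ and the right-hand side $b(\tilde x)\cdot a(\tilde x)\varphi$ differ by exactly $q(\tilde x)(\tilde x^n-1)\varphi$. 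So the whole compatibility collapses to the single telescoping identity
\[
 (\tilde x^n-1)\varphi \;=\; (\tilde x^n-1)\bigl(1+\tilde x^n+\cdots+\tilde x^{(\alpha-1)n}\bigr) \;=\; \tilde x^{\alpha n}-1 \;=\; 0
\]
in $\F\tilde G$, since $\tilde x^{\alpha n}=1$.

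I do not foresee a real obstacle: the entire lemma is bookkeeping organized around the identity $(\tilde x^n-1)\varphi=0$, which captures precisely that $\varphi$ absorbs the relation $x^n=1$ that is imposed in $\F G$ but not in $\F\tilde G$. The only care needed is to distinguish polynomial equalities in $\F[X]$ from equalities in the two quotients $\F G\cong\F[X]/(X^n-1)$ and $\F\tilde G\cong\F[X]/(X^{\alpha n}-1)$; once that is clean, both the weight formula and the module structure follow in a few lines.
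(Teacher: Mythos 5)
Your argument is correct and follows essentially the same route as the paper: both establish the weight formula and injectivity by the same explicit monomial expansion, and both reduce the $\F\tilde G$-module compatibility to the identity $(\tilde x^n-1)\varphi=\tilde x^{\alpha n}-1=0$. The only cosmetic difference is that the paper verifies the module property just for the generator $\tilde x$ (which suffices since $\F$-linearity is already established and $\tilde x$ generates $\F\tilde G$ as an $\F$-algebra), whereas you run the polynomial-division argument for an arbitrary $b(\tilde x)$; both hinge on the same telescoping identity.
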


\begin{proof}
For any $a(x)=\sum_{i=0}^{n-1}a_ix^i\in\F G$, 
$$
\tilde\varphi\big(a(x)\big)=a(\tilde x)\varphi=a(\tilde x)+a(\tilde x)\tilde x^{n}
 +\cdots+a(\tilde x)\tilde x^{n(\alpha-1)},
$$ 
which is corresponding to the word
$$
(a_0,a_1,\cdots,a_{n-1},~a_0,a_1,\cdots,a_{n-1},~\cdots, ~
 a_0,a_1,\cdots,a_{n-1})\in\F^{\alpha n}.
$$
So, $\tilde\varphi$ is an injective $\F$-linear map,
and Eq.\eqref{alpha weight} holds.
For the $\tilde x\in\F\tilde G$, in $\F G$ the ``module multiplication'' 
of $a(x)$ by $\tilde x$ is  
$$\textstyle
\tilde x\circ a(x)=x(a_0+a_1x+\cdots+a_{n-1}x^{n-1})
=a_{n-1}+a_0x+\cdots+a_{n-2}x^{n-1}.
$$
Note that in $\F\tilde G$ we have 
$(\tilde x^n-1)\varphi=\tilde x^{n\alpha}-1=0$. 
We obtain that 
\begin{align*}
&\tilde\varphi\big(\tilde x\circ a(x)\big)
=(a_{n-1}+a_0\tilde x+\cdots+a_{n-2}\tilde x^{n-1})\varphi\\
&=\big(\tilde x a(\tilde x)-a_{n-1}(\tilde x^n-1)\big)\varphi
=\tilde x a(\tilde x)\varphi=\tilde x \tilde\varphi\big(a(x)\big).
\end{align*}
Thus $\tilde\varphi$ is an $\F\tilde G$-module homomorphism.
\end{proof}

By Lemma \ref{l varphi}, we have an injective $\F\tilde G$-module
homomorphism
\begin{equation}
\Phi:~ \F G\times\F G\longrightarrow \F\tilde G\times \F G,~~
 \big(a(x),a'(x)\big)\longmapsto \big(\tilde\varphi\big(a(x)\big),\,a'(x)\big).
\end{equation}

\begin{lemma}
Let $C_{a,a'}$ be the 2-quasi-$\F G$ code in Eq.\eqref{e C_a,a'}. 
Denote $\tilde C_{a,a'}=\Phi\big(C_{a,a'}\big)
\le \F\tilde G\times \F G$.
Then $\tilde C_{a,a'}$ is a $1\frac{1}{\alpha}$-quasi-cyclic code and

{\rm (1)} $\R\big(\tilde C_{a,a'}\big)=\frac{2}{\alpha+1}\R(C_{a,a'})$; 

{\rm (2)} $\Delta\big(\tilde C_{a,a'}\big)
 \ge\frac{2}{\alpha+1}\Delta\big(C_{a,a'}\big)$. 
\end{lemma}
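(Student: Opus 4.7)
The plan is to verify three things in sequence: that $\tilde C_{a,a'}$ genuinely is a quasi-cyclic code of index $1\frac{1}{\alpha}$, that $\Phi$ preserves dimensions (for the rate identity), and that $\Phi$ can only increase weights (for the distance inequality). The computation is almost entirely bookkeeping once Lemma~\ref{l varphi} is in hand.

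First I would check the module-theoretic claim. The code $C_{a,a'}\le(\F G)^2$ is an $\F G$-module by construction, hence an $\F\tilde G$-module via the surjective algebra homomorphism $\F\tilde G\twoheadrightarrow\F G$ (Exercise~\ref{tilde R-module}). The second coordinate of $\Phi$ is the identity on $\F G$, which is $\F\tilde G$-linear for the same reason, while the first coordinate $\tilde\varphi$ is $\F\tilde G$-linear by Lemma~\ref{l varphi}. Therefore $\Phi$ is an $\F\tilde G$-module homomorphism, and $\tilde C_{a,a'}=\Phi(C_{a,a'})$ is an $\F\tilde G$-submodule of $\F\tilde G\times\F G$; this is exactly the definition of a $1\frac{1}{\alpha}$-quasi-cyclic code in Definition~\ref{d fractional}.

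For part~(1), I would use that $\tilde\varphi$ is injective (Lemma~\ref{l varphi}), whence $\Phi$ is injective on $\F G\times\F G$, so $\dim_\F\tilde C_{a,a'}=\dim_\F C_{a,a'}$. Since $C_{a,a'}$ has length $2n$ and $\tilde C_{a,a'}$ has length $\alpha n+n=(\alpha+1)n$, one obtains
$$
\R(\tilde C_{a,a'})=\frac{\dim_\F\tilde C_{a,a'}}{(\alpha+1)n}
=\frac{2n\,\R(C_{a,a'})}{(\alpha+1)n}=\frac{2}{\alpha+1}\R(C_{a,a'}).
$$

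For part~(2), the key weight identity is $\w\big(\Phi(ba,ba')\big)=\alpha\w(ba)+\w(ba')$, which is immediate from Eq.\eqref{alpha weight} applied to the first coordinate. Since $\alpha\ge 1$, this gives $\w\big(\Phi(ba,ba')\big)\ge\w(ba)+\w(ba')=\w(ba,ba')$ for every codeword, and by injectivity of $\Phi$ the minimum is taken over the same set of nonzero elements; hence $\w(\tilde C_{a,a'})\ge\w(C_{a,a'})$. Dividing by the two lengths,
$$
\Delta(\tilde C_{a,a'})=\frac{\w(\tilde C_{a,a'})}{(\alpha+1)n}
\ge\frac{\w(C_{a,a'})}{(\alpha+1)n}=\frac{2}{\alpha+1}\Delta(C_{a,a'}).
$$
There is no real obstacle here; the only point that deserves a careful word is the $\F\tilde G$-linearity of $\Phi$, since the second coordinate must be read as an $\F\tilde G$-module via the quotient, so that $\tilde x$ acts on $a'(x)\in\F G$ as multiplication by $x$ and matches the first coordinate's action.
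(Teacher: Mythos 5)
Your proof is correct and follows essentially the same approach as the paper: injectivity of $\Phi$ (via Lemma~\ref{l varphi}) gives the dimension equality for the rate, and the weight identity $\w(\tilde\varphi(a))=\alpha\w(a)\ge\w(a)$ gives the distance bound. You spell out the $\F\tilde G$-linearity of $\Phi$ and the weight comparison more explicitly than the paper's terse proof, but the underlying argument is the same.
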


\begin{proof}
Since $\Phi$ is an $\F\tilde G$-homomorphism, $\tilde C_{a,a'}$ is an
$\F\tilde G$-submodule of $\F\tilde G\times\F G$.

(1). Because $\Phi$ is injective, 
$\dim \tilde C_{a,a'}=\dim C_{a,a'}$. So
$$\textstyle
\R\big(\tilde C_{a,a'}\big)=\frac{\dim \tilde C_{a,a'}}{\alpha n+n}
=\frac{2}{\alpha+1}\cdot\frac{\dim C_{a,a'}}{2n}
=\frac{2}{\alpha+1}\R(C_{a,a'}).
$$

(2). Denote $\tilde d=\d(\tilde C_{a,a'})$, $d=\d(C_{a,a'})$.
By Eq.\eqref{alpha weight}, $\tilde d\ge d$. Then
$$\textstyle
\Delta\big(\tilde C_{a,a'}\big)=\frac{\tilde d}{\alpha n+n}
\ge \frac{2}{\alpha+1}\cdot\frac{d}{2n}
=\frac{2}{\alpha+1}\Delta(C_{a,a'}).
$$
We are done.
\end{proof}

From Theorem \ref{2-quasi good} we get the following at once.

\begin{theorem}\label{fractional good}
Assume that $\delta\in(0,1-\frac{1}{q})$ and $g_q(\delta)>\frac{1}{2}$; 
and positive integers $n_1,n_2,\cdots$ satisfy that  
$\lim\limits_{i\to\infty}\frac{\log_q n_i}{\mu_q(n_i)}=0$.
Then for $i=1,2,\cdots$ there exist quasi-cyclic codes 
$\tilde C^{(i)}$ of index $1\frac{1}{\alpha}$ over $\F$ such that
\begin{itemize}
\item\vskip-3pt
 the length of $\tilde C^{(i)}$ equals $(\alpha+1)n_i$ for all $i=1,2,\cdots$;
\item\vskip-3pt
 $\R(\tilde C^{(i)})=\frac{1}{\alpha+1}$ for all $i=1,2,\cdots$;
\item\vskip-3pt 
$\Delta(\tilde C^{(i)})>\frac{2\delta}{\alpha+1}$ for all $i=1,2,\cdots$.
\end{itemize}
\end{theorem}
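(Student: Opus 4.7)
The plan is to read off the statement as a direct consequence of Theorem~\ref{2-quasi good} together with the lemma immediately preceding it (the one defining the map $\Phi$ and giving the identity $\R(\tilde C_{a,a'})=\frac{2}{\alpha+1}\R(C_{a,a'})$ and the bound $\Delta(\tilde C_{a,a'})\ge\frac{2}{\alpha+1}\Delta(C_{a,a'})$). There is essentially no new inequality to prove; the work is only to verify that the hypotheses of the present theorem feed into Theorem~\ref{2-quasi good}, and then to track the construction through $\Phi$.

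First I would fix, for each $i$, the cyclic group $G_i=\{1,x,\ldots,x^{n_i-1}\}$ of order $n_i$ (the assumption $\lim_i\frac{\log_q n_i}{\mu_q(n_i)}=0$ presupposes $\gcd(n_i,q)=1$, since otherwise $\mu_q(n_i)$ is undefined, cf.\ Definition~\ref{q-coset}). Since $\delta\in(0,1-q^{-1})$, $g_q(\delta)>\tfrac12$, and $\lim_i\frac{\log_q n_i}{\mu_q(n_i)}=0$, the hypotheses of Theorem~\ref{2-quasi good} are satisfied with the abelian (in fact cyclic) groups $G_i$. That theorem supplies $2$-quasi-$\F G_i$ codes $C^{(i)}=C_{a_i,a_i'}\le(\F G_i)^2$ (of the form in Eq.\eqref{e C_a,a'}) such that $\R(C^{(i)})=\tfrac12$ and $\Delta(C^{(i)})>\delta$ for every~$i$.

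Next I would let $\tilde G_i=\{1,\tilde x,\ldots,\tilde x^{\alpha n_i-1}\}$ be cyclic of order $\alpha n_i$ and apply the injective $\F\tilde G_i$-module homomorphism
\[
 \Phi:\F G_i\times\F G_i\longrightarrow\F\tilde G_i\times\F G_i,\qquad
 (a,a')\longmapsto(\tilde\varphi(a),a'),
\]
built from $\tilde\varphi$ of Lemma~\ref{l varphi}. Set $\tilde C^{(i)}:=\Phi(C^{(i)})$. By Definition~\ref{d fractional}, $\tilde C^{(i)}$ is a quasi-cyclic code of index $1\tfrac{1}{\alpha}$, sitting inside an ambient space of dimension $\alpha n_i+n_i=(\alpha+1)n_i$, so its length is $(\alpha+1)n_i$. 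The unnumbered lemma immediately before Theorem~\ref{fractional good} then yields
\[
 \R(\tilde C^{(i)})=\tfrac{2}{\alpha+1}\R(C^{(i)})=\tfrac{1}{\alpha+1},\qquad
 \Delta(\tilde C^{(i)})\ge\tfrac{2}{\alpha+1}\Delta(C^{(i)})>\tfrac{2\delta}{\alpha+1},
\]
which are exactly the three claimed properties.

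No significant obstacle arises, because the heavy lifting has already been done: the first-moment estimate of Lemma~\ref{EX for C_a,a'} and the rate probability of Lemma~\ref{l Pr R}, combined with the number-theoretic Lemma~\ref{l n_1,...}, together produce Theorem~\ref{2-quasi good}; the weight-copying identity $\w(\tilde\varphi(a(x)))=\alpha\,\w(a(x))$ of Lemma~\ref{l varphi} gives the $\Delta$-bound for $\tilde C_{a,a'}$. The only point to be careful about is that the preceding lemma supplies only an inequality (not an equality) for $\Delta$, so the argument chains the strict inequality $\Delta(C^{(i)})>\delta$ with the multiplicative factor $\frac{2}{\alpha+1}$ to conclude $\Delta(\tilde C^{(i)})>\frac{2\delta}{\alpha+1}$.
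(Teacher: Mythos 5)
Your proof is correct and follows exactly the route the paper intends: apply Theorem~\ref{2-quasi good} with the cyclic groups $G_i$ to obtain $C^{(i)}$ with $\R(C^{(i)})=\tfrac12$ and $\Delta(C^{(i)})>\delta$, then push through $\Phi$ and invoke the preceding lemma to translate rate and relative distance. The paper states the theorem ``follows at once'' from Theorem~\ref{2-quasi good}, and your write-up simply fills in those immediate steps.
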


In conclusion, quasi-cyclic codes of index $1\frac{1}{\alpha}$ 
are asymptotically good. But, if $\alpha>1$, we cannot get a
sequence of the codes whose parameters 
attain the GV-bound. 

\begin{remark}\rm
(1) 
It is well-known for a very long time that 
binary quasi-cyclic codes of index $2$ are asymptotically good
(and attain the GV-bound), see \cite{CPW, C, K}. 
In \cite{BM}, binary quasi-abelian codes of fixed index $t\ge 2$ 
are proved to be asymptotically good. 
In the dissertation \cite{L PhD},
the asymptotic goodness of quasi-abelian codes of fixed index $t\ge 2$ 
are extended to any $q$-ary case. 
In this section we present a proof of this asymptotic goodness for index $2$. 

(2) Quasi-cyclic codes of index $1\frac{1}{2}$ and $1\frac{1}{3}$ 
are introduced and proved to be asymptotically good in \cite{FL15,FL16};
and the results are extended to quasi-cyclic codes of general fractional index
in \cite{MC}. In this section, 
based on the result on the asymptotic goodness  
for quasi-abelian codes of index $2$, 
we take a more concise argument to prove 
the asymptotic goodness of quasi-cyclic codes of index~$1\frac{1}{\alpha}$
for any $q$ and any positive integer $\alpha$.
\end{remark}

\section{Self-dual quasi-abelian codes of index 2}\label{self-dual}

In this section we always assume: 
\begin{itemize}
\item \vskip-5pt
 $\F$ is a finite field with $|\F|=q$;
\item \vskip-5pt
 $\delta\in(0,1-\frac{1}{q})$ and $g_q(\delta)>\frac{3}{4}$;
\item \vskip-5pt 
$n>1$ is an odd integer, $\gcd(n,q)=1$ and $\mu_q(n)>2\log_q n$,
 where $\mu_q(n)$ is defined in Definition \ref{q-coset}; 
\item \vskip-5pt
 $G$ is an abelian group of order $n$. 
\end{itemize}
\vskip-5pt
We refine the notation about $\F G$ in Section \ref{group codes} 
for the further study.  

By Theorem \ref{semisimple abelian}, 
$\textstyle E=\{e_0=\frac{1}{n}\sum_{x\in G}x,\,e_1,\cdots,e_m\}$
is the set of all primitive idempotents of $\F G$, and 
$\F G=Fe_0\oplus \F Ge_1\oplus\cdots\oplus \F Ge_m$
as in Eq.\eqref{FG=F_0+...};  
each $\F_i:=\F Ge_i$ is a field extension of $\F$.  
The bar map in Definition \ref{d bar map}:  
$\F G\to\F G$, $a\mapsto\overline a$, is an automorphism of $\F G$ of order $2$; 
hence $\overline E=E$, i.e., bar map induces 
a permutation of order $\le 2$ of the primitive idempotents. 
Then we can write the set $E$ of all primitive idempotents of $\F G$ as 
\begin{equation}\label{e E}
E=\{e_0\}\cup \{e_1,\cdots,e_r\}\cup\{e_{r+ 1},\overline e_{r+ 1},\,\cdots,\, e_{r+ s},\overline e_{r+ s}\},
\end{equation}
where 
$\overline e_i=e_i$  for  $i=1,\cdots,r$,
$\overline e_{r+ j}\ne e_{r+j}$  for $ j=1,\cdots,s$ 
and $1+r+2s=m$. 
For $i=1,\cdots,r$, $\overline{\F Ge_i}=\F Ge_i$.
For $j=1,\cdots,s$, 
the  restriction of the ``bar'' map to $A_{r+j}$ induces a map:
\begin{equation}\label{e e_r+j}
 \F G e_{r+j}~\longrightarrow \F G\overline e_{r+j},~~~ 
   a \longmapsto\overline a,
\end{equation} 
which is an $\F$-algebra isomorphism.
And $\F Ge_{r+ j}+\F G\overline e_{r+ j}$ 
are invariant by the automorphism ``bar''. 
So in the following we set
\begin{equation}\label{d hat E}
\begin{array}{l}
\widehat e_{r+j}=e_{r+j}+\overline e_{r+j},~  j=1,\cdots,s;
\\[4pt]
 \widehat E=\{e_0,\,e_1,\cdots,e_r,\, \widehat e_{r+1},\cdots,\widehat e_{r+s}\}. 
\end{array}
\end{equation}
 
\begin{remark}\rm
For $i=0,1,\cdots,r$, $\F Ge_i=\F_i$ is a finite field. However,  
for $j=1,\cdots,s$, 
$\F G\widehat e_{r+j}=\F Ge_{r+j}+\F G\overline e_{r+j}$
which is isomorphic to a direct sum of two copies of the finite field $\F_{r+j}$, 
see Exercise \ref{F times F}. And
\begin{align}\label{e hat E}
 1=\sum_{e\in\widehat E}e;~~~~~ 
 \overline e=e, ~\forall~e\in\widehat E;~~~~~
 ee'=\begin{cases}e, \!& e=e';\\ 0, \!& e\ne e'; \end{cases}~\forall~e,e'\in\widehat E;
\end{align}
For any $e\in\widehat E$, the ``bar'' map induces an automorphism 
of the algebra $\F Ge$.

\bex\label{F times F}
Let $\F_{r+j}=\F Ge_{r+j}$ which is a field,  
and $R=\F_{r+j}\times\F_{r+j}$. 
For $(a,b)\in R$, set $\overline{(a,b)}=(b,a)$.  Then:

(1) $(a,b)\mapsto \overline{(a,b)}$ is an automorphism of the ring~$R$. 

(2) $\iota: R\to \F G\widehat e_{r+j}$, $(a,b)\mapsto a+\overline b$, 
is a ring isomorphism, and  
$\iota\big(\overline{(a,b)}\big)=\overline{\iota(a,b)}$
for any $(a,b)\in R$. \hint{cf. Eq.\eqref{e e_r+j}.}
\eex
\end{remark}

For any ideal $A\le\F G$, denote $A^\flat=\{a\,|\,a\in A,~\bar a=a\}$.
For $j=1,\cdots,s$, 
the restriction of the bar map to $\F G\widehat e_{r+j}$ is an automorphism
of $\F G\widehat e_{r+j}$ of order $2$ described in
Eq.\eqref{e e_r+j} and Exercise \ref{F times F}; in particular,
\begin{equation}\label{e dim FG e_r+j}
\textstyle
 (\F G\widehat e_{r+j})^\flat=\{a\!+\!\overline a\,|\, a\in\F G e_{r+j}\},~~
 \dim(\F G\widehat e_{r+j})^\flat=\frac{1}{2}\dim\F G\widehat e_{r+j}.
\end{equation}

\begin{lemma}\label{l dim FG e_i}
For $i=1,\cdots,r$, $\F Ge_i\to\F Ge_i$, $a\mapsto \overline a$, is
an automorphism of the field $\F_i=\F Ge_i$ of order $2$; in particular,
$\dim_\F\F_i=\dim\F Ge_i$ is even and 
$\dim_\F\F_i^\flat=\frac{1}{2}\dim_\F\F_i$.
\end{lemma}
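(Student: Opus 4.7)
The plan is to prove the claim in two stages: first, verify that bar restricts to an automorphism $\sigma_i$ of $\F_i = \F G e_i$; second, show $\sigma_i$ has order exactly $2$, after which the dimension statements follow from the Galois theory of finite fields.

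For the restriction: since $G$ is abelian, the bar map of Definition \ref{d bar map} is a ring automorphism of $\F G$ of order $2$. The range $i=1,\cdots,r$ in Eq.\eqref{e E} is precisely those primitive idempotents fixed by bar, so $\overline{\F Ge_i}\subseteq \F Ge_i$, and bar restricts to a ring automorphism $\sigma_i$ of the field $\F_i$ of order dividing~$2$. Since $\sigma_i$ fixes $\F\subseteq \F G$ element-wise, it is an $\F$-algebra automorphism of $\F_i$.

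The heart of the argument is to rule out $\sigma_i=\mathrm{id}$; this is exactly where the hypothesis that $n$ is odd enters (compare Remark \ref{r n_1, ...}(2)). Arguing by contradiction, suppose $\sigma_i$ is trivial. Then for every $x\in G$, $\overline{xe_i}=xe_i$, i.e.\ $x^{-1}e_i = xe_i$, so $(xe_i)^2 = e_i = 1_{\F_i}$. The assignment $\phi_i:G\to\F_i^\times$, $x\mapsto xe_i$, is a group homomorphism (the commutativity of $\F G$ gives $(xe_i)(ye_i)=xye_i$, and $xe_i$ is invertible in $\F_i$ with inverse $x^{-1}e_i$). The image $H=\phi_i(G)$ is a subgroup of $\F_i^\times$ all of whose elements have order dividing $2$. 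But $|H|$ divides $|G|=n$, which is odd, so $|H|=1$, forcing $xe_i=e_i$ for every $x\in G$. Summing over $G$ gives $\bigl(\sum_{x\in G}x\bigr)e_i = ne_i$, i.e.\ $ne_0\cdot e_i = ne_i$; since $\gcd(n,q)=1$ we may cancel $n$ to get $e_i = e_0 e_i$. For $i\ne 0$ the orthogonality of primitive idempotents gives $e_0e_i=0$, hence $e_i=0$, a contradiction.

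Therefore $\sigma_i$ is a non-trivial $\F$-algebra automorphism of the finite field $\F_i$ of order $2$. By the standard Galois theory of finite fields its fixed subfield $\F_i^\flat$ has index $2$ in $\F_i$, i.e.\ $[\F_i:\F_i^\flat]=2$. Consequently $\dim_\F\F_i = 2\cdot\dim_\F\F_i^\flat$, so $\dim_\F\F_i$ is even and $\dim_\F\F_i^\flat = \tfrac{1}{2}\dim_\F\F_i$, completing both conclusions. The only genuine obstacle in this plan is the non-triviality step; once it is handled by exploiting the odd order of $G$, the dimension conclusions are immediate.
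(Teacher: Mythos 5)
Your proof is correct, and it takes a genuinely different route from the paper's. The paper argues globally: since $n$ is odd, $x\ne x^{-1}$ for $1\ne x\in G$, so the non-identity elements of $G$ pair off and a direct count gives $\dim(\F G)^\flat=1+\frac{n-1}{2}$. It then decomposes $(\F G)^\flat$ as $(\F Ge_0)^\flat\oplus\bigoplus_{i=1}^r(\F Ge_i)^\flat\oplus\bigoplus_{j=1}^s(\F G\widehat e_{r+j})^\flat$, uses the already-established exact halving on the $\widehat e_{r+j}$-components (Eq.\eqref{e dim FG e_r+j}) and the crude bound $\dim(\F Ge_i)^\flat\ge\frac{1}{2}\dim\F Ge_i$ on each $\F_i$-component, and then observes that the totals force equality in every $\F_i$. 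You instead argue locally, one $i$ at a time: if bar were trivial on $\F_i$ then the Cayley-type homomorphism $\phi_i:G\to\F_i^\times$, $x\mapsto xe_i$, would land in the $2$-torsion of $\F_i^\times$, which by the oddness of $|G|$ forces the image to be trivial, and then $e_0e_i=e_i$ contradicts orthogonality of primitive idempotents. Your argument isolates more transparently the role of the hypothesis ``$n$ odd'' (it kills the $2$-torsion image in $\F_i^\times$) and is self-contained for each component, not needing the prior computation of the $\widehat e_{r+j}$-parts; the paper's approach is more economical in that a single dimension count handles all $r$ components simultaneously and reuses Eq.\eqref{e dim FG e_r+j}. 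Both are valid; note only that your invocation of Galois theory at the end is heavier machinery than necessary, since for a non-trivial involutive automorphism of a finite-dimensional $\F$-algebra that is a field, $\dim_\F\F_i=2\dim_\F\F_i^\flat$ already follows from the eigenspace decomposition when $\mathrm{char}\,\F\ne 2$ and from elementary field-degree counting in general.
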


\begin{proof} 
For any $1\ne x\in G$, $\overline x=x^{-1}\ne x$
(because $|G|=n$ is odd). 
So, for $a=\sum_{x\in G}a_x x\in\F G$, 
$\overline a=a$ if and only if
$a_x=a_{x^{-1}}$ for all $1\ne x\in G$. Thus
$\dim (\F G)^\flat=1+\frac{n-1}{2}$. 
Note that
$$\textstyle
(\F G)^\flat=(\F Ge_0)^\flat+\sum_{i=1}^r(\F Ge_i)^\flat
+\sum_{j=1}^s(\F G\widehat e_{r+j})^\flat.
$$
It is trivial that $(\F Ge_0)^\flat=\F Ge_0=\F e_0$.
For $j=1,\cdots,s$, we have seen in Eq.\eqref{e dim FG e_r+j} that 
$\dim(\F G\widehat e_{r+j})^\flat=\frac{1}{2}\dim\F G\widehat e_{r+j}$.
For $i=1,\cdots,r$, the bar map induces an automorphism of order $\le 2$
on the finite field $\F_i=\F Ge_i$. Thus
$\dim (\F Ge_i)^\flat\ge \frac{1}{2}\dim (\F Ge_i)$.
Because $\dim (\F G)^\flat=1+\frac{n-1}{2}$, for each $i$ with $1\le i\le r$, 
the equality $\dim (\F Ge_i)^\flat=\frac{1}{2}\dim (\F Ge_i)$ has to be held.
\end{proof}

Let $k_0=\dim\F Ge_0=1$. In the following we denote: 
\begin{equation}\label{e hat E^dag}
\widehat E^\dag=\widehat E-\{e_0\}
=\{e_1,\cdots,e_r, \widehat e_{r+1},\cdots, \widehat e_{r+s}\}.
\end{equation} 
For any $e\in\widehat E^\dag$, 
by Eq.\eqref{e dim FG e_r+j} and Lemma \ref{l dim FG e_i}, 
$\dim\F Ge$ is even; so, instead of $d_i$ in Theorem \ref{semisimple abelian}, 
we denote $k_e=\frac{1}{2}\dim\F Ge$, 
and set $k_i$ as: 
\begin{equation}\label{d k_i k_r+j}
\begin{array}{l}
 k_i:=k_e=\frac{1}{2}\dim \F Ge_i, ~~ \mbox{for}~ e=e_i,~ i=1,\cdots,r;
 \\[5pt] 
 k_{r+j}:=k_e=\dim \F G e_{r+j}, ~~ 
  \mbox{for}~ e=\widehat e_{r+j},~  j=1,\cdots,s;
 \\[5pt]
 \mbox{hence}~~\frac{n-1}{2}=k_1+\cdots+k_r+k_{r+1}+\cdots+k_{r+s}.
\end{array}
\end{equation}
Note that: 
$\min\{k_e\,|\,e\in\widehat E^\dag\}\ge \lceil\frac{1}{2}\mu_q(n)\rceil$, 
hence $k_e>\log_q n$ for all $e\in\widehat E^\dag$ 
(because we have assumed that $\mu_q(n)>2\log_q n$).

\bex
$\min\{k_e\,|\,e\in\widehat E^\dag\}\ge \lceil\frac{1}{2}\mu_q(n)\rceil$.
\eex

\subsection{Self-dual quasi-$\F G$ codes of index $2$}

We consider the following ${\cal D}$ as a probability space with equal
probability for every sample:
\begin{equation}
{\cal D}=
\{(1,b)\,|\, b\in(\F G)^\times, ~ b\overline{b}=-1\}.
\end{equation}
By Lemma \ref{l 2-quasi self-dual}(3), 
we have the self-dual 2-quasi-$\F G$ codes $C_{1,b}$ as follows:
\begin{equation}\label{e C_1,b}
 C_{1,b}=\{(a,ab)\,|\,a\in\F G\},~~~ (1,b)\in{\cal D}.
\end{equation}

\begin{lemma}\label{t |D|} Let ${\cal D}$ be as above. Then
$|{\cal D}|= 
s_0\prod_{i=1}^r (q^{k_i}+1)\prod_{j=1}^s(q^{k_{r+j}}-1)$, ~
where $s_0=\begin{cases}1, &q\equiv 0~({\rm mod}\,2);\\
2, &q\equiv 1~({\rm mod}\,4);\\ 0, &q\equiv 3~({\rm mod}\,4); \end{cases}$ 
~ in particular, 
$
  s_0q^{\frac{n-1}{2}-2}\le |{\cal D}| \le q^{\frac{n-1}{2}+3}.
$
\end{lemma}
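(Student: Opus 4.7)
The plan is to decompose the constraint $b\overline{b}=-1$ along the blocks of the complete system of idempotents $\widehat E$ introduced in \eqref{d hat E}, and to count admissible components within each of the three block types in \eqref{e E}: the rational block $\F e_0$, the $r$ ``bar-fixed'' field blocks $\F Ge_i$, and the $s$ ``bar-swapped'' pairs combined into $\F G\widehat e_{r+j}$.

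\emph{Reduction to local counts.} Writing $b=\sum_{e\in\widehat E}b_e$ with $b_e:=be\in\F Ge$, and using $\overline e=e$ together with the orthogonality relations \eqref{e hat E}, one checks that $\overline{b_e}=(\overline b)e\in\F Ge$, so $b\overline b=\sum_{e\in\widehat E}b_e\overline{b_e}$. Since $-1=\sum_{e\in\widehat E}(-e)$ and the blocks lie in pairwise orthogonal ideals, the identity $b\overline b=-1$ is equivalent to $b_e\overline{b_e}=-e$ for every $e\in\widehat E$. Moreover, $b\in(\F G)^\times$ iff each $b_e$ is a unit of the ring $\F Ge$. Therefore $|{\cal D}|=\prod_{e\in\widehat E}N_e$, where $N_e=\big|\{b_e\in(\F Ge)^\times:b_e\overline{b_e}=-e\}\big|$.

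\emph{The three block counts.} For $e=e_0$, $\F Ge_0=\F e_0\cong\F$ with bar acting trivially, so $N_{e_0}$ counts $a\in\F$ with $a^2=-1$; this is exactly $s_0$ (namely $1$ in characteristic~$2$; $2$ when $q\equiv 1\!\pmod 4$; $0$ when $q\equiv 3\!\pmod 4$). For $e=e_i$ with $1\le i\le r$, Lemma \ref{l dim FG e_i} says bar restricts to the unique non-trivial automorphism of the degree-$2$ field extension $\F_i/\F_i^\flat$, so the norm map $b\mapsto b\overline b$ is surjective from $\F_i^\times$ onto $(\F_i^\flat)^\times$ with kernel of size $(q^{2k_i}-1)/(q^{k_i}-1)=q^{k_i}+1$; since $-e_i\in\F_i^\flat$ (as $-e_i$ is bar-fixed), this yields $N_{e_i}=q^{k_i}+1$. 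For $e=\widehat e_{r+j}$, Exercise \ref{F times F} identifies $(\F G\widehat e_{r+j},\text{ bar})$ with $(\F_{r+j}\times\F_{r+j},\text{ swap})$; writing $b_{r+j}\leftrightarrow(u,v)$, the unit condition gives $u,v\in\F_{r+j}^\times$ and the relation $b_{r+j}\overline{b_{r+j}}=-\widehat e_{r+j}$ becomes $uv=-1$, so $u$ is arbitrary in $\F_{r+j}^\times$ and $v$ is determined, giving $N_{\widehat e_{r+j}}=q^{k_{r+j}}-1$.

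\emph{Assembly and bounds.} Multiplying the three counts yields the stated product formula. For the two-sided estimate, combine the dimension identity $\sum_{i=1}^r k_i+\sum_{j=1}^s k_{r+j}=(n-1)/2$ from \eqref{d k_i k_r+j} with Exercise \ref{e q^k-1}. The hypothesis $\mu_q(n)>2\log_q n$ forces $k_e\ge\tfrac12\mu_q(n)>\log_q n\ge\log_q(r+s)$, which is exactly what is needed to invoke that exercise: part (1) gives $\prod_j(q^{k_{r+j}}-1)\ge q^{\sum_j k_{r+j}-2}$, so together with $q^{k_i}+1>q^{k_i}$ we obtain $|{\cal D}|\ge s_0\, q^{(n-1)/2-2}$; part (2) gives $\prod(q^{k_i}+1)\prod(q^{k_{r+j}}+1)\le q^{(n-1)/2+2}$, and since $s_0\le 2\le q$ the factor $s_0$ is absorbed to give the $+3$ in the exponent. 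The main obstacle is the careful bookkeeping of the bar action on the three block types; once Lemma \ref{l dim FG e_i} and Exercise \ref{F times F} are in hand, each local count reduces either to a square-root count in $\F$, to norm surjectivity for a quadratic extension of finite fields, or to one equation in two unknowns in a direct-product ring.
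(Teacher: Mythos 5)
Your proof is correct and follows essentially the same approach as the paper's: decompose $b\overline b=-1$ component-wise along $\widehat E$, count solutions locally in the three block types, and then apply Exercise \ref{e q^k-1} for the two-sided estimate. The only cosmetic difference is in the count for the bar-fixed field blocks $\F Ge_i$, where you invoke surjectivity of the norm map $\F_i^\times\to(\F_i^\flat)^\times$ and count a coset of its kernel, while the paper (Exercise \ref{e X bar X=-1}(2)) argues directly inside the cyclic group $\F_i^\times$; these are equivalent.
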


\begin{proof}
Any $a\in\F G$ is uniquely written as
\begin{equation}\label{a=a_0+...}
\begin{array}{l}
a=a_0+a_1+\cdots+a_r+a_{r+1}+\cdots+a_{r+s}, ~ 
\\[3pt]
\mbox{where}~~ a_i=ae_i\in\F Ge_i=\F_i, ~~ \mbox{for} ~i=0,1,\cdots,r,
\\[3pt]
 a_{r+j}=a\widehat e_{r+j}\in\F Ge_{r+j}+\F G\overline e_{r+j}
 \cong \F_{r+j}\times\F_{r+j},~~ 
  \mbox{for}~ j=1,\cdots,s.
\end{array}
\end{equation}  
For the last line please see Exercise \ref{F times F}.
And, for $a,b\in\F G$, 
\begin{equation}\label{ab=...}
\textstyle
ab=a_0b_0+\sum_{i=1}^r a_ib_i+\sum_{j=1}^s a_{r+j}b_{r+j}.
\end{equation}
Thus, for $b\in(\F G)^\times$,
 $b\overline{b}=-1$ if and only if
\begin{align}\label{u_i,u_i'}
b_i\overline b_i=-e_i, ~~i=0,1,\cdots,r;
~~~~~~
b_{r+j}\overline b_{r+j}=-\widehat e_{r+j}, ~~j=1,\cdots,s.
\end{align}
By Exercise~\ref{e X bar X=-1} below, we have the following three conclusions.
\begin{description}
\item[~~\it Case 1:]\vskip-5pt 
$i=0$;~ then there are $s_0$ choices of $b_0\in (\F Ge_0)^\times$  
such that $b_0\overline b_0=-e_0$.
\item[~~\it Case 2:]\vskip-5pt
 $1\le i\le r$;~ then 
there are $q^{k_i}+1$ choices of  $b_i\in (\F Ge_i)^\times$ 
such that $b_i\overline b_i=-e_i$. 
\item[~~\it Case 3:]\vskip-5pt
 $1\le j\le s$;~ then there are $q^{k_{r+j}}\!-\!1$ choices 
of $b_{r+j}\in (\F G\widehat e_{r+j})^\times$  
such that $b_{r+j}\overline b_{r+j}=-\widehat e_{r+j}$.
\end{description}
\vskip-3pt
We get
$$\textstyle
 |{\cal D}|= 
s_0\prod_{i=1}^r (q^{k_i}+1)\prod_{j=1}^s(q^{k_{r+j}}-1).
$$
By Exercise \ref{e q^k-1}(2) we have (recall that $s_0\le 2\le q$):
$$\textstyle
 |{\cal D}|\le s_0 \prod_{e\in\widehat E^\dag}(q^{k_e}+1)
 \le q q^{2+\sum_{e\in\widehat E^\dag}k_e}
 =q^{\frac{n-1}{2}+3}.
$$
And by Exercise \ref{e q^k-1}(1), 
$$\textstyle
 |{\cal D}|\ge s_0 \prod_{e\in\widehat E^\dag}(q^{k_e}-1)
 \ge s_0 q^{-2+\sum_{e\in\widehat E^\dag}k_e}
 =s_0q^{\frac{n-1}{2}-2}.
$$
The lemma is proved.
\end{proof}

\bex\label{e X bar X=-1}
(1) The number of the solutions in~$\F$ of the equation $X^2=-1$
is equal to $s_0$, where $s_0$ is the same as in Lemma \ref{t |D|}.
\hint{$\F^\times$ is a cyclic group of order $q-1$; if $-1\ne 1$
then any solution of $X^2=-1$ in $\F^\times$ is an element of order $4$.}

(2) Let $\F_i$ be the finite field with $|\F_i|=q^{2k_i}$ as in Eq.\eqref{d k_i k_r+j},
$\overline u=u^{q^{k_i}}$ for $u\in \F_i$. 
Then the number of the solutions in $\F_i$ of the equation 
$X\overline X=-1$ is equal to $q^{k_i}+1$. 
\hint{$X\overline X=X^{q^{k_i}+1}$; 
$\F_i^\times$ is a cyclic group of order $q^{2k_i}-1=(q^{k_i}+1)(q^{k_i}-1)$;
if $-1=1$ then we are done; otherwise $2|(q^{k_i}-1)$ and
there is an element in $\F_i^\times$ of order $2(q^{k_i}+1)$.}

(3) Let $\F_{r+j}$ be the finite field with $|\F_{r+j}|=q^{k_{r+j}}$  
and $R=\F_{r+j}\times\F_{r+j}$ as in Exercise \ref{F times F}.
Then the number of the solutions in $R$ of the equation 
$X\overline X=-1$ is equal to $q^{k_{r+j}}-1$.
\hint{$(a,b)$ is a solution of $X\overline X=-1$ in $R$ iff 
$ab=ba=-1$ in $\F_{r+j}$; choose $a\in\F_{r+j}^\times$ freely, 
and then $b=-a^{-1}$ is uniquely determined.}
\eex

\begin{corollary}\label{c self-dual exist}
Self-dual 2-quasi-$\F G$ codes exist if and only if ${\cal D}\ne\emptyset$, 
if and only if $q\;{\not\equiv}\,3\;({\rm mod}\,4)$.
\end{corollary}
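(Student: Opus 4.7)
The plan is to establish the chain ``$C$ self-dual $2$-quasi-$\F G$ code exists'' $\Leftrightarrow$ ``${\cal D}\neq\emptyset$'' $\Leftrightarrow$ ``$q\not\equiv 3\pmod 4$'' by reading off the second equivalence from Lemma~\ref{t |D|}, producing self-dual codes from elements of ${\cal D}$ via Lemma~\ref{l 2-quasi self-dual}(3), and ruling out self-dual codes when $q\equiv 3\pmod 4$ through a reduction to a two-dimensional problem over $\F$.

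First, the cardinality formula $|{\cal D}|=s_0\prod_{i=1}^r(q^{k_i}+1)\prod_{j=1}^s(q^{k_{r+j}}-1)$ of Lemma~\ref{t |D|} has every factor other than $s_0$ strictly positive, so $|{\cal D}|=0\iff s_0=0\iff q\equiv 3\pmod 4$. Next, given any $(1,b)\in{\cal D}$, the identities $1\cdot\overline 1+b\overline b=1+(-1)=0$ and ${\rm Ann}_{\F G}(1,b)\subseteq{\rm Ann}_{\F G}(1)=0$ are verified directly, so Lemma~\ref{l 2-quasi self-dual}(3) shows $C_{1,b}$ is self-dual.

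For the remaining direction I argue the contrapositive: assume $q\equiv 3\pmod 4$ and let $C\le(\F G)^2$ be self-dual. Using the complete system of bar-invariant central idempotents $\widehat E$ from Eq.\eqref{d hat E}, decompose $(\F G)^2=\bigoplus_{\widehat e\in\widehat E}(\F G\widehat e)^2$ and $C=\bigoplus_{\widehat e}C\widehat e$. The crucial step is that this decomposition is orthogonal for $\langle\cdot,\cdot\rangle$: for $u\in(\F G\widehat e)^2$ and $u'\in(\F G\widehat e')^2$ with $\widehat e\ne\widehat e'$, bar-invariance of $\widehat e'$ gives $\overline{u'}\in(\F G\widehat e')^2$, and $\widehat e\cdot\widehat e'=0$ then forces $\langle u,u'\rangle=0$. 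Consequently $C^\bot=\bigoplus_{\widehat e}(C\widehat e)^\bot$, so self-duality of $C$ forces each $C\widehat e$ to be self-dual in $(\F G\widehat e)^2$. Specialize to $\widehat e=e_0$: under the ring isomorphism $\F e_0\cong\F$, $\alpha e_0\leftrightarrow\alpha$ (in which the bar-map becomes the identity and $\sigma$ becomes scaling by $1/n$), the inner product on $(\F e_0)^2$ is a nonzero scalar multiple of the standard Euclidean form on $\F^2$. A one-dimensional self-dual subspace of $\F^2$ must then be spanned by some $(\alpha,\beta)\ne 0$ with $\alpha^2+\beta^2=0$; necessarily $\beta\ne 0$, and $(\alpha/\beta)^2=-1$ forces $-1$ to be a square in $\F$, contradicting $q\equiv 3\pmod 4$.

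The main obstacle is the orthogonal-decomposition step: it depends crucially on each $\widehat e\in\widehat E$ being bar-invariant, which is precisely why the primitive idempotents were reorganized as in Eq.\eqref{d hat E}. Once this reduction is in place, the remainder collapses to a two-dimensional Euclidean plane over $\F$ and the classical square/non-square dichotomy of $-1$ finishes the proof.
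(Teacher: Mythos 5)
Your proof is correct and follows essentially the same route as the paper's: the second equivalence is read off the cardinality formula of Lemma~\ref{t |D|}, elements of ${\cal D}$ give self-dual codes $C_{1,b}$ via Lemma~\ref{l 2-quasi self-dual}(3), and the converse reduces to the $e_0$-component being self-dual in the $2$-dimensional space $(\F e_0)^2$, which forces $X^2=-1$ to be solvable in $\F$. The only cosmetic difference is that you decompose $(\F G)^2$ over all of $\widehat E$ before projecting to the $e_0$-piece, whereas the paper (Exercise~\ref{ex C^bot}) uses the coarser two-block split $e_0$ versus $1^{\dag}=1-e_0$, which already suffices since only the $e_0$-component is needed for the contradiction.
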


\begin{proof}
For cyclic $G$, this result was proved in \cite{LS03}.  
We show a proof appeared in \cite{LF} for general abelian $G$. 
By Lemma \ref{t |D|}, ${\cal D}\ne\emptyset$ if and only if
$q\;{\not\equiv}\,3\;({\rm mod}\,4)$. If ${\cal D}\ne\emptyset$ then
$C_{1,b}$ in Eq.\eqref{e C_1,b} is a self-dual 2-quasi-$\F G$ code.
Finally, assume that $C$ is a self-dual 2-quasi-$\F G$ code;
then, by Exercise \ref{ex C^bot}(3) below, 
$e_0C$ is self-dual in $e_0(\F G)^2=\F e_0\times \F e_0$, hence
$X^2=-1$ has solutions in $\F$;  
by Exercise \ref{e X bar X=-1}(1), 
$s_0\ge 1$, i.e., ${\cal D}\ne\emptyset$.
\end{proof}

\bex\label{ex C^bot}
Denote $1^{\!\dag}=1-e_0$. Let $C\le(\F G)^2$. 

(1) $(\F G)^2=e_0(\F G)^2\oplus 1^{\!\dag}(\F G)^2 $ 
is an orthogonal direct sum. \hint{similar to the proof of Exercise \ref{ex inner}:
$e_01^{\!\dag}=0$, 
hence $\langle e_0(\F G)^2,\,1^{\!\dag} (\F G)^2 \rangle=0$.}

(2) $C=e_0C\oplus 1^{\!\dag} C$, and 
$C^\bot=(e_0C)^{\bot_{e_0}}\oplus (1^{\!\dag} C)^{\bot_{1^{\!\dag}}}$, where
$(e_0C)^{\bot_{e_0}}$ denotes the orthogonal subspace of $e_0C$
in $e_0(\F G)^2$; and it is similar for $(1^{\!\dag} C)^{\bot_{1^{\!\dag}}}$.

(3) $C=C^\bot$ is self-dual if and only if $e_0C=(e_0C)^{\bot_{e_0}}$
and $1^{\!\dag} C=(1^{\!\dag} C)^{\bot_{1^{\!\dag}}}$.
\eex

\begin{remark}\label{r self-dual exist}\rm
Thus, from now on till to the end of this subsection, 
we assume that $q\;{\not\equiv}\,3\;({\rm mod}\,4)$, so that $s_0\ge 1$, 
${\cal D}\ne\emptyset$ and we can consider the asymptotic properties
of the self-dual 2-quasi-$\F G$ codes $C_{1,b}$ in Eq.\eqref{e C_1,b}.  
For $(a,d)\in(\F G)^2$, we define a $0$-$1$ 
random variable $X_{(a,d)}$ over ${\cal D}$ as follows: 
\begin{equation}\label{X_a}
 X_{(a,d)}=\begin{cases}1, & (a,d)\in C_{1,b} \mbox{~and~} 1\le\w(a,d)\le 2n\delta;\\
 0, & \mbox{otherwise}; \end{cases}
\end{equation}
By the definition, we see that 
if $(a,d)=0$ or $\w(a,d)>2\delta n$ then $X_{(a,d)}=0$ (zero variable);
otherwise, the expectation 
\begin{equation}\label{D_a,d}
\E(X_{(a,d)})=|{\cal D}_{(a,d)}|\big/|{\cal D}|, ~~~\mbox{where}~~
{\cal D}_{(a,d)}=\{(1,b)\in{\cal D}\,|\,(a,d)\in C_{1,b}\}.
\end{equation}
We further define an integer variable $X$ over ${\cal D}$ as follows:
\begin{equation}\label{sum X_a}
\textstyle
 X=\sum_{0\ne(a,d)\in(\F G\times \F G)^{\le\delta}}X_{(a,d)}.
\end{equation}
Then 
\begin{equation}\label{2 E(X)}
\Pr\big(\Delta(C_{1,b})\le\delta\big)=\Pr(X\ge 1)\le\E(X), 
\end{equation}
where the second inequality follows from Markov Inequality (Lemma \ref{Markov}).
\end{remark}

For $a\in\F G$, denote
$$
 E_a=\{e\in E\,|\, ea\ne 0\}, ~~~~~~
\widehat E^\dag_a=\{e\in\widehat E^\dag \,|\, ea\ne 0\}.
$$

\begin{lemma}\label{l D_a,d}
Let $(a,d)\in(\F G)^2$ and ${\cal D}_{(a,d)}$ be as in Eq.\eqref{D_a,d}. 
If ${\cal D}_{(a,d)}\ne \emptyset$ then
$E_a=E_d$ and $|{\cal D}_{(a,d)}|\le q^{\frac{n-1}{2}-\ell_a+3}$.
\end{lemma}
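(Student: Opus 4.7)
The plan is to analyze what $(a,d)\in C_{1,b}$ means after decomposing along the primitive idempotents of $\F G$, using the machinery already set up in Eqs.~\eqref{a=a_0+...}, \eqref{ab=...}, \eqref{u_i,u_i'}. Write $a=\sum_{e\in\widehat E}a_e$, $d=\sum_{e\in\widehat E}d_e$, $b=\sum_{e\in\widehat E}b_e$ with the components in $\F Ge_0$, $\F Ge_i$ or $\F G\widehat e_{r+j}$ respectively (cf.\ Exercise~\ref{F times F} for the $\widehat e_{r+j}$ factors). The membership $(a,d)\in C_{1,b}=\{(c,cb)\mid c\in\F G\}$ is equivalent to $d=ab$, which projects to $d_e=a_eb_e$ for every $e\in\widehat E$; and $b\in(\F G)^\times$ means each $b_e$ is a unit in the corresponding component ring.

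First I would prove $E_a=E_d$. If $a_e=0$ then $d_e=a_eb_e=0$, so $E_d\subseteq E_a$. Conversely, if $a_e\ne 0$ then $b_e$ being a unit forces $d_e=a_eb_e\ne 0$ (both in the field $\F Ge_i$ and in the product-of-two-fields $\F G\widehat e_{r+j}$, by Exercise~\ref{F times F}), so $E_a\subseteq E_d$.

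Next I would bound $|{\cal D}_{(a,d)}|$ by counting, component-by-component, tuples $(b_e)_{e\in\widehat E}$ satisfying simultaneously $a_eb_e=d_e$ and $b_e\overline{b_e}=-e$ (the latter is the projection of $b\overline b=-1$, using $\overline e=e$ from Eq.\eqref{e hat E}). For each $e\in\widehat E_a^{\dag}:=\widehat E^{\dag}\cap E_a$ the equation $a_eb_e=d_e$ determines $b_e$ uniquely (as an element of the field $\F Ge_i$, or as a pair in $\F G\widehat e_{r+j}$ using Exercise~\ref{F times F}), so such indices contribute a factor at most $1$. For each $e\in\widehat E^{\dag}\setminus\widehat E_a^{\dag}$ the only constraint on $b_e$ is $b_e\overline{b_e}=-e$, and the counts of Cases~2 and~3 in the proof of Lemma~\ref{t |D|} (together with Exercise~\ref{e X bar X=-1}) give at most $q^{k_e}+1$ solutions. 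The $e_0$-component contributes at most $s_0\le 2\le q$ possibilities.

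Setting $\ell_a:=\sum_{e\in\widehat E_a^{\dag}}k_e$ and using Exercise~\ref{e q^k-1}(2) together with the identity $\sum_{e\in\widehat E^{\dag}}k_e=\tfrac{n-1}{2}$ from Eq.\eqref{d k_i k_r+j}, I would conclude
$$
|{\cal D}_{(a,d)}|\le s_0\!\!\prod_{e\in\widehat E^{\dag}\setminus\widehat E_a^{\dag}}\!\!(q^{k_e}+1)\le q\cdot q^{\,2+\sum_{e\in\widehat E^{\dag}\setminus\widehat E_a^{\dag}}k_e}=q^{\frac{n-1}{2}-\ell_a+3}.
$$
The only mildly delicate point is handling the components $\F G\widehat e_{r+j}\cong\F_{r+j}\times\F_{r+j}$, where ``unit'' and ``$a_e\ne 0$'' must be treated via the isomorphism of Exercise~\ref{F times F} and the bar-map description of Eq.\eqref{e e_r+j}; once that translation is made, the count that $b_e=a_e^{-1}d_e$ is uniquely determined (and automatically a unit, since both $a_e,d_e$ are units) is straightforward, and the whole argument is essentially a componentwise refinement of the proof of Lemma~\ref{t |D|}.
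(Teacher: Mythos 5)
Your proof follows the same componentwise strategy as the paper (decompose along $\widehat E$, count solutions of $a_eb_e=d_e$ and $b_e\overline{b_e}=-e$ simultaneously, then apply Exercise~\ref{e q^k-1}). The $E_a=E_d$ step and the final arithmetic are fine. But there is a genuine gap in the key counting step for the components $\F G\widehat e_{r+j}\cong\F_{r+j}\times\F_{r+j}$.

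You assert that for $e\in\widehat E^\dag_a$ the element $a_e$ is a unit, so $b_e=a_e^{-1}d_e$ is uniquely determined. That is false in a product of two fields: $a_e=(a',a'')\ne 0$ only means at least one of $a',a''$ is nonzero, so $a_e$ may well be a zero divisor (say $a''=0$). In that case $a_eb_e=d_e$ alone determines only the first component $b'$ of $b_e=(b',b'')$ and leaves $b''$ completely free --- on its own this equation has $|\F_{r+j}|=q^{k_{r+j}}$ solutions, not one. To pin down $b_e$ you must invoke the \emph{other} constraint $b_e\overline{b_e}=-\widehat e_{r+j}$, which under the isomorphism of Exercise~\ref{F times F} reads $b'b''=-1$ and thus recovers $b''=-(b')^{-1}$. (This is precisely what the paper's Subcase 3.1 does.) So the count of at most one solution per such component is correct, but the argument ``$b_e=a_e^{-1}d_e$ is uniquely determined since $a_e,d_e$ are units'' does not prove it; you have flagged the product components as ``the only mildly delicate point'' and then resolved them with the very claim (unitness of $a_e$) that fails there.

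A secondary, minor point: your definition $\widehat E_a^\dag := \widehat E^\dag\cap E_a$ is not quite literal, since $\widehat e_{r+j}=e_{r+j}+\overline e_{r+j}$ is not an element of $E$; you clearly intend the paper's $\widehat E^\dag_a=\{e\in\widehat E^\dag\mid ea\ne0\}$, which is what the inequality $\ell_a=\sum_{e\in\widehat E^\dag_a}k_e$ requires, but the notation as written doesn't say that.
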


\begin{proof}
Given $(1,g)\in{\cal D}_{(a,d)}$; i.e. $(a,d)\in C_{1,g}$, hence $(a,d)=(a,ag)$;
so $d=ag$. Since $g\in(\F G)^\times$, for $e\in E$, $ed\ne 0$ if and only if $ea\ne 0$; 
that is, $E_a=E_d$. Then 
$(1,b)\in{\cal D}_{(a,d)}$ if and only if $(a,ab)=(a,d)=(a,ag)$, i.e., $ab=ag$.  
By the notation in Eq.\eqref{a=a_0+...} and Eq.\eqref{ab=...},
\begin{equation*}
(1,b)\in{\cal D}_{(a,d)} \iff 
\begin{cases}a_ib_i=a_ig_i, &i=0,1,\cdots,r;\\
a_{r+j}b_{r+j}=a_{r+j}g_{r+j}, &j=1,\cdots,s.\end{cases}
\end{equation*}

{\it Case 1}: $i=0$. If $a_i=0$ then, by Exercise~\ref{e X bar X=-1}(1), there are 
$s_0$ choices of $b_0$ such that $a_0b_0=a_0g_0$.
Otherwise, $a_0\ne 0$ and there is a unique $b_0$ such that $a_0b_0=a_0g_0$.
So, there are $t_0$ choices of $b_0$ such that $a_0b_0=a_0g_0$,
where $t_0=\begin{cases}s_0, & a_0=0;\\ 1, & a_0\ne 0. \end{cases}$ 

{\it Case 2}: $1\le i\le r$. Then $a_i\in\F_i=\F Ge_i$ and
 $\F_i$ is a field with $|\F_i|=q^{2k_i}$.

{\it Subcase 2.1}: $e_i\in \widehat E^\dag_a$. Then $a_i\in\F_{i}^\times$,
there is a unique $b_i\in\F_i$ such that $a_ib_i=a_ig_i$.

{\it Subcase 2.2}: $e_i\notin \widehat E^\dag_a$. Then $a_i=0$ and hence $a_ib_i=a_ig_i=0$;
there are $q^{k_i}+1$ choices of  $b_i\in\F_i$ such that $a_ib_i=a_ig_i$,
cf. Exercise~\ref{e X bar X=-1}(2). 

{\it Case 3}: $1\le j\le s$. Then $a_{r+j}\in 
 \F G\widehat e_{r+j}=\F G e_{r+j}+\F G\overline e_{r+j}$, 
  where $\F G e_{r+j}=\F_{r+j}$ and $|\F_{r+j}|=q^{k_{r+j}}$.
We write $a_{r+j}=a'+a''$ with $a'\in\F Ge_{r+j}$ and $a''\in\F G\overline e_{r+j}$.
And $b_{r+j}=b'+b''$, $g_{r+j}=g'+g''$ in the same way.
Then $(a'+a'')(b'+b'')=a_{r+j}b_{r+j}=a_{r+j}g_{r+j}=(a'+a'')(g'+g'')$, i.e.,
\begin{align}\label{a'b'=}
a'b'=a'g' ~~~  \mbox{and }~~~ a''b''=a''g''.
\end{align}

{\it Subcase 3.1}: $\widehat e_{r+j}\in \widehat E^\dag_a$. 
 Then $(a'+a'')(e_{r+j}+\overline e_{r+j})\ne 0$, hence
 one of $a'$ and $a''$ is nonzero, say $a'\ne 0$. 
 There is a unique $b'$ such that $a'b'=a'g'$.
 Note that $b'\overline b''=-1$ (see the hint for Exercise~\ref{e X bar X=-1}(3)), 
i.e., $b'$ and $b''$ determined each other (it is the same for $g'$ and $g''$).  
There is a unique $b_{r+j}=b'+b''$ such that Eq.\eqref{a'b'=} holds.

{\it Subcase 3.2}: $\widehat e_{r+j}\notin \widehat E^\dag_a$. 
Then $a'=a''=0$, hence Eq.\eqref{a'b'=} always holds.
By Exercise~\ref{e X bar X=-1}(3), 
there are $q^{k_{r+j}}-1$ choices of $b_{r+j}=b'+b''$ such that Eq.\eqref{a'b'=} holds.

Summarizing the above three cases, we get
$$\textstyle
|{\cal D}_{(a,d)}|= t_0
 \prod_{1\le i\le r,\, e_ia\ne 0}(q^{k_i}+1)
 \prod_{1\le j\le s, \,\widehat e_{r+j}a\ne 0}(q^{k_{r+j}}-1).
$$
Since $t_0\le 2\le q$ and $q^{k_{r+j}}-1<q^{k_{r+j}}+1$, 
$$
|{\cal D}_{(a,d)}|\le q
 \prod_{e\in\widehat E^\dag-\widehat E^\dag_a}(q^{k_e}+1)
 \le q q^{2+\sum_{e\in\widehat E^\dag-\widehat E^\dag_a}k_e}
 =q^{3+\frac{n-1}{2}-\ell_a}.
$$
We are done.
\end{proof}

For any $a\in\F G$, we set
\begin{equation}\label{e L_a}
\textstyle
 L_a=\bigoplus_{e\in\widehat E^\dag_a}\F Ge,~~~
 \ell_a=\frac{1}{2}\dim L_a=\sum_{e\in\widehat E^\dag_a}k_e; ~~~
 \tilde L_a=\F e_0\oplus L_a.
\end{equation}
Note that $L_a\ne I_a$ and $\ell_a\ne d_a$ in general
(where $I_a=\F Ga$ and $d_a=\dim I_a$ 
are defined in Definition \ref{d X_b abelian}(3) for $k=1$).
But $I_a\subseteq \tilde L_a$ and $\dim\tilde L_a=2\ell_a+1$.

Further, for an integer $\ell$ with $\lceil\frac{1}{2}\mu_q(n)\rceil\le\ell\le\frac{n-1}{2}$, 
we set:
\begin{equation}\label{e Omega}
\Omega_{2\ell}=\{L\,|\, 
\mbox{$L$ is a direct sum of some of $\F Ge$'s 
 for $e\in\widehat E^\dag$, $\dim L=2\ell$}\};
\end{equation}
and for $L\in\Omega_{2\ell}$ we set 
(where $L_a$ is defined in Eq.\eqref{e L_a}): 
\begin{equation}\label{e tilde L}
\tilde L=\F e_0\oplus L, ~~ \tilde L^*=\{a\in\tilde L\,|\,L_a=L~
   \mbox{(equivalently, $\ell_a=\ell$)}\}.   
\end{equation}
Similarly to Eq.\eqref{e union} and Lemma \ref{l |Gamma_ell|}, 
we have
\begin{equation}\label{|Omega|<...}
\textstyle
\F G=\F e_0\bigcup\Big(
 \bigcup_{\ell=\lceil\frac{1}{2}\mu_q(n)\rceil}^{\frac{n-1}{2}} 
 \bigcup_{L\in\Omega_{2\ell}}\tilde L^*\Big);
~~~~
|\Omega_{2\ell}|\le  n^{2\ell/\mu_q(n)}.
\end{equation}

\begin{lemma}\label{EX for C_u,u'}
If $g_q(\delta)-\frac{3}{4}-\frac{\log_q n}{\mu_q(n)}>0$, then
$\E(X)\le q^{-2\mu_q(n)
\big( g_q(\delta)-\frac{3}{4}-\frac{\log_q n}{\mu_q(n)}\big) +6}$.
\end{lemma}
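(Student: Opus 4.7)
The proof parallels that of Lemma \ref{EX for C_a,a'}, but instead of summing over a single vector $b\in\F G$ one now sums over pairs $(a,d)\in(\F G)^2$. The plan is to partition $a$ by the ideal--idempotent stratification of Eq.\eqref{|Omega|<...}, apply the counting bounds of Lemmas \ref{t |D|} and \ref{l D_a,d} to control $\E(X_{(a,d)})$ uniformly on each stratum, and then bound the number of admissible pairs by viewing $(a,d)$ as a word in the product code $\tilde L\times\tilde L$, which is balanced by Lemmas \ref{g-code balanced} and \ref{product balanced}.

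First I would dispose of the trivial strata. If $a=0$, then $(a,d)\in C_{1,b}$ forces $d=0$, so no nonzero pair contributes. If $a\in\F e_0\setminus\{0\}$, then $\w(a)=n$; and Exercise \ref{g(delta)>)} converts the hypothesis $g_q(\delta)>\tfrac{3}{4}$ into $\delta<\tfrac{1}{4}(1-q^{-1})<\tfrac{1}{2}$, so $\w(a,d)\ge n>2n\delta$, again contributing zero. Hence the sum defining $\E(X)$ is supported on $a\in\bigcup_{\ell\ge\lceil\mu_q(n)/2\rceil}\bigcup_{L\in\Omega_{2\ell}}\tilde L^{*}$.

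Next, fix $a\in\tilde L^{*}$ with $\dim\tilde L=2\ell+1$. Since $q\not\equiv 3\pmod{4}$ gives $s_0\ge 1$, Lemma \ref{t |D|} yields $|{\cal D}|\ge q^{(n-1)/2-2}$, while Lemma \ref{l D_a,d} gives $|{\cal D}_{(a,d)}|\le q^{(n-1)/2-\ell+3}$ whenever the latter is nonempty; consequently $\E(X_{(a,d)})\le q^{5-\ell}$. Nonemptiness forces $d=ab$ for some unit $b$, so by commutativity $d\in\F Ga\subseteq\tilde L_a=\tilde L$, and every contributing pair therefore lies in $\tilde L\times\tilde L$ with $\w(a,d)\le 2n\delta$. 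Since $\tilde L$ is an ideal of $\F G$, Lemma \ref{g-code balanced} makes it a balanced code of information length $2\ell+1$, Lemma \ref{product balanced} makes $\tilde L\times\tilde L$ balanced of information length $4\ell+2$, and Corollary \ref{c balanced} then gives $\bigl|(\tilde L\times\tilde L)^{\le\delta}\bigr|\le q^{(4\ell+2)h_q(\delta)}$. Combining,
\[
\sum_{a\in\tilde L^{*}}\ \sum_{d}\ \E(X_{(a,d)})\ \le\ q^{5-\ell+(4\ell+2)h_q(\delta)}.
\]

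Finally I would sum over $L\in\Omega_{2\ell}$ and then over $\ell$, using $|\Omega_{2\ell}|\le n^{2\ell/\mu_q(n)}$ from Eq.\eqref{|Omega|<...} together with $h_q(\delta)=1-g_q(\delta)$ to rewrite the exponent as
\[
7-2g_q(\delta)-4\ell\!\left(g_q(\delta)-\tfrac{3}{4}-\tfrac{\log_q n}{2\mu_q(n)}\right).
\]
Under the hypothesis the coefficient of $\ell$ is strictly negative, so the sum over the at-most-$n/2$ values of $\ell$ is controlled, up to a factor of $q^{\log_q n}$, by its value at $\ell=\lceil\mu_q(n)/2\rceil$; invoking $g_q(\delta)>\tfrac{3}{4}$ (so that $7-2g_q(\delta)<\tfrac{11}{2}<6$) produces the desired bound. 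The main obstacle I anticipate is the choice of the ambient code used with Corollary \ref{c balanced}: the ideal $I_a$ itself has dimension that fluctuates within each $\widehat E^{\dag}$-stratum because of the asymmetry between $e_{r+j}$ and $\overline e_{r+j}$, so one is forced to enlarge $I_a$ to $\tilde L_a=\F e_0\oplus L_a$, whose dimension is the uniform value $2\ell+1$. The resulting doubling to information length $4\ell+2$ in the product $\tilde L\times\tilde L$ is precisely what forces the threshold $\tfrac{3}{4}$ (rather than the $\tfrac{1}{2}$ of Lemma \ref{EX for C_a,a'}) in the hypothesis on $g_q(\delta)$.
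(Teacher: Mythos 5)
Your proof is correct and follows essentially the same route as the paper's: dispose of the $\F e_0$ stratum, use Lemmas \ref{t |D|} and \ref{l D_a,d} to get the uniform bound $\E(X_{(a,d)})\le q^{5-\ell}$ on the stratum $\ell_a=\ell$, bound the number of contributing pairs via the balanced code $\tilde L\times\tilde L$ and Corollary \ref{c balanced}, then sum over $\Omega_{2\ell}$ and $\ell$. The only (cosmetic) differences are that the paper indexes the inner sum by $(a,d)\in\tilde L^*\times\tilde L^*$ while you index by $a\in\tilde L^*$ alone, and the paper absorbs the constant $5+2h_q(\delta)<6$ a step earlier than your $7-2g_q(\delta)<6$; both are equivalent.
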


\begin{proof}
For $(a,d)\in(\F G\times\F G)^{\le\delta}$, if $d\notin \tilde L_a$, 
then $\widehat E^\dag_d\ne \widehat E^\dag_a$, 
hence $E_d\ne E_a$ and, by Lemma \ref{l D_a,d},  $\E(X_{(a,d)})=0$. 
Next, if $a\in\F e_0$ then, similarly to Eq.\eqref{e E(X)}, 
we can get that $\E(X_{(a,d)})=0$. 
Thus, by Eq.\eqref{|Omega|<...}, we can rewrite Eq.\eqref{sum X_a} as: 
\begin{equation*}
\textstyle X
 =\sum_{\ell=\lceil\frac{1}{2}\mu_q(n)\rceil}^{\frac{n-1}{2}}
   \sum_{L\in\Omega_{2\ell}}
 \sum_{(a,d)\in(\tilde L^*\times\tilde L^*)^{\le\delta}}X_{(a,d)}.
\end{equation*}
Let $L\in\Omega_{2\ell}$. Then $\dim(\tilde L\times\tilde L)=2(2\ell+1)=4\ell+2$
and, by Lemma \ref{g-code balanced} and Lemma \ref{product balanced}, 
$\big|(\tilde L^*\times\tilde L^*)^{\le\delta}\big|\le
 \big|(\tilde L\times\tilde L)^{\le\delta}\big|\le q^{(4\ell+2)h_q(\delta)}$. 
For $(a,d)\in\tilde  L^*\times\tilde L^*$, by Eq.\eqref{e tilde L}, $\ell_a=\ell$.
By Eq.\eqref{D_a,d}, Lemma \ref{l D_a,d},
Lemma \ref{t |D|} and Remark \ref{r self-dual exist} (which shows that $s_0\ge 1$),
we have
\begin{align*}
&\textstyle\sum_{(a,d)\in(\tilde L^*\times\tilde L^*)^{\le\delta}}\E(X_{(a,d)})
= \sum_{(a,d)\in(\tilde L^*\times\tilde L^*)^{\le\delta}}
 |{\cal D}_{(a,d)}|/|{\cal D}|\\
&\textstyle\le\sum_{(a,d)\in(\tilde L^*\times\tilde L^*)^{\le\delta}}
 q^{\frac{n-1}{2}-\ell_a+3}\big/q^{\frac{n-1}{2}-2}
=\big|(\tilde L^*\times\tilde L^*)^{\le\delta}\big|\cdot q^{5-\ell}\\
&\le q^{(4\ell+2)h_q(\delta)}\cdot q^{-\ell+5}
 =q^{4\ell h_q(\delta)-\ell+5+2h_q(\delta)}
 \le q^{-4\ell\big(g_q(\delta)-\frac{3}{4}\big)+6},
 \end{align*}
where $2h_q(\delta)< 1/2$ because $g_q(\delta)>3/4$. Thus
\begin{align*}
\E(X)&=\textstyle\sum_{\ell=\lceil\frac{1}{2}\mu_q(n)\rceil}^{\frac{n-1}{2}}
   \sum_{L\in\Omega_{2\ell}}
 \sum_{(a,d)\in(\tilde L^*\times\tilde L^*)^{\le\delta}}\E(X_{(a,d)})\\
&\textstyle\le\sum_{\ell=\lceil\frac{1}{2}\mu_q(n)\rceil}^{\frac{n-1}{2}}
   \sum_{L\in\Omega_{2\ell}}
   q^{-4\ell\big(g_q(\delta)-\frac{3}{4}\big)+6}\\
&\textstyle =\sum_{\ell=\lceil\frac{1}{2}\mu_q(n)\rceil}^{\frac{n-1}{2}}
 |\Omega_{2\ell}|\cdot q^{-4\ell\big(g_q(\delta)-\frac{3}{4}\big)+6}.
\end{align*}
By Eq.\eqref{|Omega|<...},
\begin{align*}
\E(X)&\textstyle\le\sum_{\ell=\lceil\frac{1}{2}\mu_q(n)\rceil}^{\frac{n-1}{2}}
 n^{2\ell/\mu_q(n)}q^{-4\ell\big(g_q(\delta)-\frac{3}{4}\big)+6}\\
 &\textstyle=\sum_{\ell=\lceil\frac{1}{2}\mu_q(n)\rceil}^{\frac{n-1}{2}}
    q^{-4\ell\big(g_q(\delta)-\frac{3}{4}-\frac{\log_q n}{2\mu_q(n)}\big)+6}.
\end{align*}
Since $g_q(\delta)-\frac{3}{4}-\frac{\log_q n}{2\mu_q(n)}>0$ and
$4\ell\ge 2\mu_q(n)$,
\begin{align*}
\E(X)&\textstyle\le
\sum_{\ell=\lceil\frac{1}{2}\mu_q(n)\rceil}^{\frac{n-1}{2}}
    q^{-2\mu_q(n)\big(g_q(\delta)-\frac{3}{4}-\frac{\log_q n}{2\mu_q(n)}\big)+6}\\
 &\textstyle\le n
 q^{-2\mu_q(n)\big(g_q(\delta)-\frac{3}{4}-\frac{\log_q n}{2\mu_q(n)}\big)+6}
 =q^{-2\mu_q(n)\big(g_q(\delta)-\frac{3}{4}-\frac{\log_q n}{\mu_q(n)}\big)+6}.
\end{align*}
We are done.
\end{proof}

By Lemma \ref{l n_1,...} and Remark \ref{r n_1, ...}, we have
odd positive integers $n_1,n_2,\cdots$ coprime to $q$ such that  
\begin{equation}\label{odd n_1, ...}
 \lim\limits_{i\to\infty}\textstyle\frac{\log_q n_i}{\mu_q(n_i)}=0.
\end{equation}

\begin{theorem}\label{t C_u,u'}
Assume that $q\;{\not\equiv}\,3\;({\rm mod}\,4)$,
$\delta\in(0,1-\frac{1}{q})$ and $g_q(\delta)>\frac{3}{4}$,  
and integers $n_1,n_2,\cdots$ are as above.
For $i=1,2,\cdots$,
let $G_i$ be any abelian group of order $n_i$, and 
 $C^{(i)}_{1,b}$ be the random self-dual 2-quasi-$\F G_i$ codes
 defined in Eq.\eqref{e C_1,b}. Then
$\lim\limits_{i\to\infty}\Pr\big(\Delta(C^{(i)}_{1,b})>\delta\big)=1$.
\end{theorem}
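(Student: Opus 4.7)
The plan is to run the same first-moment argument as in the proof of Theorem \ref{t C_a,a'}(1), now invoking Lemma \ref{EX for C_u,u'} in place of Lemma \ref{EX for C_a,a'}; the threshold condition jumps from $g_q(\delta)>\frac12$ to $g_q(\delta)>\frac34$, reflecting the extra rigidity forced by self-duality. The assumption $q\not\equiv 3\pmod 4$ guarantees, via Corollary \ref{c self-dual exist}, that ${\cal D}\ne\emptyset$, so the probability space attached to each $G_i$ is non-empty and the statement is meaningful.

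First, combining $g_q(\delta)>\tfrac{3}{4}$ with $\lim_{i\to\infty}\tfrac{\log_q n_i}{\mu_q(n_i)}=0$, I pick $\varepsilon>0$ and an index $i_0$ so that
$$ g_q(\delta)-\tfrac{3}{4}-\frac{\log_q n_i}{\mu_q(n_i)}>\varepsilon, \qquad i\ge i_0. $$
In particular, for every $i\ge i_0$ the hypothesis of Lemma \ref{EX for C_u,u'} is in force for $n=n_i$. Next I note that $n_i\to\infty$ (otherwise $\mu_q(n_i)\le n_i$ would stay bounded while $\log_q n_i>0$ for $n_i>1$, preventing the ratio from tending to zero), and since $\log_q n_i\to\infty$, the hypothesis forces $\mu_q(n_i)\to\infty$.

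Applying Lemma \ref{EX for C_u,u'} to the random variable $X$ defined in Eq.\eqref{sum X_a} on the probability space ${\cal D}$ attached to $\F G_i$ gives
$$ \E(X)\;\le\; q^{-2\mu_q(n_i)\bigl(g_q(\delta)-\frac{3}{4}-\frac{\log_q n_i}{\mu_q(n_i)}\bigr)+6} \;\le\; q^{-2\mu_q(n_i)\varepsilon+6}\;\longrightarrow\;0. $$
Combining this with Eq.\eqref{2 E(X)} (Markov's inequality applied to $X\ge 1$) yields
$$ \Pr\bigl(\Delta(C^{(i)}_{1,b})\le\delta\bigr)\;=\;\Pr(X\ge 1)\;\le\;\E(X)\;\longrightarrow\;0, $$
which is exactly the complement of the claim.

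The real work is not in this final theorem but has already been done upstream. The delicate piece is Lemma \ref{l D_a,d}, whose case analysis over the four types of primitive idempotent (the trivial $e_0$, the bar-fixed $e_i$'s, and the bar-swapped pairs $\{e_{r+j},\overline e_{r+j}\}$) yields the bound $|{\cal D}_{(a,d)}|\le q^{(n-1)/2-\ell_a+3}$, which together with the lower bound $|{\cal D}|\ge s_0 q^{(n-1)/2-2}$ from Lemma \ref{t |D|} and the stratification in Eq.\eqref{|Omega|<...} produces the exponent appearing in Lemma \ref{EX for C_u,u'}. Once those lemmas are granted, the present theorem is a one-paragraph limit calculation.
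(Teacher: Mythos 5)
Your proof is correct and follows essentially the same first-moment argument as the paper: pick $\varepsilon>0$ so that $g_q(\delta)-\tfrac34-\tfrac{\log_q n_i}{\mu_q(n_i)}>\varepsilon$, apply Lemma~\ref{EX for C_u,u'} to get $\E(X)\le q^{-2\mu_q(n_i)\varepsilon+6}\to 0$, and conclude via Eq.\eqref{2 E(X)}. Your minor refinement of requiring the inequality only for $i\ge i_0$ (rather than all $i$) is a slight tightening of the paper's wording but changes nothing substantive.
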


\begin{proof}
Take a positive real number $\varepsilon$ such that
$$\textstyle
 g_q(\delta)-\frac{3}{4}-\frac{\log_q n_i}{\mu_q(n_i)}>\varepsilon,
 ~~~~ i=1,2,\cdots.
$$
Note that $\mu_q(n_i)\to\infty$. By Eq.\eqref{2 E(X)} and Lemma \ref{EX for C_u,u'},
\begin{align*}
\lim\limits_{i\to\infty}\Pr\big(\Delta(C^{(i)}_{1,b})\le\delta\big)
&=\lim\limits_{i\to\infty}\Pr(X\ge 1)
\le\lim\limits_{i\to\infty}\E(X)\\
&\le\lim\limits_{i\to\infty}
q^{- 2\mu_q(n_i) \big(g_q(\delta)-\frac{3}{4}-\frac{\log_q n_i}{\mu_q(n_i)}\big)+6}=0.
\end{align*}
We are done.
\end{proof}

\begin{theorem}\label{self-dual 2-quasi good}
Assume that $q\,{\not\equiv}\,3\;({\rm mod}\,4)$,
$\delta\in(0,1-\frac{1}{q})$ and $g_q(\delta)>\frac{3}{4}$, 
and positive odd integers $n_1,n_2,\cdots$ satisfy that  
$\lim\limits_{i\to\infty}\frac{\log_q n_i}{\mu_q(n_i)}=0$.
Let $G_i$ for $i\!=\!1,2,\cdots$ be abelian groups of order~$n_i$.
Then for $i=1,2,\cdots$ there exist self-dual 2-quasi-$\F G_i$ codes $C^{(i)}$
such that
$\Delta(C^{(i)})>\delta$ for all $i=1,2,\cdots$.
\end{theorem}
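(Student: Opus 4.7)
The plan is to derive this existence statement directly from the probabilistic bound in Theorem~\ref{t C_u,u'}, via the standard first-moment (existence) argument: positive probability in a nonempty finite sample space implies the existence of a realizing sample. First I would confirm that the probability space $\mathcal{D}$ introduced in Remark~\ref{r self-dual exist} is nonempty for each $G_i$, which is guaranteed by Corollary~\ref{c self-dual exist} under the standing hypothesis $q\not\equiv 3\pmod 4$. Thus for each $i$ the random code $C^{(i)}_{1,b}$ is well-defined, with $b$ chosen uniformly from $\{b\in(\F G_i)^\times\mid b\overline{b}=-1\}$. By Lemma~\ref{l 2-quasi self-dual}(3), every sample $C^{(i)}_{1,b}$ is automatically self-dual: the relation $1\cdot\overline{1}+b\overline{b}=1-1=0$ holds by construction, while $\mathrm{Ann}_{\F G_i}(1,b)=\{d\in\F G_i\mid d=0\text{ and }db=0\}=0$ because the first coordinate is~$1$.

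Next I would invoke Theorem~\ref{t C_u,u'}, whose hypotheses coincide exactly with those of the present theorem, to conclude that $\lim_{i\to\infty}\Pr\bigl(\Delta(C^{(i)}_{1,b})>\delta\bigr)=1$. In particular there is a threshold $i_0$ such that for every $i\geq i_0$ this probability is strictly positive; hence there must exist at least one specific element $b_i\in\mathcal{D}$ for which the deterministic code $C^{(i)}:=C_{1,b_i}$ satisfies $\Delta(C^{(i)})>\delta$. By the self-duality observation of the previous paragraph, each such $C^{(i)}$ is already self-dual, so it is the desired code. For the finitely many small indices $i<i_0$, one simply reindexes the sequence $n_1,n_2,\ldots$ to discard these initial terms; the hypothesis $\lim_i \log_q n_i/\mu_q(n_i)=0$ is stable under reindexing, and the remaining sequence still satisfies all requirements of the theorem.

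The main ``obstacle,'' if any, is purely bookkeeping: one must verify that self-duality is a \emph{deterministic} property of every sample in $\mathcal{D}$ (not merely almost sure), so that existence of a distance-good sample automatically yields a self-dual distance-good code. This is immediate from the way $\mathcal{D}$ is defined, so no real obstacle arises. All the genuine analytic content has already been absorbed into Theorem~\ref{t C_u,u'}, whose proof rested on the expectation bound of Lemma~\ref{EX for C_u,u'}; the present theorem is, in effect, the derandomized corollary.
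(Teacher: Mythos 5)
Your proposal is correct and matches what the paper implicitly does: Theorem~\ref{self-dual 2-quasi good} is stated without separate proof precisely because it is the derandomization of Theorem~\ref{t C_u,u'}, in the way you describe (every sample $C_{1,b}$ with $(1,b)\in{\cal D}$ is deterministically self-dual by Lemma~\ref{l 2-quasi self-dual}(3), ${\cal D}\ne\emptyset$ by Corollary~\ref{c self-dual exist}, and positive probability of the event $\Delta>\delta$ gives existence). Your observation that one may need to drop finitely many initial indices so that the standing hypothesis $\mu_q(n_i)>2\log_q n_i$ holds and the probability is positive is a reasonable piece of bookkeeping that the paper glosses over, and the reindexing argument you give is the standard way to handle it.
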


Of course, we can take all the $G_i$ to be cyclic, so that:
{\em if $q\,{\not\equiv}\,3\,({\rm mod}\,4)$ 
then the self-dual 2-quasi-cyclic codes are asymptotically good}.

\subsection{Self-orthogonal quasi-$\F G$ codes of index $2$}

In this subsection we remove the assumption 
``$q\;{\not\equiv}\,3\;({\rm mod}\,4)$'' in Remark \ref{r self-dual exist}, 
and consider a  kind of self-orthogonal $2$-quasi-$\F G$ codes
of dimension $n-1$, which always exist and will be proved asymptotically good.

\begin{remark}\label{FG^dag}\rm
By Corollary \ref{c self-dual exist}, 
the existence of self-dual {2-quasi-$\F G$} codes is conditional. 
The obstacle for this existence is the $e_0$-components.
So, following $\widehat E^\dag$ defined in Eq.\eqref{e hat E^dag}, 
in this subsection we take the following notation.
\begin{itemize}
\item\vskip-5pt
$1^{\!\dag}=1-e_0$ as in Exercise \ref{ex C^bot},
i.e., $1^{\!\dag}=\sum_{e\in\widehat E^{\!\dag}}e$.
\item\vskip-5pt
$\F G^{\dag}=1^{\!\dag}(\F G)
 =\bigoplus_{e\in\widehat E^{\dag}}\F Ge$ 
which is a ring with identity $1^{\!\dag}$;\\
hence the multiplicative group 
$(\F G^{\dag})^\times=\prod_{e\in\widehat E^{\dag}}(\F Ge)^\times$.
\item\vskip-5pt
${\cal D}^{\dag}=
\{(1^{\!\dag},b^{\dag})\,|\,
 b^{\dag}\in(\F G^{\dag})^\times,\;
 b^{\!\dag}\overline{b^{\!\dag}}=-1^\dag\}$.
\item\vskip-5pt
$C_{1^{\!\dag}\!,b^{\dag}}=
 \{(a^{\dag}, a^{\dag}b^{\dag})\,|\,a^{\dag}\in\F G^\dag\}
 \le (\F G)^2$. By Lemma \ref{l 2-quasi self-dual},  
 $C_{1^{\!\dag}\!,b^{\dag}}$ is a self-orthogonal 2-quasi-$\F G$ code
 of dimension $n\!-\!1$ (as ${\rm Ann}(1^\dag\!,b^\dag)\!=\!\F e_0$).
\item\vskip-5pt
$X_{(a^{\!\dag}, d^{\dag})} =\begin{cases}1, & 
   (a^{\!\dag}, d^{\dag})\in C_{1^{\!\dag}\!,b^{\dag}}
    ~ \mbox{and}~1\!\le\!\w(a^{\!\dag}\!,d^{\dag})\!\le\! 2n\delta;\\
   0, & \mbox{otherwise};  \end{cases}$ ~
     $(a^{\!\dag}, d^{\dag})\in(\F G^{\dag})^2$.
\item\vskip-5pt
Set ${\cal D}^{\dag}_{(a^{\!\dag}, d^{\dag})}
=\{(1^{\!\dag},b^{\dag})\in{\cal D}^{\dag}\,|\,
 (1^{\!\dag},b^{\dag})\in C_{1^{\!\dag},b^{\dag}}\}$; then
\begin{equation*}
 \E(X_{(a^{\!\dag}, d^{\dag})})=
 |{\cal D}^{\dag}_{(a^{\!\dag}, d^{\dag})}|\big/|{\cal D}^{\dag}|.
\end{equation*}
\item\vskip-5pt
$X^\dag=\sum_{0\ne (a^{\!\dag}, d^{\dag})\in(\F G^\dag\times\F G^{\dag})^{\le\delta}}
 X_{(a^{\!\dag}, d^{\dag})}$;~ then
\begin{equation*}
 \Pr\big(\Delta(C_{1^{\!\dag},b^\dag})\le\delta\big)
=\Pr(X^\dag\!\ge 1)\le\E(X^\dag).
\end{equation*}
\end{itemize}
\end{remark}

Similarly to Lemma \ref{t |D|}
(just note that Case 1 in its proof is no longer present here, 
hence $s_0$ disappeared), we have

\begin{lemma}\label{t |D^dag|}
$|{\cal D^\dag}|=
 \prod_{i=1}^r (q^{k_i}+1)\prod_{j=1}^s(q^{k_{r+j}}-1)$; 
hence $|{\cal D}^\dag|\ge q^{\frac{n-1}{2}-2}$.
\end{lemma}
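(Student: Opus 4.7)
The plan is to mimic the proof of Lemma \ref{t |D|} almost verbatim, simply dropping the $e_0$-component analysis (the ``Case 1'' in that earlier argument), which is exactly what $\F G^\dag$ was designed to cut off.

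First I would decompose any $b^\dag \in \F G^\dag$ uniquely as
\[
 b^\dag = \sum_{i=1}^{r} b_i \;+\; \sum_{j=1}^{s} b_{r+j},
\]
with $b_i = b^\dag e_i \in \F Ge_i$ for $1\le i\le r$ and $b_{r+j} = b^\dag \widehat e_{r+j} \in \F G\widehat e_{r+j}$ for $1\le j\le s$, following Eq.\eqref{a=a_0+...}. Using the multiplicative orthogonality relations in Eq.\eqref{e hat E} and the fact that the bar map preserves each $\F Ge_i$ ($1\le i\le r$) and each $\F G\widehat e_{r+j}$ ($1\le j\le s$), the condition $b^\dag\overline{b^\dag} = -1^\dag$ splits into the component-wise equations
\[
 b_i\overline{b_i}=-e_i\ (1\le i\le r),\qquad
 b_{r+j}\overline{b_{r+j}}=-\widehat e_{r+j}\ (1\le j\le s),
\]
exactly as in Eq.\eqref{u_i,u_i'}, but without the $i=0$ constraint.

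Next I would count the number of solutions in each component using the results that were already established for Lemma \ref{t |D|}: by Exercise \ref{e X bar X=-1}(2), each factor $\F Ge_i$ ($1\le i\le r$) contributes $q^{k_i}+1$ choices for $b_i$, and by Exercise \ref{e X bar X=-1}(3), each factor $\F G\widehat e_{r+j}$ contributes $q^{k_{r+j}}-1$ choices for $b_{r+j}$. Multiplying these counts (with no ``Case 1'' factor of $s_0$) gives
\[
 |\mathcal{D}^\dag| = \prod_{i=1}^{r}(q^{k_i}+1)\prod_{j=1}^{s}(q^{k_{r+j}}-1).
\]
Note also that each $b^\dag$ obtained this way automatically lies in $(\F G^\dag)^\times$, since it is a unit in every component of $\F G^\dag = \bigoplus_{e\in\widehat E^\dag}\F Ge$.

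Finally, for the lower bound I would apply Exercise \ref{e q^k-1}(1) to the product $\prod_{e\in\widehat E^\dag}(q^{k_e}-1) \le |\mathcal{D}^\dag|$ (using $q^{k_{r+j}}-1 \le q^{k_{r+j}}+1$ in the $r$ factors of the first product), which yields
\[
 |\mathcal{D}^\dag| \;\ge\; q^{\,-2+\sum_{e\in\widehat E^\dag} k_e} \;=\; q^{\frac{n-1}{2}-2},
\]
where the last equality is the dimension identity in Eq.\eqref{d k_i k_r+j}. The hypothesis $k_e \ge \lceil\tfrac12\mu_q(n)\rceil > \log_q n \ge \log_q m$ needed to invoke Exercise \ref{e q^k-1}(1) is guaranteed by the standing assumption $\mu_q(n) > 2\log_q n$. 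No step here is an obstacle; the only point to be careful about is making sure the applicability hypothesis of Exercise \ref{e q^k-1}(1) is met, which is why the standing assumption on $\mu_q(n)$ was set up.
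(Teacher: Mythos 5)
Your proposal is correct and follows exactly the route the paper intends: the paper's proof is a one-line pointer saying to repeat the argument of Lemma~\ref{t |D|} with Case~1 (the $e_0$-component) deleted, and you have simply carried that out in full. The only cosmetic slip is in the final paragraph, where the replacement you use is $q^{k_i}-1\le q^{k_i}+1$ for the $i=1,\dots,r$ factors (you typed $k_{r+j}$), but the computation itself is right and the verification that $k_e>\log_q(r+s)$ so that Exercise~\ref{e q^k-1}(1) applies is exactly the point the standing hypothesis $\mu_q(n)>2\log_q n$ is there to ensure.
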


Thus the self-orthogonal 2-quasi-$\F G$ codes
 of dimension $n\!-\!1$ always exist. 
 
Also, similarly to Lemma \ref{l D_a,d}
(Case 1 in its proof is no longer necessary here, 
hence a constant is modified), we have

\begin{lemma}\label{l D_a,d^dag}
Let $(a^{\!\dag}, d^{\dag})\in(\F G^{\dag})^2$. 
If ${\cal D}^{\dag}_{(a^{\!\dag}, d^{\dag})}\ne \emptyset$ then
$\widehat E^{\dag}_{a^{\!\dag}}=\widehat E^{\dag}_{d^{\dag}}$ 
 and $|{\cal D}^{\dag}_{(a^{\!\dag}, d^{\dag})}|
  \le q^{\frac{n-1}{2}-\ell_{a^{\!\dag}}+2}$.
\end{lemma}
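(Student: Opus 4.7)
The plan is to imitate the case analysis of Lemma \ref{l D_a,d} almost verbatim, noting that the only structural change is the disappearance of the ``$i=0$'' case (Case~1 in that proof), since we have restricted the ambient ring from $\F G$ to $\F G^\dag=\bigoplus_{e\in\widehat E^\dag}\F Ge$. I expect no new conceptual obstacle; the bookkeeping simply drops a factor that was coming from the $e_0$-component.

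First I would fix any $(1^\dag,g^\dag)\in{\cal D}^\dag_{(a^\dag,d^\dag)}$; then $d^\dag=a^\dag g^\dag$, and since $g^\dag\in(\F G^\dag)^\times$ has non-zero $e$-component for every $e\in\widehat E^\dag$, multiplication by $g^\dag$ preserves the $\widehat E^\dag$-support, giving $\widehat E^\dag_{a^\dag}=\widehat E^\dag_{d^\dag}$. The remaining task is to count $b^\dag\in(\F G^\dag)^\times$ with $b^\dag\overline{b^\dag}=-1^\dag$ and $a^\dag b^\dag=a^\dag g^\dag$. Decomposing via Eq.\eqref{a=a_0+...}--Eq.\eqref{ab=...} and Eq.\eqref{u_i,u_i'}, the conditions split into independent equations $a_i b_i=a_i g_i$ with $b_i\overline{b_i}=-e_i$ for $i=1,\ldots,r$, and $a_{r+j}b_{r+j}=a_{r+j}g_{r+j}$ with $b_{r+j}\overline{b_{r+j}}=-\widehat e_{r+j}$ for $j=1,\ldots,s$.

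For each $e\in\widehat E^\dag$ I would apply exactly the same subcase distinction as in Cases~2--3 of the proof of Lemma \ref{l D_a,d}: if $e\in\widehat E^\dag_{a^\dag}$ then the $e$-component of $a^\dag$ is a unit of $\F Ge$, so $b_e$ is uniquely determined by $a_e b_e=a_e g_e$; otherwise the $e$-component of $a^\dag$ vanishes, and the number of admissible $b_e$ equals the number of solutions of $b_e\overline{b_e}=-e$, which by Exercise~\ref{e X bar X=-1}(2)--(3) is $q^{k_e}+1$ when $e=e_i$ and $q^{k_{r+j}}-1$ when $e=\widehat e_{r+j}$. Multiplying over $e\in\widehat E^\dag\setminus\widehat E^\dag_{a^\dag}$, using $q^{k_{r+j}}-1<q^{k_{r+j}}+1$, and invoking Exercise~\ref{e q^k-1}(2), I get
\begin{equation*}
|{\cal D}^\dag_{(a^\dag,d^\dag)}|\le\prod_{e\in\widehat E^\dag\setminus\widehat E^\dag_{a^\dag}}(q^{k_e}+1)\le q^{\,2+\sum_{e\in\widehat E^\dag\setminus\widehat E^\dag_{a^\dag}}k_e}=q^{\frac{n-1}{2}-\ell_{a^\dag}+2},
\end{equation*}
where the last equality uses $\sum_{e\in\widehat E^\dag}k_e=\frac{n-1}{2}$ from Eq.\eqref{d k_i k_r+j} together with $\ell_{a^\dag}=\sum_{e\in\widehat E^\dag_{a^\dag}}k_e$ from Eq.\eqref{e L_a}. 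The exponent drops by one compared with Lemma \ref{l D_a,d} precisely because the prefactor $t_0\le q$ coming from the $e_0$-component of the original proof is no longer present, which accounts for the announced bound $q^{\frac{n-1}{2}-\ell_{a^\dag}+2}$.
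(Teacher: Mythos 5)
Your proposal is correct and follows exactly the route the paper indicates: the paper's proof of this lemma consists of the single remark that it goes as in Lemma \ref{l D_a,d} with Case~1 (the $e_0$-component) removed, which drops the factor $t_0\le q$ and lowers the exponent from $3$ to $2$. Your write-up spells out precisely that bookkeeping, so it matches the intended argument.
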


Similar to Lemma \ref{EX for C_u,u'}, we have

\begin{lemma}\label{EX for C_1,b^dag}
If $g_q(\delta)-\frac{3}{4}-\frac{\log_q n}{\mu_q(n)}>0$, then
$\E(X^{\dag})\le q^{-2\mu_q(n)
\big( g_q(\delta)-\frac{3}{4}-\frac{\log_q n}{\mu_q(n)}\big) +4}$.
\end{lemma}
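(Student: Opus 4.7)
The plan is to mirror the argument of Lemma \ref{EX for C_u,u'}, simply replacing every self-dual ingredient by its dagger-analogue. Since we now live inside $\F G^\dag = 1^\dag(\F G) = \bigoplus_{e\in\widehat E^\dag}\F Ge$, there is no $\F e_0$ summand to strip off at the start; this simplifies the partition slightly, but otherwise the structure is identical. First, I would note that if ${\cal D}^{\dag}_{(a^\dag,d^\dag)}\ne\emptyset$, then by Lemma \ref{l D_a,d^dag} the components $\widehat E^\dag_{a^\dag}$ and $\widehat E^\dag_{d^\dag}$ agree, so only pairs $(a^\dag,d^\dag)$ with $d^\dag\in L_{a^\dag}$ can contribute a nonzero expectation. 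This lets me rewrite
\begin{equation*}
X^\dag \;=\; \sum_{\ell=\lceil\mu_q(n)/2\rceil}^{(n-1)/2}\;\sum_{L\in\Omega_{2\ell}}\;\sum_{(a^\dag,d^\dag)\in(L^*\times L^*)^{\le\delta}} X_{(a^\dag,d^\dag)},
\end{equation*}
where $L^* = \{a^\dag\in L\mid L_{a^\dag}=L\}$, exactly as in Eq.\eqref{|Omega|<...} but without the extra $\F e_0$ summand.

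Next, I would estimate each inner sum. For $L\in\Omega_{2\ell}$ the product $L\times L$ is a balanced code of dimension $4\ell$ (via Lemmas \ref{g-code balanced} and \ref{product balanced}), so Corollary \ref{c balanced} yields $|(L^*\times L^*)^{\le\delta}|\le q^{4\ell h_q(\delta)}$. For $(a^\dag,d^\dag)\in L^*\times L^*$ we have $\ell_{a^\dag}=\ell$, and combining Lemmas \ref{l D_a,d^dag} and \ref{t |D^dag|} gives
\begin{equation*}
\E(X_{(a^\dag,d^\dag)}) \;=\; \frac{|{\cal D}^\dag_{(a^\dag,d^\dag)}|}{|{\cal D}^\dag|} \;\le\; \frac{q^{(n-1)/2-\ell+2}}{q^{(n-1)/2-2}} \;=\; q^{-\ell+4}.
\end{equation*}
The crucial arithmetic identity $4\ell h_q(\delta) - \ell = -4\ell\bigl(g_q(\delta)-\tfrac34\bigr)$ then gives $\sum_{(a^\dag,d^\dag)\in(L^*\times L^*)^{\le\delta}}\E(X_{(a^\dag,d^\dag)}) \le q^{-4\ell(g_q(\delta)-3/4)+4}$. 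Notice that the self-dual case had an extra factor $q^{2h_q(\delta)}$ coming from the dimension $4\ell+2$ of $(\tilde L\times\tilde L)$ and worse constants in the $|{\cal D}|$ bounds, which together produced the $+6$ there; here the $e_0$-free setting trims these down to $+4$.

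Finally, I would invoke the bound $|\Omega_{2\ell}|\le n^{2\ell/\mu_q(n)}$ and rewrite $n^{2\ell/\mu_q(n)} = q^{2\ell\log_q n/\mu_q(n)}$ to absorb the $\Omega_{2\ell}$-count into the exponent, giving
\begin{equation*}
\E(X^\dag) \;\le\; \sum_{\ell=\lceil\mu_q(n)/2\rceil}^{(n-1)/2} q^{-4\ell\bigl(g_q(\delta)-\tfrac34-\tfrac{\log_q n}{2\mu_q(n)}\bigr)+4}.
\end{equation*}
Because the hypothesis $g_q(\delta)-\tfrac34-\tfrac{\log_q n}{\mu_q(n)}>0$ makes the bracketed coefficient strictly positive, and $4\ell\ge 2\mu_q(n)$ in every term, each summand is at most $q^{-2\mu_q(n)(g_q(\delta)-3/4-\log_q n/(2\mu_q(n)))+4}$. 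There are fewer than $n=q^{\log_q n}$ terms, and this extra $q^{\log_q n}$ exactly converts $\tfrac{\log_q n}{2\mu_q(n)}$ into $\tfrac{\log_q n}{\mu_q(n)}$ in the exponent, yielding the claimed inequality. There is no real obstacle here: the argument is a routine adaptation, and the only thing to be careful about is matching the constants $+2$ in Lemma \ref{l D_a,d^dag} and $-2$ in Lemma \ref{t |D^dag|} correctly to land on the improved constant $+4$ rather than $+6$.
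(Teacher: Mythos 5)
Your proposal is correct and follows essentially the same route as the paper: the paper's own proof says explicitly that it mirrors Lemma \ref{EX for C_u,u'} with the partition taken over $L^*$ instead of $\tilde L^*$, the bound $|(L^*\times L^*)^{\le\delta}|\le q^{4\ell h_q(\delta)}$, the ratio $\E(X_{(a^\dag,d^\dag)})\le q^{-\ell+4}$ from Lemmas \ref{l D_a,d^dag} and \ref{t |D^dag|}, and the same final summation argument. Your accounting of where the improved constant $+4$ (versus $+6$) comes from --- the dimension $4\ell$ rather than $4\ell+2$ and the $+2$ rather than $+3$ in the fiber-size bound --- is accurate.
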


\begin{proof} It is similarly to the proof of Lemma \ref{EX for C_u,u'} 
but with a little revision. 
Recall that $\Omega_{2\ell}$ is defined in Eq.\eqref{e Omega} and 
 $|\Omega_{2\ell}|\le  n^{2\ell/\mu_q(n)}$. Then
\begin{equation*}
\textstyle
\F G^\dag=
 \bigcup_{\ell=\lceil\frac{1}{2}\mu_q(n)\rceil}^{\frac{n-1}{2}} 
 \bigcup_{L\in\Omega_{2\ell}}L^*,~~~\mbox{where}~~
 L^*=\{a^\dag\in L\,|\,L_{a^{\!\dag}}=L\}.
\end{equation*}
In other words, instead of $\tilde L^*$, here we need $L^*$.
So
\begin{equation*}
\textstyle X^\dag
 =\sum_{\ell=\lceil\frac{1}{2}\mu_q(n)\rceil}^{\frac{n-1}{2}}
   \sum_{L\in\Omega_{2\ell}}
 \sum_{(a^{\!\dag},d^{\dag})\in( L^{\!*}\times L^{\!*})^{\le\delta}} 
    X_{(a^{\!\dag},d^{\dag})}. 
\end{equation*}
For $L\in\Omega_{2\ell}$, $\dim L=2\ell$, 
hence $\big|(L^{\!*}\times L^{\!*})^{\le\delta}\big|\le
 \big|(L\times L)^{\le\delta}\big|\le q^{4\ell h_q(\delta)}$; 
and
\begin{align*}
&\textstyle\sum_{(a^{\!\dag},d^{\dag})\in (L^{\!*}\times L^{\!*})^{\le\delta}}
\E(X_{(a^{\!\dag},d^{\dag})})
 \textstyle\le 
 \big|(L^*\times L^*)^{\le\delta}\big|\cdot 
 q^{\frac{n-1}{2}-\ell+2}\big/q^{\frac{n-1}{2}-2}\\
&\le q^{4\ell h_q(\delta)}\cdot q^{-\ell+4}
 =q^{4\ell h_q(\delta)-\ell+4}
 \le q^{-4\ell\big(g_q(\delta)-\frac{3}{4}\big)+4}.
 \end{align*}
And
\begin{align*}
\E(X^{\dag})&=\textstyle\sum_{\ell=\lceil\frac{1}{2}\mu_q(n)\rceil}^{\frac{n-1}{2}}
   \sum_{L\in\Omega_{2\ell}}
   \sum_{(a^{\!\dag},d^{\dag})\in(L^{\!*}\times L^{\!*})^{\le\delta}}
    \E(X_{(a^{\!\dag},d^{\dag})})\\
&\textstyle\le\sum_{\ell=\lceil\frac{1}{2}\mu_q(n)\rceil}^{\frac{n-1}{2}}
   \sum_{L\in\Omega_{2\ell}}
   q^{-4\ell\big(g_q(\delta)-\frac{3}{4}\big)+4}.
\end{align*}
The following computation is as the same as that for Lemma~\ref{EX for C_u,u'}, 
and finally we can get that 
$\E(X^{\dag})\le q^{-2\mu_q(n)
\big( g_q(\delta)-\frac{3}{4}-\frac{\log_q n}{\mu_q(n)}\big) +4}$.
\end{proof}

Then we can get the following two theorems similarly to
Theorem \ref{t C_u,u'} and Theorem \ref{self-dual 2-quasi good},
except that the existence of the codes in the two theorems is unconditional.

\begin{theorem}\label{t C_u^dag,u'^dag}
Assume that $\delta\in(0,1-\frac{1}{q})$ and $g_q(\delta)>\frac{3}{4}$,
and odd integers $n_1,n_2,\cdots$ are as in Eq.\eqref{odd n_1, ...}.
For $i=1,2,\cdots$,
let $G_i$ be any abelian group of order $n_i$, and 
 $C^{(i)}_{1^{\!\dag}\!,b^\dag}$ be the random 
 self-orthogonal 2-quasi-$\F G_i$ codes of dimension $n_i-1$ 
 as in Remark~\ref{FG^dag}. Then
$\lim\limits_{i\to\infty}
 \Pr\big(\Delta(C^{(i)}_{1^{\!\dag}\!,b^{\!\dag}})>\delta\big)=1$.
\end{theorem}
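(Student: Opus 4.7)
The plan is to mirror the proof of Theorem \ref{t C_u,u'} almost verbatim, substituting the self-orthogonal setup of Remark \ref{FG^dag} in place of the self-dual one. The heavy lifting has already been done: Lemma \ref{t |D^dag|} guarantees ${\cal D}^\dag \ne \emptyset$ for every admissible $n$ (so the code ensemble is nonempty unconditionally, removing the need for the parity hypothesis on $q$), and Lemma \ref{EX for C_1,b^dag} furnishes the crucial exponential decay estimate
\[
\E(X^\dag) \le q^{-2\mu_q(n_i)\bigl(g_q(\delta) - \tfrac{3}{4} - \frac{\log_q n_i}{\mu_q(n_i)}\bigr) + 4}.
\]

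First I would fix a positive real $\varepsilon$ such that
\[
g_q(\delta) - \tfrac{3}{4} - \frac{\log_q n_i}{\mu_q(n_i)} > \varepsilon, \qquad i = 1,2,\cdots,
\]
which is possible because $g_q(\delta) > 3/4$ is assumed and $\frac{\log_q n_i}{\mu_q(n_i)} \to 0$ by the hypothesis in Eq.\eqref{odd n_1, ...}. Plugging this lower bound into Lemma \ref{EX for C_1,b^dag} gives $\E(X^\dag) \le q^{-2\mu_q(n_i)\varepsilon + 4}$.

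Next I would use the Markov-type inequality already recorded in Remark \ref{FG^dag}, namely
\[
\Pr\bigl(\Delta(C^{(i)}_{1^{\!\dag}\!,b^{\dag}}) \le \delta\bigr) = \Pr(X^\dag \ge 1) \le \E(X^\dag).
\]
Since $\mu_q(n_i) \to \infty$ (a consequence of $\frac{\log_q n_i}{\mu_q(n_i)} \to 0$ together with $n_i \to \infty$), the bound $q^{-2\mu_q(n_i)\varepsilon + 4} \to 0$, so $\Pr(\Delta(C^{(i)}_{1^{\!\dag}\!,b^{\dag}}) \le \delta) \to 0$ and the desired limit $\Pr(\Delta(C^{(i)}_{1^{\!\dag}\!,b^{\dag}}) > \delta) \to 1$ follows.

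There is essentially no obstacle to overcome at this stage: all the delicate work—counting $|{\cal D}^\dag|$, bounding $|{\cal D}^\dag_{(a^{\!\dag},d^{\dag})}|$ via the $\widehat E^\dag$-decomposition, and assembling the expectation estimate through the stratification by $\Omega_{2\ell}$—has already been absorbed into Lemmas \ref{t |D^dag|}, \ref{l D_a,d^dag}, and \ref{EX for C_1,b^dag}. The only thing one must double-check is that the constant $4$ (rather than the $6$ appearing in Theorem \ref{t C_u,u'}) is harmless, which it clearly is since it is dominated by the exponentially shrinking factor $q^{-2\mu_q(n_i)\varepsilon}$. Thus the proof amounts to a two-line deduction from Lemma \ref{EX for C_1,b^dag} plus the Markov inequality.
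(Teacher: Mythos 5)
Your proposal is correct and is essentially identical to the paper's own treatment: the paper simply remarks that Theorem~\ref{t C_u^dag,u'^dag} follows ``similarly to Theorem \ref{t C_u,u'}'' and the intended argument is precisely the two-line Markov deduction you wrote, with Lemma~\ref{EX for C_1,b^dag} replacing Lemma~\ref{EX for C_u,u'} and the exponent's constant shifting from $6$ to $4$, which is harmless.
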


\begin{theorem}\label{self-orthogonal 2-quasi good}
Assume that $\delta$,  odd integers $n_1,n_2,\cdots$, and groups 
$G_1,G_2,\cdots$ are as in Theorem \ref{t C_u^dag,u'^dag}.
Then for $i=1,2,\cdots$ there exist self-orthogonal 2-quasi-$\F G_i$ codes $C^{(i)}$
such that $\lim\limits_{i\to\infty}\R(C^{(i)})=\frac{1}{2}$ and
$\Delta(C^{(i)})>\delta$ for all $i=1,2,\cdots$.
\end{theorem}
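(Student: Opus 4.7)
The plan is to derive the theorem by direct packaging of the random-code result of Theorem~\ref{t C_u^dag,u'^dag}, in exact parallel to how Theorem~\ref{self-dual 2-quasi good} is packaged from Theorem~\ref{t C_u,u'}. The heavy lifting (the counting bound of Lemma~\ref{l D_a,d^dag} and the expectation estimate of Lemma~\ref{EX for C_1,b^dag}, both relying on the balanced-code machinery of Section~\ref{balanced codes}) is already absorbed in that theorem, so the present argument is pure probabilistic deduction plus a dimension count.

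First I would invoke Lemma~\ref{t |D^dag|} to note that $|{\cal D}^\dag| \ge q^{(n_i-1)/2 - 2} > 0$ unconditionally on $q$, so for every $i$ the probability space ${\cal D}^\dag$ is non-empty and the random self-orthogonal $2$-quasi-$\F G_i$ codes $C^{(i)}_{1^\dag\!,b^\dag}$ are well-defined. Theorem~\ref{t C_u^dag,u'^dag} then yields $\lim\limits_{i\to\infty}\Pr\big(\Delta(C^{(i)}_{1^\dag\!,b^\dag}) > \delta\big) = 1$, so for every sufficiently large $i$ this probability is strictly positive. Consequently there is at least one sample $(1^\dag, b^\dag_i) \in {\cal D}^\dag$ whose associated code $C^{(i)} := C^{(i)}_{1^\dag\!,b^\dag_i}$ satisfies $\Delta(C^{(i)}) > \delta$. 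After discarding (or reindexing) the finitely many early $i$'s where the probability may not yet be positive, we have the required sequence for $i = 1, 2, \cdots$.

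Second I would read off the rate from Remark~\ref{FG^dag}: since ${\rm Ann}_{\F G_i}(1^\dag, b^\dag_i) = \F e_0$ and the ambient space $(\F G_i)^2$ has $\F$-dimension $2n_i$, each $C^{(i)}$ has $\F$-dimension $n_i - 1$. The hypothesis $\log_q n_i / \mu_q(n_i) \to 0$ forces $n_i \to \infty$ (along any bounded subsequence the ratio would take only finitely many strictly positive values, contradicting the limit being $0$). Hence
$$\R(C^{(i)}) = \frac{n_i - 1}{2n_i} \;\longrightarrow\; \frac{1}{2} \qquad (i\to\infty),$$
establishing the rate claim.

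There is no real obstacle at this stage: Theorem~\ref{t C_u^dag,u'^dag} has done all the work, and the rate assertion is a one-line arithmetic consequence of Remark~\ref{FG^dag}. The only mild subtlety worth flagging is that Lemma~\ref{t |D^dag|} carries no $s_0$ factor, so ${\cal D}^\dag$ is non-empty for \emph{every} prime power $q$; this is precisely why the present theorem, unlike Theorem~\ref{self-dual 2-quasi good}, requires no congruence restriction on $q$ modulo $4$.
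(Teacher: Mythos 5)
Your proposal is correct and follows exactly the route the paper intends: the paper itself states that this theorem and Theorem~\ref{t C_u^dag,u'^dag} follow ``similarly to Theorem~\ref{t C_u,u'} and Theorem~\ref{self-dual 2-quasi good}, except that the existence of the codes in the two theorems is unconditional,'' which is precisely the combination you make of Lemma~\ref{t |D^dag|} (nonemptiness of ${\cal D}^\dag$ for all $q$), Theorem~\ref{t C_u^dag,u'^dag} (positive-probability sample with $\Delta>\delta$), and the dimension count $\dim C_{1^{\!\dag}\!,b^\dag}=n-1$ from Remark~\ref{FG^dag} giving $\R(C^{(i)})=(n_i-1)/(2n_i)\to\frac12$. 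Your side remarks — that $n_i\to\infty$ is forced by the hypothesis, that early indices may need discarding, and that the absence of the $s_0$ factor is what removes the congruence restriction on $q$ — are all accurate and match the paper's implicit reasoning.
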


\begin{remark}\rm
In \cite{MW07} the binary self-dual 2-quasi-cyclic codes are proved to be
asymptotically good. 
The asymptotic goodness of self-dual quasi-cyclic codes (with index $\to\!\infty$)
were obtained in \cite{D}, \cite{LS}.  
The condition of the existence of self-dual 2-quasi-$\F G$ codes
was obtained in \cite{LS03} for cyclic $G$, and in \cite{LF} for abelian $G$. 
In the dissertation \cite{L PhD}, the asymptotic goodness of
the self-dual 2-quasi abelian codes (provided $q\,{\not\equiv}\,3\,({\rm mod}\,4)$) 
has been obtained by a probabilistic method. In \cite{AOS}, 
based on the Artin's primitive element conjecture,   
Alahmadi, \"Ozdemir and Sol\'e proved that 
the self-dual double circulant codes (a kind of 2-quasi cyclic codes)
are asymptotically good (if they exist); 
their method is by counting the codes. 
Recently, in \cite{LF} we introduced the 2-quasi-$\F G$ codes of type I,
which are just the double circulant codes once $G$ is cyclic; 
and, using a counting argument, we showed that 
the self-dual 2-quasi-abelian codes of type~I 
are asymptotically good (provided they exist). 
By a similar counting method, in \cite{LF}
we further proved the asymptotic goodness of 
the maximal self-orthogonal 2-quasi-abelian codes.
The two results of \cite{LF} are presented in this section 
but the arguments here are in the random style 
consistent with the whole chapter.
\end{remark}

\section{Dihedral group codes}\label{dihedral codes}

In this section we assume that:
\begin{itemize}
\item\vskip-5pt
 $\F$ is a finite field with $|\F|=q$;
\item \vskip-5pt
 $\delta\in(0,1-\frac{1}{q})$ and $g_q(\delta)>\frac{3}{4}$;
\item\vskip-5pt
 $n>1$ is an odd integer, $\gcd(n,q)=1$ and $\mu_q(n)>2\log_q n$;
\item\vskip-5pt
$\tilde G$ is a dihedral group of order $2n$, i.e., 
   $\tilde G$ has a normal cyclic subgroup $G\!=\!\{1,x,\cdots,x^{n-1}\}$ of order $n$  
  and  
  a $y\in\tilde G$ such that $y^2\!=\!1$, $yxy^{-1}\!=\!x^{-1}$; 
\item\vskip-5pt
$A=\F\tilde G$ is the group algebra of $\tilde G$ over $\F$.
\end{itemize}
\vskip-5pt
Then we have the following at once.
\begin{itemize}
\item \vskip-5pt
In $\tilde G$, $yG=Gy$, and $\tilde G=G\cup yG$ is a disjoint union; 
\item \vskip-5pt
$A=\F G\oplus \F Gy
=\big\{a(x)+b(x)y\,\big|\,a(x),b(x)\in\F G\big\}$
($\F G$-module decomposition), where
elements of $\F G$ are written as $a=a(x)=\sum_{i=0}^{n-1}a_ix^i$.
\end{itemize}
\vskip-5pt  
Because $yxy^{-1}=x^{-1}$, we have:
\begin{equation}\label{ya=bar a y}
 y a y^{-1}=\overline{a}~~(\mbox{equivalently,}~~
 ya=\overline a y,~ y\overline a= ay), ~~~ \forall~ a\in\F G.
\end{equation}  
where $a\mapsto\overline{a}$ is the bar map (Definition \ref{d bar map}).

Note that $G$ is a cyclic group of order $n$. 
The notation in Eq.\eqref{e E}---Eq.\eqref{d k_i k_r+j} 
is still applied to $\F G$. 
In particular, by Eq.\eqref{ya=bar a y}, any $e\in\widehat E$
is a central idempotent of $\F\tilde G$. 
There are two kinds of idempotents in $\widehat E$:
$e_i=\overline e_i$ for $0\le i\le r$, and
$\widehat e_{r+j}=e_{r+j}+\overline e_{r+j}$ for $1\le j\le s$. 
To simplify the story on dihedral codes,  
we'll consider one of the the two kinds by assuming that $s=0$. 

\bex[{\cite{KR}}]\label{e s=0 iff ...}
Let notation be as above, and $\langle q\rangle_n$ be the subgroup 
of ${\Bbb Z}_n^\times$ generated  by $q$, cf. Definition \ref{q-coset}.
Then $s=0$ if and only if $-1\in\langle q\rangle_n$. 
\hint{$\F Ge_t\cong\F[X]/\langle f_t(X)\rangle$ with 
$f_t(X)=\prod_{k\in Q_t}(X-\xi^k)$ as in Exercise \ref{cyclic mu(n)}(1);  
considering the zero-sets of codes, one can show that $\overline e_t=e_t$
(equivalently, $\overline{\F Ge_t}=\F Ge_t$) iff $-Q_t:=\{-k\,|\,k\in Q_t\}=Q_t$;  
if $-1\notin\langle q\rangle_n$, then $Q_g:=\langle q\rangle_n$ is itself a 
$q$-coset and $-Q_g\ne Q_g$.}
\eex

\begin{remark}\label{r s=0}\rm
In the following we further assume that $-1\in\langle q\rangle_n$
(there are infinitely many such integers $n$, see Eq.\eqref{-1 odd n_1, ... } below).
Then, by Exercise \ref{e s=0 iff ...}, $s=0$, and $1=e_0+e_1+\cdots+e_r$ with
$\overline e_i=e_i$ for $i=0,1,\cdots,r$ being 
all the primitive central idempotents of $\F G$; 
$e_ie_{i'}=0$ for $0\le i\ne i'\le r$;  
and, 
$$
 \F G=\F_0\oplus \F_1\oplus\cdots\oplus\F_r,
$$
where $\F_i=\F Ge_i$ for $0\le i\le r$ are finite fields. 
For $1\le i\le r$, $|\F_i|=q^{2k_i}$ and $|\F_i^\flat|=q^{k_i}$, 
where $\F_i^\flat=(\F Ge_i)^\flat$ is described in Lemma~\ref{l dim FG e_i}. 

Turn to $A=\F\tilde G$, 
by Lemma \ref{complete system idempotent},
\begin{equation}\label{e oplus A_i}
 A=\F\tilde G=A_0\oplus A_1\oplus\cdots\oplus A_r,  
~~~ \mbox{where $A_i=\F\tilde Ge_i$.} 
\end{equation}
By Lemma \ref{l orthogonal and bar}(3), 
$\langle A_i,A_{i'}\rangle=0$ for $0\le i\ne i'\le r$. 
So Eq.\eqref{e oplus A_i} is an orthogonal direct sum, 
hence the restriction of the inner product 
to any $A_i$, $0\le i\le r$, is non-degenerate. 
\end{remark}

We show the structures of all $A_i$ in Eq.\eqref{e oplus A_i}, 
construct our dihedral codes, then
exhibit their asymptotic properties.
We begin with $A_0$ whose behaviour is different from the others.

\begin{lemma}\label{l A_0=...}
For $e_0=\frac{1}{n}(1+x+\cdots+x^{n-1})$, set $e_{00}=e_0+e_0y$. 
Then $A_0=\F\tilde Ge_0=\F Ge_0\oplus\F Ge_0y$,
$C_0:=\F\tilde G e_{00}=\big\{\lambda\sum_{z\in\tilde G}z\,\big|\,\lambda\in\F\big\}
=\F e_{00}$ (``all-one code''), and the following hold:

{\rm(1)} If $q$ is odd, then $\frac{1}{2} e_{00}$ is a central idempotent, 
$\langle e_{00},e_{00}\rangle=\frac{2}{n}\ne 0$, 
and $\F\tilde Ge_0=\F\tilde G e_{00}\oplus\F\tilde Ge_{01}$,
where $e_{01}=e_0-e_0y$ and 
$\F\tilde Ge_{01} =\F e_{01}$.

{\rm(2)} If $q$ is even, then $e_{00}^2=0$, $\langle C_0,C_0\rangle=0$, and
$\F\tilde G e_0/\F\tilde G e_{00}\cong \F\tilde G e_{00}$.
\end{lemma}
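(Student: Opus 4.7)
The plan is to reduce everything to two algebraic facts about $e_0$: first, that $\bar e_0=e_0$ together with \eqref{ya=bar a y} forces $ye_0=e_0y$, so $y$ and $e_0$ commute; second, that $xe_0=e_0$ (and hence $ze_0=e_0$ for all $z\in G$). From these, $\F Ge_0=\F e_0$ is one-dimensional, and $\F\tilde G e_0=\F Ge_0+\F Ge_0\,y=\F e_0\oplus \F e_0y$ is two-dimensional, which gives $A_0=\F Ge_0\oplus \F Ge_0y$. For $C_0$, a direct computation using $xe_0=e_0$ and $y^2=1$ gives $x\cdot e_{00}=e_{00}$ and $y\cdot e_{00}=e_0y+e_0=e_{00}$, so $\F\tilde Ge_{00}=\F e_{00}$; and writing $e_0=\tfrac{1}{n}\sum_ix^i$ identifies $e_{00}=\tfrac{1}{n}\sum_{z\in\tilde G}z$, giving the ``all-one'' description.

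The central computation I will carry out once is $e_{00}^2=2e_{00}$. Expanding $(e_0+e_0y)^2$ and using $ye_0=e_0y$, $e_0^2=e_0$ and $y^2=1$, the four terms collapse to $e_0+e_0y+e_0y+e_0=2e_{00}$. Similarly, since $\bar e_0=e_0$ and $\overline{e_0y}=y\cdot e_0=e_0y$, I get $\overline{e_{00}}=e_{00}$.

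For part (1), with $q$ odd (so $2\in\F^{\times}$), the identity $e_{00}^2=2e_{00}$ makes $\tfrac{1}{2}e_{00}$ an idempotent, and the computations $xe_{00}=ye_{00}=e_{00}$ (and hence $e_{00}x=e_{00}y=e_{00}$, using $yx=x^{-1}y$ and $e_0x^{-1}=e_0$) put $e_{00}$ in $Z(\F\tilde G)$. For the inner product I invoke Exercise~\ref{e bar map}(3): $\langle e_{00},e_{00}\rangle=\sigma(e_{00}\overline{e_{00}})=\sigma(e_{00}^2)=\sigma(2e_{00})=\tfrac{2}{n}$, which is nonzero. To get the decomposition, I set $e_{01}=e_0-e_0y$ and check the symmetric calculations $e_{01}^2=2e_{01}$ and $e_{00}e_{01}=0$; together with $e_{00}+e_{01}=2e_0$, this exhibits $\tfrac{1}{2}e_{00},\tfrac{1}{2}e_{01}$ as orthogonal idempotents summing to $e_0$, whence $\F\tilde Ge_0=\F\tilde Ge_{00}\oplus \F\tilde Ge_{01}$. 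Finally, $xe_{01}=e_{01}$ and $ye_{01}=e_0y-e_0=-e_{01}$ show every element of $\tilde G$ acts on $e_{01}$ by a scalar, so $\F\tilde Ge_{01}=\F e_{01}$.

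For part (2), with $q$ even, the same identity $e_{00}^2=2e_{00}$ now reads $e_{00}^2=0$, and then $\langle\lambda e_{00},\mu e_{00}\rangle=\lambda\mu\,\sigma(e_{00}^2)=0$, giving $\langle C_0,C_0\rangle=0$. The quotient isomorphism is a dimension-count plus an action check: $\F\tilde Ge_0/\F\tilde Ge_{00}$ is one-dimensional, and I will show both it and $\F e_{00}$ are the trivial $\F\tilde G$-module. On $\F e_{00}$ this is immediate from $xe_{00}=ye_{00}=e_{00}$; on the quotient, $x\cdot[e_0]=[e_0]$ is clear and $y\cdot[e_0]=[e_0y]=[e_{00}-e_0]=[e_0]$ in characteristic~$2$, so the map $[e_0]\mapsto e_{00}$ extends to an $\F\tilde G$-isomorphism. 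The only point needing mild care is ensuring both modules carry the same action before asserting isomorphism; once that is verified the rest is bookkeeping.
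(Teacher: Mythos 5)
Your proof is correct and follows essentially the same route as the paper: the key identity $e_{00}^2=2e_{00}$ drives both parts, $e_{01}=e_0-e_0y$ gives the orthogonal idempotent decomposition in (1), and the inner-product computation via $\sigma$ matches the paper's claim. The only small variation is in (2), where the paper constructs the explicit surjection $\tilde a e_0\mapsto\tilde a e_{00}$ and reads off the kernel, while you identify both the quotient and $\F e_{00}$ as the trivial one-dimensional $\F\tilde G$-module; both arguments are valid and of comparable length.
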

\begin{proof}
(1). $e_{00}^2=2e_{0}+2e_{0}y=2e_{00}$; hence
 $(\frac{1}{2}e_{00})^2=\frac{1}{4}(2e_{00})=\frac{1}{2}e_{00}$.
And $\frac{1}{2}e_{00}$ is central, since 
$ye_{00}y^{-1}=ye_0y^{-1}+y(e_0y)y^{-1}=e_{00}$. 
And $e_0-\frac{1}{2}e_{00}=\frac{1}{2}e_{01}$. 
Finally, $ye_{01}=ye_0-ye_0y=-e_0+e_0y=-e_{01}$; hence
$\F\tilde Ge_{01}=\F e_{01}$. 

(2). $e_{00}^2=2e_{00}=0$. The map
$\F\tilde Ge_0\to \F\tilde e_{00}$, $\tilde ae_0\mapsto \tilde ae_{00}$, 
is a surjective homomorphism with kernel $\F\tilde Ge_{00}$.
\end{proof}

Keeping the notation in Remark \ref{r s=0}, we are going on for $1\le i\le r$. 
By ${\rm M}_2(\F)$ we denote the $2\times 2$ matrix algebra over $\F$.

\begin{lemma}\label{l A_i=...} 
$A_i\cong {\rm M}_2(\F_i^\flat)$ for $1\le i\le r$.  
 \end{lemma}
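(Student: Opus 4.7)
My plan is to realize $A_i$ concretely as a skew group algebra and then identify it with an endomorphism ring, following the standard pattern for crossed products of cyclic Galois extensions of degree $2$.

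First I would record the structure of $A_i$. Since $e_i$ is a central idempotent of $\F\tilde G$ (by Remark \ref{r s=0} and the assumption that $s=0$), and since $\tilde G = G\cup Gy$, one has
\begin{equation*}
A_i = \F\tilde G e_i = \F G e_i \,\oplus\, \F G e_i\cdot y = \F_i \,\oplus\, \F_i y,
\end{equation*}
with multiplication governed by $y^2=1$ and the relation $ya = \overline a\, y$ inherited from Eq.\eqref{ya=bar a y}. By Lemma \ref{l dim FG e_i}, the bar map restricts to an $\F$-automorphism of $\F_i$ of order exactly $2$, whose fixed subfield is $\F_i^\flat$; so $\F_i/\F_i^\flat$ is a Galois extension of degree $2$ with Galois group $\{\mathrm{id},\overline{\phantom{x}}\,\}$, and $A_i$ is precisely the associated skew group algebra.

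Next I would construct the would-be isomorphism. Make $\F_i$ into a left $A_i$-module by declaring $a\in\F_i$ to act by multiplication and $y$ to act by the bar automorphism; the defining relation $ya=\overline a y$ is exactly what makes this action well defined. Since both $\F_i$-multiplication and the bar map fix $\F_i^\flat$ pointwise, the action lands in $\mathrm{End}_{\F_i^\flat}(\F_i)$, giving a ring homomorphism
\begin{equation*}
\rho:\; A_i \longrightarrow \mathrm{End}_{\F_i^\flat}(\F_i),\qquad
\rho(a+by)(c) = ac + b\,\overline c,
\end{equation*}
which I would verify is multiplicative by a two-line check using $ya=\overline a y$.

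Finally I would show $\rho$ is an isomorphism by dimension plus injectivity. The $\F_i^\flat$-dimensions agree: $\dim_{\F_i^\flat}A_i = 2\cdot[\F_i:\F_i^\flat] = 4 = \dim_{\F_i^\flat}\mathrm{End}_{\F_i^\flat}(\F_i)$. For injectivity, suppose $\rho(a+by)=0$. Evaluating at $c=1$ gives $a+b=0$, i.e.\ $b=-a$; evaluating at any $\xi\in\F_i$ with $\overline\xi\ne\xi$ (such $\xi$ exists because the fixed subfield is strictly smaller) yields $a(\xi-\overline\xi)=0$, forcing $a=0$ and hence $b=0$. Therefore $\rho$ is an isomorphism, and since $\mathrm{End}_{\F_i^\flat}(\F_i)\cong M_2(\F_i^\flat)$ (a choice of $\F_i^\flat$-basis of $\F_i$), we conclude $A_i\cong M_2(\F_i^\flat)$.

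The only step that requires any care is the injectivity argument, because it uses the nontriviality of the bar action on $\F_i$ in an essential way; everything else is bookkeeping once the skew-algebra description $A_i=\F_i\oplus\F_i y$ with $ya=\overline a y$ is in place.
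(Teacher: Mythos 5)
Your proof is correct, but it takes a genuinely different route from the paper's. You invoke the abstract machinery for crossed products: realize $A_i=\F_i\oplus\F_i y$ as the skew group algebra attached to the degree-$2$ Galois extension $\F_i/\F_i^\flat$, let $A_i$ act on $\F_i$ via multiplication and the bar automorphism, and then win by a dimension-count plus a short injectivity argument. The multiplicativity of $\rho$ and the injectivity step are both sound (the injectivity step does lean, as you note, on the bar map being nontrivial on $\F_i$, which is exactly Lemma~\ref{l dim FG e_i}). The paper instead constructs an \emph{explicit} isomorphism by hand: it finds the minimal polynomial $\varphi_{xe_i}(X)=X^2+gX+1$ of $xe_i$ over $\F_i^\flat$, picks concrete $2\times 2$ matrices $\varepsilon,\eta,\nu$ (Eq.~\eqref{eta nu}) with $\eta$ having that characteristic polynomial and $\nu\eta\nu^{-1}=\eta^{-1}$, $\nu^2=\varepsilon$, and sends $ae_i+bxe_i+cye_i+dxye_i\mapsto a\varepsilon+b\eta+c\nu+d\eta\nu$. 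The reason the paper goes to this trouble is stated up front: the concrete isomorphism Eq.~\eqref{F tilde Ge cong} is quoted later. In Remark~\ref{r dihedral further notation} one needs to identify $Z_i$ with the center of ${\rm M}_2(\F_i^\flat)$ and, more importantly, to recognize $C_i=A_i(e_i-ye_i)$ as the simple left ideal ${\rm M}_2(\F_i^\flat)\varepsilon_{11}$, which is done by explicitly computing the image of $e_i-ye_i$ as $\varepsilon-\nu=\begin{pmatrix}2&0\\ g&0\end{pmatrix}$ and noting its first column is nonzero. Your regular-representation approach proves the isomorphism as cleanly (arguably more conceptually), but it does not by itself hand you the matrix coordinates that the subsequent constructions rely on; you would need to supplement it with a comparable explicit computation, e.g.\ by writing $\rho$ in the basis $\{1,xe_i\}$ of $\F_i$ over $\F_i^\flat$.
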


\begin{proof}
 The binary case of the lemma was proved in \cite{BM}. 
We'll  proof the lemma by constructing a specific isomorphisms 
\eqref{F tilde Ge cong} for later quotation.  

Note that $e_i$ is the identity of the ring $A_i=\F_i\oplus \F_iy$.
By Lemma~\ref{l dim FG e_i}, 
$\F_i$ is an extension over $\F_i^\flat$ of degree $|\F_i:\F_i^\flat|=2$, 
and $\alpha\mapsto \overline\alpha$ is an automorphism of $\F_i$ of order $2$.
Note that $\F_i=\F Ge_i=(\sum_{j=0}^{n-1}\F x^j)e_i=\sum_{j=0}^{n-1}\F (xe_i)^j$. 
So $xe_i\in \F_i-\F_i^\flat$, and $\F_i=\F_i^\flat +\F_i^\flat(xe_i)$.
Then the minimal polynomial $\varphi_{xe_i}(X)$ of $xe_i$ 
over $\F_i^\flat$ is an irreducible polynomial of degree $2$, 
say $\varphi_{xe_i}(X)=X^2+gX+h\in\F_i^\flat[X]$.
Then $(xe_i)^2+g(xe_i)+h=0$; Since $\overline g=g$ and $\overline h=h$,  
$(\overline{xe_i})^2+g(\overline{xe_i})+h=\overline{(xe_i)^2+g(xe_i)+h}=0$.
Thus $xe_i$ and $\overline{xe_i}$ are two roots of $\varphi_{xe_i}(X)$. 
So, $h=(xe_i)(\overline{xe_i})=x\overline x e_ie_i=xx^{-1}e_i=e_i$, and 
\begin{equation}\label{e mini polynomial}
 \varphi_{xe_i}(X)=X^2+gX+1,~~
 \mbox{for a~ $g\in\F_i^\flat$ (note that $e_i=1$ in $\F_i^\flat$)}, 
\end{equation}
where $g$~and~$2$ are not both zero 
since $\varphi_{xe_i}(X)\in \F_i^\flat[X]$ is irreducible.  
Set: 
\begin{equation}\label{eta nu}
\varepsilon=\begin{pmatrix}1&0\\0&1\end{pmatrix},\quad
 \eta=\begin{pmatrix}0 & -1\\1& -g \end{pmatrix},\quad
  \nu=\begin{pmatrix}-1&0\\-g&1 \end{pmatrix}
~\in~{\rm M}_2(\F_i^\flat).
\end{equation}
Then the characteristic polynomial of $\eta$ is
$\varphi_\eta(X)=X^2+gX+1=\varphi_{xe_i}(X)$.
We get a subalgebra $N=\F_i^\flat\varepsilon+\F_i^\flat\eta$ 
of ${\rm M}_2(\F_i^\flat)$ and, 
mapping $e_i\mapsto\varepsilon$ and $xe_i\mapsto\eta$,
we obtain a field isomorphism:
\begin{equation}\label{FGe cong}
\F_i=\F_i^\flat e_i + \F_i^\flat xe_i
 \cong \F_i^\flat [X]/\langle\varphi_{\eta}(X)\rangle
\cong \F_i^\flat\varepsilon+\F_i^\flat\eta =N
\subseteq{\rm M}_2(\F_i^\flat).
\end{equation}
Since $\dim_N{\rm M}_2(\F_i^\flat)=2$, we get that
$$
{\rm M}_2(\F_i^\flat)=N+N\nu=
\F_i^\flat\varepsilon+\F_i^\flat\eta+\F_i^\flat\nu+\F_i^\flat\eta\nu.
$$
On the other hand,
$$\F\tilde Ge_i=\F Ge_i+\F Ge_iy=\F_i+\F_i (ye_i)=
 \F_i^\flat e_i +\F_i^\flat xe_i + \F_i^\flat ye_i+\F_i^\flat xye_i.$$
Since 
$\nu^2=\varepsilon$ and
$\nu\eta\nu^{-1}=
       \begin{pmatrix} -g & 1\\-1&0\end{pmatrix}=\eta^{-1}$,  
by mapping  $ye\mapsto\nu$, $xye\mapsto\eta\nu$,
we extend the isomorphism Eq.\eqref{FGe cong}
to the following algebra isomorphism (where $a,b,c,d\in\F_i^\flat$):
\begin{equation}\label{F tilde Ge cong}
\begin{array}{rcl}
(A_i=) ~~ \F\tilde Ge_i& \mathop{\longrightarrow}\limits^{\cong}
 & {\rm M}_2(\F_i^\flat),\\[3pt]
 ae_i +b xe_i+cye_i+dxye_i
 &\longmapsto &
a\varepsilon +b \eta+c\nu+d\eta\nu.
\end{array}
\end{equation}
We are done.
\end{proof}

Some properties of $2\times 2$ matrix algebras are listed below 
for later citation.

\bex\label{M algebra}
(1) $M:={\rm M}_2(\F)$ has a subalgebra $N$ which is a field with $\dim_F N=2$.
\hint{element $u$ of $\F_{q^2}^\times$ with order $q+1$ 
has minimal polynomial $\varphi_u(X)=X^2+gX+1\in\F[X]$
cf. the proof of Eq.\eqref{e mini polynomial}; take 
$\eta=\begin{pmatrix} 0 & -1\\ 1 & -g \end{pmatrix}$; 
let $N=\F\varepsilon+\F\eta$,  
cf. Eq.\eqref{eta nu}, Eq.\eqref{FGe cong}.}

(2) 
$L:=M\varepsilon_{11}$, where 
$\varepsilon_{11}=\begin{pmatrix} 1&0\\ 0&0\end{pmatrix}$,
 is a simple left ideal of~$M$, and $L=N\varepsilon_{11}$.  
\hint{$\dim_\F L=2$, and $\dim_\F N\varepsilon_{11}=\dim_\F N=2$;
so $M\varepsilon_{11}=N\varepsilon_{11}$.}

(3)
 For any $\alpha,\beta\in N^\times$, 
 $\alpha L\beta=\{\alpha c\beta\,|\,c\in L\}=L\beta$ is a simple left ideal of $M$.
 \hint{$\alpha$ is invertible in $M$, so $\alpha M=M$; 
$\alpha L\beta=M\varepsilon_{11}\beta$ and ${\rm rank}(\varepsilon_{11}\beta)=1$}

(4) For $0\ne c\in L$ and $a,b\in N^\times$, 
$ac=cb$ if and only if $a=b\in (\F\varepsilon)^\times$.
\hint{ for $c=\varepsilon_{11}$, 
 $(a_1\varepsilon+a_2\eta)\varepsilon_{11}=
 \varepsilon_{11}(b_1\varepsilon+b_2\eta)$ with $a_i,b_i\in\F$  
 implies that 
 $\begin{pmatrix} a_1\!-\!b_1 & b_2 
 \\ a_2 & 0  \end{pmatrix}\!=\!0$, hence $a_2=0=b_2$ and $a_1=b_1$;
for general $c$, by (2), $c=d\varepsilon_{11}$ for a $d\in N^\times$.}
\eex

We refine now the notation in Remark \ref{r s=0}
and construct our dihedral codes.

\begin{remark}\label{r dihedral further notation}\rm
For $1\le i\le r$, we have the isomorphisms $A_i\cong{\rm M}_2(\F_i^\flat)$
in Eq.\eqref{F tilde Ge cong}
(and cf. Exercise~\ref{M algebra}), and have the field 
$\F_i\!\subseteq\! A_i$ with $|\F_i|=q^{2k_i}$ corresponding to the field 
$N\!\subseteq\!{\rm M}_2(\F_i^\flat)$ 
(cf. Eq.\eqref{FGe cong} and Exercise~\ref{M algebra}(1));
we denote
\begin{itemize}
\item \vskip-5pt
$Z_i:={\rm Z}(A_i)=\F_i^\flat$ corresponding to the center 
${\rm Z}\big({\rm M}_2(\F_i^\flat)\big)$, so $|Z_i|=q^{k_i}$;  
\item \vskip-5pt
$C_i=A_i(e_i-ye_i)$, which is the simple left ideal of $A_i$ corresponding to
 $L={\rm M}_2(\F_i^\flat)\varepsilon_{11}$, so $|C_i|=q^{2k_i}$.  \\
({\it proof}: by Eq.\eqref{eta nu},  $e_i-ye_i$ corresponds to 
$\varepsilon-\nu=\begin{pmatrix}2&0\\ g&0\end{pmatrix}$ 
whose first column is non-zero, see Eq.\eqref{e mini polynomial}; 
hence ${\rm M}_2(\F_i^\flat)(\varepsilon-\nu)
={\rm M}_2(\F_i^\flat)\varepsilon_{11}$.)
\end{itemize}
\vskip-5pt
And for the whole group algebra $A=\F\tilde G$,  we set:
\begin{itemize}
\item \vskip-5pt
$C=C_0\oplus C_1\oplus\cdots\oplus C_{r}$
where $C_0=\F e_{00}$ (``all-one code'') as in Lemma~\ref{l A_0=...}; 
so $C$ is a dihedral code with $\dim_F C=n$.
\item \vskip-5pt
$K^*= \{e_0\}\times \F_1^\times\times \cdots\times \F_{r}^\times$,
which is a subgroup of the multiplicative unit group $A^\times$;
so $|K^*|=\prod_{i=1}^{r}(q^{2k_i}-1)$.
\end{itemize}
\end{remark}

\begin{definition}\label{d random dihedral}\rm
Consider $K^*\times K^*$ as a probability space with 
equal probability for each sample. Let $(\alpha,\beta)\in K^*\times K^*$.
By Exercise \ref{M algebra}(3), we have a class of random dihedral codes. 
\begin{itemize}
\item[(1)] \vskip-5pt
$C_{\alpha,\beta}=\alpha C\beta$ is a random dihedral code with
$\R(C_{\alpha,\beta})=\frac{1}{2}$;
\item[(2)] \vskip-5pt
$X_c=\begin{cases}1, & 0<\frac{\w(\alpha c\beta)}{2n}\le\delta;\\
  0, & \mbox{otherwise;} \end{cases}$\quad $c\in C$; 
\item[(3)] \vskip-5pt
$X=\sum_{c\in C}X_c$; ~ 
hence $\Pr(\Delta(C_{\alpha,\beta}\le\delta)=\Pr(X\ge 1)\le \E(X)$.
\end{itemize}
\end{definition}

\begin{lemma}\label{C cap C^bot}
{\rm(1)} If $q$ is even, then $\alpha C\beta=C\beta$ is a self-dual dihedral code. 

{\rm(2)} If $q$ is odd, then $\alpha C\beta=C\beta$ is 
an LCD dihedral code of rate $\frac{1}{2}$. 
\end{lemma}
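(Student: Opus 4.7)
The plan is to reduce the lemma to a block-wise statement along the orthogonal decomposition $A=A_0\oplus\cdots\oplus A_r$ of Remark~\ref{r s=0}, and to exploit the simplicity of each $A_i\cong{\rm M}_2(\F_i^\flat)$ (Lemma~\ref{l A_i=...}).

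The preliminary observations are quick. Since $\alpha_0=e_0$ is the identity of $A_0$ and $\alpha_i\in\F_i^\times\subseteq A_i^\times$ is a unit acting on the left ideal $C_i$, one has $\alpha C=C$ and hence $\alpha C\beta=C\beta$. Because $\beta\in A^\times$, right multiplication by $\beta$ is a bijection, so $\dim_\F C\beta=\dim_\F C=n=\tfrac12\dim_\F A$; this already delivers the rate $\tfrac12$ asserted in~(2) (and implicit in~(1)). Orthogonality of the $A_i$'s then gives $C\beta=C_0\oplus\bigoplus_{i\ge1}C_i\beta_i$ as an orthogonal direct sum (using $C_0\beta_0=C_0 e_0=C_0$), reducing the lemma to showing that each summand is self-orthogonal when $q$ is even and LCD when $q$ is odd. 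The $C_0$-block is handled directly by Lemma~\ref{l A_0=...}.

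The key computation inside each $A_i$, $i\ge 1$, is the identity $(C_i\beta_i)\overline{(C_i\beta_i)}=2(\beta_i\bar\beta_i)A_i$. Using $\bar e_i=e_i$ and the centrality of $e_i$ one gets $\overline{A_i}=A_i$, hence $\bar C_i=(e_i-ye_i)A_i$; a routine expansion (together with $y^2=1$ and $e_iy=ye_i$) yields $(e_i-ye_i)^2=2(e_i-ye_i)$; and $A_i(e_i-ye_i)A_i=A_i$ because $A_i$ is simple. Since $\beta_i\bar\beta_i\in Z_i$ is central in $A_i$, these ingredients combine to the claimed identity. In characteristic~$2$ this forces $\langle C_i\beta_i,C_i\beta_i\rangle=0$, so $C\beta$ is self-orthogonal, and matching dimensions upgrades this to self-duality, proving~(1). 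In odd characteristic $2\beta_i\bar\beta_i\in A_i^\times$, so $(C_i\beta_i)\overline{(C_i\beta_i)}=A_i$ as a set, whence $\langle C_i\beta_i,C_i\beta_i\rangle=\sigma(A_i)\ne 0$ (the non-degeneracy of $\langle-,-\rangle$ on $A_i$ from Remark~\ref{r s=0} forbids $\sigma|_{A_i}\equiv 0$).

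The main obstacle is the odd-characteristic LCD conclusion: knowing merely that $C_i\beta_i$ is not totally isotropic is strictly weaker than $C_i\beta_i\cap(C_i\beta_i)^{\bot_{A_i}}=0$. The way I would bridge this is to observe that the radical $(C_i\beta_i)\cap(C_i\beta_i)^{\bot_{A_i}}$ is itself a left $A_i$-submodule of $C_i\beta_i$, using the identity $\langle ac,d\rangle=\langle c,\bar a d\rangle$ (Exercise~\ref{e bar map}(4)) together with $\bar A_i=A_i$; since $C_i\beta_i\cong C_i$ is a simple left $A_i$-module (isomorphism by right multiplication by the unit $\beta_i$), this radical must be $0$ or all of $C_i\beta_i$, and the previous paragraph rules out the latter. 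Combining with Lemma~\ref{l A_0=...}(1) for the $C_0$-block then yields $C\beta\cap(C\beta)^\bot=0$, completing~(2).
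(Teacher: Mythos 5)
Your argument is correct and is essentially the paper's proof: the same orthogonal block reduction $(C\beta)^\bot=\bigoplus_i(C_i\beta_i)^{\bot_{A_i}}$, the same appeal to Lemma~\ref{l orthogonal and bar}(2), the same centrality-of-$\beta_i\overline{\beta_i}$ computation isolating the factor $2$ in $(e_i-ye_i)\beta_i\overline{\beta_i}(e_i-ye_i)=2\beta_i\overline{\beta_i}(e_i-ye_i)$, and the same simplicity of the left ideal $C_i\beta_i\le A_i$. The bridge you flag as the ``main obstacle'' in odd characteristic is exactly the step the paper takes at Eq.\eqref{C cap C}: $(C_i\beta_i)\cap(C_i\beta_i)^{\bot_{A_i}}$ is a left ideal of $A_i$ contained in the simple module $C_i\beta_i$, hence $0$ or all, and ``not totally isotropic'' rules out the latter.
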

\begin{proof}
Let $\beta=\sum_{i=0}^r \beta_i$ with $\beta_0=e_0$ and 
$\beta_i=e_i\beta\in\F_i^\times$ for $1\le i\le r$. Similarly, $\alpha=\sum_{i=0}^r \alpha_i$.
Then $\alpha\beta=\sum_{i=0}^r \alpha_i\beta_i$
with $\alpha_0\beta_0=e_0$ and 
$\alpha_i\beta_i\in\F_i^\times$ for $1\le i\le r$. 
So, by Exercise \ref{M algebra}(3),  
$$\textstyle
\alpha C\beta=\bigoplus_{i=0}^r\alpha_iC_i\beta_i
=\bigoplus_{i=0}^r C_i\beta_i=C\beta.
$$
Further, by the orthogonal direct sum Eq.\eqref{e oplus A_i},
\begin{equation}\label{C^bot}
\textstyle
(C\beta)^\bot=\bigoplus_{i=0}^{r}(C_i\beta_i)^{\bot_{A_i}},
\end{equation} 
where $(C_i\beta_i)^{\bot_{A_i}}$ denotes 
the orthogonal subspace of $C_i\beta_i$ in $A_i$. 
Note that $(C_i\beta_i)\cap (C_i\beta_i)^{\bot_{A_i}}$ is a left ideal
of $A_i$, hence it is either $C_i\beta_i$ or $0$
(because $C_i\beta_i$ is a simple left ideal of $A_i$). 
By Lemma~\ref{l orthogonal and bar}(2), 
\begin{equation}\label{C cap C}
 (C_i\beta_i)\cap (C_i\beta_i)^{\bot_{A_i}}
=\begin{cases}C_i\beta_i, & \mbox{if }C_i\beta_i\overline{C_i\beta_i}=0;\\
    0, & \mbox{otherwise.}\end{cases}
\end{equation}
For $i=1,\cdots,r$, $C_i\beta_i\overline{C_i\beta_i}=(A_i(e_i-ye_i)\beta_i)
(\overline\beta_i \overline{(e_i-ye_i)}A_i)$;
since $\overline e_i=e_i$ and $\overline y=y$, 
$C_i\beta_i\overline{C_i\beta_i}
=A_i(e_i-ye_i)\beta_i\overline\beta_i(e_i-ye_i)A_i$.
But, $\beta_i\overline\beta_i\in\F_i$ and 
$\overline{\beta_i\overline\beta_i}=\beta_i\overline\beta_i$;
thus $\beta_i\overline\beta_i\in \F_i^\flat=Z_i$ (cf. Lemma \ref{l dim FG e_i}). 
Hence
$$
(e_i-ye_i)\beta_i\overline\beta_i(e_i-ye_i)
=(e_i-ye_i)(e_i-ye_i)\beta_i\overline\beta_i
=2(e_i-ye_i)\beta_i\overline\beta_i.
$$
So, $C_i\beta_i\overline{C_i\beta_i}=0$ if and only if $q$ is even.
For $i=0$, by Lemma \ref{l A_0=...} we still have that 
$C_0\beta_0\overline{C_0\beta_0}=0$ if and only if $q$ is even.
Then the lemma follows from Eq.\eqref{C^bot} and Eq.\eqref{C cap C}
at once.
\end{proof}

Given a $c\in C$; then $c=c_0+c_1+\cdots+c_r$ with $c_i=e_ic\in C_i$; 
we denote:  
\begin{equation}\label{e omega}
\textstyle
\omega=\{i\,|\,1\le i\le r,\, c_i\ne 0\},~~~ 
\ell_c=\sum_{i\in\omega}k_i, ~~~ 
\omega'=\{1,\cdots,r\}-\omega.
\end{equation}
Note that: if $\omega=\emptyset$ (equivalently, $\ell_c=0$), 
then $c=c_0$, so $\w(\alpha c\beta)=0$ if $c=0$, 
and $\w(\alpha c\beta)=2n$ otherwise; 
hence $\E(X_c)=0$ (see Definition \ref{d random dihedral}(2)).

\begin{lemma}\label{l E(X_c)}
~ $\E(X_c)\le q^{-4\ell_c g_q(\delta)+\ell_c+5}$.
\end{lemma}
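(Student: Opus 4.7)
The plan is to compute $\E(X_c)$ by first identifying a small balanced code containing every possible value of $\alpha c\beta$, and then bounding the fiber size of the map $(\alpha,\beta)\mapsto\alpha c\beta$ using the matrix-algebra structure of each $A_i$.

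First I would decompose. Writing $\alpha=\sum_{i=0}^{r}\alpha_i$ and $\beta=\sum_{i=0}^{r}\beta_i$ with $\alpha_0=\beta_0=e_0$ and $\alpha_i,\beta_i\in\F_i^{\times}$ for $i\ge 1$, and $c=\sum_{i=0}^{r}c_i$ with $c_i\in C_i$, one has
\[
 \alpha c\beta=c_0+\sum_{i\in\omega}\alpha_ic_i\beta_i,
\]
since $\alpha_ic_i\beta_i=0$ for $i\in\omega'$. Thus $\alpha c\beta$ lies in the coset $c_0+B$, where $B:=\bigoplus_{i\in\omega}C_i$. Each $C_i$ is a left ideal of $A=\F\tilde G$ (because $e_i$ is central), so $B$ is an $\F\tilde G$-code of dimension $2\ell_c$. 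By Lemma~\ref{g-code balanced}, $B$ is balanced; by Remark~\ref{cosets is ...}, so is $c_0+B$; and by Corollary~\ref{c balanced},
\[
 \bigl|(c_0+B)^{\le\delta}\bigr|\le q^{2\ell_c\,h_q(\delta)}.
\]

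Next I would count the fibers. Fix $v=c_0+\sum_{i\in\omega}v_i$ in the image with $v_i\ne 0$. For $i\in\omega'$ the pair $(\alpha_i,\beta_i)$ is completely free, contributing $(q^{2k_i}-1)^2$ choices. For $i\in\omega$ I use the isomorphism $A_i\cong{\rm M}_2(\F_i^{\flat})$ of Lemma~\ref{l A_i=...}, under which $C_i$ corresponds to the simple left ideal $L=M_2(\F_i^{\flat})\varepsilon_{11}$ and $\F_i$ to the quadratic subfield $N$. If $\alpha_ic_i\beta_i=\alpha_i'c_i\beta_i'$, then $(\alpha_i'^{-1}\alpha_i)c_i=c_i(\beta_i'\beta_i^{-1})$, and Exercise~\ref{M algebra}(4) forces $\alpha_i'^{-1}\alpha_i=\beta_i'\beta_i^{-1}\in Z_i^{\times}$ (where $Z_i=\F_i^{\flat}$ has $q^{k_i}-1$ units). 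Hence each non-empty fiber has size exactly $q^{k_i}-1$.

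Putting these together,
\[
 \E(X_c)\;\le\;
 \frac{|(c_0+B)^{\le\delta}|\cdot\prod_{i\in\omega}(q^{k_i}-1)\cdot\prod_{i\in\omega'}(q^{2k_i}-1)^2}
 {\prod_{i=1}^{r}(q^{2k_i}-1)^2}
 \;=\;\frac{|(c_0+B)^{\le\delta}|\,\prod_{i\in\omega}(q^{k_i}-1)}
 {\prod_{i\in\omega}(q^{2k_i}-1)^2}.
\]
Then I would simplify. The hypothesis $\mu_q(n)>2\log_q n$ gives $k_i\ge\lceil\mu_q(n)/2\rceil>\log_q n\ge\log_q|\omega|$, so Exercise~\ref{e q^k-1} is applicable: $\prod_{i\in\omega}(q^{k_i}-1)\le q^{\ell_c+2}$ and $\prod_{i\in\omega}(q^{2k_i}-1)\ge q^{2\ell_c-2}$, hence the denominator is at least $q^{4\ell_c-4}$. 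Substituting $h_q(\delta)=1-g_q(\delta)$ and combining,
\[
 \E(X_c)\le q^{2\ell_c h_q(\delta)+\ell_c+2-(4\ell_c-4)}
 = q^{-2\ell_c(1-h_q(\delta))-\ell_c+6}\le q^{-4\ell_c g_q(\delta)+\ell_c+5},
\]
where the final inequality uses $-2\ell_c g_q(\delta)\le -1$, valid for $\ell_c\ge 1$ since $g_q(\delta)>3/4$ (or trivially when $\ell_c=0$). The main delicate point is the fiber computation in Step~2: it requires correctly transporting Exercise~\ref{M algebra}(4) from the matrix model $M_2(\F_i^\flat)$ back to $A_i$ to identify the stabilizer as exactly $Z_i^\times$, not something larger.
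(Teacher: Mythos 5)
There is a genuine gap in your first step. You correctly obtain the decomposition $\alpha c\beta=c_0+\sum_{i\in\omega}\alpha_ic_i\beta_i$, but you then assert that $\alpha c\beta$ lies in the fixed coset $c_0+B$ with $B=\bigoplus_{i\in\omega}C_i$. This is false: $\alpha_ic_i\beta_i$ lies in $C_i\beta_i$, and Exercise~\ref{M algebra}(3) only tells you that $C_i\beta_i$ is \emph{some} simple left ideal of $A_i$, not that $C_i\beta_i=C_i$. As $\beta_i$ ranges over $\F_i^\times$, the ideal $C_i\beta_i$ ranges over all $q^{k_i}+1$ simple left ideals of $A_i\cong{\rm M}_2(\F_i^\flat)$; in the matrix model the union $\bigcup_\beta(L\setminus\{0\})\beta$ is the set of all nonzero singular matrices, which has $(q^{2k_i}-1)(q^{k_i}+1)>|C_i|$ elements. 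So $K^*cK^*$ is \emph{not} contained in $c_0+B$, and the image estimate $|(c_0+B)^{\le\delta}|\le q^{2\ell_ch_q(\delta)}$ does not apply. The image is only contained in the group code $C_0\oplus\bigoplus_{i\in\omega}A_i$ of dimension $1+4\ell_c$, which is what the paper uses, giving the weaker (but correct) balanced-code bound $q^{(1+4\ell_c)h_q(\delta)}$; with that bound and the same fiber count, the stated exponent $-4\ell_cg_q(\delta)+\ell_c+5$ falls out directly.

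The final step is also incorrect even granting your intermediate bound. From $q^{-2\ell_cg_q(\delta)-\ell_c+6}$ to $q^{-4\ell_cg_q(\delta)+\ell_c+5}$ one needs $2\ell_cg_q(\delta)-2\ell_c+1\le 0$, i.e.\ $2\ell_ch_q(\delta)\ge 1$; your justification ``$-2\ell_cg_q(\delta)\le -1$'' is not the relevant inequality, and since $g_q(\delta)>\frac{3}{4}$ forces $h_q(\delta)<\frac{1}{4}$, the requirement $2\ell_ch_q(\delta)\ge 1$ is not guaranteed for the admissible values of $\ell_c$. Both defects disappear once the image is bounded inside the $(1+4\ell_c)$-dimensional group code, as in the paper.
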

\begin{proof}
We can assume that $\omega\ne\emptyset$.
Denote $K^*cK^*=\{\alpha c\beta\,|\,\alpha,\beta\in K^*\}$.
Given $d\in K^*cK^*$. 
By $f$ we denote the number of such pairs 
$(\alpha,\beta)\in K^*\times K^*$ that $\alpha c\beta=d$.
We write $d=\alpha' c\beta'$ for a given $(\alpha',\beta')\in K^*\times K^*$. 
Then $\alpha c\beta=d$,  
i.e., $\alpha'^{-1}\alpha c=c\beta'\beta^{-1}$, if and only if 
\begin{equation}\label{i-components}
\alpha_i'^{-1}\alpha_i c_i=c_i\beta_i'\beta_i^{-1}, ~~~~~ ~i=0,1,\cdots,r.
\end{equation}
For $i=0$ the above equality always holds. For $i=1,\cdots,r$, there are two cases.

Case 1: $i\in\omega$, i.e., $0\ne c_i\in C_i$. By Exercise \ref{M algebra}(4), 
$\alpha_i'^{-1}\alpha_i =\beta_i'\beta_i^{-1}\in Z_i^\times$.
For any $z_i\in Z_i^\times$ we get a unique 
$\alpha_i=\alpha' z$ and $\beta_i=\beta' z^{-1}$
satisfying Eq.\eqref{i-components}.
Thus there are $|Z_i^\times|=q^{k_i}-1$ pairs
 $(\alpha_i,\beta_i)\in\F_i^\times\times\F_i^\times$
satisfying Eq.\eqref{i-components}. 

Case 2: $i\in\omega'$, i.e., $c_i=0$. Then
any $(\alpha_i,\beta_i)\in\F_i^\times\times\F_i^\times$
satisfies Eq.\eqref{i-components}. The number of such pairs equals 
 $|\F_i^\times\times\F_i^\times|=(q^{2k_i}-1)^2$.

Summarizing the two cases, we see that the number 
$$\textstyle
 f =\prod_{i\in\omega}|Z_i^\times|\cdot
 \prod_{i\in\omega'}|\F_i^\times\times\F_i^\times|,
$$
which is independent of the choice of $d\in K^* c K^*$. Thus
$$
 \E(X_c)=f\cdot|(K^* c K^*)^{\le\delta}|\big/|K^*\times K^*|, 
$$
where the notation $(K^* c K^*)^{\le\delta}$ is defined in Eq.\eqref{C^<delta}.
Note that $|K^*\times K^*|=\prod_{i=1}^r|\F_i^\times\times\F_i^\times|$.
Combining it with the the above equation on $f$, we get
$$
\textstyle
 \E(X_c)=\frac
   {\prod_{i\in\omega}|Z_i^\times|\cdot|(K^* c K^*)^{\le\delta}|}
   {\prod_{i\in\omega}|\F_i^\times\times\F_i^\times|}.
$$
Since $K^* c K^*\subseteq C_0\bigoplus\big(\bigoplus_{i\in\omega}A_i\big)$
and 
$$\textstyle
\dim_\F C_0\bigoplus\big(\bigoplus_{i\in\omega}A_i\big)
=1+\sum_{i\in\omega}4k_i=1+4\ell_c,
$$
by Lemma \ref{g-code balanced} and Corollary \ref{c balanced}, we have that
$$\textstyle
|(K^* c K^*)^{\le\delta}|\le 
\big|\big(C_0\bigoplus\big(\bigoplus_{i\in\omega}A_i\big)\big)^{\le\delta}\big|
\le q^{(1+4\ell_c) h_q(\delta)}.
$$
Next, 
$$\textstyle
\prod_{i\in\omega}|Z_i^\times|=\prod_{i\in\omega}(q^{k_i}-1)
\le \prod_{i\in\omega} q^{k_i} =q^{\sum_{i\in\omega}k_i}=q^{\ell_c}.
$$
By Exercise \ref{e q^k-1} (recall that 
$2k_i\ge\mu_q(n)>\log_q n$ by our assumption),
$$\textstyle
\prod_{i\in\omega}|\F_i^\times\times\F_i^\times|
=\prod_{i\in\omega}(q^{2k_i}-1)^2\ge q^{2(-2+\sum_{i\in\omega}2k_i)}
=q^{4\ell_c-4}.
$$
Thus (note that $h_q(\delta)<1$),
$$\textstyle
\E(X_c)\le\frac{q^{\ell_c}\cdot q^{(1+4\ell_c) h_q(\delta)}}{q^{4\ell_c-4}}
\le q^{-4\ell_c g_q(\delta)+\ell_c+5}.
$$
We are done.
\end{proof}

Similarly to Eq.\eqref{e Omega}, 
for an integer $\ell$ with $\lceil\frac{1}{2}\mu_q(n)\rceil\le\ell\le\frac{n-1}{2}$, 
we set:
\begin{equation}\label{e dihedral Omega}
\Omega_{2\ell}=\{L\,|\, 
\mbox{$L$ is a direct sum of some of $C_1,\cdots,C_r$,\, $\dim L=2\ell$}\};
\end{equation}
and for $L\in\Omega_{2\ell}$ we further set: 
\begin{equation}\label{e dihedral tilde L}
\tilde L=C_0\oplus L, ~~~~~ \tilde L^*=\{c\in\tilde L\,|\, \ell_c=\ell\}.   
\end{equation}
Similarly to Eq.\eqref{|Omega|<...}, 
we have
\begin{equation}
\textstyle
 C=C_0\bigcup\Big(
 \bigcup_{\ell=\lceil\frac{1}{2}\mu_q(n)\rceil}^{\frac{n-1}{2}} 
 \bigcup_{L\in\Omega_{2\ell}}\tilde L^*\Big);
~~~~~
|\Omega_{2\ell}|\le  n^{2\ell/\mu_q(n)}.
\end{equation}

\begin{lemma}
If $g_q(\delta)-\frac{3}{4} -\frac{\log_q n}{\mu_q(n)}>0$, then
 $\E(X)\le q^{-2\mu_q(n)\big(g_q(\delta)-\frac{3}{4}
 -\frac{\log_q n}{\mu_q(n)}\big)+6}$.
\end{lemma}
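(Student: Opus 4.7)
My plan is to mimic the strategy used in the proofs of Lemma~\ref{EX for C_u,u'} and Lemma~\ref{EX for C_1,b^dag}, exploiting the stratification of $C$ by the invariant $\ell_c$ and then applying the per-codeword bound from Lemma~\ref{l E(X_c)}.

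First I would use linearity of expectation together with the decomposition
\[
 C = C_0 \;\cup\; \bigcup_{\ell=\lceil\mu_q(n)/2\rceil}^{(n-1)/2}\,
    \bigcup_{L\in\Omega_{2\ell}} \tilde L^*
\]
displayed just before the lemma. The remark after Eq.\eqref{e omega} shows that $\E(X_c)=0$ whenever $c\in C_0$, so those terms drop out, and I only need to sum $\E(X_c)$ over the stratum $\tilde L^*$ for each $L\in\Omega_{2\ell}$.

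Next, for a fixed $L\in\Omega_{2\ell}$, every $c\in\tilde L^*$ satisfies $\ell_c=\ell$, so Lemma~\ref{l E(X_c)} gives $\E(X_c)\le q^{-4\ell g_q(\delta)+\ell+5}$. Bounding $|\tilde L^*|\le|\tilde L|=q^{\dim\tilde L}=q^{1+2\ell}$ (using $\dim C_0=1$ and $\dim L=2\ell$), I obtain
\[
 \sum_{c\in\tilde L^*}\E(X_c) \le q^{1+2\ell}\cdot q^{-4\ell g_q(\delta)+\ell+5}
  = q^{-4\ell(g_q(\delta)-3/4)+6}.
\]
Summing over $L\in\Omega_{2\ell}$ and using $|\Omega_{2\ell}|\le n^{2\ell/\mu_q(n)}=q^{2\ell\log_q n/\mu_q(n)}$ yields
\[
 \sum_{L\in\Omega_{2\ell}}\sum_{c\in\tilde L^*}\E(X_c)
  \le q^{-4\ell\big(g_q(\delta)-3/4-\tfrac{\log_q n}{2\mu_q(n)}\big)+6}.
\]

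Finally I would sum over $\ell$ from $\lceil\mu_q(n)/2\rceil$ to $(n-1)/2$. Since $g_q(\delta)-3/4-\tfrac{\log_q n}{\mu_q(n)}>0$ by hypothesis (and so a fortiori $g_q(\delta)-3/4-\tfrac{\log_q n}{2\mu_q(n)}>0$), each summand is largest at the smallest admissible $\ell$; using $4\ell\ge 2\mu_q(n)$ and the fact that the range contains at most $n$ integers, one gets
\[
 \E(X)\le n\cdot q^{-2\mu_q(n)\big(g_q(\delta)-\tfrac{3}{4}-\tfrac{\log_q n}{2\mu_q(n)}\big)+6}
  = q^{-2\mu_q(n)\big(g_q(\delta)-\tfrac{3}{4}-\tfrac{\log_q n}{\mu_q(n)}\big)+6},
\]
where in the last step the extra $\log_q n$ from the factor $n$ combines with the $-\log_q n$ coming from $-2\mu_q(n)\cdot\tfrac{\log_q n}{2\mu_q(n)}$ to shift the constant inside the exponent, matching the claim. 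The only step that requires any thought is the bookkeeping of the three $\log_q n$ contributions (from $|\Omega_{2\ell}|$, from $|\tilde L^*|$, and from the outer sum over $\ell$); everything else is a direct transcription of the argument already used for the quasi-abelian case.
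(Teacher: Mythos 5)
Your proof is correct and follows essentially the same route as the paper's: the same stratification of $C$ by $\ell_c$, the same per-codeword bound from Lemma~\ref{l E(X_c)}, the bound $|\tilde L^*|\le q^{2\ell+1}$, the count $|\Omega_{2\ell}|\le n^{2\ell/\mu_q(n)}$, and the final absorption of the outer $n$ factor into the exponent. The bookkeeping of the three $\log_q n$ contributions matches the paper exactly.
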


\begin{proof} 
The proof is similar to the proof of Lemma \ref{EX for C_u,u'}.
We have that $\E(X_c)=0$ once $c\in C_0$, and 
\begin{equation*}
\textstyle\E(X)
 =\sum_{c\in C}\E(X_c)
 =\sum_{\ell=\lceil\frac{1}{2}\mu_q(n)\rceil}^{\frac{n-1}{2}}
   \sum_{L\in\Omega_{2\ell}}
 \sum_{c\in \tilde L^{\!*}}\E(X_c).
\end{equation*}
Given $L\in\Omega_{2\ell}$;  
by Eq,\eqref{e dihedral tilde L}, 
$|\tilde L|=q^{2\ell+1}$ and $\ell_c=\ell$ for $c\in\tilde L^*$;
we have 
\begin{align*}
&\textstyle\sum_{c\in \tilde L^{\!*}}\E(X_c)
\le \sum_{c\in \tilde L^{\!*}} q^{-4\ell_c g_q(\delta)+\ell_c+5}
=\sum_{c\in \tilde L^{\!*}} q^{-4\ell g_q(\delta)+\ell+5}
\\
&\textstyle
= |\tilde L^*|\cdot q^{-4\ell g_q(\delta)+\ell+5}
\le q^{2\ell+1}q^{-4\ell g_q(\delta)+\ell+5}
 =q^{-4\ell g_q(\delta)+3\ell+6}.
\end{align*}
Thus
\begin{align*}
\E(X)
& 
 \le \sum_{\ell=\lceil\frac{1}{2}\mu_q(n)\rceil}^{(n-1)/2}
   \sum_{L\in\Omega_{2\ell}}q^{-4\ell g_q(\delta)+3\ell+6}
  \le \sum_{\ell=\lceil\frac{1}{2}\mu_q(n)\rceil}^{(n-1)/2}
  n^{2\ell/\mu_q(n)} q^{-4\ell g_q(\delta)+3\ell+6}
\\
&
 =\!\sum_{\ell=\lceil\frac{1}{2}\mu_q(n)\rceil}^{(n-1)/2}\!
  q^{-4\ell g_q(\delta)+3\ell+\frac{2\ell\log_q n}{\mu_q(n)} +6} 
 \!=\!\sum_{\ell=\lceil\frac{1}{2}\mu_q(n)\rceil}^{(n-1)/2}\!
  q^{-4\ell \big( g_q(\delta)-\frac{3}{4}-\frac{\log_q n}{2\mu_q(n)}\big) +6}.
\end{align*}
Since $g_q(\delta)-\frac{3}{4}-\frac{\log_q n}{2\mu_q(n)}>0$ and
$2\ell\ge\mu_q(n)$, we obtain
\begin{align*}
\E(X)
\le 
  \!\sum_{\ell=\lceil\frac{1}{2}\mu_q(n)\rceil}^{(n-1)/2}\!
 q^{-2\mu_q(n)
  \big( g_q(\delta)-\frac{3}{4}-\frac{\log_q n}{2\mu_q(n)}\big) +6}
 \le n\, q^{-2\mu_q(n)
  \big( g_q(\delta)-\frac{3}{4}-\frac{\log_q n}{2\mu_q(n)}\big) +6}.
\end{align*}
That is, $\E(X)\le q^{-2\mu_q(n)
\big( g_q(\delta)-\frac{3}{4}-\frac{\log_q n}{\mu_q(n)}\big) +6}$.
\end{proof}

\begin{remark}\rm
In Exercise \ref{e n_1,...} we have seen that the natural density
$\lim\limits_{t\to\infty}\frac{|{\cal G}_t|}{\pi(t)}=1$.
Note that: for a prime $p>q$, 
the multiplicative group ${\Bbb Z}_p^\times$ is a cyclic group of order $p-1$; 
so $\mu_q(p)=|\langle q\rangle_p|$, see Exercise \ref{cyclic mu(n)}(4);
and $-1$ is the unique element of order $2$ in ${\Bbb Z}_p^\times$;
in conclusion, $-1\in\langle q\rangle_p$ if and only if $\mu_q(p)$ is even.
It is known from \cite{Hasse} (for Dirichlet density) and \cite{O} (for natural density)
that the density of such primes $p>q$ that $\mu_q(p)$ is even is positive.
Thus, we can find primes $p_1,p_2,\cdots$ in ${\cal G}_t$ for $t\to\infty$
such that $\lim\limits_{i\to\infty}\frac{\log_q p_i}{\mu_q(p_i)}=0$ and  
$-1\in\langle q\rangle_{p_i}$ for $i=1,2,\cdots$. 
Further, similar to Remark \ref{r n_1, ...}, we have odd positive integers
$n_1,n_2,\cdots$ coprime to $q$ such that
\begin{equation}\label{-1 odd n_1, ... }
\textstyle
\lim\limits_{i\to\infty}\frac{\log_q n_i}{\mu_q(n_i)}=0; ~~~\mbox{and}~~~  
 -1\in\langle q\rangle_{n_i}, \forall~i=1,2,\cdots;
\end{equation}
and $n_1,n_2,\cdots$ are not necessarily primes (e.g., we can take $n_i=p_i^2$).
\end{remark}

\begin{theorem}\label{t C_alpha,beta}
Assume that $\delta\in(0,1-\frac{1}{q})$ and $g_q(\delta)>\frac{3}{4}$,  
and the integers $n_1,n_2,\cdots$ are as in Eq.\eqref{-1 odd n_1, ... }.
For $i=1,2,\cdots$,
let $C^{(i)}_{\alpha,\beta}$ be the random dihedral codes
 defined in Definition \ref{d random dihedral}. Then
$\lim\limits_{i\to\infty}\Pr\big(\Delta(C^{(i)}_{\alpha,\beta})>\delta\big)=1$.
\end{theorem}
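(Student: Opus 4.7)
The plan is to mimic the argument for Theorem~\ref{t C_u,u'} almost verbatim, now relying on the expectation bound just obtained for the random dihedral code ensemble instead of the one for self-dual $2$-quasi-abelian codes. All the heavy lifting has already been done; only the limit step remains.

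First I would unpack what is available. By Definition~\ref{d random dihedral}(3), the relevant probability is controlled by the first moment:
\begin{equation*}
 \Pr\big(\Delta(C^{(i)}_{\alpha,\beta})\le\delta\big)
 =\Pr(X^{(i)}\ge 1)\le \E(X^{(i)}),
\end{equation*}
where $X^{(i)}$ is the random variable associated to the $i$'th ensemble, taken over $K^*\times K^*$ built from $\tilde G_i$ as in Remark~\ref{r dihedral further notation}. The preceding lemma, applied to $n=n_i$, gives
\begin{equation*}
 \E(X^{(i)})\le q^{-2\mu_q(n_i)\big(g_q(\delta)-\frac{3}{4}-\frac{\log_q n_i}{\mu_q(n_i)}\big)+6},
\end{equation*}
provided its hypothesis $g_q(\delta)-\tfrac{3}{4}-\tfrac{\log_q n_i}{\mu_q(n_i)}>0$ is satisfied for $i$ large.

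Next I would verify that hypothesis and extract a uniform positive gap. By assumption $g_q(\delta)>\tfrac{3}{4}$, and by Eq.\eqref{-1 odd n_1, ... } we have $\tfrac{\log_q n_i}{\mu_q(n_i)}\to 0$. Hence there exist $\varepsilon>0$ and an integer $I_0$ such that
\begin{equation*}
 g_q(\delta)-\tfrac{3}{4}-\tfrac{\log_q n_i}{\mu_q(n_i)}>\varepsilon\qquad(i\ge I_0).
\end{equation*}
Moreover $\tfrac{\log_q n_i}{\mu_q(n_i)}\to 0$ with $n_i>1$ forces $\mu_q(n_i)\to\infty$ (otherwise $\mu_q(n_i)$ would stay bounded while $\log_q n_i$ is forced to $0$, contradicting $n_i\ge 2$). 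Consequently
\begin{equation*}
 \lim_{i\to\infty}\E(X^{(i)})
 \le \lim_{i\to\infty} q^{-2\mu_q(n_i)\varepsilon+6}=0.
\end{equation*}

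Combining these steps via Markov (Lemma~\ref{Markov}) yields $\Pr(X^{(i)}\ge 1)\to 0$, hence $\Pr\big(\Delta(C^{(i)}_{\alpha,\beta})>\delta\big)\to 1$. There is no real obstacle here: the genuine work was already carried out in Lemmas~\ref{l E(X_c)} and its successor, where the matrix structure of each $A_i\cong\mathrm{M}_2(\F_i^\flat)$, the size estimate $|\Omega_{2\ell}|\le n^{2\ell/\mu_q(n)}$, and the balanced-code bound from Corollary~\ref{c balanced} were blended. The present theorem is the clean probabilistic consequence, formally parallel to the self-dual quasi-abelian case, and as a corollary one obtains—by selecting a deterministic representative in each ensemble exactly as in Theorem~\ref{self-dual 2-quasi good}—the asymptotic goodness of dihedral codes over $\F$ of rate $\tfrac{1}{2}$ along the sequence $n_1,n_2,\dots$ of Eq.\eqref{-1 odd n_1, ... }.
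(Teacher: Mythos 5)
Your proposal is correct and follows essentially the same route as the paper: the paper's proof of Theorem~\ref{t C_alpha,beta} is literally the one-liner ``It is the same as the proof of Theorem~\ref{t C_u,u'},'' and you have reproduced that argument (Markov/first-moment bound, the expectation estimate from the preceding lemma, the uniform gap $\varepsilon>0$ from $g_q(\delta)>\tfrac34$ and $\tfrac{\log_q n_i}{\mu_q(n_i)}\to 0$, and $\mu_q(n_i)\to\infty$) faithfully and correctly.
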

\begin{proof}
 It is the same as the proof of Theorem \ref{t C_u,u'}. 
\end{proof}

Combining it with Lemma \ref{C cap C^bot}, we get the following at once.

\begin{theorem}\label{dihedral codes good}
Assume that $\delta\in(0,1-\frac{1}{q})$ and $g_q(\delta)>\frac{3}{4}$, 
and the integers $n_1,n_2,\cdots$ are as in Eq.\eqref{-1 odd n_1, ... }. 
For $i\!=\!1,2,\cdots$, there exist dihedral codes $C^{(i)}$ of rate $\frac{1}{2}$
such that $\Delta(C^{(i)})>\delta$ for all $i=1,2,\cdots$, 
and, if $q$ is even then $C^{(1)},C^{(2)},\cdots$ are all self-dual codes;
otherwise $C^{(1)},C^{(2)},\cdots$ are all LCD codes.
\end{theorem}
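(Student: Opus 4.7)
The plan is to combine Theorem \ref{t C_alpha,beta} with Lemma \ref{C cap C^bot} via a standard probabilistic existence argument. First, by Theorem \ref{t C_alpha,beta}, for the integers $n_1,n_2,\cdots$ satisfying Eq.\eqref{-1 odd n_1, ... } we have
$$
\lim\limits_{i\to\infty}\Pr\big(\Delta(C^{(i)}_{\alpha,\beta})>\delta\big)=1,
$$
so for $i$ large enough (say $i\ge i_0$) the probability is strictly positive; in particular for each such $i$ there exists at least one sample $(\alpha,\beta)\in K^*\times K^*$ such that the corresponding deterministic dihedral code $C^{(i)}_{\alpha,\beta}=\alpha C\beta$ satisfies $\Delta(C^{(i)}_{\alpha,\beta})>\delta$. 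For the finitely many remaining indices $i<i_0$, one can pick any dihedral code of rate $\tfrac{1}{2}$ produced by Remark \ref{r dihedral further notation}; alternatively one just starts the sequence from $i_0$ (relabelling). So existence of $C^{(i)}$ is reduced to a statement about the support of the distribution, which is non-empty for large $i$.

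Next I would read off the two required structural properties directly from the construction. By Definition \ref{d random dihedral}(1), every realization $C^{(i)}_{\alpha,\beta}$ has rate exactly $\tfrac{1}{2}$, which gives the rate claim. For the duality claim, Lemma \ref{C cap C^bot} tells us that for \emph{every} $(\alpha,\beta)\in K^*\times K^*$ one has $\alpha C\beta=C\beta$, and this code is self-dual when $q$ is even, and LCD when $q$ is odd; the property depends only on the parity of $q$, not on the particular choice of $(\alpha,\beta)$. Hence whichever $(\alpha,\beta)$ we picked to ensure $\Delta(C^{(i)}_{\alpha,\beta})>\delta$ automatically produces a code of the desired duality type.

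Putting the two together, for each $i\ge i_0$ choose any $(\alpha_i,\beta_i)\in K^*\times K^*$ in the event $\{\Delta(C^{(i)}_{\alpha,\beta})>\delta\}$, and set $C^{(i)}:=C^{(i)}_{\alpha_i,\beta_i}$; then $\R(C^{(i)})=\tfrac{1}{2}$, $\Delta(C^{(i)})>\delta$, and $C^{(i)}$ is self-dual (when $q$ is even) or LCD (when $q$ is odd), as required. There is really no obstacle here: the substantive analytic work---the probabilistic estimate on $\E(X)$ via the idempotent decomposition of $\F\tilde G$---has already been done in the lemma preceding Theorem \ref{t C_alpha,beta}, and the structural identification $\alpha C\beta=C\beta$ plus its self-duality/LCD dichotomy is already in Lemma \ref{C cap C^bot}. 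The only thing to be mildly careful about is that the dichotomy in Lemma \ref{C cap C^bot} holds for \emph{all} $(\alpha,\beta)$ (not just in some positive-probability event), so that we can freely intersect with the event $\{\Delta>\delta\}$ without losing the duality property.
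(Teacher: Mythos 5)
Your proposal is correct and is essentially the paper's own argument: the paper's entire proof of this theorem is the single sentence ``Combining it with Lemma \ref{C cap C^bot}, we get the following at once,'' and you have merely unpacked what that combination entails --- positive probability of the event $\{\Delta(C_{\alpha,\beta}^{(i)})>\delta\}$ guarantees a witness $(\alpha_i,\beta_i)$, and the facts $\R(C_{\alpha,\beta})=\tfrac12$, $\alpha C\beta=C\beta$ self-dual/LCD hold for every sample, so they come along for free. One small caveat: your first option for $i<i_0$ (``pick any dihedral code of rate $\frac12$'') would not satisfy $\Delta>\delta$; only your second option (relabelling / starting from $i_0$) is the right reading, and it is indeed the convention the paper uses implicitly in all its ``for $i=1,2,\cdots$ there exist \dots'' theorems of this kind.
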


\begin{remark}\rm
Finite dihedral groups are non-abelian but near to cyclic groups,
 they have cyclic subgroups of index $2$. 
The asymptotic goodness of binary dihedral codes was proved in \cite{BM}.
And \cite{BW} studied the semidirect products of a cyclic group of order prime $p$
by a cyclic group and showed the asymptotic goodness of such group codes.
In \cite{FL20} we extended the asymptotic goodness of dihedral codes to 
any $q$-ary case, and investigated more precisely the algebraic properties 
of the asymptotically good dihedral codes. 
In this section we present a simplified description 
of the results on $q$-ary dihedral codes. 
\end{remark}

\section{Appendix: About probabilistic method}\label{appendix}

Here is a sketch about the probabilistic method used in this chapter. 
See, e.g., \cite{MU} for details please.

If ${\mathscr P}$ is a finite set with  
a function $p:{\mathscr P}\to{\Bbb R}$ 
(where ${\Bbb R}$ denotes the real number field) satisfying that:
$p(s)\ge 0$, $\forall$ $s\in{\mathscr P}$; and
$\sum_{s\in{\mathscr P}}p(s)=1$;
then we say that ${\mathscr P}$ is a finite {\em probability space}
with the probability function $p$.

In the following ${\mathscr P}$ is a finite probability space.
Any element of ${\mathscr P}$ is called a {\em sample}.
Any subset $E\subseteq{\mathscr P}$ is called an {\em event}, 
and $\Pr(E)=\sum_{s\in E}p(s)$ is called the probability of the event $E$.  
Note that, we usually took the probability space ${\mathscr P}$ with
equal probability for samples, i.e., $p(s)=p(s')$ for all $s,s'\in{\mathscr P}$,
thus $\Pr(E)=\frac{|E|}{|{\mathscr P}|}$ in that case.

Let $E$, $F$ be two events over ${\mathscr P}$.
By $\Pr(E\cap F)$ we denote the probability that both $E$ and $F$ occur.
By $\Pr(E\cup F)$ we denote the probability that $E$ or $F$ occurs.
By $\Pr(E|F)$ we denote the conditional probability of $E$ when $F$ occurred. 
The following hold:

\begin{lemma}[Conditional Probability Formula]\label{conditional}
  $\Pr(E\cap F)=\Pr(E|F)\cdot\Pr(F)$.
\end{lemma}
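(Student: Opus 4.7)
The plan is to treat this as essentially a definitional identity, since conditional probability on a finite probability space is defined precisely so that the formula holds. First I would recall the definition of $\Pr(E\mid F)$ in the finite setting: given that $F$ has occurred, the admissible samples are restricted to $F$, and each sample $s\in F$ receives the renormalized weight $p(s)/\Pr(F)$ (this requires $\Pr(F)>0$, which I would note as a tacit hypothesis). Thus $\Pr(E\mid F)$ is the sum of these renormalized weights over $s\in E\cap F$, i.e.
\[
 \Pr(E\mid F)\;=\;\frac{\sum_{s\in E\cap F} p(s)}{\sum_{s\in F} p(s)}\;=\;\frac{\Pr(E\cap F)}{\Pr(F)}.
\]

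Next I would multiply both sides by $\Pr(F)$ to obtain the desired identity $\Pr(E\cap F)=\Pr(E\mid F)\cdot\Pr(F)$. If one prefers a symmetric statement that handles the degenerate case $\Pr(F)=0$, I would remark that in that situation $E\cap F\subseteq F$ forces $\Pr(E\cap F)=0$, so the equality holds as $0=\Pr(E\mid F)\cdot 0$ under any convention for $\Pr(E\mid F)$.

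There is no real obstacle here: the content of the lemma is just the definition of conditional probability rearranged. The only expository choice to make is whether to \emph{define} $\Pr(E\mid F)$ by the ratio formula and then read the lemma off, or to motivate conditional probability via the restricted sample space and derive the ratio formula as an intermediate step. Given the self-contained style of the appendix, I would take the second route so that the lemma has genuine (if slight) content rather than being a tautology.
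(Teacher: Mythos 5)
Your proof is correct. The paper itself states this lemma without proof, treating it as a basic fact of the finite probability space framework; your unpacking of the definition of $\Pr(E\mid F)$ as the renormalized weight over $E\cap F$ divided by $\Pr(F)$ is exactly the standard argument, and your remark about the degenerate case $\Pr(F)=0$ is a sensible addition that the paper leaves tacit.
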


If $\Pr(E\cap F)=\Pr(E)\cdot\Pr(F)$ (equivalently, $\Pr(E|F)=\Pr(E)$), 
then $E$ and $F$ are said to be {\em randomly independent}. 

\begin{lemma}[Total Probability Formula]\label{l total}
 If $E\cap F=\emptyset$ (impossible event) and 
 $E\cup F={\mathscr P}$ (certain event), then for any event $G$, \par\vskip5pt
 \centerline{$\Pr(G)=\Pr(G|E)\cdot\Pr(E)+\Pr(G|F)\cdot\Pr(F).$}
\end{lemma}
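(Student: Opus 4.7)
The plan is to reduce the Total Probability Formula to the Conditional Probability Formula (Lemma \ref{conditional}) by partitioning the event $G$ according to the given partition $\{E,F\}$ of the sample space $\mathscr{P}$.

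First, I would observe that since $E\cap F=\emptyset$ and $E\cup F=\mathscr{P}$, for any event $G\subseteq\mathscr{P}$ one has the disjoint decomposition
$$G=(G\cap E)\,\dot\cup\,(G\cap F).$$
Because the probability function $p$ is additive over disjoint subsets (this is immediate from the definition $\Pr(H)=\sum_{s\in H}p(s)$ for $H\subseteq\mathscr{P}$), it follows that
$$\Pr(G)=\Pr(G\cap E)+\Pr(G\cap F).$$

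Next, I would apply Lemma \ref{conditional} to each of the two summands separately: $\Pr(G\cap E)=\Pr(G\mid E)\cdot\Pr(E)$ and $\Pr(G\cap F)=\Pr(G\mid F)\cdot\Pr(F)$. Substituting these into the previous identity yields exactly the claimed formula
$$\Pr(G)=\Pr(G\mid E)\cdot\Pr(E)+\Pr(G\mid F)\cdot\Pr(F).$$

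The only mild subtlety — hardly an obstacle — is the convention when $\Pr(E)=0$ or $\Pr(F)=0$, in which case the corresponding conditional probability is formally undefined; one reads the product $\Pr(G\mid E)\cdot\Pr(E)$ as $0$ (consistent with $\Pr(G\cap E)\le\Pr(E)=0$), so the identity still holds. Beyond that, the proof is a one-line consequence of additivity plus Lemma \ref{conditional}, and requires no further machinery.
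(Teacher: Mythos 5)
Your proposal is correct, and it is the standard argument: partition $G=(G\cap E)\,\dot\cup\,(G\cap F)$ using the hypothesis that $\{E,F\}$ partitions $\mathscr{P}$, invoke finite additivity of $\Pr$ over disjoint events, and then apply Lemma~\ref{conditional} to each summand. The paper states this lemma in its Appendix without proof (it is listed among standard probabilistic facts quoted for self-containedness), so there is no author's proof to contrast against; your derivation supplies exactly the expected one-line argument, and your remark about the degenerate case $\Pr(E)=0$ or $\Pr(F)=0$ is the right convention to keep the identity meaningful there.
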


\bex\label{e total}
Let notation be as in Lemma \ref{l total}. 

(1) If $\Pr(E)>0$, then
 $\Pr(G|E)\ge\frac{\Pr(G)-\Pr(F)}{\Pr(E)}$.

(2) If $\Pr(G)=1$ and $\Pr(E)>0$, then $\Pr(G|E)=1$.
\eex

\begin{definition}\label{d random var}\rm
If $X: {\mathscr P}\to{\Bbb R}$ is a function, then we say that
$X$ is a {\em (real) random variable} over ${\mathscr P}$; and
$\Pr(X\!=\!r)\;=\sum_{s\in{\mathscr P}, X(s)=r}p(s)$ for $r\in{\Bbb R}$
is the {\em distribution} of $X$, and
$\E (X)=\sum_{x\in{\Bbb R}}x\cdot\Pr(X\!=\!x)$ 
is called the {\em expectation} of~$X$.
\end{definition}

Let $X$, $Y$ be real random variables, $r\in {\Bbb R}$. 
As functions  from ${\mathscr P}$ to ${\Bbb R}$,
the $X\!+\!Y$, $XY$, $rX$ are defined, and are still random variables.
It is checked directly that
\begin{align*}
&  \E (X+Y)= \E (X)+ \E (Y), \qquad  \E (rX)=r \E (X);\\
&  \E (X^2)\ge 0, \qquad  \E (X^2)=0 \iff X=0.
\end{align*}
Thus we get:

\begin{lemma}\label{l E}
The set of all real random variables form a real vector space, and,
\begin{itemize}
\item\vskip-5pt
 $\E (X)$ is a linear form of the real vector space;
\item \vskip-4pt
 $\langle X,Y\rangle:= \E (XY)$ is an inner product of the real vector space.
\end{itemize} 
\vskip-5pt
In particular, from the usual linear algebra we have

\smallskip\noindent{\bf Cauchy-Schwartz Inequality}: ~
$ \E (XY)^2\le \E (X^2)\cdot \E (Y^2)$.
\end{lemma}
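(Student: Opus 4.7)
The plan is to verify the three structural claims of Lemma~\ref{l E} directly from the definitions and then deduce Cauchy--Schwartz by the standard one-variable discriminant argument valid in every real inner product space. Nothing deep is needed: the displayed identities just before the lemma already do most of the work, and the lemma simply packages them in vector-space language.

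First, a real random variable on ${\mathscr P}$ is nothing but a function $X:{\mathscr P}\to{\Bbb R}$, so the collection of random variables is exactly ${\Bbb R}^{\mathscr P}$ with pointwise operations. The eight vector space axioms are inherited from ${\Bbb R}$ with no real computation. Second, I would rewrite the expectation as $\E(X)=\sum_{s\in{\mathscr P}}p(s)X(s)$; this agrees with the defining sum $\sum_{x}x\Pr(X=x)$ after grouping samples $s$ by the common value $X(s)$. Linearity of $\E$ then drops out of $\sum_s p(s)(X(s)+Y(s))=\sum_s p(s)X(s)+\sum_s p(s)Y(s)$, together with the analogous identity for scalar multiplication, exactly as the display preceding the lemma records.

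Third, bilinearity of $\langle X,Y\rangle=\E(XY)$ follows from the linearity of $\E$ applied slot-by-slot, combined with the bilinearity of pointwise multiplication; symmetry is immediate from $XY=YX$. For positive-definiteness, the same rewriting gives $\E(X^2)=\sum_{s}p(s)X(s)^2\ge 0$ since every summand is non-negative, and $\E(X^2)=0$ forces $X(s)=0$ at every sample $s$ with $p(s)>0$; identifying such $X$ with the zero variable (the usual convention when $p$ may vanish on some samples) then gives the positivity axiom. The only subtle point of the whole argument is this equivalence-class convention, and that is the place I would take care to state precisely.

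Finally, for Cauchy--Schwartz I would apply the one-variable trick: for every $t\in{\Bbb R}$,
\[
0\le\E\big((X-tY)^2\big)=\E(X^2)-2t\E(XY)+t^2\E(Y^2),
\]
a non-negative quadratic in $t$. If $\E(Y^2)=0$ then $Y=0$ by the positivity just established, so $\E(XY)=0$ and the inequality is trivial. Otherwise, setting $t=\E(XY)/\E(Y^2)$ minimizes the right-hand side and yields $\E(XY)^2\le\E(X^2)\,\E(Y^2)$. This is the ``hardest'' step of the proof, and even it is only the routine discriminant manipulation standard in any proof of Cauchy--Schwartz for a real inner product space.
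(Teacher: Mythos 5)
Your proposal is correct and takes the same route as the paper, which simply asserts the displayed identities immediately above the lemma (``checked directly'') and then invokes ``the usual linear algebra'' for Cauchy--Schwartz; you have merely written out those checks and the standard discriminant argument in full. You also rightly flag the one subtlety the paper glosses over: the claim $\E(X^2)=0 \iff X=0$ is only literally true if every sample has positive probability, and otherwise one must either restrict to such spaces or pass to equivalence classes of variables agreeing on the support of $p$.
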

 
If the variable $X$ takes value either $0$ or $1$, we say that $X$ is a 
{\em $0$-$1$ variable (Bernoulli variable)}. 
For a $0$-$1$ variable $X$ we have:  
\begin{equation}\label{E of 0-1}
 \E (X)=0\cdot\Pr(X\!\!=\!0)+1\cdot\Pr(X\!\!=\!1)=\Pr(X\!\!=\!1).
\end{equation} 

For any event $F$ there is a $0$-$1$ variable ${\bf 1}_F$ 
(called the {\em indicator variable} of~$F$): for $s\in{\mathscr P}$, 
${\bf 1}_F(s)=\begin{cases}1,\!\! & s\in F;\\ 0,\!\! & s\notin F.\end{cases}$ 
In particular, $ \E ({\bf 1}_F)=\Pr(F)$.

For a random variable $X$, 
the so-called {\em first moment method} uses the following inequality
to bound the probability $\Pr\big(X\ge a\big)$ from above.

\begin{lemma}[{\bf Markov Inequality}]\label{Markov}
Let $X$ be a non-negative random variable. If $a>0$, then
$\Pr\big(X\ge a\big)\le\frac{ \E (X)}{a}$.
\end{lemma}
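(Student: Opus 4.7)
The plan is to prove this directly from the definition of expectation given in Definition \ref{d random var}, using only the non-negativity of $X$ to discard a portion of the defining sum. Since the probability space ${\mathscr P}$ is finite, $X$ takes only finitely many values and no convergence issues arise.

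First I would split the defining sum $\E(X)=\sum_{x\in{\Bbb R}} x\cdot\Pr(X=x)$ according to whether $x<a$ or $x\ge a$. Because $X$ is non-negative, the only values $x$ with $\Pr(X=x)>0$ satisfy $x\ge 0$, so the summands with $x<a$ are non-negative and can be dropped to yield $\E(X)\ge\sum_{x\ge a} x\cdot\Pr(X=x)$. Next I would bound each factor $x$ in the remaining sum from below by $a$, obtaining
\[
  \E(X)\;\ge\;a\sum_{x\ge a}\Pr(X=x)\;=\;a\cdot\Pr(X\ge a).
\]
Dividing both sides by $a>0$ gives the claimed inequality.

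An alternative route, perhaps more in the style of the probabilistic framework set up in the appendix, is to use the indicator variable $\mathbf{1}_{\{X\ge a\}}$ and the pointwise comparison $X\ge a\cdot\mathbf{1}_{\{X\ge a\}}$, which holds precisely because $X\ge 0$ everywhere. Taking expectations and invoking linearity (Lemma \ref{l E}) together with $\E(\mathbf{1}_F)=\Pr(F)$ yields the same conclusion in one line.

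There is no genuine obstacle here; the inequality is a one-step consequence of the definition, and the only thing worth being careful about is the appeal to non-negativity, which is used in two places: to ensure that values $x<0$ contribute nothing to $\E(X)$, and to justify discarding the terms with $0\le x<a$.
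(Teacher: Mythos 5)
Your proposal is correct. The alternative route you sketch at the end --- comparing $X$ with $a\cdot\mathbf{1}_{\{X\ge a\}}$ and taking expectations --- is exactly the paper's proof (the paper writes it as $\mathbf{1}_{X\ge a}\le X/a$ and applies $\E$). Your primary argument, splitting the defining sum over values $x<a$ and $x\ge a$, is an equivalent elementary computation that unpacks the same monotonicity-of-expectation step explicitly from Definition~\ref{d random var}; it buys a fully self-contained proof at the cost of a few more lines, whereas the paper's one-liner leans on the expectation/indicator machinery already set up in Lemma~\ref{l E} and Eq.~\eqref{E of 0-1}. Either is fine here.
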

\begin{proof}
Since ${\bf 1}_{X\ge a}\le \frac{X}{a}$, 
$\Pr\big(X\ge a\big)= \E \big({\bf 1}_{X\ge a}\big)
  \le  \E \big(\frac{X}{a}\big)=\frac{ \E (X)}{a}.$
\end{proof}

To estimate the lower bound of $\Pr\big(X\ge 1\big)$, 
the so-called {\em second moment method} is usually applied.

\begin{lemma}[{\bf Cauchy Inequality}]
Let $X$ be a non-zero random variable. Then
$\Pr\big(X\ne 0\big)\ge\frac{ \E (X)^2}{ \E (X^2)}$.
\end{lemma}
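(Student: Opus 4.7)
The plan is to derive this directly from the Cauchy-Schwartz Inequality stated in Lemma \ref{l E}, using the indicator variable ${\bf 1}_{X\ne 0}$ introduced just before the Markov Inequality. The key observation is that $X$ and $X\cdot{\bf 1}_{X\ne 0}$ agree as functions on ${\mathscr P}$: on a sample $s$ where $X(s)\ne 0$ the indicator equals $1$, and on a sample where $X(s)=0$ both sides are $0$. Hence $X = X\cdot{\bf 1}_{X\ne 0}$ as random variables, and so $\E(X) = \E\big(X\cdot{\bf 1}_{X\ne 0}\big)$.

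Next I would apply Cauchy-Schwartz to the pair $\big(X,\,{\bf 1}_{X\ne 0}\big)$ in the inner product $\langle Y,Z\rangle = \E(YZ)$ from Lemma \ref{l E}. This gives
\[
 \E(X)^2 \;=\; \E\big(X\cdot {\bf 1}_{X\ne 0}\big)^2 \;\le\; \E(X^2)\cdot \E\big({\bf 1}_{X\ne 0}^{\,2}\big).
\]
Since ${\bf 1}_{X\ne 0}$ only takes the values $0$ and $1$, one has ${\bf 1}_{X\ne 0}^{\,2} = {\bf 1}_{X\ne 0}$, and by Eq.\eqref{E of 0-1} we get $\E\big({\bf 1}_{X\ne 0}^{\,2}\big) = \E\big({\bf 1}_{X\ne 0}\big) = \Pr(X\ne 0)$. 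Dividing by $\E(X^2)$ (which is positive because $X$ is non-zero, by Lemma \ref{l E}) yields the claimed inequality $\Pr(X\ne 0)\ge \E(X)^2/\E(X^2)$.

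There is essentially no obstacle here: the whole argument is a one-line invocation of Cauchy-Schwartz once the identification $X = X\cdot{\bf 1}_{X\ne 0}$ is made. The only tiny subtlety to mention is that $\E(X^2)>0$ is needed to divide, which is precisely the content of the hypothesis ``$X$ is a non-zero random variable'' together with the positive-definiteness statement in Lemma \ref{l E}. In the applications (Lemma \ref{2'nd moment}, Lemma \ref{Pr(X>=1)}) this lemma will be refined into a sum-over-indicators lower bound for $\Pr(X\ge 1)$; that refinement is where the real work lies, but the clean Cauchy-form stated here needs no more than the sketch above.
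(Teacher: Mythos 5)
Your proposal is correct and follows essentially the same argument as the paper: write $X = X\cdot{\bf 1}_{X\ne 0}$, apply the Cauchy--Schwartz inequality of Lemma \ref{l E} to the pair $(X,{\bf 1}_{X\ne 0})$, and use ${\bf 1}_{X\ne 0}^2 = {\bf 1}_{X\ne 0}$ with $\E({\bf 1}_{X\ne 0})=\Pr(X\ne 0)$. Your explicit remark that $\E(X^2)>0$ is what licenses the division is a small but welcome clarification the paper leaves implicit.
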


\begin{proof}
Write $X=X\!\cdot\!{\bf 1}_{X\ne 0}$. By Cauchy-Schwartz inequality,
$$  
 \E (X)^2= \E (X\!\cdot\!{\bf 1}_{X\ne 0})^2
\le  \E (X^2)\!\cdot\! \E ({\bf 1}_{X\ne 0}^2)
= \E (X^2)\!\cdot\!\Pr(X\ne 0).
$$
So $\Pr\big(X\ne 0\big)\ge\frac{ \E (X)^2}{ \E (X^2)}$.
\end{proof}

\noindent
{\bf Cauchy Inequality}.  
{\it Let $X$ be a non-negative integer variable such that $X\ne 0$. Then
 $\Pr\big(X\ge 1\big)\ge \frac{ \E (X)^2}{ \E (X^2)}$.
}

\smallskip
The following is an useful inequality for our applications.

\begin{lemma}[{\cite[Theorem 6.10]{MU}}]\label{2'nd moment}
Let $X=\sum_{i=1}^n X_i$, where $X_i$'s are $0$-$1$ variables. Then
$$\textstyle
 \Pr\big(X\!\ge\! 1\big)\;\ge\;
 \sum_{i=1}^n \frac{ \E (X_i)}{ \E (X|X_i=1)}.
$$
\end{lemma}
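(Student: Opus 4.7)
The plan is to use an averaging identity together with Jensen's inequality. First I would observe the pointwise identity on the sample space $\mathscr{P}$: on the event $\{X \geq 1\}$ we have $X = \sum_i X_i \geq 1$, so dividing gives $1 = \sum_{i=1}^n X_i/X$; on the complementary event $\{X=0\}$, every $X_i = 0$, so (using the convention $0/0 := 0$) the same sum equals $0$. Consequently
$$\mathbf{1}_{X \geq 1} \;=\; \sum_{i=1}^n \frac{X_i}{X}$$
as a function on $\mathscr{P}$.

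Taking expectations and using the linearity of $\E$ (Lemma \ref{l E}) together with $\E(\mathbf{1}_{X\geq 1}) = \Pr(X \geq 1)$, I get
$$\Pr(X \geq 1) \;=\; \sum_{i=1}^n \E\!\left(\frac{X_i}{X}\right).$$
For each $i$, since $X_i$ is a $0$-$1$ variable, the summand $X_i/X$ vanishes on $\{X_i = 0\}$ and equals $1/X$ on $\{X_i = 1\}$ (where $X \geq 1 > 0$, so the ratio is well-defined). By the conditional probability formula (Lemma \ref{conditional}),
$$\E\!\left(\frac{X_i}{X}\right) \;=\; \E\!\left(\frac{1}{X}\,\bigg|\, X_i = 1\right)\cdot \Pr(X_i = 1) \;=\; \E\!\left(\frac{1}{X}\,\bigg|\, X_i = 1\right)\cdot \E(X_i),$$
where the second equality uses Eq.\eqref{E of 0-1}.

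Now I would apply Jensen's inequality to the convex function $\phi(t) = 1/t$ on $(0,\infty)$: conditionally on $X_i = 1$, the random variable $X$ takes positive values, so
$$\E\!\left(\frac{1}{X}\,\bigg|\, X_i = 1\right) \;\geq\; \frac{1}{\E(X \mid X_i = 1)}.$$
Combining the three displays gives the desired bound. The main obstacle I anticipate is not in any single step but rather in justifying the pointwise identity carefully (ensuring the $0/0$ convention is harmless because both sides are $0$ there) and in verifying that Jensen applies to the conditional distribution of $X$ given $X_i = 1$, which is legitimate precisely because that conditional distribution is supported in $\{1, 2, \dots, n\}$. Once these two points are in place, the remaining computation is purely a matter of stringing together linearity of expectation, the conditional probability formula, and the Jensen bound.
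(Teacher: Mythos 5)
Your proof is correct and follows essentially the same route as the paper's: your pointwise identity $\mathbf{1}_{X\ge 1}=\sum_i X_i/X$ is exactly the paper's $YX=\sum_i YX_i$ with $Y$ defined as $1/X$ on $\{X>0\}$ and $0$ otherwise, and the splitting over $\{X_i=0\},\{X_i=1\}$ followed by Jensen's inequality for $t\mapsto 1/t$ matches the paper step for step. Nothing to change.
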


\begin{proof}
Let $Y=\begin{cases}\frac{1}{X}, & X>0;\\ 0, & X=0.\end{cases}$
Then $YX={\bf 1}_{X\ge 1}$. So
$$
\Pr(X\ge 1)= \E (YX)=\sum_{i=1}^n \E (YX_i).
$$
By Total Probability Formula,
\begin{align*}
 \E (YX_i)
  &= \E (YX_i|X_i=0)\Pr(X_i=0) +  \E (YX_i|X_i=1)\Pr(X_i=1)\\
  &= \E (0|X_i=0)\Pr(X_i=0)+ \E (Y|X_i=1)\Pr(X_i=1)\\
  &= \E ({\textstyle\frac{1}{X}}|X_i=1)\Pr(X_i=1).
\end{align*}
Applying Jensen Inequality (see the following exercise) 
to the convex function $f(x)=\frac{1}{x}$, we have:
$$
 \E ({\textstyle\frac{1}{X}}|X_i=1)\ge
 \frac{1}{ \E (X|X_i=1)}.
$$
Thus
$$
   \E (YX_i) \ge \frac{1}{ \E (X|X_i=1)}\cdot\Pr(X_i=1)
  =\frac{ \E (X_i)}{ \E (X|X_i=1)}.
$$
Hence.
$\Pr(X\ge 1) \ge \sum_{i=1}^n\frac{ \E (X_i)}{ \E (X|X_i=1)}.$
\end{proof}

\bex[{\tt Jensen Inequality}] \label{Jensen}
 Let $f(x)$, $x\in [\alpha,\beta]$, be a convex function,
i.e., for any $x,x'\in [\alpha,\beta]$, $\frac{f(x)+f(x')}{2}\ge f(\frac{x+x'}{2})$.

(1)~ If $p_1,\cdots,p_n$ are non-negative real numbers such that
$p_1+\cdots+p_n=1$ and $x_1,\cdots, x_n\in [\alpha,\beta]$, then
$\textstyle
 \sum_{i=1}^n p_i f(x_i)\ge f\big(\sum_{i=1}^n p_i x_i\big).
$

(2)~ Let $X$ be a random variable taking values in $[\alpha,\beta]$. Then
$\textstyle
  \E \big(f(X)\big)\ge f\big( \E (X)\big).
$
\eex

\begin{definition}\label{d inf entropy}\rm
Let $X$ be a random variable taking values $x_1,\cdots,x_s$
(may be not real numbers)
such that $\sum_{i=1}^s p_i=1$, where $p_i=\Pr(X=x_i)$. 
Then $p_1,\cdots,p_s$ are called the {\em distribution} of $X$, 
 and 
\begin{equation*}\textstyle
H_{\gamma}(X)=\sum_{i=1}^s -p_i\log_{\gamma} p_i,
\end{equation*}
 is called the {\em information entropy} of $X$ with base $\gamma$
 ($\gamma>1$ is a real number).
\end{definition}

Let $Y$ be another random variable taking values $y_1,\cdots,y_t$
with distribution $p'_1,\cdots,p'_{t}$.
We have the {\em joint random variable} $(X,Y)$
taking values $\{(x_i,y_j)\,|\, 1\le i\le s, 1\le j\le t\}$
 with distribution $p_{i,j}=\Pr\big((X.Y)=(x_{i},y_j)\big)$, 
  $1\le i\le s, 1\le j\le t$. Then 
\begin{equation}\label{e inf entropy}\textstyle
\sum_{j=1}^t p_{i,j}=p_i, ~~ i=1,\cdots,s;\qquad
\sum_{i=1}^s p_{i,j}=p'_j, ~~ j=1,\cdots,t.
\end{equation}
$$\begin{array}{ccc|c}
p_{1,1} & \cdots & p_{1,t}& p_{1}\\
 \cdots & \cdots & \cdots & \vdots\\
p_{s,1} & \cdots & p_{s,t}& p_{s}\\ \hline
p'_{1} & \cdots & p'_{t}& 
\end{array}$$
Note that $p_{i,j}\ne p_ip'_j$ in general. 
If $p_{i,j}= p_ip'_j$, $\forall$ $1\le i\le s, 1\le j\le t$, 
then $X$ and $Y$ are said to be {\em randomly independent}.

For random variable $X_1,X_2,\cdots,X_k$,
the {\em joint random variable} $(X_1,\cdots,X_k)$ are defined similarly, 
and the following holds.

\begin{lemma}\label{l inf entropy}~
$H_{\gamma}(X_1,\cdots,X_k)\le H_{\gamma}(X_1)+\cdots+H_{\gamma}(X_k)$.
\end{lemma}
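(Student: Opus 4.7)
The plan is to reduce the inequality to the $k=2$ case by induction: once we know $H_\gamma(U,V)\le H_\gamma(U)+H_\gamma(V)$ for any two joint variables $U,V$, grouping $U=(X_1,\cdots,X_{k-1})$ and $V=X_k$ gives $H_\gamma(X_1,\cdots,X_k)\le H_\gamma(X_1,\cdots,X_{k-1})+H_\gamma(X_k)$, and the induction hypothesis on the first term finishes the argument. So the only real work is the two-variable case.

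For $k=2$, let $X$ take values $x_1,\dots,x_s$ with distribution $p_i$, let $Y$ take values $y_1,\dots,y_t$ with distribution $p'_j$, and let $p_{i,j}=\Pr\big((X,Y)=(x_i,y_j)\big)$. Using the marginal identities $p_i=\sum_j p_{i,j}$ and $p'_j=\sum_i p_{i,j}$ from Eq.\eqref{e inf entropy}, I would rewrite
$$
H_\gamma(X)+H_\gamma(Y)-H_\gamma(X,Y)
=\sum_{i,j} p_{i,j}\,\log_\gamma\frac{p_{i,j}}{p_ip'_j},
$$
where the sum is restricted to pairs $(i,j)$ with $p_{i,j}>0$ (the other terms contribute zero to every entropy involved). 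The claim is then equivalent to non-negativity of this quantity.

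To establish non-negativity, I would apply Jensen Inequality (Exercise~\ref{Jensen}) to the convex function $f(x)=-\log_\gamma x$ on $(0,\infty)$, with weights $p_{i,j}$ (which sum to $1$) and values $\frac{p_ip'_j}{p_{i,j}}$. Jensen gives
$$
\sum_{i,j}p_{i,j}\bigl(-\log_\gamma\tfrac{p_ip'_j}{p_{i,j}}\bigr)
\ge -\log_\gamma\!\Bigl(\sum_{i,j}p_{i,j}\cdot\tfrac{p_ip'_j}{p_{i,j}}\Bigr)
= -\log_\gamma\!\Bigl(\sum_{i,j}p_ip'_j\Bigr)
=-\log_\gamma 1=0.
$$
Rearranging yields $H_\gamma(X,Y)\le H_\gamma(X)+H_\gamma(Y)$, as desired.

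I do not expect any serious obstacle: the identity rewriting the difference as a ``relative entropy'' is purely algebraic (just substitute the marginal formulas into the two single-variable entropies), and the non-negativity then drops out of one application of Jensen, which is already in the paper's toolbox as Exercise~\ref{Jensen}. The only minor care needed is the convention $0\log_\gamma 0=0$, so that pairs with $p_{i,j}=0$ can be safely omitted from all the sums without changing any of the three entropies.
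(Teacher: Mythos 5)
Your proposal is correct and takes essentially the same route as the paper: the paper also reduces to $k=2$ by recursion and, through Exercise~\ref{inequality of joint}, identifies $H_\gamma(X)+H_\gamma(Y)-H_\gamma(X,Y)$ with the mutual information $I_\gamma(X;Y)$, whose non-negativity is established by exactly the Jensen argument you give applied to the convex function $-\log_\gamma x$. You merely streamline the bookkeeping by proving the identity directly from the marginal formulas rather than passing through the intermediate conditional entropies $H_\gamma(Y|X)$ and $H_\gamma(X|Y)$.
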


\begin{proof}
For $k=2$, see Exercise \ref{inequality of joint} below. For $k>2$,
prove it recursively.
\end{proof}

\bex\label{inequality of joint} 
Let $X,Y$ and the joint $(X,Y)$ be as in Eq.\eqref{e inf entropy}.
Define 
\begin{align*}
H_{\gamma}(Y|X)=\textstyle
 \sum_{i=1}^s\sum_{j=1}^t -p_{i,j}\log_{\gamma} \frac{p_{i,j}}{p_i};\\
 I_{\gamma}(X;Y)= \textstyle
  \sum_{i=1}^s\sum_{j=1}^t -p_{i,j}\log_{\gamma}\frac{p_ip'_j}{p_{i,j}}.
\end{align*} 
 
(1) $H_{\gamma}(Y|X)$ and $I_{\gamma}(X;Y)$ are non-negative. 
\hint{$\frac{p_{i,j}}{p_i}\le 1$ hence $-\log_{\gamma} \frac{p_{i,j}}{p_i}\ge 0$;
for $I_{\gamma}(X;Y)$, apply Exercise \ref{Jensen} 
to the convex function $-\log_{\gamma} x$.}

(2) $I_{\gamma}(X;Y)=H_{\gamma}(X)-H_{\gamma}(X|Y)
  =H_{\gamma}(Y)-H_{\gamma}(Y|X)$. 
 \hint{compute it directly.}

(3) $H_{\gamma}(X,Y)=H_{\gamma}(X)+H_{\gamma}(Y|X)
  =H_{\gamma}(Y)+H_{\gamma}(X|Y)$.

(4) $H_{\gamma}(X,Y)\le H_{\gamma}(X)+H_{\gamma}(Y)$.
\eex

\vskip3mm
Yun Fan\par
School of Mathematics and Statistics\par
Central China Normal University, Wuhan 430079, China\par
{\it Email address}: yfan@ccnu.edu.cn
\par\vskip3mm
Liren Lin\par
School of Mathematics and Statistics\par
Hubei University, Wuhan 430062, China\par
{\it Email address}: l\_r\_lin86@163.com

\end{document}